\newcommand{\ignore}[1]{}
\newcommand{\secp}{n}
\newcommand{\F}{\mathbb{F}}
\newcommand{\CF}{C\hspace{-1pt}F}
\newcommand{\CFt}{C\hspace{-1pt}\tilde{F}}
\newcommand{\eqdef}{\stackrel{\operatorname{def}}{=}}
\newcommand{\bool}[1][\relax]{{\ensuremath{{\{0,1\}}^{#1}}}\xspace}
\DeclareMathOperator{\BadComplete}{\mathsf{BadComplete}}
\DeclareMathOperator{\NewBadComplete}{\mathsf{NewBadComplete}}
\DeclareMathOperator{\ExistingBadComplete}{\mathsf{ExistingBadComplete}}
\DeclareMathOperator{\BadEval}{\mathsf{BadEval}}
\DeclareMathOperator{\GoodStatus}{\mathsf{GoodStatus}}
\DeclareMathOperator{\tv}{\mathsf{tv}}
\newcommand{\cP}{\ensuremath{\mathcal{P}}\xspace}
\newcommand{\cF}{\ensuremath{\mathcal{F}}\xspace}
\newcommand{\cG}{\ensuremath{\mathcal{G}}\xspace}
\newcommand{\cA}{\ensuremath{\mathcal{A}}\xspace}
\newcommand{\cS}{\ensuremath{\mathcal{S}}\xspace}
\newcommand{\cAtwo}{\ensuremath{\mathcal{A}}\xspace}
\newcommand{\cStwo}{\ensuremath{\mathcal{S}_\cAtwo}\xspace}
\newcommand{\cM}{\ensuremath{\mathcal{M}}\xspace}
\newcommand{\cD}{\ensuremath{\mathcal{D}}\xspace}
\newcommand{\sD}{\ensuremath{\widehat{\cD}}\xspace}
\newcommand{\sE}{\ensuremath{\widehat{\cE}}\xspace}
\newcommand{\cE}{\ensuremath{\mathcal{E}}\xspace}
\newcommand{\cO}{\mathcal{O}}
\newcommand{\cC}{\mathcal{C}}
\newcommand{\rC}{\mathit{C}}
\newcommand{\badG}{\tilde{\cG}}
\newcommand{\negl}{\ensuremath{\mathsf{negl}}\xspace}
\newcommand{\poly}{\ensuremath{\mathrm{poly}}\xspace}
\newcommand{\source}{\operatorname{source}}
\newcommand{\target}{\operatorname{target}}
\DeclareMathOperator{\GL}{\mathsf{GL}}
\definecolor{darkred}{rgb}{0.5, 0, 0}
\definecolor{darkgreen}{rgb}{0, 0.5, 0}
\definecolor{darkblue}{rgb}{0,0,0.5}
\definecolor{webred}{rgb}{0.5,0,0}
\definecolor{webblue}{rgb}{0,0,0.8}
\title{Crooked indifferentiability of the Feistel Construction}
\author{
Alexander Russell\thanks{University of Connecticut, acr@cse.uconn.edu} 
\and Qiang Tang\thanks{The University of Sydney, qiang.tang@sydney.edu.au} 
\and Jiadong Zhu\thanks{University of Connecticut, zhujiadong2016@163.com, Corresponding Author}
}
\institute{}
\begin{document}
\maketitle

\begin{abstract}
  The Feistel construction is a fundamental technique for building
  pseudorandom permutations and block ciphers. This paper shows that a
  simple adaptation of the construction is resistant, even to algorithm
  substitution attacks---that is, adversarial subversion---of the
  component round functions. Specifically, we establish that a
  Feistel-based construction with more than $2000n/\log(1/\epsilon)$ rounds
  can transform a subverted random function---which disagrees with
  the original one at a small fraction (denoted by $\epsilon$) of
  inputs---into an object that is \emph{crooked-indifferentiable} from
  a random permutation, even if the adversary is aware of all the
  randomness used in the transformation. We also provide a lower bound
  showing that the construction cannot use fewer than
  $2n/\log(1/\epsilon)$ rounds to achieve
  crooked-indifferentiable security. 
\end{abstract}


\pagenumbering{arabic}
\pagestyle{plain}

\section{Introduction}

Random permutations and ideal ciphers are idealized models that have proven to be powerful tools for
designing and reasoning about cryptographic schemes, particularly in the settings of symmetric key encryption (e.g., block ciphers) and hash designs.   
They consist of
the following two steps: (i) design a scheme $\Pi$ in which all
parties (including the adversary) have oracle access to (a family of) truly
random permutations (and the corresponding inversions), and establish the security of $\Pi$ in this favorable
setting; (ii) instantiate the oracle in $\Pi$ with a suitable
cipher (such as AES) to obtain an instantiated
scheme $\Pi'$.  The random permutation (ideal cipher) heuristic states that if the
original scheme $\Pi$ is secure, then the instantiated scheme $\Pi'$
is also secure.
In this work we focus
on the problem of correcting faulty---or adversarially
corrupted---ideal ciphers/random permutations so that they can be confidently applied for
such cryptographic purposes.


One particular motivation for correcting random permutations/ideal ciphers in a
cryptographic context arises from works studying design
and security in the subversion (i.e., \emph{kleptographic}) setting.
In this setting, various components of a cryptographic scheme may
be subverted by an adversary, so long as the tampering cannot be
detected via blackbox testing. This is a challenging framework
because many basic cryptographic techniques are
not directly available: in particular, the random permutation/ideal cipher paradigm is
directly undermined. In terms of the discussion above, the random permutation/ideal cipher---which is eventually to be replaced with a concrete cipher---is subject to adversarial subversion which complicates even
the first step of the random permutation/ideal cipher methodology. To see a simple example, for AES, denoted as ($\sf{AES.K, AES.E, AES.D}$), whose software/hardware implementation (denoted as $\sf{\widetilde{AES.K}, \widetilde{AES.E}, \widetilde{AES.D}}$) might be subverted as follows: ${\sf\widetilde{AES.E}}(k,m^*)=k$, for a trigger message $m^*$ randomly chosen by the adversary, while $\sf{\widetilde{AES.E}}=\sf{AES.E}$ otherwise, i.e., only when encrypting a special trigger message, the subverted encryption algorithm directly outputs the secret key. This is clearly undetectable via blackbox testing. 
Since the subverted implementation of AES now cannot be assumed to be an ideal cipher anymore, the security of constructions that previously relied on this assumption also becomes elusive.

Our goal is
to provide a generic approach that can rigorously ``protect'' the
usage of ideal cipher/random permutation from subversion and, essentially, establish a
``crooked'' ideal cipher/random permutation methodology.
Specifically, given a function $\tilde{h}$ drawn from a distribution
which {\em agrees in most places} with a uniform permutation, we would like to
produce a corrected version which appears still as an ideal cipher/random permutation to adversaries with
a polynomially bounded number of queries. This model is also analogous to 
the classical study of ``program checking and self-correcting''
\cite{Blum88,STOC:BluKan89,STOC:BluLubRub90}: the goal in this theory
is to transform a program that is faulty at a small fraction of inputs
(modeling an evasive adversary) to a program that is correct at all
points with overwhelming probability. Our setting intuitively adapts
this classical theory of self-correction to the study of ``self-correcting
a probability distribution.''  Notably, the functions to be corrected
are structureless, instead of heavily structured. 

\medskip
\noindent{\em The model of ``crooked'' indifferentiability.} The first
work in this line was \cite{C:RTYZ18}, focusing on correcting
subverted random oracles; in particular, they introduced a security
model called {\em crooked-indifferentiability} to formally capture the
problem as follows: First, a function
$h: \{0,1\}^n \rightarrow \{0,1\}^n$ is drawn uniformly at random.
Then, an adversary may \emph{subvert} the function $h$, yielding a new
function $\tilde{h}$. The subverted function $\tilde{h}(x)$ is
described by an adversarially-chosen (polynomial-time) algorithm
$\tilde{H}^h(x)$, with oracle access to $h$. This function may differ
from the original function (so that $\tilde{h}(x) \neq h(x)$) at only
a negligible fraction of inputs (to evade blackbox testing).
To show that the resulting function (construction) is ``as good as'' a
random oracle in the sense of indifferentiability
~\cite{TCC:MauRenHol04,C:CDMP05}, an \emph{$H$-crooked-distinguisher
  ${\sD}$} was introduced; it first prepares the subverted
implementation $\tilde{H}$ (after querying $H$ first); then a fixed
amount of (public) randomness $R$ is drawn and published; the
construction $\rC$ may use only the subverted implementation
$\tilde{H}$ and the randomness $R$. Now, following the
indifferentiability framework, we will ask for a simulator $\cS$ such
that $(\rC^{\tilde{H}^h}(\cdot,R),h)$ and $(\cF,\cS^{\tilde{H}}(R))$
are indistinguishable to any $H$-crooked-distinguisher $\sD$ (even one
who knows $R$).

\subsection{Our Contribution}\label{subsec:our contribution and construction}
We investigate the above question in the more restrictive random permutation/ideal cipher setting.
We first adopt the security model of {\em crooked-indifferentiability} for random permutation. 
(A formal definition appears in Section~\ref{sec:model}.)

\medskip
\noindent\textbf{A warm up construction.} To consider feasibility of
correcting a subverted random permutation, and also as an example to
explore the crooked-indifferentiability model, we start with a warm-up
construction by composing the following two components.


\smallskip
\noindent{\em Component I.} The first component is built from a source random function that was proven to be {\em crooked}-indifferentiable from a random oracle \cite{C:RTYZ18}.

The source function is expressed as a family of
$\ell+1$ independent random oracles:
\[ h_0: \{0,1\}^{n} \rightarrow \{0,1\}^{3n}\,, \text{and }
  h_i: \{0,1\}^{3n} \rightarrow \{0,1\}^{n}\,\text{for
       $i \in \{1, \ldots, \ell\}$.} \]
These can be realized as slices of a single random function
$H: \{0,1\}^{n'} \rightarrow \{0,1\}^{n'}$, with
$n' = 3n + \lceil \log \ell+1 \rceil$ by an appropriate convention for
embedding and extracting inputs and values. Given subverted implementations
$\{\tilde{h}_i\}_{i=0,\ldots,\ell}$ (defined as above by the
adversarially-defined algorithm $\tilde{H}$),
the corrected function is defined as:
\[
  \rC^{\tilde{H}^H}(x) \eqdef \tilde{h}_0\left(\bigoplus_{i = 1}^{\ell}
    \tilde{h}_i(x \oplus r_i)\right)\,,
\]
where $R = (r_1, \ldots, r_\ell)$ is sampled uniformly after
$\tilde{h}$ is provided (and then made public).

\smallskip
\noindent{\em Component II: the classical Feistel cipher.} The second component is the classical Feistel cipher with $\ell$ rounds for $\ell=14$. Coron et al.~\cite{JC:CHKPST16} proved it is indifferentiable from a random permutation.
%
The classical \emph{$\ell$-round Feistel
  cipher} transforms a sequence of functions
$F_1, \ldots, F_\ell: \{0,1\}^n \rightarrow \{0,1\}^n$ into a
permutation on the set $\{0,1\}^{2n}$. The construction logically
treats $2n$-bit strings as pairs $(x,y)$, with $x, y \in \{0,1\}^n$,
and is defined as the composition of a sequence of permutations
defined by the $F_i$. Specifically, given an input $(x_0,x_1)$, the
construction defines
\[
  x_{i+1}:=x_{i-1} \oplus F_i(x_i)
\]
for each $i = 1, \ldots, \ell$, and results in the output string
$(x_{\ell}, x_{\ell+1})$. It is easy to see that the resulting
function is a permutation. In practical settings, the ``round
functions'' ($F_i$) are often keyed functions (determined by secret
keys of length $\poly(n)$), in which case the construction results in
a keyed permutation.

\begin{figure}[ht]
\vspace{-5mm}
\begin{center}
\begin{tikzpicture}[scale=0.8]
    \node[draw,thick,minimum width=1cm] (f1) at ($1*(0,-1.5cm)$)  {$F_1$};
    \node (xor1) [left of = f1, node distance = 2cm] {$\bigoplus$};
    \draw[thick,-latex] (f1) -- (xor1);

    \node[draw,thick,minimum width=1cm] (f2) at ($2*(0,-1.5cm)$)  {$F_2$};
    \node (xor2) [left of = f2, node distance = 2cm] {$\bigoplus$};
    \draw[thick,-latex] (f2) -- (xor2);
    
 	\draw[thick,latex-latex] (f1.east) -| +(1.5cm,-0.5cm) -- ($(xor1) - (0,1cm)$) -- ($(xor1.north) - (0,1.5cm)$);
 	\draw[thick] (xor1.south) -- ($(xor1)+(0,-0.5cm)$) -- ($(f1.east) + (1.5cm,-1cm)$) -- +(0,-0.5cm);
    
 	\draw[thick,latex-] (f2.east) -| +(1.5cm,-0.5cm) -- ($(xor2) - (0,1cm)$);
 	\draw[thick] (xor2.south) -- ($(xor2)+(0,-0.5cm)$) -- ($(f2.east) + (1.5cm,-1cm)$);
	
	\draw[thick, densely dotted] ($(f2.east) + (1.5cm,-1cm)$) -- +(0,-0.5cm);
	\draw[thick, densely dotted] ($(xor2) - (0,1cm)$) -- ($(xor2.north) - (0,1.5cm)$);
	
    \node at (0,-4.5cm) {\scriptsize{for $\ell$ rounds}};

    \node[draw,thick,minimum width=1cm] (f3) at ($3*(0,-1.5cm) + (0, -.75cm)$)  {$F_{\ell-1}$};
    \node (xor3) [left of = f3, node distance = 2cm] {$\bigoplus$};
    \draw[thick,-latex] (f3) -- (xor3);

    \node[draw,thick,minimum width=1cm] (f4) at ($4*(0,-1.5cm) + (0, -.75cm)$)  {$F_{\ell}$};
    \node (xor4) [left of = f4, node distance = 2cm] {$\bigoplus$};
    \draw[thick,-latex] (f4) -- (xor4);
    
    \draw[thick,latex-latex] (f3.east) -| +(1.5cm,-0.5cm) -- ($(xor3) - (0,1cm)$) -- ($(xor3.north) - (0,1.5cm)$);
 	\draw[thick] (xor3.south) -- ($(xor3)+(0,-0.5cm)$) -- ($(f3.east) + (1.5cm,-1cm)$) -- +(0,-0.5cm);
	
	\draw[thick, densely dotted] ($(f3.east) + (1.5cm,0cm)$) -- +(0cm,0.5cm);
	\draw[thick, densely dotted] (xor3.north) -- +(0cm,0.35cm);

    \node (p0) [draw,thick,above of = f1, minimum width=5cm,minimum height=0.5cm,node distance=1cm] {$IP$}; 
    \node (l0) [above of = xor1,node distance=2cm] {$L$};
    \node (r0) [right of = l0, node distance = 4cm] {$R$};
    \draw[thick,-latex] (l0 |- p0.south) -- (xor1.north);
    \draw[thick] ($(f1.east)+(1.5cm,0)$) -- +(0,0.75cm);

    \draw[thick,latex-] (l0 |- p0.north) -- (l0);
    \draw[thick,latex-] (r0 |- p0.north) -- (r0);

    \node (p4) [draw,thick,below of = f4, minimum width=5cm,minimum height=0.5cm,node distance=1cm] {$FP$}; 
    \node (l4) [below of = xor4,node distance=2cm] {$L'$};
    \node (r4) [right of = l4, node distance = 4cm] {$R'$};
    \draw[thick,latex-latex] (f4.east) -| +(1.5cm,-0.75cm);
    \draw[thick,-latex] (xor4.south) -- ($(xor4)+(0,-0.75cm)$);

    \draw[thick,-latex] (l4 |- p4.south) -- (l4);
    \draw[thick,-latex] (r4 |- p4.south) -- (r4);

\end{tikzpicture}

\end{center}

\caption{The $\ell$ round classical Feistel construction.}
\vspace{-4mm}
\end{figure}
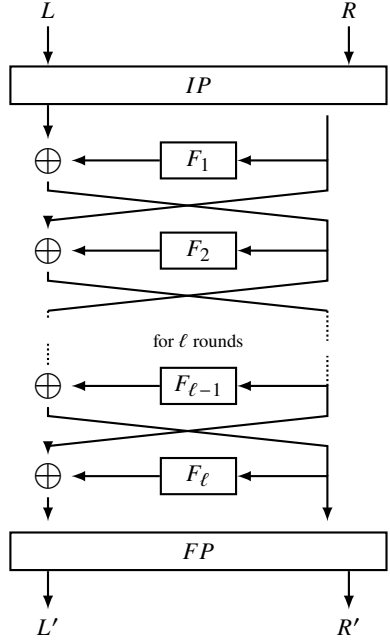


\smallskip
\noindent{\em Composing the two components.} We can compose the above two components by replacing the 14 round functions in component II with 14 independent copies of component I. The result construction, by the property of indifferentiability, is also crooked-indifferentiable from a random permutation as a corollary of the replacement theorem of crooked-indifferentiability (see Section \ref{sec:crooked-indiff}).

\medskip

\noindent{\bf Our direct and ``optimal'' construction.} 
\label{Construction} However, there are two drawbacks of this construction. First, the structure of the construction is complicated. Second, it makes at least linear number of invocations of the underlying subverted component (and also $O(n^2)$ random bits) to achieve security.  Instead, we prove that a {\em direct} Feistel-based construction can also work and remove these drawbacks. In particular, our construction involving only  \emph{public}
randomness can boost a ``subverted'' random permutation (or just a function) into a construction
that is indifferentiable from a perfect random permutation. (Section
\ref{sec:overview}, \ref{sec:proof}). Besides structure-wise simplicity (and the fact that it adopts the direct Feistel structure), our construction requires a smaller number of invocations of the underlying (subverted) random permutation, which is essentially optimal up to constant factors (at least for the Feistel structure, as we prove impossibility to have fewer rounds; there was also explicit attacks for the case of random oracle in \cite{C:RTYZ18}, but the construction in \cite{C:RTYZ18} was not ``tight'' in this sense, with a factor of $O(\log(1/\epsilon))$).

Our subversion-resistant construction on strings
of length $2n$ relies on the parameter $\ell$ and the Feistel
construction applied to $\ell$ round functions that are determined by:
\begin{itemize}
\item $\ell$ functions $F_i: \{0,1\}^n \rightarrow \{0,1\}^n$ that are
  subject to subversion as described above, 
\item an additional family of $\ell$ public, uniform affine-linear
  functions determined by $\ell$ pairs
  $(a_i,b_i) \in \GL(\F_2,n) \times \F_2^n$.\footnote{For technical reasons, we need to encode the input of the round function with the pairwise independent function, please see the proof of Lemma \ref{Lemma: No subverted chain covers a long honest chain} for detailed discussions.}
\end{itemize}
The affine-linear functions are determined by independent and uniform
selection of $a_i$ from $\GL(n,\mathbb{F}_2)$ (to be concrete, the
collection of invertible $n \times n$ matrices with elements in $\F_2$), and
$b_i \in \mathbb{F}_2^n$. The $i$-th affine linear function, defined on
an input $x \in \F_2^n$, is given by the rule
$x \mapsto a_i \cdot x \oplus b$. The final construction is given by the
Feistel construction applied to the round functions
$x \mapsto \tilde{F}_i(a_i \cdot x \oplus b)$, where $\tilde{F}$ is the
subverted version of the function $F_i$. To be concrete, with the data
$(F_i, a_i, b_i)$ (with $i = 1, \ldots, \ell$), the construction
$C: \{0,1\}^{2n} \rightarrow \{0,1\}^{2n}$ is defined by the rule
\begin{align*}
  & C(x_0,x_1):=(x_{\ell},x_{\ell+1})\,,\text{where}\\
  & x_{i+1}:=x_{i-1} \oplus \tilde{F}_i(a_i \cdot x_i \oplus b_i)\text{ , for $i=1,\ldots,\ell$}\,.
\end{align*}
where $n$-bit strings $x$ and $b_i$ are viewed as length $n$ column
vectors, $a_i \cdot x_i$ is the multiplication between matrix $a_i$
and column vector $x_i$, and $\tilde{F}_i(x)$ is the subverted
function value at $(i.x)$ using the subversion algorithm $\cA$.

\ignore{
\paragraph{\bf Some features of our construction.} Our construction has several features that are different from other crooked-indifferentiable constructions:
\begin{itemize}
\item Our construction has a simpler structure than other construction that are crooked-indifferentiable from a random permutation. All existing constructions are obtained by composing two smaller constructions, while our construction is a straightforward Feistel construction.
\item Our construction uses $O(n^2/\log(1/\epsilon)$ random bits which are better than other constructions, which use $n^2$ random bits. $O(n^2/\log(1/\epsilon)$ is also the lower bound of the number of the random bits required for crooked-indifferentiability with the Feistel construction.
\item Our construction needs a different technique than that in \cite{C:RTYZ18}. The security of the two-layer construction in \cite{C:RTYZ18} is relied on the fact that the XOR structure
\[
  \tilde{g}_R(x) \eqdef \bigoplus_{i = 1}^{\ell}
    \tilde{h}_i(x \oplus r_i)\,,
\]
is unpredictable and the simulator can always program $h_0$ at $\tilde{g}_R(x)$. By contrast, the simulator in our construction can not program at a fixed round function (because otherwise the distinguisher can always query this round function first). The simulator needs to flexibly chooses where to program according to the queries of the distinguisher. Also, the Feistel construction also needs a more sophisticated security analysis than the simple XOR structure.
\end{itemize}
}

\medskip
\noindent{\bf New techniques for proving crooked-indifferentiability of Feistel structure.}
Besides that we aim to get a random permutation, which has stricter requirements, our security analysis needs substantially more sophisticated techniques than that in \cite{C:RTYZ18}. The security of the two-layer construction for random oracle in \cite{C:RTYZ18} relies on the fact that the XOR structure
\[
  \tilde{g}_R(x) \eqdef \bigoplus_{i = 1}^{\ell}
    \tilde{h}_i(x \oplus r_i)\,
\]
is unpredictable so that the simulator can always program $h_0$ (at $\tilde{g}_R(x)$). By contrast, our simulator cannot program at one fixed round function (because otherwise the distinguisher can always query this round function first). The simulator needs to flexibly choose where to program according the queries of the distinguisher. 

We remark that some techniques in our proof are inspired by the elegant techniques of Coron et al.~\cite{JC:CHKPST16} for conventional indifferentiability; for example, we adopt the concept of ``chain'' to analyze the basic structure of Feistel construction. However, the subversion of the random function in our setting introduces multiple new challenges, because of, e.g., on-the-fly adaptive queries of  the subverted $\tilde{F}$ when the simulator runs it. 


To achieve ``crooked'' indifferentiability, our simulator needs to
ensure consistency between two ways of generating output values: one
is directly from the construction $C$; the other
calls for an ``explanation'' of $P$---a truly random permutation---via
reconstruction from related queries to $F$ (in a way consistent with
the subverted $\tilde{F}$). To ensure a correct
simulation, the simulator must suitably answer related queries
(defining one value of $C$). Essentially, the proof relies on the fact that for any Feistel ``chain'' $(x_0,\ldots,x_{\ell+1})$, the simulator can find two places $(x_u,x_{u+1})$ and program $F_u(a_u \cdot x_u \oplus b_u):=x_{u-1} \oplus x_{u+1}$, $F_{u+1}(a_{u+1} \cdot x_{u+1} \oplus b_{u+1}):=x_u \oplus x_{u+2}$ to make the Feistel chain consistent with $P(x_0,x_1)=(x_{\ell},x_{\ell+1})$.
There are two major challenges in the simulation: first, one of the
two programmed terms ($F_u(a_u \cdot x_u \oplus b_u)$ and
$F_{u+1}(a_{u+1} \cdot x_{u+1} \oplus b_{u+1})$) may be already
evaluated prior to programming by the simulator; second,
the one of the two programmed terms may be dishonest (i.e.,
$\tilde{F} \neq F$) so that programming may not be possible.

In the security proof, to analyze the difference between the construction and the ideal object (random permutation), we need to carefully design several intermediate games for transition. Using the games, we reduce the gap between the construction and the ideal object to the probability of two ``bad events'' that reflect the two challenges above. Finally, we prove the bad events are negligible by carefully analyzing the structure of our construction. We also need to give a more careful analysis of efficiency of the simulator as it has to internally generate many more terms because of the necessity of running $\tilde{F}$.



\ignore{
\begin{itemize}
\item A function $h: \{0,1\}^n \rightarrow \{0,1\}^n$ is drawn uniformly at random.
\item An adversary may \emph{subvert} the function $h$, yielding a new
  function $\tilde{h}$. The subverted function $\tilde{h}(x)$ is
  described by an adversarially-chosen algorithm $H^h(x)$, with oracle
  access to $h$, which determines the value of $\tilde{h}(x)$ after
  making a polynomial number of adaptive queries to $h$. We insist
  that $\tilde{h}(x) = h(x)$ with probability $1- \epsilon$ over
  random choice of $x$.
\item The function $\tilde{h}$ is ``corrected'' to a function
  $\tilde{h}_R$ by a procedure which may involve public randomness.
\end{itemize}
}



\medskip
\noindent{\bf Related works}. 
\noindent{\em Conventional indifferentiability of Feistel cipher.} The notion of
indifferentiability was proposed in the elegant work of Maurer et
al.~\cite{TCC:MauRenHol04}; this notably extends the classical concept
of indistinguishability to circumstances where one or more of the
relevant oracles are publicly available (such as a random oracle). It
was later adapted by Coron et al.~\cite{C:CDMP05}; several other
variants were proposed and studied in
\cite{TCC:DodPun06,EC:DodPun07}. A line of notable work applied the framework to the 
ideal cipher problem: in particular the Feistel construction (with a
small constant number of rounds) is indifferentiable from a random
permutation, see \cite{JC:CHKPST16,EC:DacKatThi16,C:DaiSte16}. Our
work adopts the indifferentiability framework applied to the subverted
case (that is, crooked-indifferentiability); the construction
aims to sanitize a subverted random function to be indifferentiable from a clean random permutation.

\smallskip
\noindent{\em Crooked-indifferentiability of random oracles.}
In \cite{C:RTYZ18}, the authors proved that a simple two-layer construction using $O(n^2)$ public random bits is crooked-indifferentiable from a random oracle (following results \cite{INDO,RO-full} gave more rigorous analysis, and show applications in subversion resistant digital signatures \cite{PKC:CRTYZZ19}). This work focuses on a  strictly stronger goal: to obtain a random {\em permutation}, and with smaller number of rounds. This line of work was motivated to defend against kleptographic attacks, originally introduced by Young and
Yung~\cite{C:YouYun96,EC:YouYun97}, with renewed recent interests (e.g.,~\cite{C:BelPatRog14,EC:DGGJR15,AC:RTYZ16}).

\smallskip
\noindent{\em Related work on non-uniformity and pre-processing.} There are several recent approaches that
study idealized objects in the auxiliary input model (or with
pre-processing)~\cite{EC:DodGuoKat17,EC:CDGS18}. As pointed out in \cite{RO-full}, crooked-indifferentiability is strictly
stronger than the pre-processing model: besides pre-processing
queries, the adversary may embed (and keep) compressed state as backdoor; more importantly, our
subverted implementation can further misbehave in ways that cannot be
captured by any single-shot polynomial-query adversary because the
subversion at each point is determined by a local adaptive
computation.

\ignore{
\smallskip
\noindent{\em Related work on kleptographic security.}  
Kleptographic attacks were originally introduced by Young and
Yung~\cite{C:YouYun96,EC:YouYun97}: In such attacks, the adversary
provides subverted implementations of the cryptographic primitive,
trying to learn secrets without being detected.
%
In recent years, several remarkable allegations of cryptographic tampering~\cite{PLS13,ReutersRSA2013}, including detailed investigations~\cite{Checkoway14,juniper_paper}, have produced a renewed interest in both kleptographic attacks and in techniques for preventing them~ e.g.,\cite{C:BelPatRog14,EC:DGGJR15,AC:RTYZ16,CCS:RTYZ17}. None of those work considered how to actually correct a subverted random permutation.
}

\section{The Model: Crooked Indifferentiability}
\label{sec:model}


\subsection{Preliminary: Indifferentiability}
\label{sec:indifferentiability}

The notion of indifferentiability proposed in the elegant work of
Maurer et al.~\cite{TCC:MauRenHol04} has proven to be a powerful tool
for studying the security of hash function and many other primitives.
The notion extends the classical concept of indistinguishability to
the setting where one or more oracles involved in the construction are
publicly available. The indifferentiability framework
of~\cite{TCC:MauRenHol04} is built around random systems providing
interfaces to other systems. 
Coron et al.~\cite{C:CDMP05} demonstrate a strengthened\footnote{Technically, the quantifiers in the security definitions in the original~\cite{TCC:MauRenHol04} and in the followup ~\cite{C:CDMP05} are different; in the former, a simulator needs to be constructed for each adversary, while in the latter a simulator needs to be constructed for {\em all} adversaries.}
 indifferentiability framework built around Interactive
Turing Machines (as in~\cite{FOCS:Canetti01}). Our presentation
borrows heavily from~\cite{C:CDMP05}.  In the next subsection, we will
introduce our new notion, \emph{crooked indifferentiability}.

\paragraph{Defining indifferentiability.}
An \emph{ideal primitive} is an algorithmic entity which receives
inputs from one of the parties and returns its output immediately to
the querying party.  We now proceed to the definition of
indifferentiability~\cite{TCC:MauRenHol04,C:CDMP05}:


\begin{definition}[Indifferentiability~\cite{TCC:MauRenHol04,C:CDMP05}]
A Turing machine $\rC$ with oracle access to an ideal primitive $\cG$ is said to be   $(t_\cD,t_\cS,q,\epsilon)$-indifferentiable from an ideal primitive $\cF$, if there is a simulator $\cS$, such that for any distinguisher $\cD$, it holds that :
$$\left| \Pr[\cD^{\rC,\cG}(1^\secp) = 1] - \Pr[\cD^{\cF,\cS}(1^\secp) = 1] \right| \leq \epsilon\,.$$
  The simulator $\cS$ has oracle access to $\cF$ and runs in time at most $t_\cS$. The distinguisher $\cD$ runs in time at most $t_\cD$ and makes at most $q$ queries. Similarly, $\rC^\cG$ is said to be (computationally) indifferentiable from $\cF$ if $\epsilon$ is a negligible function of the security parameter $\secp$ (for polynomially bounded $t_\cD$ and $t_\cS$).
  See Figure \ref{fig:indiff}.
  
\end{definition}

\begin{figure}[htb!]
  \begin{center}
    \begin{tikzpicture}
            \draw[thin, rounded corners=2mm] (-3.5,1.5) rectangle +(.9,.9) node[pos=.5] {$\rC$};
      \draw[->,thin] (-2.5,2) -- (-1.6,2);

      \draw[thin, rounded corners=2mm] (-1.5,1.5) rectangle +(.9,.9) node[pos=.5] {$\cG$};
      \draw[thin] (0,2.5) -- (0,1);

                \draw[thin, rounded corners=2mm] (.5,1.5) rectangle +(.9,.9) node[pos=.5] {$\cF$};
      \draw[->,thin] (2.4,2) -- (1.6,2);

      \draw[thin, rounded corners=2mm] (2.5,1.5) rectangle +(.9,.9) node[pos=.5] {$\cS$};
           
\draw[thin,dashed,->] (0.2,0.45) .. controls (1,1) and (2.2,.4) .. (3,1.4);
      \draw[thin,dashed,->] (-0.2,0.45) .. controls (.3,1) and (.9,.9) .. (1,1.4);
\draw[thin,dashed,->] (0.2,0.45) .. controls (-.3,1) and (-.9,.9) .. (-1,1.4);
      \draw[thin,dashed,->] (-0.2,0.45) .. controls (-1,1) and (-2.2,.4) .. (-3,1.4);

            \draw[thin, rounded corners=2mm] (-.6,-.5) rectangle +(1.2,.9) node[pos=.5] {$\cD$};

    \end{tikzpicture}
  \end{center}
\caption{
\label{fig:indiff} 
The indifferentiability notion: the distinguisher $\cD$ either interacts with algorithm $\rC$ and ideal primitive $\cG$, or with ideal primitive $\cF$ and simulator $\cS$. Algorithm $\rC$ has oracle access to $\cG$, while simulator $\cS$ has oracle access to $\cF$.}
\end{figure}

As illustrated in Figure \ref{fig:indiff}, the role of the simulator is to simulate the ideal primitive
$\cG$ so that no distinguisher can tell whether it is interacting with $\rC$ and $\cG$, or with $\cF$ and
$\cS$; in other words, the output of $\cS$ should look ``consistent'' with what the distinguisher
can obtain from $\cF$. Note that the simulator does not observe the distinguisher's queries to
$\cF$; however, it can call $\cF$ directly when needed for the simulation. Note that, in some sense, the simulator must ``reverse engineer'' the construction $\rC$ so that the simulated oracle appropriately induces $\cF$ and, of course, possesses the correct marginal distribution. 

\paragraph{Replacement.}
 It is shown in \cite{TCC:MauRenHol04} that if $\rC^\cG$ is indifferentiable from $\cF$, then $\rC^\cG$ can replace $\cF$ in any cryptosystem, and the resulting cryptosystem is at least as secure in the  $\cG$  model as in the $\cF$ model. 

We use the definition of \cite{TCC:MauRenHol04} to specify what it means for a cryptosystem to be at least as secure in the  $\cG$  model as in the $\cF$ model. A cryptosystem is modeled as an Interactive Turing Machine with an interface to an adversary $\cA$ and to a public oracle. 
The cryptosystem is run by an environment $\cE$ which provides a binary output and also runs the adversary. In the  $\cG$  model, cryptosystem $\cP$ has oracle access to $\rC$ whereas attacker $\cA$ has oracle access to $\cG$. In the $\cF$ model, both $\cP$ and $\cA$ have oracle access to $\cF$. The definition is illustrated in Figure~\ref{fig:composition}.

\begin{figure}
  \begin{center}
    \begin{tikzpicture}
     
            \draw[thin, rounded corners=2mm] (-3.5,1.5) rectangle +(.9,.9) node[pos=.5] {$\rC$};

      \draw[->,thin] (-2.5,2) -- (-1.6,2);

      \draw[thin, rounded corners=2mm] (-1.5,1.5) rectangle +(.9,.9) node[pos=.5] {$\cG$};
      \draw[thin] (0,2.5) -- (0,-2.5);

                \draw[thin, rounded corners=2mm] (-3.5,0) rectangle +(.9,.9) node[pos=.5] {$\cP$};

      \draw[->,thin] (-2.5,0.5) -- (-1.6,0.5);
      
            \draw[->,thin] (-3,1) -- (-3,1.4);
            \draw[->,thin] (-1,1) -- (-1,1.4);

      \draw[thin, rounded corners=2mm] (-1.5,0) rectangle +(.9,.9) node[pos=.5] {$\cAtwo$};
             \draw[thin, rounded corners=2mm] (-3.5,-1.5) rectangle +(3,.9) node[pos=.5] {${\cE}$};
                                                                              \draw[->,thin] (-2,-1.5) -- (-2,-2.4);


      \draw[->,thin] (-2.5,0.5) -- (-1.6,0.5);
      
            \draw[->,thin] (-3,-.5) -- (-3,-0.1);
            \draw[->,thin] (-1,-.5) -- (-1,-.1);

                \draw[thin, rounded corners=2mm] (.5,1.5) rectangle +(.9,.9) node[pos=.5] {$\cF$};


                                  
                                     \draw[thin, rounded corners=2mm] (.5,0) rectangle +(.9,.9) node[pos=.5] {$\cP$};

      \draw[<->,thin] (-2.5,0.5) -- (-1.6,0.5);
      
            \draw[->,thin] (1,1) -- (1,1.4);
            \draw[->,thin] (3,1) |- (1.6,2);

      \draw[thin, rounded corners=2mm] (2.5,0) rectangle +(.9,.9) node[pos=.5] {$\cStwo$};

                      \draw[thin, rounded corners=2mm] (0.5,-1.5) rectangle +(3,.9) node[pos=.5] {${\cE}$};
                                                        \draw[->,thin] (2,-1.5) -- (2,-2.4);

      \draw[<->,thin] (1.5,0.5) -- (2.4,0.5);
      
           \draw[->,thin] (1,-.5) -- (1,-0.1);
            \draw[->,thin] (3,-.5) -- (3,-.1);

    \end{tikzpicture}
  \end{center}
\caption{
\label{fig:composition} 
The environment $\cE$ interacts with cryptosystem $\cP$ and attacker $\cAtwo$. In the $\cG$ model (left), $\cP$ has oracle access to $\rC$ whereas $\cAtwo$ has oracle access to $\cG$. In the $\cF$ model, both $\cP$ and $\cStwo$ have oracle access to $\cF$.}
\end{figure}

\begin{definition}
A cryptosystem is said to be at least as secure in the $\cG$ model with algorithm $\rC$ as in the $\cF$ model, 
if for any environment $\cE$ and any attacker $\cAtwo$ in the $\cG$ model, there exists an attacker $\cStwo$ in the $\cF$ model, such that: 
\[
\Pr[\cE(\cP^{\rC^{}},\cAtwo^{\cG})=1]-\Pr[\cE(\cP^\cF,\cStwo^\cF)=1]\leq\epsilon.
\]
where $\epsilon$ is a negligible function of the security parameter $\secp$. 
Similarly, a cryptosystem is said to be computationally at least as secure, etc., if $\cE$, $\cAtwo$ and $\cStwo$ are polynomial-time in $\secp$.
\end{definition}

 

We have the following security preserving (replacement) theorem, which says that when an ideal primitive is replaced by an indifferentiable one, the security of the ``big'' cryptosystem remains:
 \begin{theorem}[\cite{TCC:MauRenHol04,C:CDMP05}]
 \label{theorem:composition}
Let $\cP$ be a cryptosystem with oracle access to an ideal primitive $\cF$. Let $\rC$ be an algorithm such that $\rC^{\cG}$ is indifferentiable from $\cF$. Then cryptosystem $\cP$ is at least as secure in the $\cG$ model with algorithm $\rC$ as in the $\cF$ model. 
\end{theorem}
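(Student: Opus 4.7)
The plan is to leverage the indifferentiability of $\rC^{\cG}$ from $\cF$ to translate any attacker in the $\cG$-model into an attacker in the $\cF$-model by composing with the indifferentiability simulator $\cS$. Specifically, given any environment $\cE$ and any attacker $\cAtwo$ in the $\cG$-model, I would define $\cStwo$ to be the machine that internally runs $\cAtwo$ and responds to each of $\cAtwo$'s queries (nominally directed to $\cG$) by forwarding them to a single persistent instance of $\cS$, which in turn has oracle access to $\cF$. By construction $\cStwo^{\cF}$ is a valid attacker in the $\cF$-model, and its running time is polynomially bounded whenever $\cAtwo$ and $\cS$ are.

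The next step is to build, from $\cE$ together with the cryptosystem $\cP$ and the attacker $\cAtwo$, a distinguisher $\cD^{*}$ for the indifferentiability game. Given a pair of oracles $(\cO_1,\cO_2)$ that is either $(\rC^{\cG},\cG)$ or $(\cF,\cS^{\cF})$, the distinguisher $\cD^{*}$ internally emulates $\cE$, $\cP$, and $\cAtwo$, routing each $\cP$-query to $\cO_1$ and each $\cAtwo$-query to $\cO_2$, and finally outputs whatever $\cE$ outputs. Because $\cE$, $\cP$, and $\cAtwo$ each run in polynomial time, $\cD^{*}$ makes a polynomially bounded number of queries and is a legitimate distinguisher.

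The crux of the argument is then a view-equality observation: when $\cD^{*}$ is placed in the $(\rC^{\cG},\cG)$ world, its internal transcript is identically distributed to the experiment $\cE(\cP^{\rC},\cAtwo^{\cG})$; and when $\cD^{*}$ is placed in the $(\cF,\cS^{\cF})$ world, its internal transcript is identically distributed to $\cE(\cP^{\cF},\cStwo^{\cF})$, precisely because the $\cAtwo$-queries in $\cD^{*}$ are now being answered by $\cS^{\cF}$, which is exactly the oracle that $\cStwo^{\cF}$ exposes to $\cAtwo$. Combining these two equalities with the indifferentiability bound yields
\[
\bigl|\Pr[\cE(\cP^{\rC},\cAtwo^{\cG})=1]-\Pr[\cE(\cP^{\cF},\cStwo^{\cF})=1]\bigr| = \bigl|\Pr[\cD^{*,\rC,\cG}=1]-\Pr[\cD^{*,\cF,\cS}=1]\bigr| \le \epsilon,
\]
which is the desired conclusion.

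I do not expect a serious obstacle here; the content is essentially bookkeeping of oracle routing. The only subtlety worth spelling out is that $\cS$ may be stateful across queries, so the reduction must maintain \emph{one} persistent instance of $\cS$ inside $\cStwo$ (and inside $\cD^{*}$) rather than invoking a fresh copy per query; once this is set up correctly, the two views coincide by inspection. The computational variant of the statement follows by the same argument, noting that $\cStwo$ inherits polynomial running time from $\cAtwo$ and $\cS$, and $\cD^{*}$ from $\cE$, $\cP$, $\cAtwo$.
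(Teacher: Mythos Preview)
Your proposal is correct and follows exactly the standard approach. The paper does not give its own proof of this theorem (it is cited from \cite{TCC:MauRenHol04,C:CDMP05}), but the paper's proof of the analogous crooked version (Theorem~\ref{theorem:crooked-composition}) proceeds in precisely the same way: view $(\cE,\cP,\cAtwo)$ together as an indifferentiability distinguisher, apply the indifferentiability guarantee to swap $(\rC^{\cG},\cG)$ for $(\cF,\cS)$, and define $\cStwo$ by merging $\cAtwo$ with $\cS$.
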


\subsection{Crooked indifferentiability}
\label{sec:crooked-indiff}

The ideal primitives that we focus on in this paper are random permutations.  
A random permutation is an ideal primitive which provides an independent random output for each new query so that the resulting function is a permutation.  We next formalize a new notion called \emph{crooked indifferentiability} to reflect the challenges in our setting with subversion; our formalization is for random permutations, but the formalization can be naturally extended to other ideal primitives.    

\paragraph{Crooked indifferentiability for random permutations.}

As mentioned in the introduction, we consider the problem of ``repairing'' a subverted random permutation in such a way that the corrected construction can be used as a drop-in replacement for an unsubverted random permutation. This immediately suggests invoking and appropriately adapting the indifferentiability notion. Specifically, we need to adjust the notion to reflect subversion.

We model the act of \emph{subversion of a permutation $H$} as creation of an ``implementation'' $\tilde{H}$ of the new, subverted permutation; in practice, this would be the source code of the subverted version of the function $H$. In our setting, however, where $H$ is modeled as a random permutation, we define $\tilde{H}$ as a polynomial-time algorithm with oracle access to $H$; thus the subverted function is $x \mapsto \tilde{H}^H(x)$. 
We proceed to survey the main modifications between crooked indifferentiability and the original notion of indifferentiability. 
\begin{enumerate}
\item The deterministic construction will have oracle access to the random permutation only via the subverted implementation $\tilde{H}$ but not via the ideal primitive $H$. (Operationally, the construction has oracle access to the function $x \mapsto \tilde{H}^H(x)$.)
The construction depends on access to trusted, but public, randomness $R$.  
\item The simulator is provided, as input, the subverted implementation $\tilde{H}$ (a Turing machine) and the public randomness $R$; it has oracle access to the target ideal functionaltiy ($\cF$).
\end{enumerate}
Point (2) is necessary, and desirable, as it is clearly impossible to achieve indifferentiability using a simulator that has no access to $\tilde{H}$ (the distinguisher can simply query an input such that $\rC$ will use a value that is modified by $\tilde{H}$ while $\cS$ has no way to reproduce this). More importantly, we will show below that security will be preserved by replacing an ideal random oracle with a construction satisfying our definition (with an augmented simulator). Specifically, we prove a security preserving (i.e., replacement) theorem akin to those of \cite{TCC:MauRenHol04} and \cite{C:CDMP05} for our adapted notions. 

\begin{definition}[$H$-crooked indifferentiability] 
\label{def:indiff-crooked}
We define the notion of $H$-crooked indifferentiability by the following experiment.

\begin{mdframed}
\begin{center}
  \textsc{Real Execution}
\end{center}
\begin{enumerate}
\item Consider a distinguisher $\sD$ and the following multi-phase real
execution. Initially, the distinguisher $\sD$ commences the first
phase: with oracle access to ideal primitive $H$ the distinguisher
constructs and publishes a \emph{subverted implementation} of $H$;
this subversion is described as a deterministic polynomial time
algorithm denoted $\tilde{H}$. (Recall that the algorithm $\tilde{H}$
implicitly defines a subverted version of $H$ by providing $H$ to
$\tilde{H}$ as an oracle---thus $\tilde{H}^H(x)$ is the value taken by
the subverted version of $H$ at $x$.) Then, a uniformly random string
$R$ is sampled and published.
\item Then the second phase begins involving
a deterministic construction $\rC$: the construction
$\rC$ requires the random string $R$ as input and has oracle access to
$\tilde{H}$ (the crooked version of $H$); explicitly this is the
oracle $x \mapsto \tilde{H}^H(x)$. 
\item Finally, the distinguisher $\sD$,
now with random string $R$ as input and full oracle access to the pair
$(\rC, H)$, returns a decision bit $b$.  Often, we call $\sD$ the
$H$-crooked-distinguisher.
\end{enumerate}
\begin{center}
  \textsc{Ideal execution}
\end{center}
\begin{enumerate}
\item Consider now the corresponding multi-phase ideal execution with the same $H$-crooked-distinguisher $\sD$. The ideal execution introduces a simulator $\cS$ responsible for simulating the behavior of $H$; the simulator is provided full oracle access to the ideal object $\cF$. Initially, the simulator must answer any queries made to $H$ by $\sD$ in the first phase. Then the simulator is given the random string $R$ and the algorithm $\langle \tilde{H}\rangle$ (generated by $\sD$ at the end of the first phase) as input.
\item In the second phase, the $H$-crooked-distinguisher $\sD$, now with random string $R$ as input and oracle access to the alternative pair $(\cF, \cS)$, returns a decision bit $b$.
\end{enumerate}
\end{mdframed} 

We say that construction $\rC$ is
$(n_{\source},n_{\target},q_{\sD},q_{\tilde{H}},r,\epsilon)$-$H$-crooked-indifferentiable
from ideal primitive $\cF$ if there is an efficient simulator $\cS$ so
that for any $H$-crooked-distinguisher $\sD$ making no more than
$q_{\sD}(\secp)$ queries and producing a subversion $\tilde{H}$ making
no more than $q_{\tilde{H}}(\secp)$ queries, the real execution and
the ideal execution are indistinguishable.  Specifically,
\[
  \left| \Pr_{u,R,H} \left[\tilde{H} \leftarrow\sD^H(1^\secp)\ ; \
      \sD^{\rC^{\tilde{H}}(R),H}(1^\secp, R) = 1\right] -
    \Pr_{u,R,\cF} \left[\tilde{H} \leftarrow\sD^H(1^\secp)\ ; \
      \sD^{\cF,\cS_{}^{\cF}(R,\langle \tilde{H}\rangle)}(1^\secp, R) = 1\right]
  \right| \leq \epsilon(\secp)\,.
\]
Here $R$ denotes a random string of length $r(\secp)$ and both
$H: \bool^{n_{\source}} \rightarrow \bool^{n_{\source}}$ and
$\cF: \bool^{n_{\target}} \rightarrow \bool^{n_{\target}}$ denote
random functions where $n_{\source}(\secp)$ and $n_{\target}(\secp)$
are polynomials in the security parameter $\secp$. We let $u$ denote
the random coins of $\sD$. The simulator is efficient in the sense
that it is polynomial in $n$ and the running time of the supplied
algorithm $\tilde{H}$ (on inputs of length $n_{\source}$). See
Figure~\ref{fig:indiff-crooked} for detailed illustration of the last
phase in both real and ideal executions. (While it is not explicitly
captured in the description above, the distinguisher $\sD$ is
permitted to carry state from the first phase to the second phase.)
The notation $C^{\tilde{H}}(R)$ denotes oracle access to the function
$x \mapsto \tilde{H}(x)$.
\end{definition}

\begin{figure}[h]
  \begin{center}
    \begin{tikzpicture}
            \draw[thin, rounded corners=2mm] (-4.5,1.5) rectangle +(0.9,0.9) node[pos=.5] {$\rC$} node[pos=2] {$R$};
                        \draw[thin, rounded corners=2mm] (-2.8,1.75) rectangle +(.5,.5) node[pos=.5] {$\tilde{H}$};

      \draw[->,thin] (-3.6,2) -- (-2.9,2);
      \draw[->,thin] (-2.3,2) -- (-1.6,2);

      \draw[thin, rounded corners=2mm] (-1.5,1.5) rectangle +(.9,.9) node[pos=.5] {$H$};
      \draw[thin] (0,2.5) -- (0,1);

                \draw[thin, rounded corners=2mm] (.5,1.5) rectangle +(.9,.9) node[pos=.5] {$\cF$} node[pos=2] {$R$};
      \draw[->,thin] (2.4,2) -- (1.6,2);

\draw [thin,dashed] (-3.1,3.2) arc [radius=.5, start angle=210, end angle= 300]; 

\draw [thin,dashed] (-3.2,3) arc [radius=.6, start angle=210, end angle= 310]; 

\draw [thin,dashed] (-3.3,2.8) arc [radius=.7, start angle=210, end angle= 310]; 

\draw [thin,dashed] (1.9,3.1) arc [radius=.5, start angle=230, end angle= 300]; 

\draw [thin,dashed] (1.7,2.9) arc [radius=.6, start angle=230, end angle= 300]; 

\draw [thin,dashed] (1.5,2.8) arc [radius=.7, start angle=210, end angle= 310]; 

      \draw[thin, rounded corners=2mm] (2.5,1.5) rectangle +(1.1,1.1) node[pos=.75] {$\cS$};
                                  \draw[thin, rounded corners=2mm] (2.6,1.75) rectangle +(.5,.5) node[pos=.5] {$\tilde{H}$};
 
\draw[thin,dashed,->] (0.2,0.45) .. controls (1,1) and (2.2,.4) .. (3,1.4);5
      \draw[thin,dashed,->] (-0.2,0.45) .. controls (.3,1) and (.9,.9) .. (1,1.4);
\draw[thin,dashed,->] (0.2,0.45) .. controls (-.3,1) and (-.9,.9) .. (-1,1.4);
      \draw[thin,dashed,->] (-0.2,0.45) .. controls (-1,1) and (-2.2,.4) .. (-4,1.4);

            \draw[thin, rounded corners=2mm] (-.6,-.5) rectangle +(1.2,.9) node[pos=.5] {$\sD$};

    \end{tikzpicture}
  \end{center}
  \caption{
\label{fig:indiff-crooked} 
The $H$-crooked indifferentiability notion: the distinguisher $\sD$, in the first phase, manufactures and publishes a subverted implementation denoted as $\tilde{H}$, for ideal primitive $H$; then in the second phase, a random string $R$ is published; 
after that, in the third phase, algorithm $\rC$, and simulator $\cS$ are developed; 
the $H$-crooked-distinguisher $\sD$, in the last phase,   
either interacting with algorithm $\rC$ and ideal primitive $H$, or with ideal primitive $\cF$ and simulator $\cS$, return a decision bit. 
Here, algorithm $\rC$ has oracle access to $\tilde{H}$, while simulator $\cS$ has oracle access to $\cF$ and $\tilde{H}$.
}
\end{figure}

Our main security proof will begin by demonstrating that in our
particular setting, security in a simpler model suffices: this is the
\emph{abbreviated crooked indifferentiability} model, articulated
below. We then show that---in light of the special structure of our
simulator---it can be effectively lifted to the full model above.

\begin{definition}[Abbreviated $H$-crooked indifferentiability]
  \label{def:abbrev-indiff-crooked}
  The abbreviated model calls for the distinguisher to provide the
  subversion algorithm $\tilde{H}$ at the outset (without the
  advantage of any preliminary queries to $H$). Thus, the abbreviated
  model consists only of the last phase of the full model.  Formally,
  in the abbreviated model the distinguisher is provided as a pair
  $(\sD,\tilde{H})$, the random string $R$ is drawn (as in the full
  model), and insecurity is expressed as the difference between the
  behavior of $\sD$ on the pair $(C^{\tilde{H}}(R),H)$ and the pair
  $(\cF,\cS^\cF(R,\langle\tilde{H}\rangle))$. Specifically, the
  construction $\rC$ is
  $(n_{\source},n_{\target},q_{\sD},q_{\tilde{H}},r,\epsilon)$-Abbreviated-$H$-crooked-indifferentiable
  from ideal primitive $\cF$ if there is an efficient simulator $\cS$
  so that for any $H$-crooked-distinguisher $\sD$ making no more than
  $q_{\sD}(\secp)$ queries and subversion algorithm $\tilde{H}$ making
  no more than $q_{\tilde{H}}(\secp)$ queries, the real execution and
  the ideal execution are indistinguishable:
  \[
    \left| \Pr_{u,R,H} \left[\sD^{\rC^{\tilde{H}}(R),H}(1^\secp, R) = 1\right] - 
      \Pr_{u,R,\cF} \left[\sD^{\cF,\cS_{}^{\cF}(R,\langle\tilde{H}\rangle)}(1^\secp, R) = 1\right] 
    \right| \leq \epsilon(\secp)\,.
  \]
  Here $R$ denotes a random string of length $r(\secp)$ and both
  $H: \bool^{n_{\source}} \rightarrow \bool^{n_{\source}}$ and
  $\cF: \bool^{n_{\target}} \rightarrow \bool^{n_{\target}}$ denote
  random functions where $n_{\source}(\secp)$ and $n_{\target}(\secp)$
  are polynomials in the security parameter $\secp$. We let $u$ denote
  the random coins of $\sD$. The simulator is efficient in the sense
  that it is polynomial in $n$ and the running time of the supplied
  algorithm $\tilde{H}$ (on inputs of length $n_{\source}$).
\end{definition}

Observe that while the abbreviated simulator is a fixed algorithm, its
running time may depend on the running time of $\tilde{H}$---in
particular, the definition permits $\cS$ sufficient running time to
simulate $\tilde{H}$ on a polynomial number of inputs.

Regarding the difference between these notions, observe that the
distinguisher can ``compile into'' the subversion algorithm
$\tilde{H}$ any queries and pre-computation that might have been
advantageous to carry out in phase I; such queries and pre-computation
can also be mimicked by the distinguisher itself. This technique can
effectively simulate the two phase execution with a single
phase. Nevertheless, the models do make slightly different demands on
the simulator which must be prepared to answer some queries (in Phase
I) prior to knowledge of $R$ and $\tilde{H}$.

\paragraph{Replacement with crooked indifferentiability.}

Security preserving (replacement) has been shown in the indifferentiability framework~\cite{TCC:MauRenHol04}: if $\rC^\cG$ is indifferentiable from $\cF$, then $\rC^\cG$ can replace $\cF$ in any cryptosystem, and the resulting cryptosystem in the  $\cG$  model is at least as secure  as that in the $\cF$ model. 
We next show that the replacement property also holds in the crooked indifferentiability framework. 

 Recall that, in the ``standard'' indifferentiability framework~\cite{TCC:MauRenHol04,C:CDMP05},  a cryptosystem can be modeled as an Interactive Turing Machine with an interface to an adversary $\cA$ and to a public oracle. 
There the cryptosystem is run by a  ``standard'' environment $\cE$.  
In our ``crooked'' indifferentiability framework, a cryptosystem has the interface to an adversary $\cA$ and to a public oracle. However, now  the cryptosystem is run by a  crooked-environment $\sE$.

Consider an ideal primitive $\cG$. 
Similar to the $\cG$-crooked-distinguisher, we can define the $\cG$-crooked-environment $\sE$ as follows: 
Initially, the crooked-environment $\sE$ manufactures and then publishes a subverted implementation of the ideal primitive $\cG$, and denotes it $\badG$. Then $\sE$ runs the attacker $\cAtwo$, and 
the cryptosystem $\cP$ is developed.   
In the  $\cG$  model, cryptosystem $\cP$ has oracle access to $\rC$ whereas attacker $\cAtwo$ has oracle access to $\cG$; note that, $\rC$ has oracle access to $\badG$, not to directly $\cG$. In the $\cF$ model, both $\cP$ and $\cAtwo$ have oracle access to $\cF$. Finally, the crooked-environment $\sE$ returns a binary decision output. 
The definition is illustrated in Figure \ref{fig:composition-crooked}.


\begin{definition}
Consider ideal primitives $\cG$ and $\cF$. 
A cryptosystem $\cP$ is said to be at least as secure in the $\cG$-crooked model with algorithm $\rC$ as in the $\cF$ model, if for any $\cG$-crooked-environment $\sE$ and any attacker $\cAtwo$ in the $\cG$-crooked model, there exists an attacher $\cStwo$ in the $\cF$ model, such that: 
$$\Pr[\sE(\cP^{\rC^{\badG}},\cAtwo^{\cG})=1]-\Pr[\sE(\cP^\cF,\cStwo^\cF)=1]\leq\epsilon.$$
where $\epsilon$ is a negligible function of the security parameter $\secp$. 
\end{definition}

\begin{figure}[htbp!]
 \begin{center}
 \begin{tikzpicture}

 \draw[thin, rounded corners=2mm] (-3.5,1.5) rectangle +(.9,.9) node[pos=.5] {$\rC$} node[pos=2] {$R$};
 \draw[thin, rounded corners=2mm] (-2.25,1.75) rectangle +(.5,.5) node[pos=.5] {$\badG$};

 \draw[->,thin] (-1.8,2) -- (-1.5,2);
 \draw[->,thin] (-2.6,2) -- (-2.25,2);

 \draw[thin, rounded corners=2mm] (-1.5,1.5) rectangle +(.9,.9) node[pos=.5] {$\cG$};
 \draw[thin] (0,2.5) -- (0,-2.5);
 
 \draw[thin, rounded corners=2mm] (-3.5,0) rectangle +(.9,.9) node[pos=.5] {$\cP$};

 \draw[->,thin] (-2.5,0.5) -- (-1.6,0.5);
 
 \draw[->,thin] (-3,1) -- (-3,1.4);
 \draw[->,thin] (-1,1) -- (-1,1.4);

\draw [thin,dashed] (-2.1,3.2) arc [radius=.5, start angle=210, end angle= 300]; 

\draw [thin,dashed] (-2.2,3) arc [radius=.6, start angle=210, end angle= 310]; 

\draw [thin,dashed] (-2.3,2.8) arc [radius=.7, start angle=210, end angle= 310]; 

\draw [thin,dashed] (2.1,3.1) arc [radius=.5, start angle=230, end angle= 300]; 

\draw [thin,dashed] (1.9,2.9) arc [radius=.6, start angle=230, end angle= 300]; 

\draw [thin,dashed] (1.7,2.8) arc [radius=.7, start angle=210, end angle= 310]; 

 \draw[thin, rounded corners=2mm] (-1.5,0) rectangle +(.9,.9) node[pos=.5] {$\cAtwo$};

 \draw[thin, rounded corners=2mm] (-3.5,-1.5) rectangle +(3,.9) node[pos=.5] {${\sE}$};
 
 \draw[->,thin] (-2,-1.5) -- (-2,-2.4);


 \draw[->,thin] (-2.5,0.5) -- (-1.6,0.5);
 
 \draw[->,thin] (-3,-.5) -- (-3,-0.1);
 \draw[->,thin] (-1,-.5) -- (-1,-.1);
 


 

 \draw[thin, rounded corners=2mm] (.5,1.5) rectangle +(.9,.9) node[pos=.5] {$\cF$} node[pos=2] {$R$};


 
 \draw[thin, rounded corners=2mm] (.5,0) rectangle +(.9,.9) node[pos=.5] {$\cP$};

 \draw[<->,thin] (-2.5,0.5) -- (-1.6,0.5);
 
 \draw[->,thin] (1,1) -- (1,1.4);
 \draw[->,thin] (3,1) |- (1.6,2);

 \draw[thin, rounded corners=2mm] (2.5,0) rectangle +(.9,.9) node[pos=.5] {$\cS_\cA$};

 \draw[thin, rounded corners=2mm] (0.5,-1.5) rectangle +(3,.9) node[pos=.5] {${\sE}$};
 \draw[->,thin] (2,-1.5) -- (2,-2.4);

 \draw[<->,thin] (1.5,0.5) -- (2.4,0.5);
 
 \draw[->,thin] (1,-.5) -- (1,-0.1);
 \draw[->,thin] (3,-.5) -- (3,-.1);

 \end{tikzpicture}
 \end{center}
 \caption{
\label{fig:composition-crooked} 
The environment $\sE$ interacts with cryptosystem $\cP$ and attacker $\cAtwo$:
In the $\cG$ model (left), $\cP$ has oracle accesses to $\rC$ whereas $\cAtwo$ has oracle accesses to $\cG$; the algorithm $\rC$ has oracle accesses to the subverted $\badG$. 
In the $\cF$ model, both $\cP$ and $\cStwo$ have oracle accesses to $\cF$. 
In addition, in both $\cG$ and $\cF$ models, randomness $R$ is publicly available to all entities.}
\end{figure}


\ignore{

\begin{figure}[htbp!]
  \begin{center}
    \begin{tikzpicture}

    \draw[thin, rounded corners=2mm] (-3.5,1.5) rectangle +(.9,.9) node[pos=.5] {$\rC$};
                        \draw[thin, rounded corners=2mm] (-2.25,1.75) rectangle +(.5,.5) node[pos=.5] {$\badG$};

      \draw[->,thin] (-1.8,2) -- (-1.5,2);
            \draw[->,thin] (-2.6,2) -- (-2.25,2);

      \draw[thin, rounded corners=2mm] (-1.5,1.5) rectangle +(.9,.9) node[pos=.5] {$\cG$};
      \draw[thin] (0,2.5) -- (0,-2.5);
       
                \draw[thin, rounded corners=2mm] (-3.5,0) rectangle +(.9,.9) node[pos=.5] {$\cP$};

      \draw[->,thin] (-2.5,0.5) -- (-1.6,0.5);
      
            \draw[->,thin] (-3,1) -- (-3,1.4);
         \draw[->,thin] (-1,1) -- (-1,1.4);

      \draw[thin, rounded corners=2mm] (-1.5,0) rectangle +(.9,.9) node[pos=.5] {$\cAtwo$};

                      \draw[thin, rounded corners=2mm] (-3.5,-1.5) rectangle +(3,.9) node[pos=.5] {${\sE}$};
                      
                                                       \draw[->,thin] (-2,-1.5) -- (-2,-2.4);


      \draw[->,thin] (-2.5,0.5) -- (-1.6,0.5);
      
            \draw[->,thin] (-3,-.5) -- (-3,-0.1);
            \draw[->,thin] (-1,-.5) -- (-1,-.1);
            


      

                \draw[thin, rounded corners=2mm] (.5,1.5) rectangle +(.9,.9) node[pos=.5] {$\cF$};


                                  
                                     \draw[thin, rounded corners=2mm] (.5,0) rectangle +(.9,.9) node[pos=.5] {$\cP$};

      \draw[<->,thin] (-2.5,0.5) -- (-1.6,0.5);
      
            \draw[->,thin] (1,1) -- (1,1.4);
            \draw[->,thin] (3,1) |- (1.6,2);

      \draw[thin, rounded corners=2mm] (2.5,0) rectangle +(.9,.9) node[pos=.5] {$\cAtwo$};

                      \draw[thin, rounded corners=2mm] (0.5,-1.5) rectangle +(3,.9) node[pos=.5] {${\sE}$};
                                                        \draw[->,thin] (2,-1.5) -- (2,-2.4);

      \draw[<->,thin] (1.5,0.5) -- (2.4,0.5);
      
           \draw[->,thin] (1,-.5) -- (1,-0.1);
            \draw[->,thin] (3,-.5) -- (3,-.1);

    \end{tikzpicture}
  \end{center}
  \caption{
\label{fig:composition-crooked} 
The environment $\sE$ interacts with cryptosystem $\cP$ and attacker $\cAtwo$. In the $\cG$ model (left), $\cP$ has oracle access to $\rC$ whereas $\cAtwo$ has oracle access to $\cG$; the algorithm $\rC$ has oracle access to the subverted $\badG$.  
In the $\cF$ model, both $\cP$ and $\cStwo$ have oracle access to $\cF$.}
\end{figure}

}


 We now demonstrate the following theorem which 
 shows that security is preserved when replacing an ideal primitive by a crooked-indifferentiable one:

 \begin{theorem} \label{theorem:crooked-composition}
 Consider an ideal primitive $\cG$ and a $\cG$-crooked-environment $\sE$. 
Let $\cP$ be a cryptosystem with oracle access to an ideal primitive $\cF$. Let $\rC$ be an algorithm such that $\rC^{\cG}$ is $\cG$-crooked-indifferentiable from $\cF$. Then cryptosystem $\cP$ is at least as secure in the $\cG$-crooked model with algorithm $\rC$ as in the $\cF$ model. 
\end{theorem}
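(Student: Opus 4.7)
The plan is to mimic the classical reduction from the Maurer--Renner--Holenstein / Coron et al.~replacement theorem, adapted to accommodate the two-phase structure of the crooked model. Given a $\cG$-crooked-environment $\sE$ and an attacker $\cAtwo$ operating against $\cP^{\rC^{\badG}}$ in the $\cG$-crooked model, the goal is to exhibit an attacker $\cStwo$ in the $\cF$ model such that $\sE$ cannot tell the two executions apart with non-negligible advantage. The $\cF$-model attacker $\cStwo$ is defined as the combination of $\cAtwo$ with the simulator $\cS$ guaranteed by the $\cG$-crooked-indifferentiability of $\rC^{\cG}$ from $\cF$: whenever $\cAtwo$ would issue a query to its $\cG$-oracle, $\cStwo$ answers it by invoking $\cS^{\cF}(R,\langle\badG\rangle)$, and forwards $\cF$-queries coming from $\cS$ to its own $\cF$-oracle.

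The main technical step is to convert the pair $(\sE,\cAtwo)$ into a $\cG$-crooked-distinguisher $\sD$ in the sense of Definition~\ref{def:indiff-crooked}. In the first phase, $\sD$ internally runs $\sE$ up to the point where $\sE$ publishes the subverted implementation $\badG$; $\sD$ outputs exactly this $\badG$ as its own subversion and hands any intermediate $\cG$-queries of $\sE$ to its ideal $\cG$-oracle. After the random string $R$ is drawn in phase two, $\sD$ resumes the simulation of $\sE$ with $R$ as input and continues to run $\cP$ and $\cAtwo$ internally, routing every oracle call of $\cP$ to its first external oracle (which is either $\rC^{\badG}$ or $\cF$) and every oracle call of $\cAtwo$ to its second external oracle (which is either $\cG$ or the simulator $\cS^{\cF}(R,\langle\badG\rangle)$). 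When $\sE$ eventually halts with a decision bit $b$, $\sD$ outputs $b$.

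By construction, when $\sD$ is run in the real execution of the crooked-indifferentiability experiment, its output is distributed exactly as $\sE(\cP^{\rC^{\badG}},\cAtwo^{\cG})$; when it is run in the ideal execution, its output is distributed exactly as $\sE(\cP^{\cF},\cStwo^{\cF})$, because routing $\cAtwo$'s $\cG$-queries through $\cS$ is precisely how we defined $\cStwo$. The $\cG$-crooked-indifferentiability of $\rC^{\cG}$ from $\cF$ therefore yields
\[
\bigl|\Pr[\sE(\cP^{\rC^{\badG}},\cAtwo^{\cG})=1]-\Pr[\sE(\cP^{\cF},\cStwo^{\cF})=1]\bigr|\le \epsilon(\secp),
\]
which is the desired inequality. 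Efficiency of $\cStwo$ follows from the efficiency of $\cAtwo$ and $\cS$, and the query-count bounds propagate by the usual accounting.

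The main obstacle, and the only place the proof differs from the non-crooked version, is the careful alignment of phases: the subversion $\badG$ that $\sE$ produces must be the same object that (i) governs the construction $\rC^{\badG}$ queried by $\cP$ in the real world, (ii) is fed to the simulator $\cS$ that $\cStwo$ uses in the ideal world, and (iii) is the subversion output by $\sD$ in phase one of Definition~\ref{def:indiff-crooked}. The construction above is set up so that all three instances coincide because $\sD$ faithfully reproduces $\sE$'s first-phase behavior and makes $\badG$ globally available before phase-two randomness is sampled, so the crooked-indifferentiability bound is directly applicable and nothing more subtle is required.
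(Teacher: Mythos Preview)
Your proposal is correct and follows the same approach as the paper: construct the crooked-distinguisher $\sD$ by wrapping $\sE$, $\cP$, and $\cAtwo$, define $\cStwo$ by merging $\cAtwo$ with the simulator $\cS$, and then invoke the crooked-indifferentiability bound. The paper's own proof is in fact only a brief sketch of exactly this argument (citing the analogous proof in~\cite{TCC:MauRenHol04,C:CDMP05}), so your version is more detailed but not different in substance.
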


\begin{proof}
The proof is very similar to that in~\cite{TCC:MauRenHol04,C:CDMP05}. 
Let $\cP$ be any cryptosystem, modeled as an Interactive Turing Machine. Let $\sE$ be any crooked-environment, and $\cAtwo$ be any attacker in the $\cG$-crooked model. In the $\cG$-crooked model, $\cP$ has oracle access to $\rC$ (who has oracle access to $\badG$, not to directly $\cG$.)
 whereas $\cAtwo$ has oracle access to ideal primitive $\cG$; moreover crooked-environment $\sE$ interacts with both $\cP$ and $\cAtwo$. This is illustrated in Figure \ref{fig:composition-crooked-proof} (left part).

Since $\rC$ is crooked-indifferentiable from $\cF$ (see Figure \ref{fig:indiff-crooked}), one can replace $(\rC^{\badG}, \cG)$ by $(\cF, \cS)$ with only a negligible modification of the crooked-environment $\sE$'s output distribution. As illustrated in Figure \ref{fig:composition-crooked-proof}, by merging attacker $\cAtwo$ and simulator $\cS$, one obtains an attacker $\cStwo$ in the $\cF$ model, and the difference in $\sE$'s output distribution is negligible.
\end{proof}

\begin{figure}[htbp!]
 \begin{center}
 \begin{tikzpicture}
 \draw[thin, rounded corners=2mm] (-3.5,1.5) rectangle +(.9,.9) node[pos=.5] {$\rC$} node[pos=2] {$R$};
 \draw[thin, rounded corners=2mm] (-2.25,1.75) rectangle +(.5,.5) node[pos=.5] {$\badG$} ;

 \draw[->,thin] (-1.8,2) -- (-1.5,2);
 \draw[->,thin] (-2.6,2) -- (-2.25,2);

 \draw[thin, rounded corners=2mm] (-1.5,1.5) rectangle +(.9,.9) node[pos=.5] {$\cG$};
 \draw[thin] (0,2.5) -- (0,-2.5) ;
 
 
\draw [thin,dashed] (-2.1,3.2) arc [radius=.5, start angle=210, end angle= 300]; 

\draw [thin,dashed] (-2.2,3) arc [radius=.6, start angle=210, end angle= 310]; 

\draw [thin,dashed] (-2.3,2.8) arc [radius=.7, start angle=210, end angle= 310]; 

\draw [thin,dashed] (2.1,3.1) arc [radius=.5, start angle=230, end angle= 300]; 

\draw [thin,dashed] (1.9,2.9) arc [radius=.6, start angle=230, end angle= 300]; 

\draw [thin,dashed] (1.7,2.8) arc [radius=.7, start angle=210, end angle= 310]; 

 \draw[thin, rounded corners=2mm] (-3.5,0) rectangle +(.9,.9) node[pos=.5] {$\cP$};

 \draw[->,thin] (-2.5,0.5) -- (-1.6,0.5);
 
 \draw[->,thin] (-3,1) -- (-3,1.4);
 \draw[->,thin] (-1,1) -- (-1,1.4);

 \draw[thin, rounded corners=2mm] (-1.5,0) rectangle +(.9,.9) node[pos=.5] {$\cAtwo$};

 \draw[thin, rounded corners=2mm] (-3.5,-1.5) rectangle +(3,.9) node[pos=.5] {${\sE}$};
 
 \draw[->,thin] (-2,-1.5) -- (-2,-2.4);


 \draw[->,thin] (-2.5,0.5) -- (-1.6,0.5);
 
 \draw[->,thin] (-3,-.5) -- (-3,-0.1);
 \draw[->,thin] (-1,-.5) -- (-1,-.1);
 
 \draw[thin, dashed, rounded corners=2mm] (-3.7,-2.1) rectangle +(3.4,3.3) node[pos=.1] {$\sD$};



 \draw[thin, rounded corners=2mm] (.5,1.5) rectangle +(.9,.9) node[pos=.5] {$\cF$} node[pos=2] {$R$};
 \draw[->,thin] (2.4,2) -- (1.6,2);

 \draw[thin, rounded corners=2mm] (2.5,1.5) rectangle +(1.1,1.1) node[pos=.75] {$\cS$};
 \draw[thin, rounded corners=2mm] (2.6,1.75) rectangle +(.5,.5) node[pos=.5] {$\badG$};

 \draw[thin, rounded corners=2mm] (.5,0) rectangle +(.9,.9) node[pos=.5] {$\cP$};

 \draw[<->,thin] (-2.5,0.5) -- (-1.6,0.5);
 
 \draw[->,thin] (1,1) -- (1,1.4);
 \draw[->,thin] (3,1) -- (3,1.4);

 \draw[thin, rounded corners=2mm] (2.5,0) rectangle +(.9,.9) node[pos=.5] {$\cAtwo$};

 \draw[thin, rounded corners=2mm] (0.5,-1.5) rectangle +(3,.9) node[pos=.5] {${\sE}$};
 \draw[->,thin] (2,-1.5) -- (2,-2.4);


 \draw[<->,thin] (1.5,0.5) -- (2.4,0.5);
 
 \draw[->,thin] (1,-.5) -- (1,-0.1);
 \draw[->,thin] (3,-.5) -- (3,-.1);
 
 \draw[thin, dashed, rounded corners=2mm] (.3,-2.1) rectangle +(3.4,3.3) node[pos=.1] {$\sD$};
 
 
 \draw[thin, dashed,red, rounded corners=2mm] (2.3,-.3) rectangle +(1.9,3) node[pos=.85] {$\cStwo$};



 \end{tikzpicture}
 \end{center}
\caption{
\label{fig:composition-crooked-proof} 
Construction of attacker $\cStwo$ from attacker $\cAtwo$ and simulator $\cS$}
\end{figure}

\ignore{
\begin{figure}[htb!]
  \begin{center}
    \begin{tikzpicture}
            \draw[thin, rounded corners=2mm] (-3.5,1.5) rectangle +(.9,.9) node[pos=.5] {$\rC$};
                        \draw[thin, rounded corners=2mm] (-2.25,1.75) rectangle +(.5,.5) node[pos=.5] {$\badG$};

      \draw[->,thin] (-1.8,2) -- (-1.5,2);
            \draw[->,thin] (-2.6,2) -- (-2.25,2);

      \draw[thin, rounded corners=2mm] (-1.5,1.5) rectangle +(.9,.9) node[pos=.5] {$\cG$};
      \draw[thin] (0,2.5) -- (0,-2.5);
       
                \draw[thin, rounded corners=2mm] (-3.5,0) rectangle +(.9,.9) node[pos=.5] {$\cP$};

      \draw[->,thin] (-2.5,0.5) -- (-1.6,0.5);
      
            \draw[->,thin] (-3,1) -- (-3,1.4);
            \draw[->,thin] (-1,1) -- (-1,1.4);

      \draw[thin, rounded corners=2mm] (-1.5,0) rectangle +(.9,.9) node[pos=.5] {$\cAtwo$};

                      \draw[thin, rounded corners=2mm] (-3.5,-1.5) rectangle +(3,.9) node[pos=.5] {${\sE}$};
                      
                                                        \draw[->,thin] (-2,-1.5) -- (-2,-2.4);


      \draw[->,thin] (-2.5,0.5) -- (-1.6,0.5);
      
            \draw[->,thin] (-3,-.5) -- (-3,-0.1);
            \draw[->,thin] (-1,-.5) -- (-1,-.1);
            
      \draw[thin, dashed, rounded corners=2mm] (-3.7,-2.1) rectangle +(3.4,3.3) node[pos=.1] {$\sD$};



                \draw[thin, rounded corners=2mm] (.5,1.5) rectangle +(.9,.9) node[pos=.5] {$\cF$};
      \draw[->,thin] (2.4,2) -- (1.6,2);

      \draw[thin, rounded corners=2mm] (2.5,1.5) rectangle +(1.1,1.1) node[pos=.75] {$\cS$};
                                  \draw[thin, rounded corners=2mm] (2.6,1.75) rectangle +(.5,.5) node[pos=.5] {$\badG$};

                                     \draw[thin, rounded corners=2mm] (.5,0) rectangle +(.9,.9) node[pos=.5] {$\cP$};

      \draw[<->,thin] (-2.5,0.5) -- (-1.6,0.5);
      
            \draw[->,thin] (1,1) -- (1,1.4);
            \draw[->,thin] (3,1) -- (3,1.4);

      \draw[thin, rounded corners=2mm] (2.5,0) rectangle +(.9,.9) node[pos=.5] {$\cAtwo$};

                      \draw[thin, rounded corners=2mm] (0.5,-1.5) rectangle +(3,.9) node[pos=.5] {${\sE}$};
                                  \draw[->,thin] (2,-1.5) -- (2,-2.4);


      \draw[<->,thin] (1.5,0.5) -- (2.4,0.5);
      
            \draw[->,thin] (1,-.5) -- (1,-0.1);
            \draw[->,thin] (3,-.5) -- (3,-.1);
 
       \draw[thin, dashed, rounded corners=2mm] (.3,-2.1) rectangle +(3.4,3.3) node[pos=.1] {$\sD$};
       
       
              \draw[thin, dashed,red, rounded corners=2mm] (2.3,-.3) rectangle +(1.9,3) node[pos=.85] {$\cStwo$};



    \end{tikzpicture}
  \end{center}
\caption{
\label{fig:composition-crooked-proof} 
Construction of attacker $\cStwo$ from attacker $\cAtwo$ and simulator $\cS$. }
\end{figure}
}


Similar proof can be used to show the following corollary.

\begin{corollary}[Proof of the warm-up construction.]
Let $\cG$, $\cF_1$ and $\cF_2$ be three ideal primitives. Suppose there are two algorithms $\rC_1$ and $\rC_2$ such that $\rC_1^{\cG}$ is crooked-indifferentiable from $\cF_1$ and $\rC_2^{\cF_1}$ is indifferentiable from $\cF_2$. Then $\rC^{\cG}$ is crooked-indifferentiable from $\cF_2$ for $\rC:=\rC_1^{\rC_2}$.
\end{corollary}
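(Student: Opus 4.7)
The plan is to build the simulator for the composed construction $\rC$ by nesting the two given simulators, and to argue the required indistinguishability through a two-step hybrid whose transitions are controlled by the two hypotheses in turn. Let $\cS_1$ denote the crooked-indifferentiability simulator for $\rC_1^{\cG}$ from $\cF_1$: with oracle access to $\cF_1$ and input $(R,\langle\tilde{\cG}\rangle)$ it simulates $\cG$. Let $\cS_2$ denote the standard indifferentiability simulator for $\rC_2^{\cF_1}$ from $\cF_2$: with oracle access to $\cF_2$ it simulates $\cF_1$. I will then define the composed simulator
\[
  \cS^{\cF_2}(R,\langle\tilde{\cG}\rangle) \;:=\; \cS_1^{\cS_2^{\cF_2}}(R,\langle\tilde{\cG}\rangle),
\]
where each query that $\cS_1$ makes to its $\cF_1$ oracle is routed through $\cS_2$.

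Fix any $\cG$-crooked distinguisher $\sD$ against the composed construction $\rC$ (viewed as $\rC_2$ whose inner round-oracle calls are answered by $\rC_1^{\cG}$). I will consider three distributions over the pair of oracles presented to $\sD$: Hybrid~0 (real) $=(\rC_2^{\rC_1^{\tilde{\cG}}(R)},\cG)$; Hybrid~1 $=(\rC_2^{\cF_1},\cS_1^{\cF_1}(R,\langle\tilde{\cG}\rangle))$; and Hybrid~2 (ideal) $=(\cF_2,\cS^{\cF_2}(R,\langle\tilde{\cG}\rangle))$. The step Hybrid~0 $\to$ Hybrid~1 will follow from crooked-indifferentiability of $\rC_1$ applied to a lifted distinguisher $\sD'$ that wraps $\sD$, emulates the outer $\rC_2$ layer on top of its first oracle (either $\rC_1^{\tilde{\cG}}(R)$ or $\cF_1$), and forwards $\sD$'s second-oracle queries to its own second oracle (either $\cG$ or $\cS_1^{\cF_1}$). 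The step Hybrid~1 $\to$ Hybrid~2 will follow from standard indifferentiability of $\rC_2$ applied to a distinguisher $\sD''$ that internally runs $\cS_1$ seeded with $(R,\langle\tilde{\cG}\rangle)$, forwards $\sD$'s first-oracle queries directly to its own first oracle (either $\rC_2^{\cF_1}$ or $\cF_2$), answers $\sD$'s second-oracle queries via this internal $\cS_1$, and whenever that $\cS_1$ needs an $\cF_1$-value, queries its own second oracle (either $\cF_1$ or $\cS_2^{\cF_2}$).

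The main obstacle is mild bookkeeping rather than new combinatorics. The subtle point is that in Hybrid~1 $\to$ Hybrid~2 the ideal primitive $\cF_1$ being replaced is not accessed directly by $\sD$ but is consulted only inside the simulator $\cS_1$; this is nevertheless legitimate because Theorem~\ref{theorem:composition} quantifies over all oracle distinguishers, and $\sD''$ is precisely such an oracle machine whose code happens to contain $\cS_1$. Efficiency and polynomial query complexity propagate routinely: $\cS_1$ and $\cS_2$ are each polynomial-time oracle machines, and a polynomial bound on $\sD$'s queries yields polynomial bounds on the queries of $\sD'$ and $\sD''$. Summing the two negligible advantages bounds the distinguishing advantage of $\sD$, completing the argument. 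Structurally this mirrors the proof of Theorem~\ref{theorem:crooked-composition}, with the outer algorithm $\rC_2$ playing the role there played by the cryptosystem $\cP$.
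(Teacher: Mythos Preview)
Your proposal is correct and follows the same approach the paper has in mind: the paper's own ``proof'' is simply the sentence ``Similar proof can be used to show the following corollary,'' referring back to Theorem~\ref{theorem:crooked-composition}. Your two-hybrid argument, with the composed simulator $\cS_1^{\cS_2^{\cF_2}}$ and the outer construction $\rC_2$ playing the role of the cryptosystem $\cP$ in that theorem, is precisely the intended elaboration.
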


\paragraph{Restrictions (of using crooked indifferentiability).}
Ristenpart et al.~\cite{EC:RisShaShr11} has demonstrated that the replacement/composition theorem (Theorem~\ref{theorem:composition}) in the original indifferentiability framework only holds in single-stage settings. We remark that, the same restriction also applies to our replacement/composition theorem (Theorem~\ref{theorem:crooked-composition}). We leave it as our future work to extend our crooked indifferentiability to the multi-stage settings where disjoint adversaries are split over several stages.


\section{Main Result and Technical Overview}
\label{sec:overview}

\subsection{The Construction and Main Result}

From this point on, we use $\cD$ rather than $\sD$ to denote the
distinguisher in our crooked indifferentiability model. For a security
parameter $n$ and a (polynomially related) parameter $\ell$, the
construction depends on public randomness $R = ((a_1,b_1), \ldots, (a_\ell,b_\ell))$.

The source function of the construction is expressed as a family of
$\ell$ independent random oracles:
\[
  F_i: \{0,1\}^{n} \rightarrow \{0,1\}^{n}\,,\qquad\text{for
       $i \in \{1, \ldots, \ell\}$.}
\]

These can be realized as slices of a single random permutation
$F': \{0,1\}^{n'} \rightarrow \{0,1\}^{n'}$, with
$n' = n + \lceil \log \ell+1 \rceil$ by an appropriate convention for
embedding and extracting inputs and values.(Note that $F_i$ may not be permutations.) The family of $\ell$ public, uniform affine-linear
  functions are determined by $R = ((a_1,b_1), \ldots, (a_\ell,b_\ell))$ where $(a_i,b_i) \in \GL(\F_2,n) \times \F_2^n$ for each $i=1,\ldots,\ell$. $a_i$ and $b_i$ are selected independently and uniformly from $\GL(n,\mathbb{F}_2)$ (to be concrete, the
collection of invertible $n \times n$ matrices with elements in $\F_2$) and $\mathbb{F}_2^n$, respectively. The $i$-th affine linear function, defined on
an input $x \in \F_2^n$, is given by the rule
$x \mapsto a_i \cdot x \oplus b$. The final construction is given by the
Feistel construction applied to the round functions
$x \mapsto \tilde{F}_i(a_i \cdot x \oplus b)$, where $\tilde{F}$ is the
subverted version of the function $F_i$. To be concrete, with the data
$(F_i, a_i, b_i)$ (with $i = 1, \ldots, \ell$), the construction
$C: \{0,1\}^{2n} \rightarrow \{0,1\}^{2n}$ is defined by the rule
\begin{align*}
  & C(x_0,x_1):=(x_{\ell},x_{\ell+1})\,,\text{where}\\
  & x_{i+1}:=x_{i-1} \oplus \tilde{F}_i(a_i \cdot x_i \oplus b_i)\text{ , for $i=1,\ldots,\ell$}\,.
\end{align*}
where $n$-bit strings $x$ and $b_i$ are viewed as length $n$ column
vectors, $a_i \cdot x_i$ is the multiplication between matrix $a_i$
and column vector $x_i$, and $\tilde{F}_i(x)$ is the subverted
function value at $(i.x)$ using the subversion algorithm $\cA$.

We wish to show that such a construction is indifferentiable from an actual random permutation (with the proper input/output length).

\begin{theorem}
\label{thm:ind}
We treat a function $F':\{0,1\}^{n'} \rightarrow \{0,1\}^{n'}$, with
$n' = n + \lceil \log \ell+1 \rceil$, as implicitly defining a family
of random oracles
\[
  F_i :\{0,1\}^n\rightarrow\{0,1\}^{n}\,, \qquad\text{for $i>0$,}
\]
by treating $\{0,1\}^{n'} = \{0, \ldots, L-1\} \times \{0,1\}^{n}$
and defining $F_i(x) = F(i,\cdot)$, for $i=0,\ldots,\ell \leq L-1$.
(Output lengths are achieved by removing the appropriate number of
trailing symbols). For convenience, we will use the setting $\ell = 8n$.  Consider
a (subversion) algorithm $\cA$ so that
  it defines a subverted random oracle $\tilde{F}$. Assume that for every $F$ (and every $i$),  
  \begin{equation}\label{eq:equal}
    \Pr_{x \in \{0,1\}^n}[\tilde{F}(i,x) \neq F(i,x)] \leq \epsilon(n) = \negl(n)\,.
  \end{equation}

The above Feistel-based construction is $(n',2n,q_{\cD},q_{\cA},r,\epsilon')$-indifferentiable from a random permutation $P:\{0,1\}^{2n}\rightarrow\{0,1\}^{2n}$, where $\epsilon'=\negl(n)$, $q_{\cD}$ is 
the number of queries made by the distinguisher $\cD$ and $q_{\cA}$ is 
the number of queries made by $\cA$ as in Definition \ref{def:indiff-crooked}. $q_{\cD}$ and $q_{\cA}$ are both polynomial functions of $n$.

\end{theorem}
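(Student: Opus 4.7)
My plan is to first reduce to the abbreviated model (Definition~\ref{def:abbrev-indiff-crooked}); the lifting argument is generic from the discussion after that definition, so I focus on constructing and analyzing a simulator $\cS$ for the abbreviated experiment. The simulator will lazily maintain, for each round $i\in\{1,\ldots,\ell\}$, a partial table for $F_i$ that is initially empty and extended in two ways: (a) by fresh random responses to explicit $F_i$ queries (from $\cD$ or from internal invocations of $\tilde F$), and (b) by \emph{programmed} responses that are written down when the simulator detects that a Feistel chain has been completed. On detecting such a chain $(x_0,x_1,\ldots,x_{\ell+1})$ consistent with an as-yet-unfixed input $(x_0,x_1)$ to $P$, the simulator queries $P(x_0,x_1)$ to obtain $(x_\ell,x_{\ell+1})$, then chooses two consecutive rounds $u,u+1$ and programs $F_u(a_u x_u\oplus b_u):=x_{u-1}\oplus x_{u+1}$ and $F_{u+1}(a_{u+1} x_{u+1}\oplus b_{u+1}):=x_u\oplus x_{u+2}$. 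Because a distinguisher can always query the outermost rounds first, the pair $(u,u+1)$ cannot be fixed in advance; the simulator must pick it \emph{adaptively}, choosing some ``middle'' pair that has not yet been touched by previous queries and on which the subversion algorithm $\cA$ behaves honestly.

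\textbf{Hybrid games.} I would introduce a sequence of hybrids following the template of Coron~et~al.~\cite{JC:CHKPST16} but adapted to the crooked setting. Start from the ideal game $(P,\cS^P)$ and first switch the honest external oracle from $P$ to the Feistel construction $C^{F}$ applied to an \emph{independent} uniform random function $F$, using the simulator's programmed table as $F$ on programmed points. Next, replace $F$ with the adversarially subverted $\tilde F$ wherever the construction calls it, arriving at $(C^{\tilde F}(R),F)$, which is the real game. Indistinguishability between consecutive hybrids is reduced to two ``bad events'':
\begin{itemize}
\item $\mathsf{BadComplete}$: when the simulator wants to program $F_u$ or $F_{u+1}$ at the designated point, that point is already present in the partial table (either from a prior external query of $\cD$, or from an internal call of $\tilde F$ made by the simulator while itself running $\cA$).
\item $\mathsf{BadEval}$: one of the two points to be programmed is \emph{dishonest}, i.e., lies in the $\epsilon$-fraction where $\tilde F_u\neq F_u$ or $\tilde F_{u+1}\neq F_{u+1}$; in this case no choice of $F$-value at the programmed input makes the construction's output match the programmed Feistel chain.
\end{itemize}

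\textbf{Bounding the bad events.} The randomization $x\mapsto a_i x\oplus b_i$ with $a_i$ uniform in $\GL(n,\F_2)$ and $b_i$ uniform in $\F_2^n$ is the critical device: for any two distinct raw inputs $x,x'$ the joint image $(a_i x\oplus b_i,a_i x'\oplus b_i)$ is uniformly distributed on the set of distinct pairs, so across rounds the actual arguments fed to $F_i$ behave pairwise independently given the chain endpoints. Consequently, the expected number of rounds $i$ for which the chain-argument $a_i x_i\oplus b_i$ lands in either (a) the already-queried portion of $T_i$, or (b) the at-most-$\epsilon$-fraction of dishonest inputs of $\tilde F_i$, is at most $(\epsilon+\mathrm{poly}(n)\cdot 2^{-n})\ell$. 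A union bound over all $O(q_{\cD}q_{\cA}\ell)$ candidate chains, together with the choice $\ell=8n$ (much more than $2n/\log(1/\epsilon)$), guarantees that with overwhelming probability there exist two consecutive rounds $u,u+1$ that are simultaneously unqueried and honest; the simulator selects the first such pair. Standard calculation then shows $\Pr[\mathsf{BadComplete}\cup\mathsf{BadEval}]=\negl(n)$.

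\textbf{Main obstacle and efficiency.} The subtlety that makes this substantially harder than the classical Feistel indifferentiability proof is that the simulator must \emph{run $\cA$ itself} to evaluate $\tilde F$ when checking whether a pending chain is consistent; each such internal run triggers fresh queries to the simulator's own tables, which can recursively create new chains. The termination/efficiency argument must therefore bound the size of the transitive closure of chain completions: I would proceed by a potential-function argument showing that each external query of $\cD$ spawns at most $\poly(n)\cdot q_{\cA}$ internal queries, so the total simulator running time remains polynomial. The second delicate point is that the simulator's choice of $(u,u+1)$ depends on the contents of its own tables, which are themselves correlated with the adversary's view; this is handled by a careful martingale/deferred-sampling argument in which the random values $F_i(a_i x\oplus b_i)$ at yet-unqueried arguments are revealed only at the moment the simulator or construction needs them, so that the bad-event bound above is truly a statement about fresh randomness at each chain completion. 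Combining these pieces and invoking Theorem~\ref{theorem:crooked-composition} to lift to the full model yields the claimed $\epsilon'=\negl(n)$.
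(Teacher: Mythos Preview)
Your high-level outline (abbreviated model, lazy simulator with chain detection and adaptive programming at two consecutive rounds, bad events \textsf{BadComplete}/\textsf{BadEval}, efficiency) matches the paper's architecture. But the core technical argument you sketch for bounding the bad events does not go through, and the gap is exactly the one the paper singles out as the main difficulty.

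\textbf{Where the bad-event bound breaks.} You argue that because $x\mapsto a_i x\oplus b_i$ is pairwise independent, the argument fed to $F_i$ at each round hits the dishonest set with probability~$\epsilon$ and the already-queried set with probability $\poly(n)\cdot 2^{-n}$, so in expectation few rounds are bad and two good consecutive rounds exist. This reasoning treats the chain values $x_i$ as if they were fixed before the simulator's table is populated. They are not: to compute $x_{i+1}$ the simulator must evaluate $\tilde F_i(a_i x_i\oplus b_i)$, and that evaluation runs the subversion algorithm $\cA$, which makes up to $q_{\cA}$ \emph{adaptive} queries to the simulator's own $F$-tables. Those internal queries can land on $(u,a_u x_u\oplus b_u)$ \emph{before} $x_u$ is even determined from the distinguisher's point of view, because $\cA$ is free to compute forward along the chain internally. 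The pairwise independence of the affine maps gives you nothing here: the point you want to program at round $u$ is a deterministic function of earlier $\tilde F$-evaluations, and $\cA$ may have already written that point into the table. Your martingale/deferred-sampling sentence acknowledges a correlation issue but does not resolve this specific one.

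\textbf{What the paper does instead.} The paper introduces an auxiliary table $\cM^3.\CF$ (Game~$G_5$) in which \emph{every} value is a fresh uniform sample (no adapted entries at all); consistency with $P$ is enforced by programming $RF$ rather than $\CF$. In this table one can define an order function and speak of \emph{monotone increasing/decreasing} unsubverted chains. The crux is Theorem~\ref{Th: nice properties of increasing chain}: along any increasing chain, once you are $8$ steps past the head, term $(s+j,x_{s+j})$ is never in $Q_{s+i}(x_{s+i})$ for $i<j$, is never in $Q_{s+j'}(x_{s+j'})$ for $j'>j$, and is honest. This is what rules out the ``$\cA$ already queried the programming point'' failure. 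Proving it requires the pairwise independence of the affine maps in a much more delicate way (Lemmas~\ref{Lemma: No subverted chain covers a long honest chain}--\ref{Lemma:once honest, honest forever}) together with a ``bad region'' argument (Lemmas~\ref{Lemma:bad region is short}, \ref{Lemma:there is no separated bad region}) showing all dishonest points on a chain cluster in an interval of length $\le n/6$. The simulator then does \emph{not} pick the first available honest pair; it programs at the fixed positions $u\in\{4n,7n\}$, chosen so that $u$ is far (more than $n$) from the detected chain, which is what guarantees the monotone-chain lemmas apply. Your ``first such pair'' rule would not interface with that analysis.

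\textbf{Efficiency.} Your potential-function sketch is also too optimistic. The issue is that a single completion spawns many new table entries (all of $Q_c$), which can form new length-$n/10$ chains that trigger further completions. The paper's bound (Theorem~\ref{Th:efficiency}) is not a direct recursion bound; it shows that each ``generator'' chain processed by \textsc{HonestyCheck} must contain $\Theta(n)$ points that were queried \emph{directly} by $\cD$ (Lemmas~\ref{Lemma:No chain covers 21 length 11 chain}, \ref{Lemma:the number of the elements are linear in the number of the unsubverted chains}), so the number of generators is $O(q_{\cD}/n)$ rather than something that could blow up with $q_{\cA}$-fold branching.

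In short, the skeleton is right, but the proof needs the $G_5$/monotone-chain machinery (or an equivalent device that decouples the programming points from $\cA$'s adaptive internal queries); the simple expectation argument over rounds does not handle that coupling.
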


\paragraph{Remark.} It will be clear later that the construction is still secure if $\ell=8n$ is replaced by $\ell = 2000n/\log(1/\epsilon)$.

To somewhat simplify the notation, we define the function
$\CF_i: \{0,1\}^n \rightarrow \{0,1\}^n$ to be the unsubverted analog
of the round function $\CF_i(x)= F_i(a_i \cdot x \oplus b_i)$ and,
similarly, define $\CFt_i(x)=\tilde{F}_i(a_i \cdot x \oplus b_i)$ to
be actual round function.  Since the function
$x \rightarrow a_i \cdot x \oplus b_i$ is a permutation (note that
$a_i$ is an invertable linear function), reasoning about $\CF_i$ (and
$\CFt_i$, respectively) is effectively equivalent to reasoning about
$F_i$ (and $\CF_i$). For convenience, we will focus on $\CF_i$
($\CFt_i$) for the bulk of the paper (i.e., we will treat the query and
evaluation of $F_i(x)$ as the query and evaluation of $\CF_i(x')$ such
that $x=a_i \cdot x' \oplus b_i$).

When evaluating $\CFt_i(x)$, the subversion algorithm queries $\CF$ at
a set of points of polynomial size. We define the set of these points
to be
\[
  Q_i(x)=\{ (j,x') \mid \text{the evaluation of $\CFt_i(x)$ queries $\CF_j(x')$}\}\,.
\]

\subsection{$2n/\log(1/\epsilon)$ rounds are not enough}

We first show that the above construction is insecure with fewer than $2n/\log(1/\epsilon)$ rounds.

\begin{lemma}\label{Lemma:Attack}
Let $n$ be a positive integer and $\epsilon$ be a real number with $1/8 \geq \epsilon \geq 2^{-n}$. Let $\ell$ be a positive integer not greater than $2n/\log(1/\epsilon)$ and $\lambda$ be the greatest integer that is smaller than or equal to $n/\ell+1$. Consider selecting a uniform element $B \in \mathbb{F}_2^{\lambda\ell/2}$ and a $\lambda\ell/2$ by $n$ matrix $A$ of elements in the field $\mathbb{F}_2$ such that
\begin{itemize}
\item Each row vector $w_i$($i=1,\ldots,\lambda\ell/2$) of $A$ are nonzero.
\item For each $j=1,\ldots,\ell/2$, the $\lambda$ row vectors $w_{j\lambda+1},\ldots,w_{j\lambda+\lambda}$ are linearly independent.
\end{itemize}
Then, over the randomness of the choice of $A$ and $B$,
\[
 \Pr \left [\parbox{8cm}{There does not exist a length $n$ column vector $X$ of elements in the field $\mathbb{F}_2$ such that $A \cdot X=B$.} \right ]=O(n^2 \cdot 2^{-n/4})\,,
\]
where $A \cdot X$ is the multiplication between a matrix and a column vector and $B$ is viewed as a column vector.
\end{lemma}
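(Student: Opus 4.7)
The plan is to reduce the nonexistence of a solution to a rank-deficiency estimate for $A$, and then bound that via a row-by-row union bound. The key observation is that if $A$ has full row-rank $\lambda\ell/2$, then the linear map $X \mapsto A\cdot X$ is surjective from $\mathbb{F}_2^n$ onto $\mathbb{F}_2^{\lambda\ell/2}$ (which is possible since $\lambda\ell/2 \leq (n+\ell)/2 \leq 5n/6 < n$ under the hypotheses $\lambda \leq n/\ell+1$ and $\ell \leq 2n/\log(1/\epsilon) \leq 2n/3$, the latter using $\epsilon \leq 1/8$). In that case $AX = B$ always admits a solution. Consequently,
\[
\Pr_{A,B}\bigl[\nexists X : AX = B\bigr] \;\leq\; \Pr_A\bigl[\mathrm{rank}(A) < \lambda\ell/2\bigr],
\]
so it suffices to bound the right-hand side.

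Next, I would expose the rows $w_1, \ldots, w_{\lambda\ell/2}$ sequentially in their natural order. Under the uniform distribution on admissible matrices, the conditional distribution of $w_i$ given the earlier rows is uniform on $\mathbb{F}_2^n \setminus V_i$, where $V_i$ is the $\mathbb{F}_2$-span of the rows already exposed that lie in the same block of $\lambda$ rows as $w_i$; in particular $|V_i| \leq 2^{\lambda-1} \leq 2^{n/2}$ since $\lambda \leq n/\ell + 1 \leq n/2 + 1$ for $\ell \geq 2$. The rank fails to grow at step $i$ exactly when $w_i \in \mathrm{span}(w_1,\ldots,w_{i-1})$, which has conditional probability at most
\[
\frac{2^{i-1}}{2^n - 2^{\lambda-1}} \;\leq\; 2^{i-n+1}.
\]

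A union bound over $i = 1,\ldots,\lambda\ell/2$ then gives $\Pr_A[\mathrm{rank}(A) < \lambda\ell/2] \leq 2^{\lambda\ell/2 - n + 2}$, which is exponentially small in $n$ by our parameter bounds. Combining with the reduction above yields the desired estimate; any residual polynomial $n^2$ factor in the stated bound can be absorbed by tracking rank growth per block rather than per row and applying a separate union bound at the block level.

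The main technical obstacle is the careful handling of the conditional distribution of each row: the uniform distribution on admissible matrices does \emph{not} factor as a product of i.i.d.\ row distributions, because of the local linear-independence constraint within each block. One must verify that the per-row conditional law is indeed uniform on $\mathbb{F}_2^n$ minus a controllably small forbidden set $V_i$, rather than something more constrained. The saving grace is that $|V_i| \leq 2^{\lambda - 1}$ is exponentially smaller than $2^n$, so the effective sample space is essentially all of $\mathbb{F}_2^n$, and the row-exposure union bound goes through with only a constant-factor loss.
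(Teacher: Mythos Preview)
Your proposal is correct and follows essentially the same approach as the paper: reduce to rank deficiency of $A$, expose rows one by one, and union-bound the probability that some $w_i$ falls in $\langle w_1,\dots,w_{i-1}\rangle$. The only cosmetic difference is that the paper realizes your conditional-distribution claim via an explicit rejection-sampling process (draw a uniform nonzero vector, resample if it lies in the span of the previously chosen rows from the same block) and then splits the bad event at step $i$ into ``the first draw already lay in $\langle w_1,\dots,w_{i-1}\rangle$'' versus ``a resample occurred,'' bounding each by $(2^{i-1}-1)/(2^n-1)$ and $(2^{\lambda}-1)/(2^n-1)$ respectively---which is exactly your bound $2^{i-1}/(2^n-|V_i|)$ unpacked.
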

\begin{proof}
It is easy to see that there exists $X$ such that $A\cdot X=B$ if $A$ has full rank.

For any $i=1,\ldots,\lambda\ell/2$, we denote by $W_i$ the set of first $i$ row vectors of $A$. For a set $S$ of vectors, we use $\langle S \rangle$ to denote the vector space spanned by the elements of $S$.

For the sake of proof, consider the following process of generating each row vectors $w_i$ ($i=1,\ldots,\lambda\ell/2$) of $A$:
\begin{align*}
        & \mkern30mu \textbf{For } \text{$i=1,\ldots,\lambda\ell/2$:}\\
        & \mkern50mu \text{Write $i=j\lambda+k$ for some $j \in \{1,\ldots,\ell/2\}$ and $k \in \{1,\ldots,\lambda\}$ }\\
        & \mkern50mu \textbf{While } \text{$w_i$ is not selected or $w_{i} \in \langle W_{i}/W_{j\lambda} \rangle$.:}\\
        & \mkern70mu \text{Select a nonzero $w_i$ uniformly in $\mathbb{F}_2^n$.}\\
    \end{align*}
    For each $i=1,\ldots,\lambda\ell/2$, we define the event
    \[
      E^1_i:=\{\text{$w_i$ is resampled for at least once in the above generation of $A$}\}
    \]
    and
    \[
      E^2_i:=\{\text{$w_i \in \langle W_{i-1} \rangle$}\}.
    \]

Then the probability in the lemma is bounded by
\begin{align*}
  &  \mkern20mu \Pr[\text{$A$ does not have full rank}]\\
  & \leq \sum_{i=1}^{\lambda\ell/2}\Pr[E^2_i]\\
  & \leq \sum_{i=1}^{\lambda\ell/2}(\Pr[E^2_i \cap \neg E^1_i]+\Pr[E^1_i])\\  
  & \leq \sum_{i=1}^{\lambda\ell/2}((|W_{i-1}|-1)/(2^n-1)+(2^{\lambda}-1)/(2^n-1))\\
  & \leq \sum_{i=1}^{\lambda\ell/2}(2^{i-1}+2^{\lambda})/(2^n)\\
  & < \sum_{i=1}^{\lambda\ell/2}(2^{(n+\ell)/2}+2^{\lambda})/(2^n)\\
  & =O(n^2(2^{3n/4}/2^n))\\
  & =O(n^2 \cdot 2^{-n/4}),   
\end{align*}
where the last inequality relies on the fact that $\ell \leq 2n/\log(1/\epsilon) \leq n/2$ and $\lambda \leq n/\ell+1 \leq n/2+1$. 
\end{proof}

\begin{theorem}
The construction is not crooked-indifferentiable from a random permutation if $\ell \leq 2n/\log(1/\epsilon)$.
\end{theorem}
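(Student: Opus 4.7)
The plan is to exhibit an explicit subversion algorithm $\cA$ together with a $\cA$-crooked-distinguisher $\cD$ that defeat crooked-indifferentiability whenever $\ell \leq 2n/\log(1/\epsilon)$. The distinguisher will exploit an adversarially implanted ``trigger input'' on which the Feistel evaluation collapses to a completely predictable trajectory; the existence of such a trigger, over the random choice of $R$, is exactly the content of Lemma~\ref{Lemma:Attack}.

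Concretely, \emph{before} seeing $R$, the adversary chooses for each round $i \in \{1, \ldots, \ell\}$ $\lambda$ linearly independent vectors $v_{i,1}, \ldots, v_{i,\lambda} \in \F_2^n$ and bits $c_{i,1}, \ldots, c_{i,\lambda} \in \F_2$, where $\lambda$ is the parameter of Lemma~\ref{Lemma:Attack}. Setting
\[
  S_i := \{\, y \in \F_2^n : v_{i,k}^{\top} y = c_{i,k} \text{ for all } k \,\},
\]
one obtains an affine subspace of codimension $\lambda$, so $|S_i|/2^n = 2^{-\lambda} \leq \epsilon$ provided $\lambda$ is chosen at least $\log(1/\epsilon)$. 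The subverted implementation is then
\[
  \tilde{F}_i^F(y) := \begin{cases} 0^n & \text{if } y \in S_i,\\ F_i(y) & \text{otherwise}, \end{cases}
\]
which plainly satisfies the subversion bound \eqref{eq:equal} on every $F$.

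After $R = ((a_1,b_1), \ldots, (a_\ell, b_\ell))$ is published, the distinguisher $\cD$ searches for $(x_0, x_1) \in \F_2^{2n}$ making the ``trivial'' Feistel trajectory $x_i := x_{i \bmod 2}$ self-consistent, i.e.\ $a_i x_i \oplus b_i \in S_i$ at every round. Unpacking this yields, for every $(i,k)$, the affine equation
\[
  (v_{i,k}^{\top} a_i)\, x_{i \bmod 2} \;=\; c_{i,k} \oplus v_{i,k}^{\top} b_i .
\]
Separating by the parity of $i$ decouples the full system into a $(\lambda \ell/2) \times n$ linear system for $x_0$ and an independent one of the same size for $x_1$. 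Because $a_i$ is uniform on $\GL(n,\F_2)$ and each $v_{i,k}$ is nonzero, every row $v_{i,k}^{\top} a_i$ is uniform over nonzero vectors; linear independence of the $v_{i,k}$'s together with invertibility of $a_i$ keeps the $\lambda$ rows of each block linearly independent; and independence of the $b_i$'s across rounds makes the right-hand sides jointly uniform and independent across blocks. These are exactly the hypotheses of Lemma~\ref{Lemma:Attack}, which therefore supplies a solution to both systems except with probability $O(n^2 \cdot 2^{-n/4})$.

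Finally, an easy induction on $i$ shows that the real-world evaluation of $C(x_0,x_1)$ follows precisely the trivial trajectory: each round's input $a_i x_i \oplus b_i$ already lies in $S_i$, so $\tilde{F}_i(\cdot) = 0^n$ and $x_{i+1} = x_{i-1}$; thus $C(x_0, x_1)$ equals $(x_0, x_1)$ when $\ell$ is even and $(x_1, x_0)$ when $\ell$ is odd. The distinguisher queries the permutation oracle at $(x_0, x_1)$ and outputs ``real'' iff the response matches this predicted value; a uniformly random permutation hits a prescribed target with probability at most $2^{-2n} + \negl(n)$, yielding distinguishing advantage $1 - O(n^2 \cdot 2^{-n/4}) - \negl(n)$, which is clearly non-negligible. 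The main obstacle in this plan is the joint parameter bookkeeping---arranging $\lambda$ so that $2^{-\lambda} \leq \epsilon$ and the precondition $\lambda \leq n/\ell + 1$ of Lemma~\ref{Lemma:Attack} are simultaneously satisfied across the declared range of $\ell$---and verifying that the Feistel-induced $(A,B)$ distribution faithfully matches the abstract distribution of the lemma, including independence of the two systems across rounds of opposite parity.
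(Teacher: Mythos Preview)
Your proposal is correct and follows essentially the same strategy as the paper: subvert each $F_i$ to output $0^n$ on a codimension-$\lambda$ affine trigger set, then invoke Lemma~\ref{Lemma:Attack} to locate an input $(x_0,x_1)$ whose Feistel trajectory stays in the trigger sets and hence collapses to the predictable alternating pattern. The paper simply takes the concrete choice $v_{i,k}=e_k$, $c_{i,k}=0$ (``first $\lambda$ bits are zero''), whereas you allow an arbitrary affine subspace; this generality costs nothing and your writeup is in fact more explicit than the paper's about the final distinguishing step and the verification that the induced $(A,B)$ distribution matches the lemma's hypotheses.
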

\begin{proof}
Let $\lambda$ be the greatest integer that is smaller than or equal to $n/\ell+1$ (so $\lambda \geq n/\ell$). Consider the following subversion algorithm $\cA$: for each $F_i$ ($i=1,\ldots,\ell$) and any $n$ bit string $x$, define $\tilde{F}_i(x):=0^n$ if the first $\lambda$ bits of $x$ are $0$s. Otherwise, define $\tilde{F}_i(x):=F_i(x)$. (Notice that this subversion algorithm is legitimate since the dishonest fraction is $2^{-\lambda}\leq 2^{-n/\ell} \leq \epsilon$.)

Now we prove the distinguisher can launch the following attack with the subversion algorithm above. We will show that, with overwhelming probability over the choice of $R$, there is a pair of $n$-bit strings $(x_0,x_1)$ such that for the Feistel chain $(x_1,x\ldots,x_{\ell})$ related to $(x_0,x_1)$, $\CFt_i(x_i)=0^n$ for all $i=1,\ldots,\ell$. (We use the terminology ``with overwhelming probability'' in the paper to mean ``with all but negligible probability.'')

Notice that the fact that such a pair $(x_0,x_1)$ exists is equivalent to the fact that there is a pair $(x_0,x_1)$  such that the first $\lambda$ bits of $a_{2i+1} \cdot x_1 \oplus b_{2i+1}$ and the first $\lambda$ bits of $a_{2j} \cdot x_0 \oplus b_{2j}$ are $0$s for all $0<2i+1,2j\leq \ell$. And this is true with overwhelming probability due to Lemma \ref{Lemma:Attack}. (Also, the attack can be launched by a polynomial running time adversary since the linear equations in Lemma \ref{Lemma:Attack} can be solved efficiently.)
\end{proof}





\subsection{Technical Overviews and Notations}

In this section we give a technical overview of proving Theorem \ref{thm:ind}. For convenience, we will prove the crooked-indifferentiability in the abbreviated model (Definition \ref{def:abbrev-indiff-crooked}) first and then lift it to the full model (\ref{def:indiff-crooked}).  

\noindent\emph{Our strategy: Simulation via judicious preemptive chain
  completion.} To convey the main idea, suppose that a distinguisher
queries the \emph{simulated} round functions in order to determine the
value of the permutation $P$ on input $(x_0,x_1) \in \{0,1\}^{2n}$; in
particular, the resulting output $(x_{8n},x_{8n+1})$ is obtained by
computing $x_{i+1}:=x_{i-1} \oplus \CFt_i(x_i)$ for all
$i=1,\ldots,8n$. Then, of course, $(x_{8n},x_{8n+1})$ must equal the
output of $P$ on input $(x_0,x_1)$: otherwise the distinguisher can
easily detect that it is not interacting with the real Feistel
construction. To ensure such consistency, the simulator must recognize
that the queries $x_1,\ldots,x_{8n}$ belong to an evaluation of $\rC$,
and must set the values $\CF_i(x_i)$ to enforce consistency with
$P$. The mechanism for this is described informally below and
in more detail in the next section.

The simulator maintains an internal table for each function $\CF_i$
that indicates a partial definition of this function: these tables
typically expand during interaction with the distinguisher and satisfy
the invariant that once a $\CF_i$ value is defined in the table for a
particular element $x$ of the domain, this is never removed or altered
later in the computation. While the tables define the $\CF_i$ values
used to respond to any query answered by the simulator, the table may
record additional $\CF_i$ values not as yet queried by the
distinguisher as a bookkeeping tool. Of course, distinguisher queries
are always answered consistently with the values in the tables.

\paragraph{Subverted and unsubverted chains; honest chains.} In the
following, an index $s$, combined with a sequence of values $x_s,\ldots,x_{s+r}$ ($r \geq 1$,
$1 \leq s < s+r \leq 8n$) such that $\CF_i(x_i)$ is defined by the
simulator for all $s \leq i \leq s+r$ and such that
$x_{i+1}:=x_{i-1} \oplus \CF_i(x_i)$ for all $s+1 \leq i \leq s+r-1$,
will be called an \emph{unsubverted chain}. For each index $i$ and an
element $x \in \cS.\CF_i$, we say $\CFt_i(x)$ is \emph{defined} if its
value can be determined by the subversion algorithm and the $\CF$
values that are already defined by the simulator. We assume without
loss of generality that the subversion algorithm always evaluates
$\CF_i(x)$ when called upon to evaluate $\CFt_i(x)$ (for any $i$ and
$x$). Therefore, $\CF_i(x)$ must be defined when $\CFt_i(x)$ is
defined. An index $s$, combined with a sequence of values $x_s,\ldots,x_{s+r}$ ($r \geq 1$,
$1 \leq s < s+r \leq 8n$) such that $\CFt_i(x_i)$ is defined by the
simulator for all $s \leq i \leq s+r$, and such that
$x_{i+1}:=x_{i-1} \oplus \CFt_i(x_i)$ for all $s+1 \leq i \leq s+r-1$,
will be called a \emph{subverted chain}. The \emph{length} $L(\cdot)$
of an unsubverted (or subverted) chain is defined to be the number of
the elements in the chain. An unsubverted (or subverted) chain is
called a \emph{full chain} if it has length $8n$. Note, in general, that
chains always have length at least two (as $r \geq 1$).

For a chain $c=(s,x_s,\ldots,x_{s+r})$, we define
$Q_{c}= \bigcup_{i=s}^{s+r} Q_i(x_i)$ if $\CFt_{i}(x_i)$ is defined for $i=s,\ldots,s+r$. For any $i \in \{1,\ldots,8n\}$
and $x \in \{0,1\}^n$, if $\CFt_i(x)$ is defined, we say $(i,x)$ is
\emph{honest} when $\CF_i(x)=\CFt_i(x)$; similarly, we say it is
\emph{dishonest} when $\CF_i(x) \neq \CFt_i(x)$. We say a subverted
chain is \emph{honest} if all the elements on the chain are honest.

For a chain $c = (s,x_{s},\ldots,x_{s+r})$ and a term $(i,x)$, we say $(i,x)$ is an element of $c$(or $(i,x) \in c$) if $s \leq i \leq s+r$ and $x_i=x$.  For two chains $c_1=(s_1,x_{s_1},\ldots,x_{s_1+r_1})$ and
$c_2=(s_2,y_{s_2},\ldots,y_{s_2+r_2})$, we say $c_1 \subset c_2$ if each element of $c_1$ is also an element of $c_2$. We say $c_1$ and $c_2$ are \emph{disjoint} if there is no chain $c$ for which $c \subset c_1$ and $c \subset c_2$ (i.e., the chains $c_1$ and $c_2$ do not share any pair of adjacent elements).

\paragraph{The definition of the simulator $\mathcal{S}$.}
Our simulation strategy will consider a carefully chosen set of
relevant unsubverted chains as ``triggers'' for completion: once a
chain of this family is defined in the simulator's table, the
simulator will preemptively ``complete'' the chain to ensure
consistency of the resulting full chain with $P$.
Upon a query for $\CF_i$ with input $x_i$ (in fact, the query is a
query for $F_i$ with input $x'_i$ such that
$a_i \cdot x'_i \oplus b_i=x_i$), the simulator sets $\CF_i(x_i)$ to a
fresh random value and looks for new relevant partial chains involving
$x_i$, adding them to a FIFO queue. (In general, many new chains may
be added by this process.) The simulator then repeats the following,
until the queue is empty: It removes the first unsubverted chain from the queue.  If the chain satisfies a certain property (will be described later), the simulator \emph{completes} it to a full subverted chain
$x_1,\ldots,x_{8n}$ in such a way so as to guarantee that
$P(x_0,x_1)=(x_{8n},x_{8n+1})$, where $x_0=x_2 \oplus \CFt_1(x_1)$ and
$x_{8n+1}=x_{8n-1} \oplus \CFt_{8n}(x_{8n})$. In particular, it sets
each undefined $\CF$ in $Q_i(x_i)$ to a fresh uniform random string,
with the exception of two consecutive values $\CF_u(x_u)$ and
$\CF_{u+1}(x_{u+1})$ which are set adaptively to ensure consistency
with $P$. We refer to this step as \emph{adapting} the values of
$\CF_u(x_u)$ and $\CF_{u+1}(x_{u+1})$. Establishing that such adapting
is always possible (for some carefully chosen $u$) will be a major
challenge of our analysis below. 

We now face four main challenges. Our choice of which chains are relevant and how they are completed will be crucial in order to solve them:
\begin{enumerate}\setlength{\itemsep}{-0.1cm}
\item \textbf{Efficiency.} We need to show that the simulation
  terminates with high probability when answering a query; i.e., the
  queue becomes empty after a small (polynomial) number of
  completions.
 \item \textbf{Freshness.} We need to show that the values of
  $\CF_u(x_u)$ and $\CF_{u+1}(x_{u+1})$ are always undefined when
  these values are selected for adaptation.
\item \textbf{Honesty.} We need to show that the values of
  $\CF_u(x_u)$ and $\CF_{u+1}(x_{u+1})$ which are adapted to ensure
  consistency are always honest; i.e., they are always equal to their
  subverted versions, i.e., $\CF_u(x_u)=\CFt_u(x_u)$ and
  $\CF_{u+1}(x_{u+1})=\CFt_{u+1}(x_{u+1})$.

\item \textbf{Indistinguishability.} Finally, with the three
  demands above in hand, it is still necessary to show that the simulated
  world cannot be distinguished from the real world.
\end{enumerate}

\paragraph{Addressing Challenge 1} To see why it is possible the queue may not become empty after a small number of completions, notice that the completion of a certain chain forces the evaluation of many terms that have not been queried by the distinguisher. These newly evaluated terms may generate another chain that triggers completion. The same efficiency problem also appears in the proof of classical indifferentiability of a constant round Feistel construction. (See Coron et al.~\cite{JC:CHKPST16}) The efficiency problem in our case (the crooked-indifferentiability model) is more complex than that in \cite{JC:CHKPST16} (the classical indifferentiability model) because when completing a chain in the crooked-indifferentiability model, the simulator needs to evaluate $\CFt$ instead of just $\CF$ values in the chain, which in general, generates many more terms than the classical model. 

To prove efficiency, we will show that the recursion stops after at most $\poly(q_{\cD})$ steps, where $q_{\cD}$ is the number of the queries made by the distinguisher $\cD$. The proof relies on the observation that, for the chains that are completed, on average, all but a constant number of elements in each chain were once queried by $\cD$. (Notice that not all the elements in these chains are evaluated because they are queried by $\cD$. For example, some elements are evaluated when the simulator completes a chain.) Hence, the total number of the chains that are completed is in fact asymptotically equivalent to $q_{\cD}/\ell$. 

\paragraph{Addressing Challenges 2 and 3} To understand why proving freshness and honesty is hard, consider the following example. During the interaction with the distinguisher, suppose the simulator $\cS$ sees an unsubverted chain $c=(s,x_s,...,x_{s+r})$ that triggers completion. Let us call the current $\cS.CF$ table $T_{\text{Initial}}$. 
For the full chain $c'=(1,x_1,\ldots,x_{8n})$ determined by $c$ ($c \subset c'$), $\cS$ hopes that it can find an index $u$ so that $(u,x_u)$ and $(u+1,x_{u+1})$ are undefined before adaption. It is easy to find an index $u$ so that these two terms are not in $T_{\text{Initial}}$. However, before $\cS$ determines $x_u$ and $x_{u+1}$, it needs to evaluate $\CFt_i(x_i)$ for $i \neq u,u+1$. And there may exist an index $i$ so that $(u,x_u)$ or $(u+1,x_{u+1})$ is in $Q_i(x_i)$, which breaks the freshness. It is also not obvious how to find $u$ so that $(u,x_u)$ and $(u+1,x_{u+1})$ are honest since the distinguisher can subvert the round functions of any index. In our analysis, we will have to find $u,u+1$ such that {\em both} freshness and honesty can be satisfied.

To prove honesty, we will show that, for any term $(i,x_i)$ in $c'$, it is honest if $i$ is much smaller than $s$ or is much greater than $s+r$ (i.e., the term is far away from the initial chain $c$ that triggers completion). Therefore, there is a long subchain of $c'$ that is honest. The simulator will select the index $u$ in this honest area. To prove freshness, we will show that, inside a long enough honest chain $c''$($c'' \subset c'$), for any term $(i,x_i)$ in the ``middle area'' of $c''$, $\CF_i(x_i)$ is not queried by $\CFt_j(x_j)$ for any $j \neq i$. To achieves freshness, the simulator only needs to pick $u$ in the middle part of the honest area.

\paragraph{Addressing Challenge 4} It is still not easy to establish
crooked-indifferentiability after we understand efficiency, honesty,
and freshness. The reason is that the $\CF$ values that are
maintained by $\cS$ are not perfectly uniform conditioned on the
distinguisher's query to the ideal object $P$, which is a crucial
property in the proofs of efficiency, freshness and honesty.

To see why the $\CF$ values held by $\cS$ are not perfectly uniform,
imagine that the distinguisher queries $P(x_0,x_1)$ for some
$(x_0,x_1)$ and then makes several $\CF$ queries to trigger the
completion of the chain corresponding to $(x_0,x_1)$. The two adapted
values $\CF_u(x_u)$ and $\CF_{u+1}(x_{u+1})$ are not uniform because
they are, of course, adapted to maintain consistency.

To break down the proof, we introduce a sequence of game transitions
involving 6 games, beginning with the simulator game (Game 1) and
ending with the construction (Game 6). By mapping the randomness from
one game to another, we prove that the gap between the 6 games is
negligible if the gap between Games 5 and 6 is negligible. (In
particular, we explain how to treat the games as coupled random
variables that can be investigated with the same underlying
randomness; this provides a convenient way to identify differences in
the dynamics and conclusions of the games.) Then we turn our attention
to Game 5, which maintains an explicit, additional table of uniform
$\CF$ values. This table (in Game 5) provides a vantage point from
which all future $\CF$ values are in fact uniform, and simplifies
reasoning about many of the critical events of interest. Finally, we
formally prove honesty and freshness in Game 5 to show the gap
between Game 5 and 6 is negligible.

\smallskip
\noindent{\em Technical differences between \cite{JC:CHKPST16} and this paper.} In \cite{JC:CHKPST16} (the classical indifferentiability model), Coron et al. used a simulation strategy similar to ours—simulation via judicious preemptive chain completion—to demonstrate the classical indifferentiability of a constant round Feistel structure. Despite using similar simulation strategy, there are some significant technical differences between our security proof and the proof in \cite{JC:CHKPST16}.
\begin{enumerate}
\item In the security proof of \cite{JC:CHKPST16}, the authors only need to show freshness and efficiency of the simulation since there is no subversion; they are not required to prove honesty.
\item The efficiency proof in \cite{JC:CHKPST16} is quite straightforward. By contrast, in our case, it is not that obvious how to upper bound the number of the terms generated in the simulation. The difference is due to the existence of the subversion algorithm. In our case, the chains that are completed are subverted chains, while the classical case has no subversion algorithm and therefore only completes ``unsubverted'' chains. The evaluation of a subverted chain generates many more terms than the evaluation of an unsubverted chain, which in general, may generate many more chains that trigger completion.  
\item The proof of freshness is challenging in both \cite{JC:CHKPST16} and our work, but for quite different reasons. The chains in \cite{JC:CHKPST16} are very short(i.e., have only constant length), and when two of them are intersected, the terms of one chain can easily occupy the ``adaption space'' for the other, which hinders freshness. In our case, however, we are not that worried about the intersection of chains since we our construction has many more than constant rounds. The difficulty of our freshness proof arises again from the subversion algorithm: to prove freshness, we need to rule out the case that when completing a chain, the two adapated terms are queried by some previously evaluated $\CFt$.
\end{enumerate} 

\paragraph{Roadmap of the proof} The formal proof consists of four steps:
\begin{enumerate}
\item First, we define the simulator for the security proof in Section \ref{subsec:definition of simulator} and introduce a sequence of intermediate games from Section \ref{subsubsection:game transition preparation} to Section \ref{fifth game} that build connections between the simulator's game and the construction. Using the game transition, we show that the security proof can be reduced to controlling the probability of two bad events (assuming the simulator is efficient). We address challenge 4 here.
\item Second, from Section \ref{Subsec: preparation} to \ref{section:security}, we formally prove the bad events are negligible Section, which corresponds to challenge 2 and 3.
\item Third, we prove the simulator is efficient in Section \ref{section:efficiency}, which is challenge 1. 
\item Finally, we lift the security in the abbreviated model to the full model in Section \ref{section:full model}.
\end{enumerate}


\section{Proof}\label{sec:proof}
\subsection{The Detailed Definition of the Simulator}\label{subsec:definition of simulator}
The simulator provides an interface $\cS.\CF(i,x)$ to query the
simulated random function $\CF_i$ on input $x$. As mentioned above,
for each $i$ the simulator internally maintains a table whose entries
are pairs $(x,y)$ of $n$-bit strings; each such entry intuitively
determines a simulated value of $\CF$ at a particular point: in
particular, if the pair $(x,y)$ appears then any query to
$\cS.\CF(i,x)$ returns the value $y$. The simulator maintains the
natural invariants described previously: responses provided to the
distinguisher are always consistent with the table and, furthermore,
once an entry has been added to the table, it is never removed or
changed. Note that in many cases the table will reflect function
values that have not been queried by the distinguisher. We denote the
$i$th table by $\cS.\CF_i$ and write $x \in \cS.\CF_i$ whenever $x$ is
a preimage in this table, often identifying $\cS.\CF_i$ with the set
of preimages stored. When $x \in \cS.\CF_i$, $\CF_i(x)$ denotes the
corresponding image. We also denote the collection of all these
$\cS.\CF_i$ tables by $\cS.\CF$. We use the notation $(i,x) \in \cS.\CF$ when $x \in \cS.\CF_i$.

For each $i$, we additionally define a table $\cS.\CFt_i$ induced implicitly by $\cS.\CF$.  As with $\cS.CF_i$, the table $\cS.\CFt_i$ consists of pairs of inputs and outputs of $\CFt_i$. We write $x \in \cS.\CFt_i$ when all queries generated by evaluation of $\CFt_i(x)$ are defined in $\cS.\CF$; naturally, the corresponding function value determines the pair $(x,y)$ in the table. The collection of all of these $\cS.\CFt_i$ is denoted by $\cS.\CFt$. (Note that this table is not maintained explicitly by the simulator, but rather determined implicitly by $\cS.\CF$.)

\paragraph{Handling queries to $\cS.\CF$.} On a query $\cS.\CF(i,x)$, the simulator first checks whether $x \in \cS.\CF_i$. If so, it answers with $\CF_i(x)$. Otherwise the simulator picks a random value $y$ and inserts $(x,y)$ into $\cS.\CF_i$. The process above is done by a procedure called $\cS.\CF^{\text{Inner}}$ which takes input $(i,x)$.) After this, the simulator takes further steps to ensure that its future answers are consistent with the permutation $P$. Only after this consistency maintenance step is the value $y$ finally returned.

To ensure consistency, the simulator considers all newly generated unsubverted chains with length $n/10$ that terminate at the
last-queried position; for a newly evaluated term $\CF_s(x_s)$, these
chains of interest either have the form $(s,x_s,...,x_{s+n/10-1})$ or
$(s-n/10+1,x_{s-n/10+1},...,x_s)$. Each such detected chain is enqueued by the
simulator in a ``completion queue,'' idenifying the chain for future
completion. 

The simulator then repeats the following completion step until the
queue is emptied. (When the queue is finally empty, the simulator
returns the answer $y$ to the initial query.) 
\begin{enumerate}
\item \textbf{Detection Step.} The first chain $c=(s,x_s,...,x_{s+n/10-1})$ is removed from the queue. A procedure called $\cS.\text{HonestyCheck}$ is then run on the chain. The procedure $\cS.\text{HonestyCheck}$ evaluates $\CFt$ values of the elements of $c$ and generates a four-tuple $(s,x_s,x_{s+1},u)$ for future completion if all the elements in $c$ are honest. (In fact, not all chains removed from the queue are processed by $\cS.\text{HonestyCheck}$. A chain removed from the queue is processed by $\cS.\text{HonestyCheck}$ only if it is disjoint with all the chains that are previously processed by $\cS.\text{HonestyCheck}$ and is disjoint with all the previously completed full subverted chains. Any chain that is not processed by $\cS.\text{HonestyCheck}$ is discarded. The procedure that decides whether a chain is going to be discarded or processed by $\cS.\text{HonestyCheck}$ is called $\cS.\text{Check}$.) In the tuple $(s,x_s,x_{s+1},u)$, the value $s$ ensures that later the simulator
knows that the first value $x_s$ corresponds to $\CF_s$. The value $u$
describes where to adapt (that is, program) the values of $\CF_*$ in
order to ensure consistency with the given permutation: this will
occur at positions $u$ and $u+1$. The convention for determining $u$
is straightforward: If $s > 5n$ or $s+n/10-1 < 3n$, then there is
``plenty of space around $4n$,'' and $u=4n$; otherwise, $u=7n$.
\item \textbf{Completion Step.} Finally, the simulator takes the four-tuple $(s,x_s,x_{s+1},u)$ and \emph{completes} the subverted chain related to $(s,x_s,x_{s+1})$. Intuitively, this means that the chain is determined
by iteratively determining neighbouring values of $\CFt(x)$ by
evaluating the subversion algorithm and, when necessary, carrying out
internal calls to $\CF_i()$ in order to answer queries made by that
algorithm to the $F_i$. This iterative process is continued, using $P$
to ``wrap around,'' until the only remaining undetermined values
appear at positions $u$ and $u+1$; at this point, the values at $u$
and $u+1$ are programmed to ensure consistency. In more detail: Assuming that $u < s$, the completion process (conducted by a procedure called $\cS.\text{Complete}$) proceeds as follows.
\begin{itemize}[topsep=1pt]
\item The initial chain consists of the two adjacent values
  $x_s, x_{s+1}$.
\item $\CFt_{s+1}(x_{s+1})$ is determined by simulating the subversion algorithm which generates oracle queries to $\CF$ to be answered using $\cS.\CF$. (Note that this process may enqueue new chains for completion.) The value $x_{s+2} = x_s \oplus \CFt_{s+1}(x_{s+1})$ is then determined, yielding the enlarged chain $(x_s, x_{s+1}, x_{s+2})$. This process is repeated until the chain is extended maximally ``to the right'' so that it has the form $(x_s, x_{s+1}, \ldots, x_{8n}, x_{8n+1})$.
\item $P^{-1}$ is then applied to $x_{8n},x_{8n+1}$ to yield $x_0, x_1$.
\item Starting from the pair $(x_0, x_1)$, this process is repeated,
  as above, to yield values for $x_2, \ldots, x_u$. Note that
  $x_u = x_{u-2} \oplus \CFt(x_{u-1})$ so that $\CFt(x_u)$ is never
  evaluated during this process (which is to say that the subversion
  algorithm is never simulated on $x_u$).
\item Similarly, the original pair $x_s, x_{s-1}$ is extended ``to the
  left'' to determine the values $x_{s-1}, ..., x_{u+1}$; as above,
  $x_{u+1}$ is determined by $x_{u+3} \oplus \CFt(x_{u+2})$, so that
  $\CFt(x_{u+1})$ is never evaluated.
\item Then, the simulator defines $\CF_u(x_u)$ and
  $\CF_{u+1}(x_{u+1})$ that is consistent with $P$, i.e., $\CF_u(x_u):=x_{u-1} \oplus x_{u+1}$ and
  $\CF_{u+1}(x_{u+1}):=x_{u} \oplus x_{u+2}$. The game aborts if
  either of these is defined from a previous action of $\cS$. If the game does not abort, the simulator evaluates the
  subversion algorithm on both $x_u$ and $x_{u+1}$. During this
  evaluation, the values $\CF_u(x_u)$ and $\CF_{u+1}(x_{u+1})$ are
  already determined; other queries are answered using $\cS.\CF$ as
  above. The game aborts if $(u,x_u)$ and $(u+1,x_{u+1})$ is dishonest; otherwise, the chain is a valid subverted chain (and consistent with $P$).
\item {A set $\cS.\text{CompletedChains}$ is maintained to store the chains that are completed: for any $(i,x_i,x_{i+1})$ ($1 \leq i \leq 8n-1$), $\cS$ updates
    \[
      \cS.\text{CompletedChains}:=\cS.\text{CompletedChains} \cup (i,x_i,x_{i+1}).
    \]}
\end{itemize}
The alternative case, when $u > s+1$, is treated analogously.
\end{enumerate}

For a detailed description of the simulator, please see Appendix on page \pageref{Game 1}.

\subsection{Game Transition Approach: Some preparations}\label{subsubsection:game transition preparation}

Our overall purpose is to show that for any deterministic
distinguisher $\cD$ that make at mosts $q_{\cD}$ queries (where
$q_{\cD}$ is some polynomial function in $n$), the probability that
$\cD$ outputs 1 when interacting with $(P,\cS^P)$ differs negligibly
from the probability it outputs 1 when interacting with
$(C^F,F)$. Here $C$ is the construction in Section~\ref{sec:overview}, and $F$ is a collection of $8n$ uniform
functions. We also wish to establish that, with overwhelming probability,
only a polynomial number of terms are evaluated by $\cS$ (or $P$) when
$\cD$ interacts with $(P,\cS^P)$.

We denote the game where the distinguisher $\cD$ interacts with
$(P,\cS^P)$ by $G_1$ (the ideal world), and the game where $\cD$ interacts with
$(C^F,F)$ by $G_6$ (the real world). We will introduce four intermediate games, $G_2$,
$G_3$, $G_4$ and $G_5$, in the following narrative to study the
relationship between $G_1$ and $G_6$. When we use the term ``$G_i$''
($i=1,...,6$), we always have a fixed deterministic distinguisher
(denoted by $\cD$) in mind, without mentioning it explicitly. Also,
whenever we prove a statement about ``$G_i$,'' this is understood to
mean that the statement holds every fixed distinguisher that issues at
most $q_{\cD}$ queries in the setting defined by $G_i$ (including the
queries to $\CF$ and to the ideal object).

In the description of $G_1$, we defined various concepts (e.g.,
unsubverted/subverted chains, honest, $Q_i(x_i)$, etc.) that are used
to describe the behavior of the simulator $\cS$ (or its table
$\cS.\CF$). In the rest of the paper, we likewise will apply these
concepts to describe the simulators in other games. To avoid
confusion, we will specify the simulator (or its counterpart) we are
working with when using these concepts.

In the following, we say a game is \emph{efficient} if, with overwhelming probability, only a polynomial size of terms are evaluated by the simulator or the ideal object when the game ends. To prove $G_1$ is efficient and is indistinguishable from $G_6$, our plan is to show the following statements:
\begin{enumerate}\setlength{\itemsep}{-0.1cm}
\item $G_1$ vs $G_2$: assuming $G_2$ is efficient, $G_1$ is efficient and is indistinguishable from $G_2$.
\item $G_2$ vs $G_3$: assuming $G_3$ is efficient, $G_2$ is efficient and the total variation distance between the transcript of $G_2$ and $G_3$ is 0.
\item $G_3$ vs $G_4$: assuming $G_4$ is efficient, the total variation distance between the transcript of the distinguisher in $G_3$ and $G_4$ is bounded by the probability of the bad events in $G_4$, and $G_3$ is efficient if the probability of the bad events in $G_4$ is negligible.
\item $G_4$ vs $G_5$: assuming $G_5$ is efficient, $G_4$ is efficient and the total variation distance between the transcript of $G_4$ and $G_5$ is 0.
\item $G_5$ vs $G_6$: the total variation distance between the transcript of $G_5$ and $G_6$ is bounded by the probability of the bad events in $G_5$.
\item Bounding bad events: assuming $G_5$ is efficient, the probability of the bad events in $G_4$ and $G_5$ is negligible.
\item Efficiency: $G_5$ is efficient.
\item Full model: The crooked-indifferentiability in the abbreviated model can be lifted to the full model.
\end{enumerate} 
It is easy to see how these results together imply the efficiency of simulator $\cS$ in $G_1$ and the crooked-indifferentiability of our construction ($G_1$ is indistinguishable from $G_6$). Statements 1 to 5 will be formally defined and proved as we introduce new games. Statement 6, 7 and 8 will be proved in \ref{section:security}, \ref{section:efficiency} and \ref{section:full model}, respectfully.

\subsection{The Second Game}
\subsubsection{Description of the Second Game}
To obtain $G_2$ from $G_1$, we replace the random permutation $P$ by a
``two-sided random function'' $RF$. $RF$ is a system that provides
answers to ``$RF$ queries'' and ``$RF^{-1}$ queries.'' Ideally, the
system would maintain a consistent, partially defined bijection that
is extended independently and uniformly whenever a ``fresh value'' is
requested. These constraints are, of course, impossible to guarantee
perfectly beyond the first query, so the system adopts a particular
convention for subsequent answers that yields a probability
distribution that is nearly indistinguishable from a random
permutation so long as the number of queries is
polynomial. Specifically, the system maintains an initially empty
table with entries of the form $(\downarrow, \alpha, \beta)$,
intuitively indicating that $RF(\alpha) = \beta$, or the form
$(\uparrow, \alpha, \beta)$, intuitively indicating that
$RF^{-1}(\beta) = \alpha$. Under expected circumstances, these tables
will provide the guarantees mentioned above, defining a consistent,
partially defined permutation. A query of the form $RF(\alpha)$ is
answered as follows: (i.) if there is a tuple
$(\downarrow, \alpha, \beta)$ in the table, $\beta$ is returned, (ii.)
otherwise, a uniformly random value $\beta$ is drawn and both
$(\downarrow, \alpha, \beta)$ and $\uparrow, \alpha, \beta)$ are added
to the table and, (ii.') in the event that there is a tuple
$(\uparrow, \alpha', \beta)$ in the table for some
$\alpha \neq \alpha'$, this is removed. Queries to $RF^{-1}$ are
handled similarly. Of course, it is natural to expect and easy to
prove that such collisions occur only with negligible probability. In
the absense of such collisions $RF$ behaves like a random permutation.

See page \pageref{Game 2} for the pseudocode description of $G_2$.

\subsubsection{The Gap Between the First Game and the Second Game}

To understand the gap between $G_1$ and $G_2$, we notice that the only difference between the two games is the ideal object. The following lemma shows that the two ideal objects, $P$ and $RF$ are indistinguishable for a distinguisher that issues polynomial queries.

\begin{lemma}[$P$ and $RF$ are indistinguishable]\label{Lemma: P vs RF}
  A distinguisher $\cD'$ that issues $q'$ queries
  to either $P$ or $RF$ has advantage at most $(q')^2/2^{2n}$ to distinguish the two objects.
\end{lemma}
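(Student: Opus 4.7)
The plan is to use a standard coupling/identical-until-bad argument between the $P$-world and the $RF$-world. Specifically, I would simulate both oracles on the same underlying randomness: for $RF$, each fresh forward (resp.\ backward) query samples a uniform $\beta \in \{0,1\}^{2n}$ (resp.\ $\alpha$); for $P$, each fresh forward query samples a uniform value from $\{0,1\}^{2n}$ minus the range already used (and symmetrically for $P^{-1}$). One can couple these by first sampling the $RF$ answer uniformly and then, in the $P$-world, resampling only when that value collides with a previously used one.

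Next, I would define the bad event $\mathsf{Coll}$ to occur in the $RF$-world when some freshly sampled answer $\beta$ (in a forward query) coincides with a $\beta'$ already appearing in a tuple $(\downarrow,\alpha',\beta')$ or $(\uparrow,\alpha',\beta')$ of the table, and analogously for backward queries with respect to $\alpha$. The key observation is that if $\mathsf{Coll}$ never fires during the interaction, then the table maintained by $RF$ is at every step a consistent partial bijection whose fresh answers are uniform on the unused coordinates, so the distribution of $\cD'$'s transcript in the $RF$-world is identical to that in the $P$-world under the coupling.

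Finally, I would bound $\Pr[\mathsf{Coll}]$ by a union bound. At the moment of the $i$-th query to $RF$ (or $RF^{-1}$), the table contains at most $i-1$ entries, hence at most $i-1$ forbidden images (or preimages), so a uniform draw from $\{0,1\}^{2n}$ lands in the forbidden set with probability at most $(i-1)/2^{2n}$. Summing for $i=1,\ldots,q'$ yields
\[
\Pr[\mathsf{Coll}] \;\leq\; \sum_{i=1}^{q'} \frac{i-1}{2^{2n}} \;\leq\; \frac{(q')^2}{2^{2n+1}} \;\leq\; \frac{(q')^2}{2^{2n}}\,.
\]
Combining with the identical-until-bad lemma gives the advertised advantage bound.

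The argument is essentially routine; the only mild subtlety is confirming that $RF$ as described really does maintain the invariant of a consistent partial bijection under $\neg \mathsf{Coll}$, which requires checking both the forward and backward branches of the pseudocode (including the clean-up step (ii$'$) which, in the absence of a collision, is simply vacuous). I do not expect any substantive obstacle beyond this bookkeeping.
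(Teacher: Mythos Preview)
Your proposal is correct and follows essentially the same approach as the paper: both arguments identify the same bad event (an overwrite/collision in the $RF$ table) and bound it by the same $\sum_{i=1}^{q'}(i-1)/2^{2n}\le (q')^2/2^{2n}$ union bound. The only cosmetic difference is that the paper phrases the ``identical-until-bad'' step as a good-transcript probability-dominance argument (showing $\Pr[T_P=t]\ge\Pr[T_{RF}=t]$ for every collision-free transcript $t$) rather than as an explicit coupling, but the two formulations are standard and interchangeable here.
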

\begin{proof}
We denote the transcripts of $\cD'$ interacting with $P$ and $RF$ by $T_{P}$ and $T_{RF}$ respectively. We say $T_{P}$ ($T_{RF}$) is equal to $t$ for a certain bitstring $t$ if it reflects the transcript correctly. We say a query $A(\alpha)$ ($A=P,P^{-1},RF,RF^{-1}$) is \emph{redundant} if $\cD'$ has queried $A(\alpha)$ or $A^{-1}(\beta)$ ($=\alpha$) before. In the proof, without loss of generality, we assume $\cD'$ does not make any redundant queries.

We bound the advantage of the distinguisher by bounding the total variation distance between the two scenarios. We say a bitstring $t$ is \emph{good} if when $T_{RF}=t$, there is no overwrite in the execution of $RF$. $t$ is called \emph{bad} if it is not good. We notice that if $t$ is good it behaves like a transcript of a permutation. Then, for any good $t$,
\[
   \Pr[T_{P}=t]=\prod_{i=1}^{q'} \frac{1}{2^{2n}-i+1}>\frac{1}{2^{2nq'}}=\Pr[T_{RF}=t].
\]

Let $\alpha_{P}$ and $\alpha_{RF}$ denote the distribution of the $T_P$ and $T_{RF}$. The above observation shows that
\[
 \|\alpha_{P}-\alpha_{RF}\|_{\tv}\leq \Pr[\text{$T_{RF}=t$ for a bad $t$}].
\]

Now we proceed to bound the probability that $T_{RF}$ is equal to a bad string. By definition, $T_{RF}$ is bad if and only if there is an overwrite in the execution of $RF$. Since $RF$ are queried for $q'$ times, the probability that there is an overwrite is less than $(q')^2/2^{2n}$.
\end{proof}

\begin{lemma}[The Gap between $G_1$ and $G_2$]\label{Lemma: Game 1 and 2 are indis.}
If $G_2$ is efficient, then $G_1$ is efficient and the probability of $\cD$ outputting 1 in $G_2$ differs negligibly from that in $G_1$.
\end{lemma}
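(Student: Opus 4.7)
The plan is to bundle $(\cD,\cS)$ into a single ``wrapper'' distinguisher $\cD'$ that interacts only with the ideal object ($P$ in $G_1$, $RF$ in $G_2$), and then invoke Lemma~\ref{Lemma: P vs RF}. The subtlety is that the lemma needs $\cD'$ to issue polynomially many queries, but a priori we have that bound only in $G_2$; a truncation argument will close the gap.

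Concretely, by the efficiency hypothesis on $G_2$ one can fix a polynomial $T(n)$ for which the total number of operations performed by $\cS$ and of ideal-object queries issued in $G_2$ exceeds $T(n)$ with probability at most some negligible $\delta(n)$. Define $\widehat{G}_1,\widehat{G}_2$ identically to $G_1,G_2$ except that $\cD'$ aborts (and outputs a fixed symbol) the moment this count would exceed $T(n)$. Both $\widehat{G}_i$ then issue at most $T(n)$ ideal-object queries, and since $\cS$'s behaviour is a deterministic function of its own (fixed) randomness and of the ideal-object transcript, Lemma~\ref{Lemma: P vs RF} bounds the total variation distance between the full transcripts of $\widehat{G}_1$ and $\widehat{G}_2$ by $T(n)^2/2^{2n}=\negl(n)$. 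In particular,
\[
  \bigl|\Pr[\widehat{G}_1=1]-\Pr[\widehat{G}_2=1]\bigr|\leq T(n)^2/2^{2n}.
\]

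By the choice of $T(n)$, the abort event occurs with probability at most $\delta(n)$ in $\widehat{G}_2$, hence (since the abort event is transcript-measurable, and the TV bound above transports it) with probability at most $\delta(n)+T(n)^2/2^{2n}=\negl(n)$ in $\widehat{G}_1$. Thus $\widehat{G}_1$ agrees with $G_1$ except on a negligible event, and chaining $G_1\approx\widehat{G}_1\approx\widehat{G}_2\approx G_2$ delivers the required output closeness. The same chain simultaneously gives efficiency of $G_1$, because the event ``more than $T(n)$ operations are performed in $G_1$'' is precisely the abort event of $\widehat{G}_1$, which we just bounded.

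The main (mild) obstacle will be organising the truncation so that the coupling between $G_1$ and $G_2$ is never used circularly: the one-sided hypothesis ``$G_2$ is efficient'' must carry the whole argument, and truncation is precisely the device that sidesteps the apparent circularity. Beyond that, the lemma is a black-box reduction to Lemma~\ref{Lemma: P vs RF}.
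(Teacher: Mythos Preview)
Your proposal is correct and takes essentially the same approach as the paper: wrap $(\cD,\cS)$ into a single distinguisher against the ideal object and invoke Lemma~\ref{Lemma: P vs RF}. The paper's proof is a one-liner (``implied directly by Lemma~\ref{Lemma: P vs RF}''), whereas you explicitly work through the truncation argument needed to cope with the asymmetric hypothesis that only $G_2$ is assumed efficient; this is a genuine detail the paper glosses over, and your treatment of it is sound.
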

\begin{proof}
The lemma is implied directly by Lemma \ref{Lemma: P vs RF}.
\end{proof}

\subsection{The Third and the Fourth Game}

We will put the descriptions of $G_3$ and $G_4$ together since they have similar structures. For convenience, we will introduce $G_4$ first and then make some changes to $G_4$ to arrive at $G_3$.

\subsubsection{Description of the Fourth Game}

In $G_4$, we adopt a new simulator $\cO^2$ with two inbound parties $\cS^2$ and $\cM^2$. (In the rest of the paper, we will use different superscripts for the simulators and parties in different games.) $\cS^2$ is a direct analogue of $\cS$ in $G_2$. $\cM^2$ is a party that knows the distinguisher's query to the ideal object $RF$. In other words, the simulator $\cO^2$ in $G_4$ is a "public" simulator, which means it answers the $\CF$ queries from $\cD$ and at the same time, knows all queries made by $\cD$ to $RF$. 

$\cO^2$ maintains two tables of $\CF$ values (denoted by $\cS^2.\CF$ and $\cM^2.\CF$), which have the same format as $\cS.\CF$ in $G_2$. The $\CF$ values in the table $\cS^2.\CF$ are evaluated in a way such that it is a subset of $\cM^2.\CF$ unless the game aborts: roughly speaking, when $\cO^2$ evaluates a term $\CF_i(x)$ in $\cS^2.\CF$, it inserts the same value to the table $\cM^2.\CF$; when $\cO^2$ needs to evaluate a previously undefined term $\CF_i(x)$ in $\cS^2.\CF$ and $(i,x) \in \cM^2.\CF$, it typically copies $\CF_i(x)$ in $\cM^2.\CF$ to $\cS^2.\CF$. 

Like $\cS$ in $G_2$, both $\cS^2$ and $\cM^2$ have two important procedures: the ``evaluation procedure'' $\cS^2.\CF^{\text{Inner}}$ ($\cM^2.\CF^{\text{Inner}}$, respectively) takes care of query to the table $\cS^2.\CF$ ($\cM^2.\CF$, respectively), and the ``completion procedure'' $\cS^2.\text{Complete}$ ($\cM^2.\text{Complete}$, respectively) is used to \emph{complete} chains in $\cS^2.\CF$ ($\cM^2.\CF$, respectively). 

The two procedures of $\cM^2$, $\cM^2.\CF^{\text{Inner}}$ and $\cM^2.\text{Complete}$, only insert values into the table $\cM^2.\CF$, while the two procedures of $\cS^2$ can insert values to both $\cS^2.\CF$ and $\cM^2.\CF$. For convenience, in the following description of $\cM^2$'s procedures, we will just say a certain term $\CF(i,x)$ is evaluated instead of saying it is evaluated in the table $\cM^2.\CF$. However, in the description of the procedures of $\cS^2$, we will mention clearly whether $\CF$ is inserted into $\cS^2.\CF$ or $\cM^2.\CF$.

\paragraph{Handling $\cD$'s query to the ideal object.} When the distinguisher makes a query to $RF$ on a $2n$-bit string $(x_0,x_1)$, $\cO^2$ does nothing if $\cS^2$ or $\cM^2$ has queried $RF(x_0,x_1)$ or $RF^{-1}(x_{8n},x_{8n+1})$ (for some $(x_{8n},x_{8n+1})$ with $RF^{-1}(x_{8n},x_{8n+1})=(x_0,x_1)$) before. Otherwise, $\cM^2$ proceeds to \emph{complete} a chain in $\cM^2.\CF$. In more detail: Assuming that the distinguisher queries $RF(x_0,x_1)$, the ``completion procedure'' $\cM^2.\text{Complete}$ proceeds as follows.

\begin{itemize}
\item $\CFt_{1}(x_{1})$ is determined by simulating the subversion algorithm. This will generate oracle queries to $\CF_*()$ which are answered using $\cM^2.\CF$. (The way a query to $\cM^2.\CF$ is answered will be explained later.) The value $x_{2} = x_0 \oplus \CFt_{1}(x_{1})$ is then determined, yielding the chain $(1, x_{1}, x_{2})$. This process is repeated until the chain is extended ``to the right'' to $x_{4n}$ so that it has the form $(x_1, x_{2}, \ldots, x_{4n-1}, x_{4n})$. Note that
  $x_{4n} = x_{4n-2} \oplus \CFt(x_{4n-1})$ so that $\CFt(x_{4n})$ is never
  evaluated during this process (which is to say that the subversion
  algorithm is never simulated on $x_u$).
\item $RF$ is then applied to $x_{0},x_{1}$ to yield $x_{8n}, x_{8n+1}$.
\item Similarly, the pair $(x_{8n}, x_{8n+1})$ is extended ``to the
  left'' to determine the values for $x_{4n+1}, \ldots, x_{8n-1}$; as above,
  $x_{4n+1}$ is determined by $x_{4n+3} \oplus \CFt(x_{4n+2})$, so that
  $\CFt(x_{4n+1})$ is never evaluated.
\item Then, $\cM^2$ defines $\CF_{4n}(x_{4n})$ and
  $\CF_{4n+1}(x_{4n+1})$ in such a way that consistency with $P$ is
  ensured, i.e., $\CF_{4n}(x_{4n}):=x_{4n-1} \oplus x_{4n+1}$ and
  $\CF_{4n+1}(x_{4n+1}):=x_{4n} \oplus x_{4n+2}$. The game aborts if
  either of these is defined from a previous action of the
  simulator. If the game does not abort, $\cM^2$ evaluates the
  subversion algorithm on both $x_{4n}$ and $x_{4n+1}$. During this
  evaluation, the values $\CF_{4n}(x_{4n})$ and $\CF_{4n+1}(x_{4n+1})$ are
  already determined; other queries are answered using $\cM^2.\CF$ as
  above. The game aborts if for any $i$ with $3n \leq i \leq 5n$, $(i,x_i) \in \bigcup_{j=1}^{8n}Q_j(x_j)/Q_i(x_i)$ or $(i,x_i)$ is dishonest; otherwise, the chain is a valid subverted chain (and
  consistent with $RF$).
\item A set $\cM^2.\text{CompletedChains}$ is maintained to store the chains that are completed: for any $(i,x_i,x_{i+1})$ ($1 \leq i \leq 8n-1$), $\cM^2$ updates $\cM^2.\text{CompletedChains}:=\cM^2.\text{CompletedChains} \cup (i,x_i,x_{i+1}).$
\item A set $\cM^2.\text{MiddlePoints}$ is maintained to store the points with index between $3n$ and $5n$: for any $(i,x_i)$ with $3n \leq i \leq 5n$, $\cM^2$ updates $\cM^2.\text{MiddlePoints}:=\cM^2.\text{MiddlePoints} \cup (i,x_i).$
\item A set $\cM^2.\text{AdaptedPoints}$ is maintained to store the points with index $4n$ and $4n+1$: for $(i,x_i)$ with $i=4n$ or $4n+1$, $\cM^2$ updates $\cM^2.\text{AdaptedPoints}:=\cM^2.\text{AdaptedPoints} \cup (i,x_i).$
\end{itemize}

$\cO^2$ reacts to the distinguisher's query to $RF^{-1}(\cdot)$ similarly.

\paragraph{Handling queries to $\cM^2.\CF$.} On a query $\cM^2.\CF(i,x)$, $\cM^2$ implements the evaluation procedure $\cM^2.\CF^{\text{Inner}}$:
\begin{itemize}
\item First, $\cM^2$ checks whether $x \in \cS^2.\CF_i$. If so, it answers with $\cS^2.\CF_i(x)$.
\item If $x \notin \cS^2.\CF_i$, $\cM^2$ checks whether $x \in \cM^2.\CF_i$ and $(i,x) \notin \cM^2.\text{MiddlePoints}$. If so, it answers with $\cM^2.\CF_i(x)$.
\item If $x \notin \cS^2.\CF_i$, $x \in \cM^2.\CF_i$ and $(i,x) \in \cM^2.\text{MiddlePoints}$, the game aborts.
\item If $x \notin \cS^2.\CF_i$ and $x \notin \cM^2.\CF_i$, $\cM^2$ picks a random value $y$ and inserts $(x,y)$ into $\cM^2.\CF_i$.
\end{itemize}

Notice that in the first case above, we do not need to check whether $x \in \cM^2.\CF_i$ because $\cS^2.\CF$ is a subset of $\cM^2.\CF$ (which will be clear later).

\paragraph{Handling $\cD$'s query to $\CF$}
In $G_1$ and $G_2$, we view the distinguisher's queries to $\CF$ as queries to the simulator's table $\cS.\CF$. Similarly, in $G_4$, the distinguisher's query to $\CF$ is viewed as a query to the table $\cS^2.\CF$. See below for a description of how $\cO^2$ handles queries to $\cS^2.\CF$.

\paragraph{Handling queries to $\cS^2.\CF$.} On a query $\cS^2.\CF(i,x)$, $\cS^2$ implements the procedure $\cS^2.\CF^{\text{Inner}}$:
\begin{itemize}
\item First, $\cS^2$ checks whether $x \in \cS^2.\CF_i$. If so, it answers with $\cS^2.\CF_i(x)$. 
\item If $x \notin \cS^2.\CF_i$, $\cS^2$ checks whether $x \in \cM^2.\CF_i$ and $(i,x) \notin \cM^2.\text{MiddlePoints}$. If so, it assigns and answers with $\cS^2.\CF_i(x):=\cM^2.\CF_i(x)$. {(This is the reason that $\cS^2.\CF$ is always a subset of $\cM^2.\CF$.)}
  
\item If $x \notin \cS^2.\CF_i$, $x \in \cM^2.\CF_i$ and $(i,x) \in \cM^2.\text{MiddlePoints}$, the game aborts.
\item If $x \notin \cS^2.\CF_i$ and $x \notin \cM^2.\CF_i$, $\cS^2$ picks a random value $y$ and inserts $(x,y)$ into $\cS^2.\CF_i$ and $\cM^2.\CF_i$.
\item After this, $\cS^2$ takes further steps to ensure that its future answers are consistent with the permutation $RF$. Only after this consistency maintenance step is the value $y$ finally returned.
\end{itemize}

\paragraph{Completing chains in $\cS^2.\CF$.}
For any input $(s,x_s,x_{s+1},u)$, the procedure $\cS^2.\text{Complete}$ proceeds as follows.
\begin{itemize}
\item if $(s,x_s,x_{s+1}) \notin \cM^2.\text{CompletedChains}$
\begin{itemize}
\item $\cS^2.\text{Complete}$ first generates the two sequences $(x_0,\ldots,x_{u})$ and $(x_{u+1},\ldots,x_{8n+1})$ in the way as  $\cS.\text{Complete}$ does in $G_2$.
\item Then, $\cS^2$ defines $\CF_u(x_u):=x_{u-1} \oplus x_{u+1}$ and
  $\CF_{u+1}(x_{u+1}):=x_{u} \oplus x_{u+2}$ in both $\cS^2.\CF$ or $\cM^2.\CF$. The game aborts if
  either of these is defined previously in $\cS^2.\CF$ or $\cM^2.\CF$. If the game does not abort, $\cS^2$ evaluates the
  subversion algorithm on both $x_u$ and $x_{u+1}$. During this
  evaluation, the values $\CF_u(x_u)$ and $\CF_{u+1}(x_{u+1})$ are
  already determined; other queries are answered using $\cS^2.\CF$ as
  above. The game aborts if for $i=u$ or $u+1$, $(i,x_i) \in \bigcup_{j=1}^{8n}Q_j(x_j)/Q_i(x_i)$ or $(i,x_i)$ is dishonest.
\end{itemize}
\item if $(s,x_s,x_{s+1}) \in \cM^2.\text{CompletedChains}$ and $u=7n$, the game aborts.
\item if $(s,x_s,x_{s+1}) \in \cM^2.\text{CompletedChains}$ and $u=4n$, $\cS^2$ copies $Q_c$ from $\cM^2.\CF$ to $\cS^2.\CF$, where $c$ is the full subverted chain containing $(s,x_s,x_{s+1})$ in $\cM^2.\CF$. The game aborts if for either $i=u$ or $u+1$, $(i,x_i)$ was in $\cS^2.\CF$ before this execution of $\cS^2.\text{Complete}$, $(i,x_i) \in \bigcup_{j=1}^{8n}Q_j(x_j)/Q_i(x_i)$, or $(i,x_i)$ is dishonest.
\item A set $\cS^2.\text{CompletedChains}$ is maintained to store the chains that are completed: for any $(i,x_i,x_{i+1})$ ($1 \leq i \leq 8n-1$), the simulator updates $\cS^2.\text{CompletedChains}:={\cS^2.\text{CompletedChains}} \; \cup (i,x_i,x_{i+1})$.
\end{itemize}

See page \pageref{Game 4} for the pseudocode description of $G_4$.

\subsubsection{Description of the Third Game}

$G_3$ is a small pivot from $G_4$ that does not abort in some cases
where $G_4$ aborts. For clarity, we denote by $\cO^1$ the simulator in
$G_3$ and by $\cS^1$ and $\cM^1$ the two parties of $\cO^1$. The
parties of $\cO^1$ maintain similar tables and sets to their
counterparts in $G_4$. (The formal notation for these tables and sets
is determined by changing the identifier in the notations from $\cS^2$ to
$\cS^1$ and from $\cM^2$ to $\cM^1$.)

The procedures in $G_3$ are similar to their counterparts in $G_4$ except that they abort in fewer cases. To aid the reader, we color the differences between $G_4$ and $G_3$ in red below.

\paragraph{The completion procedure $\cM^1.\text{Complete}$.} $\cM^1.\text{Complete}$ is same as its counterpart in $G_4$ except that it never aborts:
\begin{itemize}
\item When $\cM^1.\text{Complete}$ programs $\CF_{4n}(x_{4n})$ and
  $\CF_{4n+1}(x_{4n+1})$, \textcolor{red}{it does not abort even if
    either $\CF_{4n}(x_{4n})$ or $\CF_{4n+1}(x_{4n+1})$ is previously
    defined}. It just assigns
  $\CF_{4n}(x_{4n}):=x_{4n-1} \oplus x_{4n+1}$ and
  $\CF_{4n+1}(x_{4n+1}):=x_{4n} \oplus x_{4n+2}$ (this may overwrite
  the value).
\item Also, {\color{red}the game does not abort even if for any $i$ with $3n \leq i \leq 5n$, $(i,x_i) \in \bigcup_{j=1}^{8n}Q_j(x_j)/Q_i(x_i)$ or $(i,x_i)$ is dishonest}.   
\end{itemize}

\paragraph{The evaluation procedure $\cM^1.\CF^{\text{Inner}}$.} On a query $\cM^1.\CF(i,x)$:
\begin{itemize}
\item First, $\cM^1$ checks whether $x \in \cS^1.\CF_i$. If so, it answers with $\cS^1.\CF_i(x)$. 
\item If $x \notin \cS^1.\CF_i$, $\cM^1$ checks whether $x \in \cM^1.\CF_i$ and $(i,x) \notin \cM^1.\text{MiddlePoints}$. If so, it answers with $\cM^1.\CF_i(x)$.
\item If $x \notin \cS^1.\CF_i$, $x \in \cM^1.\CF_i$ and $(i,x) \in \cM^1.\text{MiddlePoints}$, \textcolor{red}{it answers with $\cM^1.\CF_i(x)$}.
\item If $x \notin \cS^1.\CF_i$ and $x \notin \cM^1.\CF_i$, $\cM^1$ picks a random value $y$ and inserts $(x,y)$ into $\cM^1.\CF_i$.
\end{itemize}

\paragraph{The evaluation procedure $\cS^1.\CF^{\text{Inner}}$.} On a query $\cS^1.\CF(i,x)$:
\begin{itemize}
\item First, $\cS^1$ checks whether $x \in \cS^1.\CF_i$. If so, it answers with $\cS^1.\CF_i(x)$. 
\item If $x \notin \cS^1.\CF_i$, $\cS^1$ checks whether $x \in \cM^1.\CF_i$ and \textcolor{red}{$(i,x) \notin \cM^1.\text{AdaptedPoints}$}. If so, it assigns and answers with $\cS^1.\CF_i(x):=\cM^1.\CF_i(x)$.
\item If $x \notin \cS^1.\CF_i$, $x \in \cM^1.\CF_i$ and \textcolor{red}{$(i,x) \in \cM^1.\text{AdaptedPoints}$, $\cS^1$ picks a random value $y$ and inserts $(x,y)$ into $\cS^1.\CF_i$}.
\item If $x \notin \cS^1.\CF_i$ and $x \notin \cM^1.\CF_i$, $\cS^1$ picks a random value $y$ and inserts $(x,y)$ into $\cS^1.\CF_i$ and $\cM^1.\CF_i$.
\item After this, $\cS^1$ takes further steps to ensure that its future answers are consistent with the permutation $RF$. Only after this consistency maintenance step is the value $y$ finally returned.
\end{itemize}

\paragraph{The completion procedure $\cS^1.\text{Complete}$.} For any input $(s,x_s,x_{s+1},u)$, the procedure $\cS^1.\text{Complete}$ proceeds as follows.
\begin{itemize}
\item if $(s,x_s,x_{s+1}) \notin \cM^1.\text{CompletedChains}$
\begin{itemize}
\item $\cS^1.\text{Complete}$ first generates the two sequences $(x_0,\ldots,x_{u})$ and $(x_{u+1},\ldots,x_{8n+1})$ in the same way as  $\cS.\text{Complete}$ does in $G_2$.
\item Then, $\cS^1$ defines $\CF_u(x_u):=x_{u-1} \oplus x_{u+1}$ and
  $\CF_{u+1}(x_{u+1}):=x_{u} \oplus x_{u+2}$ in both $\cS^1.\CF$ or $\cM^1.\CF$. \textcolor{red}{(This may overwrite the values in $\cM^1.\CF$.) The game aborts if
  either of these was defined previously in $\cS^1.\CF$}. If the game does not abort, $\cS^1$ evaluates the subversion algorithm on both $x_u$ and $x_{u+1}$. During this
  evaluation, the values $\CF_u(x_u)$ and $\CF_{u+1}(x_{u+1})$ are
  already determined; other queries are answered using $\cS^1.\CF$ as
  above. The game aborts if for $i=u$ or $u+1$, $(i,x_i) \in \bigcup_{j=1}^{8n}Q_j(x_j)/Q_i(x_i)$ or $(i,x_i)$ is dishonest.
\end{itemize}
\item if $(s,x_s,x_{s+1}) \in \cM^1.\text{CompletedChains}$, {\color{red}the procedure proceeds in the same way as the last case}.
\item A set $\cS^1.\text{CompletedChains}$ is maintained to store the chains that are completed: for any $(i,x_i,x_{i+1})$ ($1 \leq i \leq 8n-1$), the simulator updates $\cS^1.\text{CompletedChains}:=\cS^1.{\text{CompletedChains}} \; \cup (i,x_i,x_{i+1})$.
\end{itemize} 

See page \pageref{Game 3} for the pseudocode description of $G_3$.

\subsubsection{The Gap Between the Second and the Third Game}

We define the \emph{status} of a game $G_i$ ($i=1,\ldots,5$) at some point (if the game does not abort at this moment) to be the collection of the transcript of the distinguisher, the tables of the simulator and the table of the ideal object at this moment. {(To make the definition work for $G_1$, we imagine that a table of the ideal object $P$ is maintained in $G_1$ to store the evaluated terms of $P$.)} We say the status of the game is equal to ``Abort'' at some moment if the game already aborts at this moment. Note that in $G_i$ ($i=1,\ldots,6$) we always assume a fixed distinguisher $\cD$ that issues at most $q_{\cD}$ queries.

We say the status of $G_i$ ($i=1,\ldots,5$) at some moment $t_1$ is same as the status of $G_j$ ($j=1,\ldots,5$, $j \neq i$) at some moment $t_2$ if $G_4$ at $t_1$ and $G_5$ at $t_2$ have the same transcripts of the distinguisher, the same $RF$ tables, and the same simulator tables. (When we compare the simulator tables between different games, we only compare a table with its counterpart in another game. For example, when we compare the status of $G_2$ and $G_3$, we only compare $\cS.\CF$ to $\cS^1.\CF$ and ignore $\cM^1.\CF$.)

\begin{lemma}[The gap between $G_2$ and $G_3$]
If $G_3$ is efficient, then $G_2$ is efficient and the probability of $\cD$ outputting 1 in $G_3$ equals that in $G_2$.
\end{lemma}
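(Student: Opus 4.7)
The plan is to establish a coupling of randomness between $G_2$ and $G_3$ under which the transcript observed by the distinguisher is \emph{identically} distributed in the two games, so the output distribution (and thus the probability of outputting $1$) matches exactly. The key observation is that in $G_3$ the shadow simulator $\cM^1$ is effectively a pure bookkeeping object whose operations never directly affect the distinguisher's view: the distinguisher only sees responses from $RF$ and from $\cS^1.\CF^{\text{Inner}}$, and the table $\cM^1.\CF$ appears in the answers of $\cS^1.\CF^{\text{Inner}}$ only as a source of pre-committed uniform random values (never for adapted points, where $\cS^1$ overrides by drawing fresh randomness).

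First, I would couple the $RF$ randomness identically in both games. For the $\CF$-side randomness, I would maintain an \emph{infinite bank} of fresh i.i.d.\ uniform strings keyed by pairs $(i,x)$, and declare that the first time a value $\CF_i(x)$ is ``freshly drawn'' in either game (at whatever moment in whatever subroutine of $\cS$, $\cS^1$, or $\cM^1$), it takes the value from the bank at $(i,x)$. In $G_2$, any given $(i,x)$ is freshly drawn at most once in $\cS.\CF$; in $G_3$, the first fresh draw at $(i,x)$ may happen inside $\cM^1$ or $\cS^1$, but whenever $\cS^1.\CF^{\text{Inner}}$ subsequently reads from $\cM^1.\CF$ at a non-adapted point, the value returned is exactly the bank entry. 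For adapted points in $\cM^1.\CF$ the deterministic overwrite is \emph{not} copied into $\cS^1.\CF$: the pseudocode for $\cS^1.\CF^{\text{Inner}}$ on inputs with $(i,x)\in\cM^1.\text{AdaptedPoints}$ re-draws, and under our coupling this re-draw uses the same bank slot (since this is the first fresh draw inside $\cS^1.\CF$).

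With this coupling in place, I would argue by induction on the number of distinguisher queries that, at every moment, $\cS^1.\CF$ in $G_3$ equals $\cS.\CF$ in $G_2$, and the sequence of answers returned to $\cD$ is identical. The key inductive step is that every call made inside $\cS^1.\CF^{\text{Inner}}$ and $\cS^1.\text{Complete}$ in $G_3$ produces exactly the same value as the analogous call inside $\cS.\CF^{\text{Inner}}$ and $\cS.\text{Complete}$ in $G_2$: all fresh draws draw from the same bank, all adaptations apply the same $\CF_u:=x_{u-1}\oplus x_{u+1}$ and $\CF_{u+1}:=x_{u}\oplus x_{u+2}$ formulas to the same sequence $(x_0,\dots,x_{8n+1})$ (because the sequence is generated from the same $RF$ and the same $\cS^1.\CF=\cS.\CF$ values), and the abort conditions for $\cS^1.\text{Complete}$ in $G_3$ (previous definition in $\cS^1.\CF$, freshness violation, dishonesty at $u,u+1$) are exactly the abort conditions of $\cS.\text{Complete}$ in $G_2$. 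Nothing $\cM^1$ does, including the extra completions triggered by $RF$ queries and the overwrites in $\cM^1.\CF$, changes this invariant, because $\cM^1$ only touches $\cM^1.\CF$, and $\cS^1.\CF^{\text{Inner}}$ reads from $\cM^1.\CF$ only at non-adapted points (where the value is just the bank entry).

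The main subtle point, and the place where I would spend most care, is the invariant that the bank is consulted consistently in the two worlds. Concretely, I must verify that once a slot $(i,x)$ of the bank is used to populate $\cM^1.\CF$ in $G_3$, the next fresh draw requested by $\cS^1$ at $(i,x)$ is indeed serviced by copying from $\cM^1.\CF$ (so no new bank entry is consumed), and conversely that the first time $G_2$'s $\cS.\CF^{\text{Inner}}$ draws at $(i,x)$, no other slot of the bank has been consumed for that same $(i,x)$ along the way. This follows by checking each branch of $\cS^1.\CF^{\text{Inner}}$ and $\cS^1.\text{Complete}$, but it is the one place where an error in the pseudocode's branching could break the coupling. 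Given this invariant, efficiency transfers immediately: the number of terms ever inserted into $\cS^1.\CF$ in $G_3$ equals the number of terms ever inserted into $\cS.\CF$ in $G_2$ (both are the size of the identical table), and the former is dominated by the total work of $G_3$, so if $G_3$ is efficient then so is $G_2$.
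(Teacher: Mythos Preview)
Your proposal is correct and follows essentially the same approach as the paper's proof: both argue that every value placed into $\cS^1.\CF$ by $\cS^1.\CF^{\text{Inner}}$ is uniform (either freshly drawn, or copied from a freshly drawn entry of $\cM^1.\CF$, or re-drawn at adapted points where $\cM^1$'s deterministic value is ignored), and that the abort conditions of $\cS^1.\text{Complete}$ coincide with those of $\cS.\text{Complete}$. Your explicit bank-based coupling is simply a more detailed realization of the paper's one-paragraph sketch; the paper does not spell out the coupling or the inductive invariant you describe, but the underlying reasoning is the same.
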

\begin{proof}
To understand the gap between the transcripts of $\cD$ in $G_2$ and $G_3$, it is sufficient to compare the tables $\cS.\CF$ and $\cS^1.\CF$ since $\cD$ queries $\cS$ ($\cS^1$) in $G_2$ ($G_3$) for $\CF$ values. 

First, we compare the evaluation procedures $\cS.\CF^{\text{Inner}}$
and $\cS^1.\CF^{\text{Inner}}$ between the two games. The evaluation
procedure $\cS.\CF^{\text{Inner}}$ in $G_2$ always selects $\CF$
uniformly. A quick check reveals that $\cS^1.\CF^{\text{Inner}}$ also
evaluates $\CF$ uniformly because all the values it takes from table
$\cM^1.\CF$ were selected by $\cM^1$ uniformly. Then, the lemma
follows by the fact that the procedure $\cS.\text{Complete}$ has the
same abort condition with $\cS^1.\text{Complete}$ with respect to the
status of table $\cS.\CF$ and $\cS^1.\CF$.
\end{proof}

\subsubsection{The Gap Between the Third and the Fourth Game}

To analyze the relationship between $G_3$ and $G_4$, we define the following two bad events in $G_4$.

The first bad event happens when $G_4$ aborts during the execution of the completion procedure $\cS^2.\text{Complete}$ or $\cM^2.\text{Complete}$, which means $\cS^2$ or $\cM^2$ fails to complete a chain. Remember that the simulators fail to complete a chain when some terms on the subverted chain they are evaluating have been evaluated before or are dishonest.

The second bad event happens when $G_4$ aborts during the execution of the evaluation procedure $\cS^2.\CFt^{\text{Inner}}$ or $\cS^2.\CFt^{\text{Inner}}$, which means when $\cS^2$ or $\cM^2$ wants to set a uniform $\CF$ value to a certain term $(i,x)$, $\CF(i,x)$ is a middle point of $\cM^2$, and it has not been put in the table $\cS^2.\CF$.

The formal names and definitions of the bad events are:

\begin{align*}  
  \BadComplete_4 &=\left \{\parbox{8cm}{$G_4$ aborts during the execution of the procedure $\cS^2.\text{Complete}$ or $\cM^2.\text{Complete}$} \right \}\,,\\
  \BadEval_4 &=\left \{\parbox{8cm}{$G_4$ aborts during the execution of the procedure $\cS^2.\CFt^{\text{Inner}}$ or $\cM^2.\CFt^{\text{Inner}}$} \right \}\,.
\end{align*}

\paragraph{Remark.} Notice that the completion procedure $\cS^2.\text{Complete}$ ($\cM^2.\text{Complete}$) sometimes calls the evaluation procedure $\cS^2.\CFt^{\text{Inner}}$ ($\cM^2.\CFt^{\text{Inner}}$), so the completion procedure may abort because its \text{Inner} evaluation procedure aborts. We stress that we categorize this event as $\BadEval_4$.

It is clear that a bad event ($\BadComplete_4$ or $\BadEval_4$) happens if and only if the game $G_4$ aborts.

\begin{lemma}[The gap between $G_3$ and $G_4$]
The probability of $\cD$ outputting 1 in $G_4$ differs by at most $\Pr[\BadComplete_4]+\Pr[\BadEval_4]$ from that in $G_3$. Moreover, if $G_4$ is efficient and $\Pr[\BadComplete_4]+\Pr[\BadEval_4]$ is negligible, $G_3$ is efficient. 
\end{lemma}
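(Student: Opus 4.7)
The plan is a straightforward coupling argument. I would couple $G_3$ and $G_4$ so that they share the same distinguisher coins, the same $RF$ answers, and---step by step---the same fresh uniform strings drawn inside the evaluation procedures (including the fresh value that $\cS^1$ samples on an adapted point, where $\cS^2$ would instead abort). Inspecting each of the changes highlighted in red, the only behavioural departures of $G_3$ from $G_4$ lie in situations where $G_4$ explicitly aborts: (i) programming $\CF_{4n}(x_{4n})$, $\CF_{4n+1}(x_{4n+1})$, $\CF_u(x_u)$, or $\CF_{u+1}(x_{u+1})$ when the slot is already occupied, or a middle/adapted point turning out to lie in $\bigcup_j Q_j(x_j)\setminus Q_i(x_i)$ or to be dishonest---these are exactly the conditions defining $\BadComplete_4$; and (ii) an inner evaluation procedure being required to sample a value at a point already lying in $\cM^2.\text{MiddlePoints}$ but not in $\cS^2.\CF$---the condition defining $\BadEval_4$.

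Under this coupling, I would prove by induction along the sequence of elementary operations that, at every moment strictly before $G_4$ aborts, the complete states of the two games coincide: the $\cD$-transcript so far, the tables $\cS^1.\CF$ versus $\cS^2.\CF$ and $\cM^1.\CF$ versus $\cM^2.\CF$, the $RF$ table, and the auxiliary sets $\text{MiddlePoints}$, $\text{AdaptedPoints}$, and $\text{CompletedChains}$. The base case is vacuous, and each step preserves the invariant because the red changes fire only on the very inputs for which the corresponding abort condition in $G_4$ is triggered; in particular, whenever $\cS^1$ or $\cM^1$ makes a choice that $\cS^2$ or $\cM^2$ would not have made, it is precisely at the step at which $G_4$ aborts, so no state discrepancy can accumulate \emph{before} the abort. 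Consequently, conditioned on $G_4$ not aborting, the entire transcripts of $\cD$ in the two games are identical, and the standard coupling bound yields $|\Pr[\cD=1 \text{ in } G_3]-\Pr[\cD=1 \text{ in } G_4]|\leq \Pr[\BadComplete_4\cup\BadEval_4]\leq \Pr[\BadComplete_4]+\Pr[\BadEval_4]$.

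For efficiency, let $\mathsf{Abort}$ denote the event that $G_4$ aborts and $\mathsf{Ineff}_i$ the event that more than the fixed polynomial budget of terms is evaluated in $G_i$. On $\neg\mathsf{Abort}$ the two coupled games are identical, so $\mathsf{Ineff}_3\cap\neg\mathsf{Abort}\subseteq \mathsf{Ineff}_4$, whence $\Pr[\mathsf{Ineff}_3]\leq \Pr[\mathsf{Abort}]+\Pr[\mathsf{Ineff}_4]=\Pr[\BadComplete_4]+\Pr[\BadEval_4]+\Pr[\mathsf{Ineff}_4]$. Under the hypotheses of the second half of the lemma, each of the three summands is negligible, so $G_3$ is efficient.

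The main obstacle is more bookkeeping than insight: one must carefully enumerate every red change and confirm that its activation coincides exactly with a case in which $G_4$ sets $\mathsf{Abort}$, so that no additional source of divergence is introduced. A sanity-check worth recording concerns the third red bullet in $\cS^1.\CF^{\text{Inner}}$: there $\cS^1$ draws a fresh uniform value at an adapted point while $\cS^2$ aborts, but because we only require state agreement \emph{up to} the step where $G_4$ aborts, the freshly drawn value cannot retroactively perturb earlier state, and everything $\cS^1$ does afterwards is simply outside the coupling.
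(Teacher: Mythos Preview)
Your coupling approach is the right framework, and for most of the red changes your claim that ``the only behavioural departures of $G_3$ from $G_4$ lie in situations where $G_4$ explicitly aborts'' is correct. But there is one red change for which this claim fails, and your induction breaks at exactly that step.

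Look at the last red bullet in $\cS^1.\text{Complete}$: when $(s,x_s,x_{s+1}) \in \cM^1.\text{CompletedChains}$, $\cS^1$ runs the full completion procedure as usual. In $G_4$, by contrast, when $u=4n$ the simulator $\cS^2$ does \emph{not} abort; it copies $Q_c$ from $\cM^2.\CF$ to $\cS^2.\CF$. So here you have two genuinely different non-aborting behaviours, and your inductive invariant (``states coincide as long as $G_4$ has not aborted'') is not automatically preserved across this step. You must separately argue that $\cS^1$'s normal completion produces exactly the same $\cS.\CF$ entries that $\cS^2$'s copy does. The paper does this explicitly: for any $(i,x) \in Q_c$ other than the two adapted points of $c$, if $\cS^1.\CF^{\text{Inner}}$ were to assign a value different from $\cM^1.\CF_i(x)$, then by the definition of $\cS^1.\CF^{\text{Inner}}$ the point $(i,x)$ would have to be an adapted point (hence a middle point) not yet in $\cS^1.\CF$, and one checks that this situation forces $G_4$ to have aborted earlier; for the two adapted points of $c$ themselves, the programmed values agree by construction. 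Once you plug in this missing argument, the rest of your coupling goes through and matches the paper's proof.
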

\begin{proof}
From the descriptions of $G_3$ and $G_4$, we can see that there are two differences between the two games:
\begin{itemize}
\item In $G_3$, when $\cS^1$ completes a chain that has been completed by $\cM^1$ before and $u=4n$, it completes the chain as usual (as if it is not completed by $\cM^1$). However, in the same situation, $\cS^2$ will just copy $Q_c$ to $\cS^2.\CF$, where $c=(1,x_1,\ldots,x_{8n})$ is the full subverted chain completed by $\cM^2$.
\item $G_3$ does not abort in several cases where $G_4$ aborts.
\end{itemize}

We want to show gap between the two games caused the above two differences are bounded by the probability that $G_4$ aborts.

\begin{itemize}
\item The first difference causes zero gap between the two games unless $G_4$ aborts. Imagine that $G_3$ and $G_4$ have the same status (assuming the status of $G_4$ is not ``Abort'') at some point and $\cS^1.\text{Complete}$ ($\cS^2.\text{Complete}$) is going to complete a chain that has been completed by $\cM^1$ ($\cM^2$). In this case, $\cS^2.\text{Complete}$ copies $Q_c$ to $\cS^2.\CF$. We want to show that what $\cS^1.\text{Complete}$ does is equivalent to copying $Q_c$ to $\cS^1.\CF$. For any $(i,x) \in Q_c$ ($(i,x) \neq (4n,x_{4n})$ or $(4n+1,x_{4n+1})$), if $\cS^1$ assigns a different $\CF_i(x)$ in $\cS^1.\CF$ than in $\cM^1.\CF$, then by definition of $\cS^1.\CF^{\text{Inner}}$, $(i,x)$ is an adapted point that is not in $\cS^1.\CF$, which violates the assumption that $G_4$ does not abort. For $(i,x)=(4n,x_{4n})$ or $(4n+1,x_{4n+1})$, the values of $\CF_i(x)$ are equal in $\cS^1.\CF$ and $\cM^1.\CF$ by programming rules.
\item The second difference also causes zero gap between the two games unless $G_4$ aborts, which is obviously true.
\end{itemize}

As a result, the total variation distance between the transcripts of the distinguisher in $G_3$ and $G_4$ is smaller than $\Pr[\BadComplete_4]+\Pr[\BadEval_4]$, the probability that $G_4$ aborts.  

To show the second claim, we observe that if $G_4$ is efficient and aborts with negligible probability, $G_3$ will be efficient since the total variation distance between $G_3$ and $G_4$ is negligible.
\end{proof}

\subsection{The Fifth Game}\label{fifth game}
To obtain $G_5$ from $G_4$, we update the simulator $\cO^2$ to $\cO^3$ which has two parties $\cM^3$ and $\cS^3$.

$\cO^3$ has a significantly different structure than the simulators in
previous games. In the previous games, when the simulators complete a
chain, they query $RF$ first and then adapt two particular terms
($(u,x_u)$ and $(x_{u+1},u+1)$) on the chain to ensure consistency. In
contrast, in $G_5$ we imagine the ideal object $RF$ as simulated by
$\cM^3$ and $\cS^3$. During the execution of the completion procedure,
$\cM^3$ and $\cS^3$ set all the $\CF$ values of the target chain
uniformly and then program the $RF$ value to ensure consistency.

\paragraph{The evaluation procedures.}
The evaluation procedures $\cM^3.\CF^{\text{Inner}}$ and $\cS^3.\CF^{\text{Inner}}$ in $G_5$ are the same as their counterparts in $G_4$.

\paragraph{Handling $\cD$'s query to the ideal object.} When the distinguisher makes a query to $RF$ on a $2n$-bit string $(x_0,x_1)$, if $\cS^3$ or $\cM^3$ has evaluated $RF(x_0,x_1)$ or $RF^{-1}(x_{8n},x_{8n+1})$ (for some $(x_{8n},x_{8n+1})$ with $RF^{-1}(x_{8n},x_{8n+1})=(x_0,x_1)$) before, $\cO^3$ answers correspondingly. Otherwise, $\cM^3$ proceeds to \emph{complete} a chain in $\cM^3.\CF$. In more detail: Assuming that the distinguisher queries $RF(x_0,x_1)$, the ``completion procedure'' $\cM^3.\text{Complete}$ proceeds as follows. (Similar to the situation in $\cM^2.\text{Complete}$, all the $\CF$ evaluations made by $\cM^3.\text{Complete}$ are in the table $\cM^3.\CF$.)
\begin{itemize}
\item $\CFt_{1}(x_{1})$ is determined by simulating the subversion algorithm. This will generate oracle queries to $\CF_*()$ which are answered using $\cM^3.\CF$. The value $x_{2} = x_0 \oplus \CFt_{1}(x_{1})$ is then determined, yielding the chain $(1, x_{1}, x_{2})$. This process is repeated until the chain is extended maximally ``to the right'' so that it has the form $(x_1, x_{2}, \ldots, x_{8n-1}, x_{8n})$. Set $x_{8n+1}:=x_{8n-1} \oplus \CFt_{8n}(x_{8n}).$
\item Set $RF(x_0,x_1):=(x_{8n}, x_{8n+1})$ and return $(x_{8n}, x_{8n+1})$ to $\cD$.
\item The game aborts if for any $i=u$ or $u+1$, $(i,x_i)$ was in $\cM^3.\CF$ before this execution of $\cM^3.\text{Complete}$, $(i,x_i) \in \bigcup_{j=1}^{8n}Q_j(x_j)/Q_i(x_i)$, or $(i,x_i)$ is dishonest.
\item A set $\cM^3.\text{CompletedChains}$ is maintained to store the chains that are completed: for any $(i,x_i,x_{i+1})$ ($1 \leq i \leq 8n-1$), $\cM^3$ updates $\cM^3.\text{CompletedChains}:=\cM^3.\text{CompletedChains}\; \cup (i,x_i,x_{i+1})$.
\item A set $\cM^3.\text{MiddlePoints}$ is maintained to store the points with index between $3n$ and $5n$: for any $(i,x_i)$ with $3n \leq i \leq 5n$, $\cM^3$ updates $\cM^3.\text{MiddlePoints}:=\cM^3.\text{MiddlePoints} \cup (i,x_i)$.
\item A set $\cM^3.\text{AdaptedPoints}$ is maintained to store the points with index $4n$ and $4n+1$: for $(i,x_i)$ with $i=4n$ or $4n+1$, $\cM^3$ updates $\cM^3.\text{AdaptedPoints}:=\cM^3.\text{AdaptedPoints} \cup (i,x_i)$.
\end{itemize}

\paragraph{The completion procedure $\cS^3.\text{Complete}$.}
For input $(s,x_s,x_{s+1},u)$, the completion procedure $\cS^3.\text{Complete}$ (assuming $(s,x_s,x_{s+1}) \notin \cM^3.\text{CompletedChains}$) proceeds as follows.
\begin{itemize}
\item The initial chain consists of the two adjacent values
  $x_s, x_{s+1}$.
\item $\CFt_{s+1}(x_{s+1})$ is determined by simulating the subversion algorithm. This will generate oracle queries to $\CF_*()$ which are answered using $\cS^3.\CF$. (Note that this process may enqueue new chains for completion.) The value $x_{s+2} = x_s \oplus \CFt_{s+1}(x_{s+1})$ is then determined, yielding the enlarged chain $(x_s, x_{s+1}, x_{s+2})$. This process is repeated until the chain is extended maximally ``to the right'' so that it has the form $(x_s, x_{s+1}, \ldots, x_{8n}, x_{8n+1})$.
\item Similarly, the original pair $x_s, x_{s-1}$ is extended ``to the
  left'' to determine the values $x_{s-1}, ..., x_{0}$.
\item Set $RF(x_0,x_1)=(x_{8n},x_{8n+1})$.  
\item The game aborts if for either $i=u$ or $u+1$, $(i,x_i)$ was in $\cS^3.\CF$ or $\cM^3.\CF$ before this execution of $\cS^3.\text{Complete}$, $(i,x_i) \in \bigcup_{j=1}^{8n}Q_j(x_j)/Q_i(x_i)$, or $(i,x_i)$ is dishonest.
\item A set $\cS^3.\text{CompletedChains}$ is maintained to store the chains that are completed: for any $(i,x_i,x_{i+1})$ ($1 \leq i \leq 8n-1$), the simulator updates $\cS^3.\text{CompletedChains}:=\cS.\text{CompletedChains} \cup (i,x_i,x_{i+1}).$
\end{itemize}

The case when $(s,x_s,x_{s+1}) \in \cM^3.\text{CompletedChains}$ is taken care of in the same way as in $G_4$.

See page \pageref{Game 5} for the pseudocode description of $G_5$.

\subsubsection{The Gap Between the Fourth and the Fifth Game}

We wish to show that the distinguisher can not distinguish between $G_4$ and $G_5$. To prove this claim, we note that these two games behaves differently only when they complete chains: when completing a chain, $G_4$ queries the ideal object $RF$ and adapts the $\CF$ values of $(x_u,x_{u+1})$ to ensure consistency; in contrast, $G_5$ assigns $\CF$ of $x_u$ and $x_{u+1}$ uniformly, and then programs $RF$. In the following lemma, we are going to show that these two different conventions of completing a chain will yield the same distribution of the status of the game.

\begin{lemma}[The gap between $G_4$ and $G_5$]\label{Lemma:4 vs 5}
If $G_5$ is efficient, then $G_4$ is efficient and the probability of $\cD$ outputting 1 in $G_5$ equals that in $G_4$.
\end{lemma}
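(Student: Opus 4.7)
My plan is to prove the lemma by exhibiting an explicit measure-preserving bijection between the random coins of $G_4$ and $G_5$ that carries each $G_4$-trajectory to a $G_5$-trajectory with identical status (including the same abort verdict). This will yield equality of the two output distributions, and because the total number of $\CF$/$\CFt$ evaluations is a function of the transcript, it also transfers efficiency from $G_5$ to $G_4$.

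I would first represent the randomness of each game by two tapes: a tape $\tau$ indexed by pairs $(i,x) \in \{1,\ldots,8n\} \times \{0,1\}^n$ that returns uniform $n$-bit strings for fresh $\CF$ queries, and a tape $\rho$ indexed by $\{0,1\}^{2n}$ that returns uniform $2n$-bit strings for fresh $RF$ queries. $G_5$ consults only $\tau$ (its $RF$ values are derived from the chain), while $G_4$ consults $\tau$ at every $\CF$ query except at the two positions adapted inside a fresh call to a $\mathrm{Complete}$ procedure, and consults $\rho$ to answer a fresh $RF$ query. Outside the $\mathrm{Complete}$ procedures the two codes are identical; in particular the evaluation procedures $\CF^{\text{Inner}}$ are unchanged between $G_4$ and $G_5$, as are the handlers invoked when the targeted chain already belongs to $\cM.\text{CompletedChains}$.

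For each fresh completion with adapted indices $u,u+1$, I would hold every tape entry fixed except $\tau(u,x_u), \tau(u+1,x_{u+1})$ and $\rho(x_0,x_1)$, and observe that the map
\[
  \Phi \;:\; (\tau(u,x_u),\tau(u+1,x_{u+1})) \;\longleftrightarrow\; \rho(x_0,x_1)=(x_{8n},x_{8n+1})
\]
is a bijection of $\{0,1\}^{2n}$ onto itself. In the forward direction (the $G_5$ computation) the two sampled $\CF$ values are fed into the Feistel recurrence, producing a chain that terminates at $(x_{8n},x_{8n+1})$. In the inverse direction (the $G_4$ computation), $(x_{8n},x_{8n+1})$ is supplied by $\rho$ and the chain is reconstructed in both directions to reach $(x_{u+1},x_{u+2})$, after which the adaptation equations $\CF_u(x_u) = x_{u-1}\oplus x_{u+1}$ and $\CF_{u+1}(x_{u+1}) = x_u \oplus x_{u+2}$ define the two coupled values. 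Each direction visits the same $(i,x_i)$ entries of $\tau$ and invokes the subversion algorithm on identical inputs, so the induced full chain, both simulator $\CF$ tables, all $\CFt$ evaluations, and the resulting $RF$-entry coincide under $\Phi$. Chaining these per-completion bijections across the whole interaction defines a global measure-preserving bijection $\Psi$ between $G_4$'s product tape $(\tau,\rho)$ and $G_5$'s tape. Under $\Psi$ the full transcript and both simulators' table contents match, and each abort condition -- previously-defined tests, membership in $\bigcup_j Q_j(x_j)/Q_i(x_i)$, and honesty of adapted points -- is a syntactic predicate on these objects and hence fires in $G_4$ iff it fires in $G_5$. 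Consequently the probability that $\cD$ outputs $1$ is identical in the two games, and the number of terms evaluated is identically distributed, so efficiency of $G_5$ gives efficiency of $G_4$.

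The main obstacle is verifying that the set of $\tau$-positions actually consulted during a completion is the same in the two games, since those positions are themselves random variables: the subversion algorithm inside each $\CFt_i(x_i)$ makes adaptive $\CF$ queries, and the chain-forming values $x_i$ are determined on the fly from $\tau$. I would address this by induction along the chain under the coupling $\Phi$, showing that $G_5$'s forward pass and the combined $G_4$ backward/forward pass compute an identical sequence $(x_1,\ldots,x_{8n})$ and invoke the subversion algorithm on identical arguments, so the same entries of $\tau$ are read in the same order and the very same $Q_j(x_j)$ sets arise. A related subtlety is to check that the bookkeeping sets $\mathrm{CompletedChains}$, $\mathrm{MiddlePoints}$, and $\mathrm{AdaptedPoints}$ are updated with identical entries under $\Psi$, which is needed so that subsequent invocations of the evaluation procedures abort at exactly the same tapes.
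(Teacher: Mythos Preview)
Your coupling idea is the right intuition, but the bijection $\Phi$ as you state it is not well-defined, and this is not a cosmetic issue---it is exactly the difficulty the paper works to overcome.

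The problem is that the tape position $(u+1,x_{u+1})$ you propose to vary is itself a function of the value $\tau(u,x_u)$ you are varying (through the subversion algorithm evaluating $\CFt_u(x_u)$). So you cannot ``hold every tape entry fixed except $\tau(u,x_u),\tau(u+1,x_{u+1})$'': as $\tau(u,x_u)$ changes, the second coordinate $x_{u+1}$ moves, and you are reading $\tau$ at a different cell. More broadly, the Feistel map you want to invoke---taking $(x_{u+1},x_{u+2})$ to $(x_{8n},x_{8n+1})$---is not a fixed permutation here: each round function $\CFt_j$ for $j>u+1$ is computed by the subversion algorithm with oracle access to the running $\CF$ table, and that table's contents depend on which cells have already been visited. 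In $G_4$ those cells are determined by $\rho(x_0,x_1)$ via the backward pass; in $G_5$ they are determined by $\tau(u,x_u),\tau(u+1,x_{u+1})$ via the forward pass. Your induction ``along the chain under $\Phi$'' presupposes the chains coincide, which is precisely what is in question; and your claim that the two games read ``the same entries of $\tau$ in the same order'' is false on its face, since $G_4$ evaluates $\CFt_{8n},\CFt_{8n-1},\ldots,\CFt_{u+2}$ while $G_5$ evaluates $\CFt_u,\CFt_{u+1},\ldots,\CFt_{8n}$.

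The paper breaks this circularity with a device you are missing: in the rewrites $1'$ and $2'$ it first evaluates $\CFt_i(x)$ for \emph{every} $x$ and every $u+2\le i\le 8n$, thereby freezing the round functions on that segment. Only after this does the Feistel segment become an honest fixed permutation of $\{0,1\}^{2n}$, so that sampling $(x_{8n},x_{8n+1})$ uniformly and going backward is genuinely equivalent to sampling $(x_{u+1},x_{u+2})$ uniformly and going forward. The extra table entries introduced by this over-evaluation are then discarded. The remaining transitions (Procedures $2\to 3\to 4$) replace the uniform pair $(x_{u+1},x_{u+2})$ by uniform $\CF_u(x_u),\CF_{u+1}(x_{u+1})$ and finally by the lazy forward evaluation of $G_5$; these steps are where the abort conditions on ``previously assigned'' and on $(j,x_j)\in\bigcup_i Q_i(x_i)\setminus Q_j(x_j)$ are matched up. If you want to salvage a direct-bijection proof, you will need an analogue of this pre-evaluation step (or some other mechanism that fixes the round functions independently of the two varied coordinates) before the permutation argument becomes sound.
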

\begin{proof}
It is sufficient to show that, for any $0 < k \leq q_{\cD}$, the total variation distance between the status of $G_4$ and $G_5$ is 0 at the end of $k$-th round of interaction between the distinguisher and the simulators.

Consider a proof by induction. Denote by $\alpha^s_4$ and $\alpha^s_5$ the distribution of the status of $G_4$ and $G_5$ at the end of the $s$-th interaction. For any $0 < k \leq q_{\cD}$, assume $\alpha^s_4=\alpha^s_5$ when $s=k-1$. We will show 
\[
 \|\alpha^k_4-\alpha^k_5\|_{\tv}=0.
\]

\begin{itemize}
\item[-] If $G_4$ ($G_5$) already aborts before the end of the $k-1$-th round of the game, then $G_4$ ($G_5$) also aborts at the end of the $k$-th round and $\|\alpha^t_4-\alpha^t_5\|_{\tv}=0$.
\item[-] If $G_4$ ($G_5$) does not abort before the end of the $k-1$-th round, and the $k$-th query made by $\cD$ does not activate the procedures $\cS^2.\text{Complete}$ or $\cM^2.\text{Complete}$ ($\cS^3.\text{Complete}$ or $\cM^3.\text{Complete}$), it is obvious that $\|\alpha^k_4-\alpha^k_5\|_{\tv}=0$.
\item[-] If $G_4$ ($G_5$) does not abort before the end of the $k-1$-th round, and the $k$-th query made by $\cD$ activates the procedures $\cS^2.\text{Complete}$ or $\cM^2.\text{Complete}$ ($\cS^3.\text{Complete}$ or $\cM^3.\text{Complete}$), the situation is more complicated. We will deal with this case in the rest of the proof.
\end{itemize}

To show $\|\alpha^k_4-\alpha^k_5\|_{\tv}=0$ in the last case, it
suffices to prove that if the ``initial statuses'' before the
execution of the completion procedures are identical in $G_4$ and
$G_5$, the ``resulting statuses'' after the execution are identically
distributed. Formally speaking, assume that at some moment of $G_4$
and $G_5$, the two games have the same status and begin to execute
procedure $\cM^2.\text{Complete}$ and
$\cM^3.\text{Complete}$. (Without loss of generality, we assume this
execution is activated by a query from $\cD$ to $RF(x_0,x_1)$ for some
bitstring pair $(x_0,x_1)$.) We will show the distribution of
$\alpha_4$ and $\alpha_5$ at the end of the execution are
identical. The proof of the two games executing
$\cS^2.\text{Complete}$ and $\cS^3.\text{Complete}$ are omitted
because it is the same as the proof for $\cM^2.\text{Complete}$ and
$\cM^3.\text{Complete}$.

To prove the above claim, we introduce the following transition of the
four different completion procedures. $\cM^2.\text{Complete}$ and
$\cM^3.\text{Complete}$ are the first and the last procedure. Two
middle procedures are used to build their connections. We assume the
four procedures share a common initial table of the simulator,
$T_{\text{initial}}$. ($T_{\text{initial}}$ is $\cM^2.\CF$ in $G_4$ and
$\cM^3.\CF$ in $G_5$.) Our goal is to prove that the distributions of the
resulting table, $T_{\text{final}}$, are identical among the four
procedures. Roughly speaking, the job of a completion procedure is to
generate a sequence of terms $(i,x_i)$ (for $i=1,\ldots,8n$) that are
supposed to form a full subverted chain, a pair of two bitstrings
$((x_0,x_1),(x_{8n},x_{8n+1}))$ such that
$RF(x_0,x_1)=(x_{8n},x_{8n+1})$, and checks whether the game needs to
abort in the process. Our proof will focus mainly on the distribution
of these variables and when the game aborts.

\begin{itemize}
\item \textbf{Procedure 1}: Procedure 1 is $\cM^2.\text{Complete}$.
\begin{enumerate}
\item To generate $(i,x_i)$ (for $i=2,\ldots,u$), define $x_i:=x_{i-2} \oplus \CFt_{i-1}(x_{i-1})$ for $2 \leq i \leq u$ (each $\CF$ is evaluated uniformly).
\item Query $RF$ at $(x_0,x_1)$ and receive an uniform pair of $n$-bit strings $(x_{8n},x_{8n+1})$. To generate $(i,x_i)$ ($i=u,\ldots,8n$), define $x_{i-2}:=x_{i} \oplus \CFt_{i-1}(x_{i-1})$ for $u+3 \leq i \leq 8n+1$ (each $\CF$ is evaluated uniformly).
\item Define $\CF_u(x_u):=x_{u-1} \oplus x_{u+1}$ and $\CF_{u+1}(x_{u+1}):=x_{u} \oplus x_{u+2}$. The game aborts if there is an index $j$ such that $3n \leq j \leq 5n$ and $(j,x_j)$ is in $T_{\text{initial}}$ or $\bigcup_{i=1}^{8n}Q_i(x_i)/Q_j(x_j)$.
\item Evaluate $\CFt_u(x_u)$ and $\CFt_{u+1}(x_{u+1})$; the game aborts if there is an index $j$ such that $3n \leq j \leq 5n$ and $(j,x_j)$ is dishonest.
\end{enumerate}
\item \textbf{Procedure 2}: 
\begin{enumerate}
\item Same as step 1 of Procedure 1.
\item Select $x_{u+1}$ and $x_{u+2}$ uniformly. To generate $(i,x_i)$ ($i=u+3,\ldots,8n+1$), define $(i,x_i):=x_{i-2} \oplus \CFt_{i-1}(x_{i-1})$ for $u+3 \leq i \leq 8n+1$ (each $\CF$ is evaluated uniformly). Assign $RF(x_0,x_1)=(x_{8n},x_{8n+1})$.
\item Same as step 3 of Procedure 1.
\item Same as step 4 of Procedure 1.
\end{enumerate}
\item \textbf{Procedure 3}: Procedure 3 is a small pivot from Procedure 2. 
\begin{enumerate}
\item Same as step 1 of Procedure 2.
\item Select $\CF_u(x_u)$ uniformly and define $x_{u+1}:=x_{u-1} \oplus \CF_{u}(x_{u})$. The game aborts if $\CF_u(x_u)$ is previously assigned. Select $\CF_{u+1}(x_{u+1})$ uniformly and define $x_{u+2}:=x_{u} \oplus \CF_{u+1}(x_{u+1})$. The game aborts if $\CF_{u+1}(x_{u+1})$ is previously assigned. To generate the sequence $(i,x_i)$ (for $i=u+3,\ldots,8n+1$), define $x_i:=x_{i-2} \oplus \CFt_{i-1}(x_{i-1})$ for $u+3 \leq i \leq 8n+1$ (each $\CF$ is evaluated uniformly). Assign $RF(x_0,x_1)=(x_{8n},x_{8n+1})$.
\item The game aborts if there is an index $j$ such that $3n \leq j \leq 5n$ and $(j,x_j)$ is in $T_{\text{initial}}$ or $\bigcup_{i=1}^{8n}Q_i(x_i)/Q_j(x_j)$.
\item Same as step 4 of Procedure 2.
\end{enumerate}
\item \textbf{Procedure 4}: Procedure 4 is $\cM^3.\text{Complete}$.
\begin{enumerate}
\item To generate $(i,x_i)$ (for $i=2,\ldots,8n+1$), define $x_i:=x_{i-2} \oplus \CFt_{i-1}(x_{i-1})$ for $2 \leq i \leq 8n+1$ (each $\CF$ is evaluated uniformly). The game aborts if there is an index $j$ such that $3n \leq j \leq 5n$ and $(j,x_j)$ is in $T_{\text{initial}}$ or $\bigcup_{i=1}^{8n}Q_i(x_i)/Q_j(x_j)$.
\item The game aborts if there is an index $j$ such that $3n \leq j \leq 5n$ and $(j,x_j)$ is dishonest.
\item Assign $RF(x_0,x_1)=(x_{8n},x_{8n+1})$.
\end{enumerate}
\end{itemize}

For $i=1,2,3,4$, We denote the distribution of $T_{\text{final}}$ in Procedure $i$ by $\alpha_{P_i}$. We will prove $\|\alpha_{P_1}-\alpha_{P_4}\|_{\tv}=0$ by showing that $\|\alpha_{P_j}-\alpha_{P_{j+1}}\|_{\tv}=0$ for $j=1,2,3$. 

To see why $\|\alpha_{P_1}-\alpha_{P_2}\|_{\tv}=0$, we rewrite Procedure 1 and 2 as:
\begin{itemize}
\item \textbf{Procedure 1'}: Procedure 1' is a rewrite of Procedure 1.
\begin{enumerate}
\item For all $x \in \{0,1\}^n$ and $u+2 \leq i \leq 8n$, evaluate $\CFt_i(x)$ (each $\CF$ is evaluated uniformly).
\item Same as step 1 of Procedure 1.
\item Select an uniform pair of $n$-bit strings $(x_{8n},x_{8n+1})$ and assign $RF(x_0,x_1)=(x_{8n},x_{8n+1})$. To generate $(i,x_i)$ (for $i=u+1,\ldots,8n$), define $x_{i-2}:=x_{i} \oplus \CFt_{i-1}(x_{i-1})$ for $u+3 \leq i \leq 8n+1$.
\item Define $\CF_u(x_u):=x_{u-1} \oplus x_{u+1}$ and $\CF_{u+1}(x_{u+1}):=x_{u} \oplus x_{u+2}$. For all $x \in \{0,1\}^n$ and $1 \leq i \leq 8n$, remove $\CF_i(x)$ from the table if $\CF_i(x)$ is not in $T_{\text{initial}}$ and is not in $Q_j(x_j)$ for any $1 \leq j \leq 8n$ ($j \neq u,u+1$). The game aborts if there is an index $j$ such that $3n \leq j \leq 5n$ and $(j,x_j)$ is in $T_{\text{initial}}$ or $\bigcup_{i=1}^{8n}Q_i(x_i)/Q_j(x_j)$.
\item Same as step 4 of Procedure 1.
\end{enumerate}
\item \textbf{Procedure 2'}: Procedure 2' is a rewrite of Procedure 2.
\begin{enumerate}
\item Same as step 1 of Procedure 1'.
\item Same as step 2 of Procedure 1'.
\item Select $x_{u+1}$ and $x_{u+2}$ uniformly. To generate $(i,x_i)$ (for $i=u+3,\ldots,8n+1$), define $x_i:=x_{i-2} \oplus \CFt_{i-1}(x_{i-1})$ for $u+3 \leq i \leq 8n+1$. Assign $RF(x_0,x_1)=(x_{8n},x_{8n+1})$.
\item Same as step 4 of Procedure 1'.
\item Same as step 5 of Procedure 1'.
\end{enumerate}
\end{itemize}

Notice that the step 3 of Procedure 1' is equivalent to that of Procedure 1' because the Feistel structure gives a permutation of $2n$-bit strings: selecting an uniform ``input'' string $(x_{u+1},x_{u+2})$ is equivalent to selecting an uniform ``output'' string $(x_{8n},x_{8n+1})$. Therefore, $\|\alpha_{P_1}-\alpha_{P_2}\|_{\tv} = \|\alpha_{P_{1'}}-\alpha_{P_{2'}}\|_{\tv} = 0$, where $\alpha_{P_{1'}}$ and $\alpha_{P_{2'}}$ are the distributions of $T_{\text{final}}$ in Procedure 1' and 2'.

The fact that $\|\alpha_{P_2}-\alpha_{P_3}\|_{\tv}=0$ and $\|\alpha_{P_3}-\alpha_{P_4}\|_{\tv}=0$ are clear from the definitions of Procedures 2,3, and 4.
\end{proof}

Similar to $G_4$, we define the following two bad events in $G_5$:

\begin{align*}  
  \BadComplete_5 &=\left\{\parbox{8cm}{$G_5$ aborts during the execution of the procedure $\cS^3.\text{Complete}$ or $\cM^3.\text{Complete}$} \right\}\,,\\
  \BadEval_5 &=\left\{\parbox{8cm}{$G_5$ aborts during the execution of the procedure $\cS^3.\CFt^{\text{Inner}}$ or $\cS^3.\CFt^{\text{Inner}}$} \right\}\,.
\end{align*}

\paragraph{Remark.} Same as the bad events in $G_4$, if $\cS^3.\text{Complete}$ or $\cM^3.\text{Complete}$ aborts because its \text{inner} procedure $\cS^3.\CFt^{\text{Inner}}$ or $\cS^3.\CFt^{\text{Inner}}$ aborts, we say $\BadEval_5$ happens rather than $\BadComplete_5$ happens. 

It is clear that a bad event ($\BadComplete_5$ or $\BadEval_5$) happens if and only if the game $G_5$ aborts.

The proof of Lemma~\ref{Lemma:4 vs 5} also implies:

\begin{lemma}\label{Lemma: Bad4 is same as Bad5}
$\Pr[\BadComplete_4]=\Pr[\BadComplete_5]$, and $\Pr[\BadEval_4]=\Pr[\BadEval_5]$.
\end{lemma}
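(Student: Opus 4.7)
The plan is to observe that this lemma is essentially a direct corollary of (the coupling implicit in) the proof of Lemma~\ref{Lemma:4 vs 5}. That proof does more than establish equality of marginal status distributions between $G_4$ and $G_5$: via the chain of four intermediate procedures and the four rewrites (Procedures 1', 2'), it implicitly constructs a coupling of the randomness driving $G_4$ and $G_5$ under which the full execution trace of one game---namely the sequence of procedure calls, the sampled $\CF$ values, and the moment at which an abort (if any) occurs---agrees almost surely with the trace of the other. Both $\BadComplete_\star$ and $\BadEval_\star$ are deterministic functions of this trace (they depend only on which procedure is active when the first abort condition is triggered), so their indicator random variables agree under the coupling, which immediately yields the two claimed identities.

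What I would carry out to make this rigorous has two steps. First, I would extend the distributional statement $\|\alpha^k_4 - \alpha^k_5\|_{\tv} = 0$ from the ``status at the end of round $k$'' to a richer random variable that also records, for each round, the ordered list of procedure invocations and the point (if any) at which the game aborted together with its cause. The proof of Lemma~\ref{Lemma:4 vs 5} goes through unchanged for this enriched status, because the four intermediate procedures only rearrange the order in which identical uniform samples are drawn and identical predicate checks are performed---no new randomness is consumed and no check is removed.

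Second, I would verify the obvious matching of abort causes across the coupled procedures. For $\BadEval$: the inner evaluation procedures $\cM^3.\CF^{\text{Inner}}$ and $\cS^3.\CF^{\text{Inner}}$ in $G_5$ are identical to $\cM^2.\CF^{\text{Inner}}$ and $\cS^2.\CF^{\text{Inner}}$ in $G_4$, and the tables $\cS^\star.\CF, \cM^\star.\CF$ and the set $\cM^\star.\text{MiddlePoints}$ agree under the coupling at the moment of the call, so the triggering of the middle-point abort happens in lock-step. For $\BadComplete$: the completion-procedure abort conditions in $G_5$ (``$(i,x_i)$ was previously in $\cM^3.\CF$, or in $\bigcup_j Q_j(x_j)/Q_i(x_i)$, or dishonest'' for $i = u, u{+}1$) correspond exactly to the abort conditions in Procedure 1 of the coupling chain, and hence in $G_4$. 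The main (minor) obstacle is to confirm this classification is consistent with the convention in the Remarks, namely that an abort originating in the inner evaluation procedure during a completion call is classified as $\BadEval$ in both games---but this is enforced by the fact that the inner procedures are identical and are called in the same order under the coupling, so the ``who aborts first'' adjudication is the same on both sides.
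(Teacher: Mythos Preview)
Your proposal is correct and takes essentially the same approach as the paper: the paper simply states that the lemma is implied by the proof of Lemma~\ref{Lemma:4 vs 5}, and you have spelled out the underlying coupling argument in more detail than the paper does.
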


\paragraph{Remark.} Recall that in Section~\ref{sec:overview}, we mentioned the following statement: Assuming $G_5$ is efficient, the probability that $G_4$ or $G_5$ aborts is negligible. Due to Lemma~\ref{Lemma: Bad4 is same as Bad5}, the statement can be reduced to: assuming $G_5$ is efficient, the probability that $G_5$ aborts is negligible.

\begin{lemma}\label{Lemma:S is subset of M}
In $G_5$, $\cS^3.\CF$ is a subset of $\cM^3.\CF$ unless the game aborts.
\end{lemma}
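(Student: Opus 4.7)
The plan is to argue by induction on the sequence of operations performed during an execution of $G_5$, maintaining the invariant that at every intermediate step (prior to abort) every pair $(i,x) \in \cS^3.\CF$ also lies in $\cM^3.\CF$ with the same associated value. Since neither $\cS^3.\CF$ nor $\cM^3.\CF$ is ever modified except by \emph{insertion} of a new pair, it suffices to verify that every operation that inserts a value into $\cS^3.\CF_i$ either (i) simultaneously inserts the same value into $\cM^3.\CF_i$, or (ii) copies a value that is already present in $\cM^3.\CF_i$.

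The plan is to enumerate the places where values can be written to $\cS^3.\CF$. First, in the evaluation procedure $\cS^3.\CF^{\text{Inner}}(i,x)$: inspection of the four branches shows that either the value is already in $\cS^3.\CF_i$ (and by the inductive hypothesis already in $\cM^3.\CF_i$), or $x \in \cM^3.\CF_i$ with $(i,x) \notin \cM^3.\text{MiddlePoints}$ in which case $\cS^3.\CF_i(x)$ is assigned the value already stored in $\cM^3.\CF_i(x)$, or $x \in \cM^3.\CF_i$ with $(i,x) \in \cM^3.\text{MiddlePoints}$ (the game aborts, so the invariant is vacuous), or the value is drawn fresh and written to both tables. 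In each non-aborting branch the invariant is preserved.

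Second, all writes to $\cS^3.\CF$ that occur inside $\cS^3.\text{Complete}$ are routed through $\cS^3.\CF^{\text{Inner}}$: the ``chain-extension'' steps that simulate $\tilde{F}$ make oracle queries answered via $\cS^3.\CF$ (hence through the \text{Inner} procedure handled above), and the branch triggered when $(s,x_s,x_{s+1}) \in \cM^3.\text{CompletedChains}$ behaves as in $G_4$, namely by copying $Q_c$ from $\cM^3.\CF$ into $\cS^3.\CF$, which again only introduces pairs already present in $\cM^3.\CF$. Meanwhile, the $\cM^3$-side procedures ($\cM^3.\CF^{\text{Inner}}$ and $\cM^3.\text{Complete}$) only write to $\cM^3.\CF$, so they can only enlarge $\cM^3.\CF$ and thus never threaten the subset relation. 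Combining these observations yields the invariant at every non-aborting step, and in particular at the moment the game terminates.

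The only subtlety, and the step that needs the most careful bookkeeping, is verifying that the two ``adapted'' points in $\cS^3.\text{Complete}$ (the analogues of $(u,x_u)$ and $(u+1,x_{u+1})$) really do get routed into both tables as the pseudocode prescribes; this reduces to a line-by-line check against the pseudocode on page~\pageref{Game 5}, which is mechanical once the branches above are understood.
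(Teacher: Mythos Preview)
Your approach is correct and is essentially what the paper does: its proof is the single line ``Clear from the definition,'' and you have simply made explicit the case analysis on the handful of places where $\cS^3.\CF$ can be written. One small clarification: in $G_5$ the procedure $\cS^3.\text{Complete}$ no longer \emph{adapts} the values at positions $u,u+1$---all $\CF$ values on the chain are set uniformly via $\cS^3.\CF^{\text{Inner}}$ and it is $RF$ that gets programmed---so the ``subtlety'' you flag in your last paragraph dissolves upon the pseudocode check you propose, and every write to $\cS^3.\CF$ really does route through $\cS^3.\CF^{\text{Inner}}$ or the copy-from-$\cM^3.\CF$ branch.
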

\begin{proof}
Clear from the definition.
\end{proof}

Because of Lemma~\ref{Lemma:S is subset of M}, we can treat unsubverted (subverted) chains in $\cS^3.\CF$ as unsubverted (subverted) chains in $\cM^3.\CF$.

\subsubsection{The Gap Between the Fifth and the Sixth Game}

\begin{lemma}[The gap between $G_5$ and $G_6$]
The probability of $\cD$ outputting 1 in $G_5$ differs that in $G_6$ by at most $\Pr[\BadComplete_5]+\Pr[\BadEval_5]$.
\end{lemma}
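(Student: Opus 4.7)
The lemma follows by a standard identical-until-bad coupling between $G_5$ and $G_6$. The plan is to couple the two games on a shared source of randomness, verify that their $\cD$-transcripts coincide on the event $\neg(\BadComplete_5 \cup \BadEval_5)$, and then invoke the fundamental lemma of game-playing.

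I will couple the games so that whenever $G_5$ freshly samples an entry for $\cM^3.\CF_i$ (whether through $\cM^3.\CF^{\text{Inner}}$, $\cS^3.\CF^{\text{Inner}}$, or during a call to $\cM^3.\text{Complete}$ or $\cS^3.\text{Complete}$), the identical bits are used as $F_i(x)$ in $G_6$. The key structural observation is that in $G_5$ no $\CF$ entry is ever programmed: both completion procedures draw $\CF$ values uniformly and push the consistency programming onto $RF$ instead. As a result, every entry of $\cM^3.\CF$ is a genuine uniform sample, and by Lemma~\ref{Lemma:S is subset of M}, conditional on no abort, $\cS^3.\CF \subseteq \cM^3.\CF$ with matching values; consequently $\cS^3.\CF$, $\cM^3.\CF$, and $F$ all agree on their common support.

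On the event $\neg(\BadComplete_5 \cup \BadEval_5)$ I then verify transcript identity query by query. For a direct round-function query $\cS^3.\CF_i(x)$ vs.\ $F_i(x)$: the aborting branch of $\cS^3.\CF^{\text{Inner}}$ is ruled out by $\neg\BadEval_5$, and the remaining branches return either an existing entry of $\cS^3.\CF_i$, a value copied from $\cM^3.\CF_i(x)$, or a fresh uniform sample installed in both tables; in every case the answer equals $\cM^3.\CF_i(x) = F_i(x)$. For a query $RF(x_0,x_1)$ vs.\ $C^F(x_0,x_1)$: if $RF(x_0,x_1)$ was first established by $\cM^3.\text{Complete}$, its value is the endpoint of the Feistel chain computed from $\cM^3.\CF$, which under the coupling coincides with $C^F(x_0,x_1)$; if it was instead pre-programmed by an earlier invocation of $\cS^3.\text{Complete}$, the chain was assembled from $\cS^3.\CF$ entries that agree with $\cM^3.\CF$, yielding the same equality. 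Inverse queries $RF^{-1}$ versus $(C^F)^{-1}$ are handled symmetrically. Applying the fundamental lemma of game-playing then gives
\[
  \bigl|\Pr[\cD = 1 \text{ in } G_5] - \Pr[\cD = 1 \text{ in } G_6]\bigr| \leq \Pr[\BadComplete_5] + \Pr[\BadEval_5].
\]
The main subtlety is the $\cS^3.\text{Complete}$ branch of the $RF$-query analysis: one must confirm that the endpoint programmed from $\cS^3.\CF$ cannot drift from what a fresh evaluation of $C^F$ on $F$ would produce. This reduces to the subset invariant $\cS^3.\CF \subseteq \cM^3.\CF$ together with the observation that any later divergence between the two tables on a chain point would have to register as $\BadEval_5$; since we condition on $\neg\BadEval_5$, the required consistency is preserved, and this bookkeeping is the only non-routine step of the argument.
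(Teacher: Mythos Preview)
Your proposal is correct and is essentially the approach the paper intends: the paper's own proof is the single line ``clear from the definition of $G_5$,'' and your identical-until-bad coupling (using that in $G_5$ every $\cM^3.\CF$ entry is a fresh uniform sample, that $\cS^3.\CF\subseteq\cM^3.\CF$ by Lemma~\ref{Lemma:S is subset of M}, and that $RF$ is programmed to the Feistel output) is exactly the argument that sentence abbreviates. You have simply written out the details the paper omits.
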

\begin{proof}
The proof of the lemma is clear from the definition of $G_5$.
\end{proof}

\subsection{Preparations for Security and Efficiency Proof}\label{Subsec: preparation}

In this section, we will make some technical preparations for the proof of the following two statements:
\begin{itemize}
\item Security: Assuming $G_5$ is efficient, the probability that $G_5$ aborts is negligible.
\item Efficiency: $G_5$ is efficient.
\end{itemize}

We will focus on $G_5$ and its tables (especially $\cM^3.\CF$) in this section since both statements above are about $G_5$. The key property of $G_5$ we use to understand its distribution is that all the $\CF$ values in $\cM^3.\CF$ are selected uniformly and independently. Formally speaking, to understand the property of $G_5$ and its table $\cM^3.\CF$, we will consider the following simple probability model:
\begin{itemize}
\item Consider uniformly selecting a table $\CF_{\text{Full}}$ that contains $\CF$ values for all $(i,x)$.
\item In $G_5$, when the simulator proceeds to assign a new value to the table $\cM^3.CF$, the simulator takes the value from $\CF_{\text{Full}}$ instead of selecting it uniformly as usual.
\end{itemize}

It is easy to see that the above model does not change the distribution of $G_5$ at all. Under this model, $\cM.\CF$ is an uniform table in the sense that, at any moment of $G_5$, conditioned on the current exposed (evaluated) terms in $\cM.\CF$, any unexposed term, if it is ever evaluated, will be evaluated uniformly.

We will work on two main results in this section. First, we introduce
the notion of monotone increasing (and decreasing) chains, prove that
each unsubverted chain can be viewed as a union of a decreasing chain
and an increasing chain, and then show several nice properties of
increasing (or decreasing) chains. Second, we prove all the dishonest
terms on a subverted chain are located on an interval shorter than
$n/6$.

\paragraph{The order function; monotone chains.}
To record the order in which $\cM^3$ sets $\CF$ values,  we define the following order function $O_{\cM^3}$ from $\{1,\ldots,8n\} \times \{0,1\}^n$ to positive integers (with an additional symbol $\bot$):
\[
  O_{\cM^3}(i,x) = \begin{cases} t & \text{if $\CF_i(x)$ is the $t$-th evaluated $\CF$ value by $\cM^3$},\\
    \perp & \text{if $\CF_i(x)$ is undefined in $\cM^3.\CF$.}
  \end{cases}
\]
An unsubverted chain $(s,x_s,\ldots,x_{s+r})$ in $\cM^3.\CF$ is said to be \emph{monotone
  increasing} (or \emph{monotone decreasing}) if
$O_{\cM^3}(i,x_i)<O_{\cM^3}(i+1,x_{i+1})$ for all $s \leq i < s+r$ (or,
likewise, $O_{\cM^3}(j,x_j)>O_{\cM^3}(j+1,x_{j+1})$ for all
$s \leq j < s+r$).

In the rest of the paper, without loss of generality, we focus our
analytic efforts on increasing chains; the results related to
increasing chains can be easily transitioned into those related to
decreasing chains. We first show that any unsubverted chain can be viewed as a union of a deceasing chain and an increasing chain.

\begin{lemma}\label{Lemma:add one point at a time}
If $G_5$ is efficient, then with overwhelming probability, any unsubverted chain $c=(s,x_s,\ldots,x_{s+r})$ in $\cM^3.\CF$ will satisfy one of the three conditions below:
\begin{enumerate}
\item $c$ is increasing,
\item $c$ is decreasing,
\item There exists an index $s < v < s+r$ such that $(s,x_s,\ldots,x_v)$ is decreasing and $(v,x_v,\ldots,x_{s+r})$ is increasing. 
\end{enumerate}
\end{lemma}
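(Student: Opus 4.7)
The plan is to reduce the lemma to a clean combinatorial fact together with a probabilistic bound that exploits the uniformity of $\cM^3.\CF$. The key observation at the outset is that in $G_5$, every value ever stored in $\cM^3.\CF$ is sampled \emph{independently and uniformly} at random: both $\cM^3.\CF^{\text{Inner}}$ and $\cS^3.\CF^{\text{Inner}}$ assign fresh uniform values to previously undefined entries, and neither $\cM^3.\text{Complete}$ nor $\cS^3.\text{Complete}$ ever adapts or overwrites a $\CF$ value---they merely set $RF(x_0,x_1)=(x_{8n},x_{8n+1})$ at the end. So I will treat $\cM^3.\CF$ as a growing table of fresh uniform $n$-bit strings, even though many of the assignments are triggered indirectly by the subversion algorithm's queries made while evaluating $\CFt$.

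Next I will establish a purely combinatorial fact: a finite sequence of distinct integers with no \emph{interior local maximum} must be monotone increasing, monotone decreasing, or V-shaped (strictly decreasing then strictly increasing about a unique minimum). This is proved by locating the position $v$ of the minimum and checking that, were monotonicity to fail on either side of $v$, choosing the \emph{extremal} witness of failure would produce an interior local maximum, contradicting the hypothesis. Applied to the order sequence $O_{\cM^3}(s,x_s),\ldots,O_{\cM^3}(s+r,x_{s+r})$ along a chain, this reduces the lemma to showing that, with overwhelming probability, no unsubverted chain in $\cM^3.\CF$ admits an interior index $v$ with $O_{\cM^3}(v-1,x_{v-1}) < O_{\cM^3}(v,x_v) > O_{\cM^3}(v+1,x_{v+1})$---that is, no chain contains a ``peak'' in the order function.

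For the probabilistic step I will bound the probability that any $\CF_v(x_v)$ assignment creates a peak. Observe that a peak at position $v$ means that $(v,x_v)$ was assigned \emph{after} both $(v-1,x_{v-1})$ and $(v+1,x_{v+1})$ had already been placed in $\cM^3.\CF$, and that the chain relation $\CF_v(x_v)=x_{v-1}\oplus x_{v+1}$ holds. At the moment $\CF_v(x_v)$ is drawn, the relevant sidelong entries are already fixed; the number of candidate pairs $(x_{v-1},x_{v+1}) \in \cM^3.\CF_{v-1}\times\cM^3.\CF_{v+1}$ is at most $q^2$, where $q=\poly(n)$ upper bounds the total number of $\CF$ entries by the efficiency of $G_5$. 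Uniformity of the fresh value gives failure probability at most $q^2/2^n$ per assignment; union bounding over at most $q$ assignments and $8n$ possible indices yields total peak probability $O(nq^3/2^n)=\negl(n)$. The main point of care---rather than a genuine obstacle---is to index each potential peak exactly once via the \emph{last} of its three neighbors to be assigned, so that the count of potential peaks is $q$ and not something larger; once this is done, uniformity makes the bound immediate.
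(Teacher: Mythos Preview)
Your proposal is correct and follows essentially the same approach as the paper: reduce to ruling out any length-three unsubverted chain whose middle entry was assigned last (a ``peak''), then use the uniformity and independence of all $\cM^3.\CF$ assignments in $G_5$ to bound the peak probability by $\poly(n)/2^n$. The paper states the combinatorial reduction (``no peak $\Rightarrow$ decreasing-then-increasing'') tersely as ``it suffices to show\ldots'' without proof, whereas you spell it out; and the paper indexes the union bound by triples of \emph{order} values $(i,j,k)$ rather than by the last assignment, but the resulting $P^3/2^n$ bound is identical to yours.
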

\begin{proof}
It suffices to show that in $\cM^3.\CF$ there is no unsubverted length three chain $(s,x_s,x_{s+1},x_{s+2})$  such that $\CF_{s+1}(x_{s+1})$ is evaluated after both $\CF_s(s_s)$ and $\CF_{s+2}(x_{s+2})$ are evaluated. Suppose that throughout $G_5$, there are no more than $P$ ($=\poly(n)$) elements in $\cM^3.\CF$. Then in $G_5$,
\begin{align*}\allowdisplaybreaks
 & \mkern20mu  \Pr \left [\parbox{10cm}{There is a length 3 chain $(s,x_s,x_{s+1},x_{s+2})$ such that $O_{\cM^3}(s+1,x_{s+1})>\max\{O_{\cM^3}(s+2,x_{s+2}),O_{\cM^3}(s,x_{s}).\}$}\right ]\\
 & = \sum_{i=2}^{P} \Pr \left [\parbox{10cm}{There is a length 3 chain $(s,x_s,x_{s+1},x_{s+2})$ such that $O_{\cM^3}(s+1,x_{s+1})=i>\max\{O_{\cM^3}(s+2,x_{s+2}),O_{\cM^3}(s,x_{s})\}$.}\right ]\\
 & = \sum_{i=2}^{P} \sum_{\substack{j,k<i \\ j \neq k}} \Pr \left [\parbox{10cm}{There is a length 3 chain $(s,x_s,x_{s+1},x_{s+2})$ such that $O_{\cM^3}(s+1,x_{s+1})=i$, $O_{\cM^3}(s+2,x_{s+2})=j$ and $O_{\cM^3}(s,x_{s})=k$.}\right ]\\
 & < \sum_{i=2}^{P} \sum_{\substack{j,k<i \\ j \neq k}} \frac{1}{2^n}\\ 
 & < \frac{P^3}{2^n}\\ 
 & = \negl(n),
\end{align*}
where the first inequality is based on the fact that and $\CF_{s+1}(x_{s+1})$ is selected uniformly and is independent of $\CF_s(s_s)$ and $\CF_{s+2}(x_{s+2})$.
\end{proof}

\paragraph{Advantages of $\cM^3$ over $\cS^3$.} Note that Lemma
\ref{Lemma:add one point at a time} does not work for the simulator
$\cS^3$ because not all terms in $\cS^3.CF$ are selected
independently. Classifying the unsubverted chains in $\cM^3.\CF$ as
increasing and decreasing is extremely useful to analyze the property
of the chains in $G_5$.

Next, we will use a sequence of lemmas to establish the following major theorem that describes the nice properties of increasing chains.

\begin{theorem}\label{Th: nice properties of increasing chain}
If $G_5$ is efficient, then with overwhelming probability, any unsubverted increasing chain $c=(s,x_s,\ldots,x_{s+r})$ ($r>8$) in $\cM^3.\CF$ will satisfy:
\begin{enumerate}
\item for any $0 < i<j$ and $8<j \leq r$, $({s+j},x_{s+j}) \notin Q_{s+i}(x_{s+i})$;
\item for any $7 \leq i<j \leq r$, $({s+i},x_{s+i}) \notin Q_{s+j}(x_{s+j})$;
\item for any $7 < i \leq r$, $({s+i},x_{s+i})$ is honest if $\CFt_{s+i}(x_{s+i})$ is defined.
\end{enumerate}
\end{theorem}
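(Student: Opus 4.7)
The proof will condition on $G_5$ being efficient, so that only $\poly(n)$ $\CF$ values are ever set and each $\CFt$ evaluation makes at most $\poly(n)$ queries. My first step is to establish a \emph{freshness property}: in an increasing chain, at time $t_{s+k-1}$ when $\CF_{s+k-1}(x_{s+k-1})$ is sampled, the induced value $x_{s+k} = x_{s+k-2} \oplus \CF_{s+k-1}(x_{s+k-1})$ is uniformly distributed over $\{0,1\}^n$ conditioned on the full game state just before $t_{s+k-1}$. This is immediate from the pre-sampled $\CF_{\text{Full}}$ perspective in Section~\ref{Subsec: preparation} together with the definition of an increasing chain.

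For Parts 1 and 2, I will union-bound over the at most $\poly(n)$ chains in $\cM^3.\CF$, over all pairs of positions in each chain, and over the at most $\poly(n)$ queries made in each $\CFt$ evaluation. The crux is that any individual query $(s+j,y)$ issued by $\CFt_{s+i}(x_{s+i})$ equals $(s+j,x_{s+j})$ with probability at most $\poly(n)/2^n$. If the query is made before $t_{s+j-1}$, the freshness property makes $x_{s+j}$ uniform and independent of $y$, giving probability $1/2^n$. If the query is made later, $\cA$ can match $(s+j,x_{s+j})$ only by having ``discovered'' $x_{s+j}$ through prior queries that themselves hit the chain, and each such prior chain-hit is again a $1/2^n$ event by the same freshness argument applied earlier in the chain. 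The thresholds $j>8$ in Part~1 and $i\geq 7$ in Part~2 ensure that the target chain position lies at sufficient depth that $\poly(n)$ adaptive queries cannot locate it with non-negligible probability; Part~2 is handled symmetrically by running this analysis backward from $x_{s+j}$ to $x_{s+i}$.

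For Part 3, I will combine the freshness property with the subversion bound. For $i>7$ the value $x_{s+i}$ is uniformly distributed over $\{0,1\}^n$ conditioned on the state before $t_{s+i-1}$, and in particular is independent of the honest/dishonest partition at level $s+i$, which is fixed by $F_{s+i}$ and $\cA$'s subversion rule alone. The hypothesis $\Pr_{x}[\CFt_{s+i}(x)\neq \CF_{s+i}(x)]\leq \epsilon(n)=\negl(n)$ then yields $\Pr[(s+i,x_{s+i})\text{ is dishonest}]\leq \epsilon(n)$, and a union bound over $\poly(n)$ chains and $O(n)$ positions per chain completes this part.

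The main obstacle will be the ``information flow'' argument in Parts 1 and 2: rigorously showing that after $\cA$'s $\poly(n)$ adaptive queries, any deep chain position remains essentially unpredictable to $\cA$. I plan to handle this via a deferred-randomness coupling that postpones sampling each chain $\CF$ value until it is genuinely first needed by either the simulator or $\cA$, reducing the analysis to a sequential union bound in which each ``chain discovery step'' costs one independent factor of $1/2^n$.
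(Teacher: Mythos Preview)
Your Part~3 argument has a genuine gap. You claim the honest/dishonest partition at level $s+i$ is ``fixed by $F_{s+i}$ and $\cA$'s subversion rule alone,'' but this is false: $\CFt_{s+i}(x)=\cA^F(s+i,a_{s+i}x\oplus b_{s+i})$ depends on \emph{all} of $F$ through $\cA$'s adaptive oracle queries, not just on $F_{s+i}$. In particular, for some $x$ the evaluation of $\CFt_{s+i}(x)$ may query $\CF_{s+i-1}(x_{s+i-1})$---the very coin that determines $x_{s+i}$---so $x_{s+i}$ and the dishonest set at level $s+i$ are correlated and your $\epsilon$ bound does not follow. The paper's Lemma~\ref{Lemma:once honest, honest forever} handles exactly this dependency: it freezes every $\CF$ value except $\CF_{s+i-1}(x_{s+i-1})$, splits on whether the set $\{x:(s+i-1,x_{s+i-1})\in Q_{s+i}(x)\}$ is large (bounded via Markov using the already-proved Parts~1--2), and only on the complementary event---where $(s+i-1,x_{s+i-1})\notin Q_{s+i}(x_{s+i})$---does it get the independence needed to invoke the $\epsilon$ bound.

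For Parts~1 and~2, your ``deferred-randomness coupling'' is headed toward the right idea but is missing the structural ingredient that actually yields the thresholds $j>8$ and $i\geq 7$. A naive per-query $1/2^n$ union bound would give a $\poly(n)/2^n$ bound for \emph{every} depth, so you have not explained why depth $>8$ is required. The paper does not argue query-by-query; it proves two lemmas of different character. First, a \emph{propagation} lemma (Lemmas~\ref{Lemma: querying odd point implies querying all the even points before} and~\ref{Lemma: late points querying odd point implies querying all the early even points}): by a clever reordering of the chain randomness, if $\CFt$ at one position queries a far chain element then it must have queried \emph{all} intermediate chain elements of the opposite parity. Second, a \emph{global covering} lemma (Lemma~\ref{Lemma: No point covers length 8 unsubverted chain}), proved over the randomness of the affine maps $(a_k,b_k)$ and summing over all $2^{3n}$ choices of attacking term and target chain-start: no single $\CFt$ evaluation can simultaneously hit four alternating positions of a length-$8$ chain. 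The threshold $8$ is precisely what makes four $q_{\cA}/2^n$ factors beat the $2^{3n}$ union bound. Your proposal never invokes the pairwise-independence of the $(a_k,b_k)$, which is the lever that makes this second step work.
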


\begin{lemma}\label{Lemma: No subverted chain covers a long honest chain}
In $G_5$, with overwhelming probability, there are not an unsubverted (or subverted) chain $c=(i,x_i,\ldots,x_j)$ and a length 10 unsubverted chain $c'=(s,y_s,\ldots,y_{s+9})$ in $\cM^3.\CF$ such that
\begin{itemize}
\item for all $(j,x) \in c$, $\CFt_i(x)$ is defined;
\item $c$ and $c'$ are disjoint;
\item for each $s \leq k \leq s+9$, $(k,y_k) \in Q_c$.
\end{itemize}
\end{lemma}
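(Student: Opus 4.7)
The plan is to bound, via Markov's inequality, the expected number of pairs $(c, c')$ satisfying the lemma's three conditions. By the efficiency assumption on $G_5$, only polynomially many chains $c$ with $\CFt$ defined on every element arise during the game, and for each such $c$ the query set $Q_c$ has polynomial size, say $|Q_c| \leq R$ with $R = \poly(n)$. It therefore suffices to bound, for each fixed $c$, the expected number of qualifying $c' = (s, y_s, \ldots, y_{s+9})$, and then sum over $c$.

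For a fixed $c$, the starting pair $(y_s, y_{s+1})$ of $c'$ must satisfy $(s, y_s), (s+1, y_{s+1}) \in Q_c$, giving at most $R^2$ candidate starting pairs at any fixed index $s$. Once $(y_s, y_{s+1})$ is fixed, the remaining elements are generated one at a time via the Feistel recurrence $y_{k+1} = y_{k-1} \oplus \CF_k(y_k)$. In the $\CF_{\text{Full}}$ model introduced in Section~\ref{Subsec: preparation}, each $\CF_k(y_k)$ is uniform in $\{0,1\}^n$, so $y_{k+1}$ is uniform and lands in $Q_c$ at index $k+1$ with probability at most $R/2^n$. Iterating over the eight extension steps $k = s+1, \ldots, s+8$ yields a total success probability of at most $(R/2^n)^8$; combining with the $R^2$ starting pairs and $8n$ starting indices gives expected count at most $8n \cdot R^{10}/2^{8n}$ per $c$. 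Summing over the polynomially many $c$ keeps the bound negligible, and Markov's inequality concludes.

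The main obstacle is justifying the per-step bound $R/2^n$ rigorously. The set $Q_c$ itself depends on the $\CF$ values, since the subverter's queries during evaluation of $\CFt$ on elements of $c$ are chosen adaptively based on $\CF$, so $Q_c$ and $\CF_k(y_k)$ cannot naively be treated as independent. To handle this I will invoke the pairwise independence of the affine maps $x \mapsto a_j \cdot x \oplus b_j$---precisely the technical reason these maps are included in the construction, as noted in the earlier footnote. The key observation is that any subverter $F$-query at input $z$ contributes only the single pair $(j, a_j^{-1}(z \oplus b_j))$ to $Q_c$; since $(a_j, b_j)$ are uniformly random, for each fixed candidate value $y^*$ distinct from the $\CF$-level preimages already induced by $c$, the probability that $(k+1, y^*) \in Q_c$ is $O(1/2^n)$. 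A union bound over the at most $R$ such candidates, together with a careful deferred-decisions argument to handle the subverter's adaptivity, yields the required per-step bound.
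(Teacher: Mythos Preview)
Your initial per-step argument is indeed broken, and you correctly identify why: conditioning on $(k,y_k)\in Q_c$ means the subverter has already read $\CF_k(y_k)$, so $y_{k+1}=y_{k-1}\oplus\CF_k(y_k)$ is fully correlated with $Q_c$. The issue is that your proposed fix does not actually resolve this. The sentence ``for each fixed candidate value $y^*$\ldots the probability that $(k{+}1,y^*)\in Q_c$ is $O(1/2^n)$, then union bound over at most $R$ candidates'' does not type-check: $y_{k+1}$ is a single value once enough randomness is fixed, so there is no set of $R$ candidates to union-bound over, and if you really union-bound over all $y^*\in\{0,1\}^n$ you get a trivial bound. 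More fundamentally, the dependencies go both ways---$y_{k+1}$ depends on the affine map at index $k$, while $Q_c$ at the $\CF$-level depends on the affine map at index $k{+}1$---and your sketch does not say which randomness is being deferred and which is being conditioned on.

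The paper's argument organizes the randomness quite differently. It conditions on \emph{all} of $F$ and on the encoded inputs $a_v x_v\oplus b_v$ along the chain $c$. This already pins down the subverter's entire transcript of $F$-queries along $c$ (at most $8n\,q_{\cA}$ of them, at the $F$-level, independent of the remaining affine randomness). The event $(k,y_k)\in Q_c$ is then simply the event that $a_k y_k\oplus b_k$ hits this fixed list; by pairwise independence of $x\mapsto a_k x\oplus b_k$, this probability is $\le 8n\,q_{\cA}/2^n$ whenever $y_k\neq x_k$. Disjointness of $c$ and $c'$ only guarantees that in each \emph{adjacent pair} $(y_{s+2t},y_{s+2t+1})$ at least one coordinate differs from the corresponding $x$, so the paper processes the ten elements of $c'$ in five pairs and extracts five (not eight) factors of $16n\,q_{\cA}/2^n$. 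Finally---and this matters structurally---the paper does not invoke efficiency of $G_5$ at all: it proves the stronger statement over the full table and union-bounds over all $2^{4n}$ choices of starting pairs $(x_{i+1},x_{i+2},y_s,y_{s+1})$, which five factors of $\poly(n)/2^n$ comfortably absorb. Your reliance on efficiency would be awkward here, since this lemma is later invoked inside the efficiency proof itself.
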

\begin{proof}
  Consider proving the following stronger statement: Imagine we fill
  the entire table $\cM^3.\CF$ by uniformly selecting all the $F$
  values and $(a_i,b_i)$ ($i=1,\ldots,8n$). We will prove that with
  overwhelming probability, there are not two chains $c$ and $c'$ that
  satisfy the properties in the lemma.

  Let $(x_{i+1},x_{i+2})$, $(y_s,y_{s+1})$ be two pairs of $n$-bit
  strings and $(i,j,s)$ be three positive indices. We denote by $c$
  the length $(j-i)$ chain starting with $(i+1,x_{i+1},x_{i+2})$
  (without loss of generality, we assume $c$ is a subverted chain for
  convenience) and denote by $c'$ the length 10 unsubverted chain
  starting with $(s,y_s,y_{s+1})$. We denote by $x_v$
  ($v=i+1,\ldots,j$) the elements of $c$ and denote by $y_k$
  ($k=s,\ldots,s+9$) the elements of $c'$. We define the event:
  \[
    E_{i,j,s}(x_{i+1},x_{i+2},y_s,y_{s+1}):=\left\{\parbox{85mm}{$c$
        and $c'$ are disjoint, and for each $s \leq k \leq
        s+9$, $(k,y_k) \in Q_c$} \right\}\,.
  \] 
  For $s \leq t \leq s+9$, we also define:
  \[
    E_{i,j,s}^{t}(x_{i+1},x_{i+2},y_s,y_{s+1}):=\left\{\text{$c$
        and $c'$ are disjoint, and for each $s \leq k \leq
        t$, $(k,y_k) \in Q_c$} \right\}\,.
  \] 
  To analyze the probability of $E_{i,j,s}(x_1,x_2,y_s,y_{s+1})$ over
  the choice of $F$ and $(a_i,b_i)$ ($i=1,\ldots,8n$), we consider
  selecting uniformly the values of $F_i(x)$ for all $i=1,\ldots,8n$
  and $x \in \{0,1\}^n$ and selecting uniformly
  $a_v \cdot x_v \oplus b_v$ for $v=i,\ldots,j$. Since the function
  $x_v \rightarrow a_i \cdot x_v \oplus b_i$ is pairwise independent,
  the values of $a_k \cdot y_k \oplus b_k$($k=s,\ldots,s+9$) are
  uniformly random. (For convenience, in the following, we will write
  $E_{i,j,s}$ for $E_{i,j,s}(x_1,x_2,y_s,y_{s+1})$ and $E_{i,j,s}^t$
  for $E_{i,j,s}^t(x_1,x_2,y_s,y_{s+1})$.) Over the randomness of
  $a_k \cdot y_k \oplus b_k$ ($k=s,\ldots,s+9$), we have
\begin{align*}\allowdisplaybreaks
 & \mkern20mu  \Pr[\, E_{i,j,s}]\\
 & = \Pr[E_{i,j,s} \mid E_{i,j,s}^{s+1} \,] \cdot \Pr[E_{i,j,s}^{s+1}]\\
 & < \Pr[\, E_{i,j,s} \mid E_{i,j,s}^{s+1} \,]\\
 & \mkern20mu \cdot(\Pr[\,\CF_s(y_s) \in \cup_{v=i}^jQ_v(x_v) \mid y_s \neq x_s \,]+\Pr[\,\CF_{s+1}(y_{s+1}) \in \cup_{v=i}^jQ_v(x_v) \mid y_{s+1} \neq x_{s+1} \,])\\
 & < \Pr[E_{i,j,s} \mid E_{i,j,s}^{s+3} \,] \cdot \Pr[E_{i,j,s}^{s+3} \mid E_{i,j,s}^{s+1} \,] \cdot 2 \cdot (8n \cdot q_{\cA}/2^n)\\ 
 & < \Pr[E_{i,j,s} \mid E_{i,j,s}^{s+5} \,] \cdot \Pr[E_{i,j,s}^{s+5} \mid E_{i,j,s}^{s+3} \,] \cdot (16n \cdot q_{\cA}/2^n)^2\\
 & < \Pr[E_{i,j,s} \mid E_{i,j,s}^{s+7} \,] \cdot \Pr[E_{i,j,s}^{s+7} \mid E_{i,j,s}^{s+5} \,] \cdot (16n \cdot q_{\cA}/2^n)^3\\
 & < \Pr[E_{i,j,s} \mid E_{i,j,s}^{s+7} \,] \cdot (16n \cdot q_{\cA}/2^n)^4\\
 & < (16n \cdot q_{\cA}/2^n)^5\,.
\end{align*}
The lemma is implied by taking the union bound over the choice of $(x_1,x_2,y_s,y_{s+1})$.
\end{proof}

A similar proof can be used to prove the following lemma:
\begin{lemma}\label{Lemma: No point covers length 8 unsubverted chain}
  With overwhelming probability over the choice of all the $F$ values
  and $(a_i,b_i)$ ($i=1,\ldots,8n$), there are not a term $(i,x_i)$ and
  a length 8 unsubverted chain $c=(s,y_s,\ldots,y_{s+7})$ in
  $\cM^3.\CF$ such that $(k,y_k) \in Q_i(x_i)$ for all $k=s,s+2,s+4,s+6$.
\end{lemma}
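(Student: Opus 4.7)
The plan is to adapt the strategy of Lemma~\ref{Lemma: No subverted chain covers a long honest chain}: work in the auxiliary model in which $F$ and all affine pairs $(a_k,b_k)$ are drawn uniformly, and conclude via a union bound over candidate tuples $(i,x_i,s,y_s,y_{s+1})$.

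For each such tuple, the unsubverted chain relation $y_{k+1}=y_{k-1}\oplus F_k(a_k y_k\oplus b_k)$ determines $y_{s+2},\ldots,y_{s+7}$, and the event $E_r := \{(s+2r,y_{s+2r})\in Q_i(x_i)\}$ for $r=0,1,2,3$ is equivalent to $a_{s+2r}y_{s+2r}\oplus b_{s+2r}$ landing in the set $Z_{s+2r}$ of at most $q_{\cA}$ $F$-query positions made by $\cA$ at level $s+2r$ during the evaluation of $\CFt_i(x_i)$. I would first condition on $F$ and on the three odd-indexed pairs $(a_{s+1},b_{s+1}),(a_{s+3},b_{s+3}),(a_{s+5},b_{s+5})$; under this conditioning $y_{s+2}$ is fixed, and the remaining even-indexed pairs $(a_s,b_s),(a_{s+2},b_{s+2}),(a_{s+4},b_{s+4}),(a_{s+6},b_{s+6})$ are independent uniform random variables, with $y_{s+2r}$ (for $r>0$) depending only on the even pairs of strictly smaller index.

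The key step in the per-tuple analysis is then the chain-rule decomposition $\Pr[\bigcap_r E_r]=\prod_r \Pr[E_r\mid E_{<r}]$. At each step, the pair $(a_{s+2r},b_{s+2r})$ is not constrained by $E_{<r}$ and $y_{s+2r}$ is independent of $(a_{s+2r},b_{s+2r})$, so $a_{s+2r}y_{s+2r}\oplus b_{s+2r}$ remains uniform on $\F_2^n$; consequently $\Pr[E_r\mid E_{<r}]\leq |Z_{s+2r}|/2^n\leq q_{\cA}/2^n$. Multiplying yields $\Pr[\bigcap_r E_r]\leq (q_{\cA}/2^n)^4$ per tuple. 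A union bound over the at most $8n\cdot 2^n$ choices of $(i,x_i)$, the $8n$ choices of $s$, and the $2^{2n}$ choices of $(y_s,y_{s+1})$ then bounds the total probability by $(8n)^2\cdot 2^{3n}\cdot (q_{\cA}/2^n)^4 = O(n^2 q_{\cA}^4)\cdot 2^{-n}=\negl(n)$.

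The principal obstacle, mirroring the analogous difficulty in Lemma~\ref{Lemma: No subverted chain covers a long honest chain}, is the careful bookkeeping of which random variables each conditioning event depends on --- in particular, ensuring that $y_{s+2r}$ is truly independent of $(a_{s+2r},b_{s+2r})$ so that the pairwise-independence argument at step $r$ goes through. The conditioning-on-odd-pairs reduction sketched above handles this cleanly by separating the four ``parallel'' applications of pairwise independence, and I do not anticipate further complications beyond what is already handled in the preceding lemma.
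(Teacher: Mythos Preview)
Your proposal follows the paper's intended route—the paper itself merely says ``A similar proof can be used''—and your union-bound-plus-chain-rule analysis is the right shape. However, there is one bookkeeping point you have overlooked. The target set $Z_{s+2r}$ depends on the raw input $a_i x_i \oplus b_i$ to the subversion algorithm, and hence on $(a_i,b_i)$. Your conditioning fixes $F$ and the three odd pairs but not $(a_i,b_i)$; when $i$ happens to be one of the even chain indices, say $i=s+2r'$, the set $Z_{s+2r}$ then depends on the very pair $(a_{s+2r'},b_{s+2r'})$ whose freshness you are exploiting at step $r'$, so the step ``$a_{s+2r'}y_{s+2r'}\oplus b_{s+2r'}$ uniform $\Rightarrow$ probability $\leq |Z_{s+2r'}|/2^n$'' is not justified as written. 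In the preceding lemma this is exactly where the disjointness hypothesis and pairwise independence enter, but the present lemma carries no disjointness hypothesis.

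The fix is routine: condition additionally on $a_i x_i\oplus b_i$ (which pins down all the $Z_k$); for $r\neq r'$ your argument is then unchanged, and for $r=r'$ with $y_{s+2r'}\neq x_i$ pairwise independence still makes $a_i y_{s+2r'}\oplus b_i$ uniform given $a_i x_i\oplus b_i$. The genuinely degenerate case $(i,x_i)=(s+2r',y_{s+2r'})$ yields only three small factors instead of four, but then $x_i$ is no longer a free parameter in the union bound, so the count drops by a factor $2^n$ and the total remains negligible.
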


\begin{lemma}\label{Lemma: querying odd point implies querying all the even points before}
  If $G_5$ is efficient, then with overwhelming probability, for any
  unsubverted increasing chain $c=(s,x_s,\ldots,x_{s+r})$ in
  $\cM^3.\CF$, if
  $(s+2t+1,x_{s+2t+1}) \in Q_{s+2k}(x_{s+2k})$ (assuming
  $\CFt_{s+2k}(x_{s+2k})$ is defined) for some $t,k$ with
  $0 < 2t+1,2k \leq r$, then $(s+2i,x_{s+2i}) \in Q_{s+2k}(x_{s+2k})$
  for all $0 < i \leq t$.
\end{lemma}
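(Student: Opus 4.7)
The plan is to prove the lemma by a backward induction along the chain, establishing the stronger statement that, with overwhelming probability, for every increasing chain in $\cM^3.\CF$ and every valid indices $j,k$ with $j \geq 2$, if $(s+j, x_{s+j}) \in Q_{s+2k}(x_{s+2k})$ then $(s+j-1, x_{s+j-1}) \in Q_{s+2k}(x_{s+2k})$. Iterating this step $2t$ times, starting from the hypothesis $(s+2t+1, x_{s+2t+1}) \in Q_{s+2k}(x_{s+2k})$, yields $(s+j, x_{s+j}) \in Q_{s+2k}(x_{s+2k})$ for every $j \in \{2, 3, \ldots, 2t+1\}$; picking out only the even values of $j$ in this range gives exactly the required $(s+2i, x_{s+2i}) \in Q_{s+2k}(x_{s+2k})$ for $1 \leq i \leq t$.

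For the inductive step, I would invoke the probability model established at the start of the section in which a complete table $\CF_{\text{Full}}$ is sampled uniformly in advance, so every $\CF$ value is an independent uniform bit string. Fixing a chain starting pair $(x_s, x_{s+1})$ and indices $s, j, k$, consider the bad event $E_j$ that $(s+j, x_{s+j}) \in Q_{s+2k}(x_{s+2k})$ yet $(s+j-1, x_{s+j-1}) \notin Q_{s+2k}(x_{s+2k})$. The second clause of $E_j$ says the evaluation of $\tilde{F}_{s+2k}(x_{s+2k})$ never queries $\CF_{s+j-1}$ at input $x_{s+j-1}$, so the entire trace of that evaluation---and in particular the set $Y$ of at most $q_{\cA}$ input bit strings at which the subversion algorithm queries $\CF_{s+j}$---is a deterministic function of $\CF$ values at points other than $(s+j-1, x_{s+j-1})$, hence independent of that one remaining uniform value. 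Meanwhile, the Feistel relation $x_{s+j} = x_{s+j-2} \oplus \CF_{s+j-1}(x_{s+j-1})$, together with the fact that $x_{s+j-2}$ is determined by $\CF$ values at strictly lower chain positions, shows $x_{s+j}$ is uniform over the residual randomness of $\CF_{s+j-1}(x_{s+j-1})$. Therefore $\Pr[E_j] \leq q_{\cA}/2^n$.

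A union bound over the polynomially many tuples $(s, x_s, x_{s+1}, j, k)$ that can arise in the game---here the assumed efficiency of $G_5$ is crucial, bounding the number of relevant chains and computations by $\poly(n)$---gives overall failure probability $\negl(n)$ for a single inductive step to be violated anywhere, and iterating the step at most $2t \leq r = \poly(n)$ times preserves negligibility. The main obstacle I anticipate is making the independence claim rigorous: specifically, justifying that the branching tree of $\CF$ queries issued by the subversion algorithm during $\tilde{F}_{s+2k}(x_{s+2k})$ may be treated as a deterministic function of $\CF$ values at the queried points, so that resampling only $\CF_{s+j-1}(x_{s+j-1})$ (with all other $\CF$ values fixed) perturbs no query of the algorithm on the event $(s+j-1, x_{s+j-1}) \notin Q_{s+2k}(x_{s+2k})$. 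I would handle this by a standard coupling: on the event that the resampled execution never touches the altered point, the two executions produce identical query--response traces, leaving $Y$ unchanged while $x_{s+j}$ moves uniformly---exactly the setup the $q_{\cA}/2^n$ counting argument requires.
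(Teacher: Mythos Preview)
Your inductive step has a genuine gap in the regime $j-1 < 2k$. You claim that if the subversion algorithm, while evaluating $\CFt_{s+2k}(x_{s+2k})$, never queries $(s+j-1,x_{s+j-1})$, then its entire query trace is a function of $\CF$-values other than $\CF_{s+j-1}(x_{s+j-1})$. But the \emph{input} $x_{s+2k}$ to that evaluation is itself computed along the Feistel chain and therefore depends on $\CF_{s+j-1}(x_{s+j-1})$ whenever $j-1 < 2k$. So in your coupling, resampling $\CF_{s+j-1}(x_{s+j-1})$ simultaneously moves $x_{s+j}$ \emph{and} the input $x_{s+2k}$ (and hence the whole set $Q_{s+2k}(x_{s+2k})$); the two quantities you want to be independent are in fact both functions of the single resampled value, and the $q_{\cA}/2^n$ bound no longer follows. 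Since the lemma's conclusion explicitly includes indices $2i < 2k$ (e.g., the paper's own example has $2k=4$ and the target $i=1$, i.e., offset $2$), this case cannot be avoided.

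The paper's argument sidesteps exactly this obstacle via a parity-based reordering (its ``Procedure 2''). Rather than step-by-step backward induction, it directly handles each target $i$: it pre-selects the odd-offset round-function outputs as uniform strings $a_m$, which lets it compute all even-offset chain points (including both $x_{s+2k}$ and $x_{s+2i}$) \emph{before} committing to $\CF_{s+2i}(x_{s+2i})$. It can then evaluate $\CFt_{s+2k}(x_{s+2k})$, and only afterwards draw $\CF_{s+2i}(x_{s+2i})$; on the event $(s+2i,x_{s+2i}) \notin Q_{s+2k}(x_{s+2k})$, that last draw is independent of the query set, while $x_{s+2t+1}$ is an affine function of it. This is precisely why the lemma is stated with an odd hypothesis offset and even conclusion offsets: the parity decoupling is what makes the independence claim true even when the conclusion point sits below the querying point in the chain. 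Your one-step-at-a-time descent ignores this structure and fails at the crossover.
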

\begin{proof}
We give a simple example to show the idea of the proof. Take $s=1$, $r=7$, $k=2$ and $t=3$ for example. We want to show that for any chain $c=(1,x_1,\ldots,x_8)$, if $(8,x_8) \in Q_5(x_5)$, then with overwhelming probability, $(1+2i,x_{1+2i}) \in Q_5(x_5)$ for $i=1$.

Consider the following two ways of determining a length 8 unsubverted chain:
\begin{itemize}
\item  \textbf{Procedure 1}:
\begin{enumerate}
\item Pick an arbitrary moment in $G_5$ and abort the game. Denote the table $\cM^3.\CF$ at this moment by $T_{\text{initial}}$. Pick a length 2 increasing chain $(1,x_1,x_2)$ in $T_{\text{initial}}$ such that it is not a subchain of a length 3 unsubverted chain.
\item For $2 \leq i \leq 7$, select $\CF_i(x_i)$ uniformly, set $x_{i+1}:=\CF_i(x_i) \oplus x_{i-1}$ and abort the procedure if $(i+1,x_{i+1})$ is already in the table $T_{\text{initial}}$.
\item Evaluate $\CFt_5(x_5)$. 
\end{enumerate}
\item  \textbf{Procedure 2}:
\begin{enumerate}
\item Pick an arbitrary moment in $G_5$ and abort the game. Denote the table $\cM^3.\CF$ at this moment by $T_{\text{initial}}$. Pick a length 2 increasing chain $(1,x_1,x_2)$ in $T_{\text{initial}}$ such that it is not a subchain of a length 3 unsubverted chain.
\item Select $\CF_2(x_2)$ uniformly and set $x_3:=a_2 \oplus x_1$.
\item Select 4 uniform $n$-bit strings $a_4$, $a_5$, $a_6$ and $a_7$. Set  $x_5:=a_4 \oplus x_3$, $x_7:=a_6 \oplus x_5$ and abort the procedure if either of them is in $T_{\text{initial}}$. Set $\CF_5(x_5):=a_5$ and $\CF_7(x_7):=a_7$.
\item Evaluate $\CFt_5(x_5)$.
\item  Select $\CF_3(x_3)$ uniformly (use the existing value if it has been evaluated), set $x_4:=\CF_3(x_3) \oplus x_2$, $x_6:=a_5 \oplus x_4$, $x_8:=a_7 \oplus x_6$, and abort the procedure if any one of $x_4$, $x_4$ and $x_8$ is in $T_{\text{initial}}$.
\end{enumerate} 
\end{itemize}

A quick thought reveals that the above two procedures are equivalent in terms of the distribution of the chain and, furthermore, the probability they abort is negligible because of Lemma \ref{Lemma:add one point at a time}. We use the second procedure to analyze the distribution of the first one. In the second procedure, we can see that if $(3,x_3) \notin Q_5(x_5)$, then $\CF_3(x_3)$ is still uniform conditioned on $Q_5(x_5)$, which implies that $x_8=a_7 \oplus x_6=a_7 \oplus a_5 \oplus x_4=a_7 \oplus a_5 \oplus  \CF_3(x_3) \oplus x_2$ is uniform. Therefore, if $(3,x_3) \notin Q_5(x_5)$, $(8,x_8) \in Q_5(x_5)$ with negligible probability.

The full proof can be achieved by replacing the concrete numbers in the last example by more general parameters $s$, $r$, $k$ and $t$ and taking the union bound over the various values of these parameters.
\end{proof}

\begin{lemma}\label{Lemma:early does not query late}
If $G_5$ is efficient, then with overwhelming probability, for any unsubverted increasing chain $c=(s,x_s,\ldots,x_{s+r})$ in $\cM^3.\CF$ and any index $i,j$ with $0 < i<j$ and $8<j \leq r$, $(s+j,x_{s+j}) \notin Q_{s+i}(x_{s+i})$ (if $\CFt_{s+i}(x_{s+i})$ is defined).
\end{lemma}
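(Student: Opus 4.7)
The plan is to adapt the resampling template of Lemma~\ref{Lemma: querying odd point implies querying all the even points before} and bound the probability of the forbidden intersection directly. For a fixed $s$, fixed indices $0<i<j\le r$ with $j>8$, and a fixed candidate starting pair $(x_s,x_{s+1})$, I will establish
\[
  \Pr[(s+j,x_{s+j})\in Q_{s+i}(x_{s+i})]\ \le\ (j-i)\cdot q_{\cA}/2^{n},
\]
after which a union bound over the $O(n^{3})$ triples $(s,i,j)$ with $s,i,j\in\{1,\ldots,8n\}$ and the polynomially many chain starting pairs that can appear in $\cM^{3}.\CF$ (under the efficiency hypothesis on $G_{5}$) will deliver the lemma.

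The central tool is a rewrite of the natural sampling order, justified by the probability model of Section~\ref{Subsec: preparation}: since all $\CF$ values are independent uniform strings drawn in advance from a single table $\CF_{\text{Full}}$, reordering the moments at which they are exposed leaves the joint distribution of the algorithm's view (and hence of $Q_{s+i}(x_{s+i})$) invariant. In the rewrite I first expose $\CF_{s+1}(x_{s+1}),\ldots,\CF_{s+j-2}(x_{s+j-2})$ along the chain, successively computing $x_{s+2},\ldots,x_{s+j-1}$; then I run the evaluation of $\CFt_{s+i}(x_{s+i})$, which freezes $Q_{s+i}(x_{s+i})$; finally, if $(s+j-1,x_{s+j-1})\notin Q_{s+i}(x_{s+i})$, I expose $\CF_{s+j-1}(x_{s+j-1})$, which determines $x_{s+j}$.

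Conditioned on $A_{j}:=\{(s+j-1,x_{s+j-1})\notin Q_{s+i}(x_{s+i})\}$, the value $\CF_{s+j-1}(x_{s+j-1})$ is a fresh uniform string independent of $Q_{s+i}(x_{s+i})$, so $x_{s+j}=x_{s+j-2}\oplus\CF_{s+j-1}(x_{s+j-1})$ is uniform and equals some query of $Q_{s+i}(x_{s+i})$ at position $s+j$ with probability at most $q_{\cA}/2^{n}$. On $\neg A_{j}$, $(s+j-1,x_{s+j-1})$ itself lies in $Q_{s+i}(x_{s+i})$, and I recurse on $j':=j-1$; telescoping yields
\[
  \Pr[(s+j,x_{s+j})\in Q_{s+i}(x_{s+i})]\ \le\ (j-i-1)\cdot q_{\cA}/2^{n}\ +\ \Pr[(s+i+1,x_{s+i+1})\in Q_{s+i}(x_{s+i})].
\]
The residual base case is handled not by freshness (since $x_{s+i+1}$ is already determined when $\CFt_{s+i}(x_{s+i})$ runs) but by pairwise independence of $(a_{s+i+1},b_{s+i+1})$, exactly as in the proof of Lemma~\ref{Lemma: No subverted chain covers a long honest chain}: the subversion algorithm $\tilde F$ is fixed before $R$ is sampled and touches position $s+i+1$ only through the oracle $F_{s+i+1}(\cdot)$, so an induction on its at most $q_{\cA}$ queries (conditioned on no previous hit) shows each hits the target $a_{s+i+1}x_{s+i+1}\oplus b_{s+i+1}$ with probability at most $1/2^{n}$.

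The main obstacle will be justifying that the rewrite truly preserves the joint distribution of $(Q_{s+i}(x_{s+i}),x_{s+j})$ even when the subversion algorithm adaptively queries a point whose ``natural'' sampling time in the game would have been after the evaluation of $\CFt_{s+i}(x_{s+i})$. This reduces to the observation that every response the algorithm ever receives is marginally uniform under the probability model, so its view has the same distribution irrespective of the exposure order; a parallel care is needed in the base case to formalize the adaptive-independence argument, which follows the inductive structure already exhibited in Lemma~\ref{Lemma: No subverted chain covers a long honest chain}.
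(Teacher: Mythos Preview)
Your telescoping recursion is a plausible direct route, but there is a real gap in how your base case interacts with the union bound. The bound $(j-i)\,q_{\cA}/2^{n}$ you derive holds for an \emph{a priori fixed} pair $(x_s,x_{s+1})$, over the joint randomness of $F$ and $R$; this is far too weak to survive a union bound over all $2^{2n}$ pairs. To restrict to the ``polynomially many chain starting pairs that can appear in $\cM^{3}.\CF$'' you must condition on the game state at the moment the pair enters the table (this is the template behind Lemma~\ref{Lemma: querying odd point implies querying all the even points before})---but that conditioning fixes $R$, since the distinguisher sees $R$ from the outset and its choices, including which pairs land in the table, may depend on it. Once $R$ is fixed, your pairwise-independence argument for the base case collapses: $a_{s+i+1}x_{s+i+1}\oplus b_{s+i+1}$ is no longer a fresh uniform target independent of the execution. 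The recursive step survives conditioning (it uses only $\CF$-freshness), but the base case does not, and the base case is unavoidable: when $j=i+1$---which the lemma permits, since it only demands $0<i<j$ and $j>8$---the recursion is vacuous and the entire claim \emph{is} the base case.

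The paper sidesteps exactly this tension by separating the two kinds of randomness into two already-proven lemmas. Lemma~\ref{Lemma: querying odd point implies querying all the even points before}, a ``conditional'' statement proved with the moment/starting-pair template and only $\CF$-freshness, shows that if $(s+j,x_{s+j})\in Q_{s+i}(x_{s+i})$ then many alternating earlier chain points must also lie in $Q_{s+i}(x_{s+i})$. Lemma~\ref{Lemma: No point covers length 8 unsubverted chain}, a ``global'' statement proved over the full randomness of $F$ and $R$ with a $(q_{\cA}/2^{n})^{4}$-type per-tuple bound that does survive an exponential union bound over fixed starting data, says no single $Q_i(x)$ can cover four alternating points of a length-8 chain. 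The contradiction between the two yields the lemma in a couple of lines.
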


\begin{proof}
Without loss of generality, assume $i=0$. Suppose $({s+j},x_{s+j}) \in Q_{s}(x_{s})$. Notice that $(s+j-1,x_{s+j-1}) \in Q_{s+i}(x_{s+i})$ with overwhelming probability because otherwise the randomness of $\CF_{s+j-1}(x_{s+j-1})$ will cause the event $({s+j},x_{s+j}) \notin Q_{s}(x_{s})$. Then,
\begin{itemize}
\item if $j$ is odd, since $j>8$ and $({s+j},x_{s+j}) \in Q_{s}(x_{s})$, by Lemma~\ref{Lemma: querying odd point implies querying all the even points before}, $(s+2k,x_{s+2k}) \in Q_{s}(x_{s})$ for $k=1,2,3,4$. This contradicts Lemma~\ref{Lemma: No point covers length 8 unsubverted chain}.
\item if $j$ is even, since $j>8$ and $({s+j-1},x_{s+j-1}) \in Q_{s}(x_{s})$, by Lemma~\ref{Lemma: querying odd point implies querying all the even points before}, $(s+2j,x_{s+2j}) \in Q_{s}(x_{s})$ for $j=1,2,3,4$, which contradicts with Lemma~\ref{Lemma: No point covers length 8 unsubverted chain}. \qedhere
\end{itemize}
\end{proof}

\begin{lemma}\label{Lemma: late points querying odd point implies querying all the early even points}
If $G_5$ is efficient, then with overwhelming probability, for any unsubverted increasing chain $c=(s,x_s,\ldots,x_{s+r})$ in $\cM^3.\CF$, if $(s+2t,x_{s+2t}) \in Q_{s+k}(x_{s+k})$ (assuming $\CFt_{s+k}(x_{s+k})$ is defined) for some $t,k$ with $0 < 2t < k \leq r$, then $(s+2i-1,x_{s+2i-1}) \in Q_{s+k}(x_{s+k})$ for all $0<i \leq t$.
\end{lemma}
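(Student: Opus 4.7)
The plan is to prove the contrapositive, mirroring the strategy of Lemma~\ref{Lemma: querying odd point implies querying all the even points before} but in a dual configuration where the queried point lies \emph{before} the querying point in the increasing chain. Fix indices $0 < i \leq t$ and suppose that $(s+2i-1, x_{s+2i-1}) \notin Q_{s+k}(x_{s+k})$. I will show that under this hypothesis, $(s+2t, x_{s+2t}) \in Q_{s+k}(x_{s+k})$ holds only with negligible probability; a union bound over $i$ and over the polynomially many increasing chains in $\cM^3.\CF$ (guaranteed by the efficiency of $G_5$) will then yield the lemma.

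To carry this out I will exploit the freedom to reorder the sampling of the independent uniform values in $\cM^3.\CF$ without altering the joint distribution. The alternative procedure first samples $\CF_{s+1}(x_{s+1}), \ldots, \CF_{s+2i-2}(x_{s+2i-2})$ in order, determining $x_{s+2}, \ldots, x_{s+2i-1}$; then pre-commits fresh uniform strings $a_{s+2i-1}, a_{s+2i}, \ldots, a_{s+k-1}$ and defines $x_{s+j+1} := x_{s+j-1} \oplus a_{s+j}$ for $j = 2i-1, \ldots, k-1$; provisionally programs $\CF_{s+j}(x_{s+j}) := a_{s+j}$ for $j = 2i, \ldots, k-1$ (deliberately deferring the assignment at position $s+2i-1$); samples $\CF_{s+k}(x_{s+k})$ uniformly and evaluates $\CFt_{s+k}(x_{s+k})$; and only at the very end sets $\CF_{s+2i-1}(x_{s+2i-1}) := a_{s+2i-1}$. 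Unrolling the chain relation yields
\[
  x_{s+2t} \;=\; x_{s+2i-2} \;\oplus\; a_{s+2i-1} \;\oplus\; a_{s+2i+1} \;\oplus\; \cdots \;\oplus\; a_{s+2t-1}\,,
\]
so under the hypothesis that $\CFt_{s+k}$ never queries $\CF_{s+2i-1}(x_{s+2i-1})$, the string $a_{s+2i-1}$ remains information-theoretically hidden from the subversion algorithm and $x_{s+2t}$ looks uniform in its view. Thus the probability that any of the at most $q_{\cA}$ queries $\CFt_{s+k}$ issues at position $s+2t$ coincides with $x_{s+2t}$ is at most $q_{\cA}/2^n$.

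The main obstacle will be justifying that the reordering really preserves the joint distribution of the status of $G_5$ up to a negligible error. I will need to rule out that the provisionally assigned inputs $x_{s+j}$ (for $2i-1 < j \leq k$) collide with pre-existing entries of $\cM^3.\CF$, and also rule out that some query issued by $\CFt_{s+k}$ accidentally hits one of the intermediate $(s+j, x_{s+j})$ with $2i-1 < j < k$, which could indirectly leak information about $a_{s+2i-1}$. Each such bad event involves a value that is uniform up to the moment it could be hit, so each contributes at most $\poly(n)/2^n$; combining these bounds with a union bound over $s, k, t, i$ and over the polynomially many chains produced throughout $G_5$ keeps the total error negligible, completing the proof.
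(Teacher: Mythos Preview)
Your overall strategy---reorder the sampling so that one $\CF$ value feeding into $x_{s+2t}$ is revealed only after $\CFt_{s+k}(x_{s+k})$ has run---is exactly the paper's idea, and your parity unrolling $x_{s+2t}=x_{s+2i-2}\oplus a_{s+2i-1}\oplus\cdots\oplus a_{s+2t-1}$ is correct. The gap is in the sentence asserting that $a_{s+2i-1}$ ``remains information-theoretically hidden from the subversion algorithm.'' In your procedure you carry the chain \emph{forward} through all the $a_j$, so whenever $k$ is even the very input you hand to $\CFt_{s+k}$ already satisfies
\[
  x_{s+k}\;=\;x_{s+2i-2}\oplus a_{s+2i-1}\oplus a_{s+2i+1}\oplus\cdots\oplus a_{s+k-1}\,,
\]
and the subversion algorithm therefore holds a linear equation in $a_{s+2i-1}$ before it makes a single query; deferring the \emph{table write} at position $s+2i-1$ does not undo that leak. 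Your intended conclusion---that $x_{s+2t}$ is uniform in $\CFt_{s+k}$'s view---is still true, but only because the separate randomness in $a_{s+2t+1},\ldots,a_{s+k-1}$ masks it, and those values sit at programmed table entries $(s+j,x_{s+j})$ whose secrecy your ``no intermediate hit'' bound must simultaneously guarantee. As written, the two paragraphs do not close this loop for even $k$.

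The paper avoids the difficulty with one extra move: rather than propagate forward to $x_{s+k}$, it draws the last few chain values (in the worked example $x_7,x_8,x_9$) fresh and uniform, and only \emph{after} $\CFt_{s+k}$ has finished does it program one $\CF$ entry backwards (there, $\CF_6(x_6):=x_5\oplus x_7$) to restore consistency. That makes $x_{s+k}$ genuinely independent of the deferred value, so the one-line ``hidden $\Rightarrow$ uniform'' step you wrote becomes valid without further conditioning. You can either adopt this decoupling, or replace your hiddenness claim by a direct argument that $x_{s+2t}$ is uniform conditioned on $x_{s+k}$ together with an inductive bound showing each intermediate $x_{s+j}$ remains uniform given the adversary's evolving view; both fixes work, but the plan as stated skips the piece that handles even $k$.
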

\begin{proof}
The proof of the lemma is similar to that of Lemma \ref{Lemma: querying odd point implies querying all the even points before}. Consider the example where $s=1$, $r=8$, $t=2$ and $k=8$. We want to show that for any chain $c=(1,x_1,\ldots,x_9)$, if $(5,x_5) \in Q_9(x_9)$, then with overwhelming probability, $(2i,x_{2i}) \in Q_9(x_9)$ for $i=1$.

Consider the following two ways of determining a length 9 unsubverted chain:
\begin{itemize}
\item  \textbf{Procedure 1}:
\begin{enumerate}
\item Pick an arbitrary moment in $G_5$ and abort the game. Denote the table $\cM^3.\CF$ at this moment by $T_{\text{initial}}$. Pick a length 2 increasing chain $(1,x_1,x_2)$ in $T_{\text{initial}}$ such that it is not a subchain of a length 3 unsubverted chain.
\item For $2 \leq i \leq 8$, select $\CF_i(x_i)$ uniformly, set $x_{i+1}:=\CF_i(x_i) \oplus x_{i-1}$ and abort the procedure if $(i+1,x_{i+1})$ is already in the table $T_{\text{initial}}$.
\item Evaluate $\CFt_9(x_9)$. 
\end{enumerate}
\item  \textbf{Procedure 2}:
\begin{enumerate}
\item Pick an arbitrary moment in $G_5$ and abort the game. Denote the table $\cM^3.\CF$ at this moment by $T_{\text{initial}}$. Pick a length 2 increasing chain $(1,x_1,x_2)$ in $T_{\text{initial}}$ such that it is not a subchain of a length 3 unsubverted chain.
\item Select 3 uniform $n$-bit strings $a_3$, $a_4$ and $a_5$. Set $x_4:=a_3 \oplus x_2$, $x_6:=a_5 \oplus x_4$ and aborts the procedure if either of them is in $T_{\text{initial}}$. Set $\CF_4(x_4):=a_4$ and $\CF_6(x_6):=a_6$.
\item Select $x_7$, $x_8$ and $x_9$ uniformly and aborts the procedure if any one of them is in $T_{\text{initial}}$. Set $\CF_7(x_7):=x_6 \oplus x_8$ and $\CF_8(x_8):=x_7 \oplus x_9$.
\item Evaluate $\CFt_9(x_9)$.
\item  Select $\CF_2(x_2)$ uniformly(use the existing value if it has been evaluated), set $x_3:=\CF_2(x_2) \oplus x_1$, $x_5:=a_4 \oplus x_3$, $\CF_6(x_6):=x_7 \oplus x_5$, and aborts the procedure if either $x_3$ or $x_5$ is in $T_{\text{initial}}$.
\end{enumerate} 
\end{itemize}

A quick thought reveals that the above two procedures are equivalent
in terms of the distribution of the chain (and, furthermore, the
probability they abort is negligible because of Lemma~\ref{Lemma:add
  one point at a time}). We use the second procedure to analyze the
distribution of the first one. In the second procedure, we can see
that if $(2,x_2) \notin Q_9(x_9)$, then $\CF_2(x_2)$ is still uniform
conditioned on $Q_9(x_9)$, which implies that
$x_5=a_4 \oplus x_3=a_4 \oplus a_2 \oplus x_1$ is uniform. Therefore,
if $(2,x_2) \notin Q_9(x_9)$, $(5,x_5) \in Q_9(x_9)$ with negligible
probability.

The formal proof can be achieved by replacing the concrete numbers in the last example by more general parameters $s$, $r$, $t$ and $k$ and taking the union bound over the various values of these parameters.
\end{proof}

\begin{lemma}\label{Lemma:late does not query early}
If $G_5$ is efficient, then with overwhelming probability, for any unsubverted increasing chain $c=(s,x_s,\ldots,x_{s+r})$ in $\cM^3.\CF$ and any index $i,j$ with $7<i<j \leq r$, $(s+i,x_{s+i}) \notin Q_{s+j}(x_{s+j})$(if $\CFt_{s+j}(x_{s+j})$ is defined).
\end{lemma}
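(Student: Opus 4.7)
The plan is to follow the template of the proof of Lemma \ref{Lemma:early does not query late}, swapping the roles of odd- and even-indexed points and invoking Lemma \ref{Lemma: late points querying odd point implies querying all the early even points} in place of Lemma \ref{Lemma: querying odd point implies querying all the even points before}, before closing the argument with a contradiction to Lemma \ref{Lemma: No point covers length 8 unsubverted chain}.

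Suppose toward a contradiction that $(s+i, x_{s+i}) \in Q_{s+j}(x_{s+j})$ for some $7 < i < j \leq r$ on an unsubverted increasing chain. First I would reduce to the case that $i$ is even: by the same randomness argument used in Lemma \ref{Lemma:early does not query late}, with overwhelming probability one has $(s+i-1, x_{s+i-1}) \in Q_{s+j}(x_{s+j})$, because $\CF_{s+i-1}(x_{s+i-1})$—which is drawn from $\CF_{\text{Full}}$ before $\tilde{F}_{s+j}(x_{s+j})$ is ever simulated in an increasing chain, and which determines $x_{s+i} = x_{s+i-2} \oplus \CF_{s+i-1}(x_{s+i-1})$—is otherwise uniform and independent of the polynomial-size set $Q_{s+j}(x_{s+j})$, so the landing probability is at most $q_{\cA}/2^n$. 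Hence we may take $i$ to be even and at least $8$, and write $i = 2t$ with $t \geq 4$.

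Next I would invoke Lemma \ref{Lemma: late points querying odd point implies querying all the early even points} with parameter $t$ and $k = j$ (the hypothesis $0 < 2t < j \leq r$ is satisfied since $j > i$), obtaining $(s+2m-1, x_{s+2m-1}) \in Q_{s+j}(x_{s+j})$ for all $m = 1,2,3,4$. Finally I would consider the length-$8$ unsubverted subchain $c' = (s+1, x_{s+1}, x_{s+2}, \ldots, x_{s+8})$, which exists since $r \geq j \geq 9$. The four points $(s+1, x_{s+1}), (s+3, x_{s+3}), (s+5, x_{s+5}), (s+7, x_{s+7})$ sit at offsets $0, 2, 4, 6$ from the start of $c'$, so together with the term $(s+j, x_{s+j})$ they realize precisely the configuration forbidden by Lemma \ref{Lemma: No point covers length 8 unsubverted chain}—contradiction. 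A union bound over the polynomially many choices of $(s, i, j)$ (using the efficiency hypothesis) finishes the argument.

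The most delicate step will be the randomness reduction in the second paragraph: one must argue that conditioned on the event that $\CF_{s+i-1}(x_{s+i-1})$ is not explicitly queried during the evaluation of $\tilde{F}_{s+j}(x_{s+j})$, its value remains uniform and independent of the adaptively-constructed set $Q_{s+j}(x_{s+j})$. This decoupling is exactly the subtlety managed in Lemma \ref{Lemma:early does not query late}, and it leans on two features of our setup: the chain is \emph{increasing} (so $\CF_{s+i-1}$ is filled in from $\CF_{\text{Full}}$ strictly before the subversion algorithm for $\tilde{F}_{s+j}$ is run), and the global-sampling viewpoint from Section \ref{Subsec: preparation} allows us to treat $\CF_{s+i-1}(x_{s+i-1})$ as a fresh uniform value whose exposure to the simulated computation is mediated only by direct queries.
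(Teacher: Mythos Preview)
Your high-level plan---invoke Lemma~\ref{Lemma: late points querying odd point implies querying all the early even points} and close with a contradiction to Lemma~\ref{Lemma: No point covers length 8 unsubverted chain}---is exactly the paper's one-line argument, and your treatment of the even-$i$ case is correct. The gap is in your parity reduction for odd $i$.

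You claim that the ``same randomness argument'' from Lemma~\ref{Lemma:early does not query late} yields $(s+i-1,x_{s+i-1})\in Q_{s+j}(x_{s+j})$ whenever $(s+i,x_{s+i})\in Q_{s+j}(x_{s+j})$. But that earlier argument worked because the target set was $Q_s(x_s)$, with $x_s$ the \emph{first} chain element: resampling $\CF_{s+j-1}(x_{s+j-1})$ moves $x_{s+j}$ while leaving $x_s$---and hence $Q_s(x_s)$---fixed, so the landing probability is a clean $q_{\cA}/2^n$. Here the situation is reversed. The target set $Q_{s+j}(x_{s+j})$ depends on the \emph{later} chain element $x_{s+j}$, and in an unsubverted chain $x_{s+j}$ is a function of $\CF_{s+i-1}(x_{s+i-1})$ through the recursion $x_{m+1}=x_{m-1}\oplus\CF_m(x_m)$. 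Resampling $\CF_{s+i-1}(x_{s+i-1})$ therefore moves $x_{s+i}$, $x_{s+j}$, and $Q_{s+j}(x_{s+j})$ simultaneously, so your independence claim fails. The temporal fact that $\CF_{s+i-1}$ is drawn before $\CFt_{s+j}$ is simulated does not help: the dependency is through the \emph{input} $x_{s+j}$, not through the oracle queries. This is precisely why Lemma~\ref{Lemma: late points querying odd point implies querying all the early even points} required its Procedure-2 decoupling (fixing the late chain element uniformly and programming intermediate $\CF$ values) rather than a one-line freshness argument.

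The clean fix is to drop the reduction entirely. For odd $i\ge 9$, shift the base of the chain by one: on the increasing subchain $(s+1,x_{s+1},\ldots,x_{s+r})$ the offset of $x_{s+i}$ is $i-1$, which is even and $\ge 8$, and the offset of $x_{s+j}$ is $j-1>i-1$. Apply Lemma~\ref{Lemma: late points querying odd point implies querying all the early even points} to this shifted chain with $2t=i-1$ and $k=j-1$ to obtain $(s+2,x_{s+2}),(s+4,x_{s+4}),(s+6,x_{s+6}),(s+8,x_{s+8})\in Q_{s+j}(x_{s+j})$, and contradict Lemma~\ref{Lemma: No point covers length 8 unsubverted chain} on the length-$8$ subchain starting at $s+2$. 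This is what the paper's ``derived directly'' has in mind: each parity is handled by the appropriate shift, with no separate randomness step needed.
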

\begin{proof}
The lemma is derived directly from Lemma \ref{Lemma: No point covers length 8 unsubverted chain} and Lemma \ref{Lemma: late points querying odd point implies querying all the early even points}.
\end{proof}

\begin{lemma}\label{Lemma:once honest, honest forever}
If $G_5$ is efficient, then with overwhelming probability, there does not exist an unsubverted increasing chain $c=(i,x_i,\ldots,x_{i+8})$ in $\cM^3.\CF$ such that $\CFt_{i+8}(x_{i+8})$ is defined in $\cM^3.\CF$ and $(i+8,x_{i+8})$ is dishonest.
\end{lemma}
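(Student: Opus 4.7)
My plan is to decompose the bad event according to whether the subversion's evaluation of $\tilde{\CF}_{i+8}(x_{i+8})$ queries the chain position $(i+7,x_{i+7})$, and then to exploit the fact that, along an increasing chain, $\CF_{i+7}(x_{i+7})$ is drawn uniformly just before the chain's last position is determined. Writing $v := \CF_{i+7}(x_{i+7})$, the increasing hypothesis ensures $x_{i+8} = x_{i+6}\oplus v$, where $x_{i+6}$ is fixed prior to drawing $v$, so treating $v$ as fresh uniform randomness makes $x_{i+8}$ uniform.

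The first step eliminates the case $(i+7,x_{i+7}) \in Q_{i+8}(x_{i+8})$. Applying Lemma \ref{Lemma: querying odd point implies querying all the even points before} with $s=i$, $2t+1=7$, $2k=8$ forces $(i+2,x_{i+2}),(i+4,x_{i+4}),(i+6,x_{i+6}) \in Q_{i+8}(x_{i+8})$. Feeding $(i+6,x_{i+6})$ into Lemma \ref{Lemma: late points querying odd point implies querying all the early even points} (with $2t=6$, $k=8$) then yields $(i+1,x_{i+1}),(i+3,x_{i+3}),(i+5,x_{i+5}) \in Q_{i+8}(x_{i+8})$. But the length-$8$ unsubverted subchain $(i+1,x_{i+1},\ldots,x_{i+8})$ now has all four of its even-offset positions contained in $Q_{i+8}(x_{i+8})$, which Lemma \ref{Lemma: No point covers length 8 unsubverted chain} rules out with overwhelming probability.

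In the remaining case $(i+7,x_{i+7}) \notin Q_{i+8}(x_{i+8})$, I condition on every entry of $\cM^3.\CF$ except $v$; under this conditioning $v$ is uniform and the skeleton $(x_i,\ldots,x_{i+7})$, together with $x_{i+6}$, is frozen. Let $S \subseteq \{0,1\}^n$ be the set of inputs $x$ on which the subversion's adaptive query trace for $\tilde{\CF}_{i+8}(x)$ never reads $\CF_{i+7}(x_{i+7})$; for such $x$ the trace and its output are entirely fixed by the conditioning, as is the comparison value $\CF_{i+8}(x)$. Thus $D := \{x \in S : \tilde{\CF}_{i+8}(x) \neq \CF_{i+8}(x)\}$ is a fixed subset of the global dishonest set and satisfies $|D| \le \epsilon \cdot 2^n$ by the subversion hypothesis. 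Because $x_{i+8}(v) = x_{i+6}\oplus v$ is uniform over $v$, the probability that $x_{i+8}(v) \in D$ is at most $\epsilon$; if $x_{i+8}(v) \notin S$ then we are back in the first case and invoke its bound.

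A union bound then finishes the proof: by the efficiency hypothesis, the number of length-$9$ increasing unsubverted chains appearing in $\cM^3.\CF$ is polynomial in $n$ with overwhelming probability, so summing the per-chain bounds yields $\poly(n)\cdot\epsilon + \negl(n) = \negl(n)$. The subtlest point, and the one I expect to require the most care, is the invariance of the subversion's query trace under changes in $v$ once we know the trace avoids $(i+7,x_{i+7})$: the adaptive queries branch on $\CF$-answers, but every answer consulted along the trace lies outside the single uncertain entry $\CF_{i+7}(x_{i+7})$, so the entire branching decision tree is frozen by the conditioning alone, and both $\tilde{\CF}_{i+8}(x)$ and $\CF_{i+8}(x)$ are legitimately independent of $v$ for all $x \in S$.
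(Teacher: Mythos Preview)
Your decomposition matches the paper's: both split on whether $(i+7,x_{i+7}) \in Q_{i+8}(x_{i+8})$, bound that event negligibly, and then argue that in the complementary case $x_{i+8}=x_{i+6}\oplus v$ is uniform and lands in a dishonest set of density $\le\epsilon$. Your handling of the first case is more explicit than the paper's (which simply names the bound $\delta$ and attributes it to the chain-structure lemmas), and your second-case bound is actually cleaner: you avoid the paper's $\sqrt{\delta}$ Markov step by observing directly that $\Pr[\text{dishonest}]\le\Pr[x_{i+8}\notin S]+\Pr[x_{i+8}\in D]\le\delta+\epsilon$, rather than splitting on whether the preimage set $\{x:(i+7,x_{i+7})\in Q_{i+8}(x)\}$ is large.

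The one point where your write-up is looser than the paper is the conditioning step. ``Condition on every entry of $\cM^3.\CF$ except $v$'' is not well-posed as stated: which entry $v=\CF_{i+7}(x_{i+7})$ you are excluding is itself determined by the game's adaptive execution, so you cannot simply fix $\CF_{\text{full}}\setminus\{v\}$ and treat $v$ as fresh without further justification (a naive union bound over all $8n\cdot 2^n$ positions is too expensive). The paper handles this by introducing an explicit experiment \textsc{Exp*} in which an arbitrary distinguisher freezes the game at a chosen moment, picks a length-2 increasing prefix, extends it to length~8, then fills in \emph{all} remaining table entries, and only then draws $v$ last. This decouples the identity of $(i+7,x_{i+7})$ from $v$ and makes your argument go through verbatim; the reduction from $G_5$ to \textsc{Exp*} (and the union bound over the polynomially many moments/prefixes) is where the efficiency hypothesis does its work. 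Your final paragraph gestures at this, but to make the proof complete you need that experiment (or an equivalent stopping-time formulation) rather than conditioning on a set of random variables whose identity is itself random.
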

\begin{proof}
We say the distinguisher $\cD$ wins $G_5$ if it is able to find an unsubverted increasing chain $c$ in $\cM^3.\CF$ that satisfies the property in the lemma. By Lemma \ref{Lemma:early does not query late}, the probability that there is a length-9 unsubverted increasing chain $c=(i,x_i,\ldots,x_{i+8})$ with $(i+7,x_{i+7}) \in Q_{i+8}(x_{i+8})$ ($\CFt_{i+8}(x_{i+8})$) is negligible. We denote this negligible probability by $\delta$. 

To show the probability that $\cD$ wins is negligible, consider the following experiment with a distinguisher $\cD^{*}$:
\begin{mdframed}
\begin{center}
  \textsc{Exp*}
\end{center}
\begin{enumerate}
\item $\cD^{*}$ takes an arbitrary moment of $G_5$, stops the game and selects an arbitrary length-2 increasing chain $(i,x_i,x_{i+1})$ in $\cM^3.\CF$ such that $\CF_{i+2}(x_{i+2})$ is not evaluated for $x_{i+1}:=x_i \oplus \CF_i(x_i)$.
\item Then, $\cD^{*}$ extends $(i,x_i,x_{i+1})$ to $(i,x_i,\ldots,x_{i+7})$ by iteratively evaluating $\CF_{j-1}(x_{j-1})$ (selected uniformly) and $x_j:=x_{j-2} \oplus \CF_{j-1}(x_{j-1})$ for $i+3 \leq j \leq i+7$. The experiment aborts if $\CF_{i+7}(x_{i+7})$ is already evaluated.
\item For any term $(j,y)$, if $\CF_j(y)$ is still unevaluated and $(j,y) \neq (i+7,x_{i+7})$, $\cD^{*}$ selects $\CF_j(y)$ uniformly. 
\item Finally $\cD^{*}$ selects $\CF_{i+7}(x_{i+7})$ and check if $(i+8,x_{i+8})$ is dishonest for $x_{i+8}:=x_{i+6} \oplus \CF_{i+7}(x_{i+7})$. 
\item $\cD^{*}$ wins \textbf{Exp*} if the experiment does not abort in Step 2 and $(i+8,x_{i+8})$ is dishonest.
\end{enumerate}
\end{mdframed}

To prove $\cD$ wins $G_5$ negligibly, it is sufficient to show the probability that the experiment aborts in Step 2 or $\cD^{*}$ wins is negligible. We also stress that although \textbf{Exp*} is not $G_5$, the lemmas we proved in this section can still be applied because all the $\CF$ values here are also selected uniformly and independently.

\vspace{-4mm}
\begin{align*}
 & \mkern20mu  \Pr_{\textbf{Exp*}}[\text{The experiment aborts in Step 2 or $\cD^{*}$ wins.}]\\
 & \leq \Pr_{\textbf{Exp*}}[\text{The experiment aborts in Step 2.}]+\Pr_{\textbf{Exp*}}\left[\parbox{4cm}{$\cD^{*}$ wins and there are at least $\sqrt{\delta}2^n$ $n$-bit strings $x$ such that $({i+7},x_{i+7}) \in Q_{i+8}(x)$.}\right] \\
 & \mkern230mu +\Pr_{\textbf{Exp*}}\left[\parbox{6cm}{$\cD^{*}$ wins and there are fewer than $\sqrt{\delta}2^n$ $n$-bit strings $x$ such that $({i+7},x_{i+7}) \in Q_{i+8}(x)$.}\right]\\
 & < \negl(n)+\Pr_{\textbf{Exp*}} \left [\parbox{9cm}{$\cD^{*}$ wins and there are at least $\sqrt{\delta}2^n$ $n$-bit strings $x$ such that $({i+7},x_{i+7}) \in Q_{i+8}(x)$.} \right ]\\
 & \mkern100mu +\Pr_{\textbf{Exp*}}\left[\text{$\cD^{*}$ wins.} \;\middle|\; \parbox{7cm}{There are fewer than $\sqrt{\delta}2^n$ $n$-bit strings $x$ such that $({i+7},x_{i+7}) \in Q_{i+8}(x)$.}\right]\\
 & < \negl(n)+\sqrt{\delta}+\Pr_{\textbf{Exp*}}\left[\parbox{3cm}{$(i+8,x_{i+8})$ is dishonest and $({i+7},x_{i+7}) \in Q_{i+8}(x_{i+8})$.} \;\middle|\; \parbox{5cm}{There are fewer than $\sqrt{\delta}2^n$ $n$-bit strings $x$ such that $({i+7},x_{i+7}) \in Q_{i+8}(x)$.}\right]\\
 & \mkern150mu +\Pr_{\textbf{Exp*}}\left[\parbox{3cm}{$(i+8,x_{i+8})$ is dishonest and $({i+7},x_{i+7}) \notin Q_{i+8}(x_{i+8})$.} \;\middle|\; \parbox{4cm}{There are fewer than $\sqrt{\delta}2^n$ $n$-bit strings $x$ such that $({i+7},x_{i+7}) \in Q_{i+8}(x)$.}\right]\\
 & < \negl(n)+\sqrt{\delta}+\sqrt{\delta}+\epsilon = \negl(n).  \qed
\end{align*} 

\end{proof}

Theorem \ref{Th: nice properties of increasing chain}] follows from the combination of Lemma \ref{Lemma:early does not query late}, Lemma \ref{Lemma:late does not query early} and Lemma \ref{Lemma:once honest, honest forever}.

\begin{definition}[Quasi-honest]
In $G_5$, for any $i=9,\ldots,8n$ and $x \in \{0,1\}^n$, we say $(i,x)$ is \emph{quasi-honest} if there is an increasing chain $(s,x_s,\ldots,x_{s+r})$ ($r \geq 8$) in $\cM^3.\CF$ such that $(s+r,x_{s+r})=(i,x)$.
\end{definition}

Now we turn our attention to the dishonest terms on a subverted chain. We want to show that although, in general, there are some dishonest terms on a subverted chain, all of them gather in a small area.

\begin{definition}[Bad region]
For a subverted chain $c=(s,x_s,\ldots,x_{s+r})$ in $\cM^3.\CF$, we say a subchain $(i,x_i,\ldots,x_j)$ ($s \leq i<j \leq s+r$) of $c$ is a \emph{bad region} of $c$ if there is no sequence of 14 consecutive elements $(k,x_k,\ldots,x_{k+13})$ ($i \leq k \leq j-13$) that are honest.

For a subverted chain $c=(s,x_s,\ldots,x_{s+r})$ in $\cM^3.\CF$, we say two bad regions of $c$, $(i,x_i,\ldots,x_j)$ and $(i',x_{i'},\ldots,x_{j'})$ ($i<i',j<j'$) are separated if the subchain $(i,x_i,\ldots,x_j')$ of $c$ is not a bad region of $c$.
\end{definition}

\begin{lemma}\label{Lemma:bad region is short}
In $G_5$, with overwhelming probability, there does not exist a subverted chain $(s,x_s,\ldots,x_{s+r})$ in $\cM^3.\CF$ such that it has a bad region with length greater than $n/6$.
\end{lemma}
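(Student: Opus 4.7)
\medskip
\noindent\textbf{Proof plan.} The plan is to proceed by contradiction and bound the probability of a long bad region via a concentration-style argument exploiting the uniform independent structure of $\cM^3.\CF$. Suppose toward contradiction that some subverted chain $c=(s,x_s,\ldots,x_{s+r})$ in $\cM^3.\CF$ contains a bad region $(i,x_i,\ldots,x_j)$ of length $L=j-i+1>n/6$. By the definition of a bad region, every 14 consecutive indices in $\{i,\ldots,j\}$ contain at least one dishonest position, so there must exist at least $m \geq \lceil L/14 \rceil > n/84$ dishonest positions $d_1<\cdots<d_m$ inside the region.

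The key step is to show that each $(d_k,x_{d_k})$ being dishonest is an ``$\epsilon$-rare event conditional on the past.'' Adopting the pre-sampling view from Section \ref{Subsec: preparation}, every $\CF_i(x)$ placed in $\cM^3.\CF$ is an independent uniform sample from a pre-fixed table $\CF_{\text{Full}}$. Because consecutive dishonest indices satisfy $d_k-d_{k-1}\leq 14$, positions $d_{k-1}+1,\ldots,d_k-1$ are all honest, and the Feistel relation there reads with $\CF$ (not $\CFt$); in particular $x_{d_k} = x_{d_k-2}\oplus \CF_{d_k-1}(x_{d_k-1})$. Mimicking the argument of Lemma \ref{Lemma:once honest, honest forever}, I case-split on whether $(d_k-1,x_{d_k-1})\in Q_{d_k}(x_{d_k})$: if not, then $\CF_{d_k-1}(x_{d_k-1})$ is a fresh uniform value independent of the subversion algorithm's internal computation that defines $B_{d_k}$, so $x_{d_k}$ is uniform and $\Pr[x_{d_k}\in B_{d_k}]\leq \epsilon$; the ``in $Q$'' case is itself negligible by Lemma \ref{Lemma:early does not query late} applied to the honest sub-chain preceding $d_k$, plus a union bound analogous to the counting step in the proof of Lemma \ref{Lemma:once honest, honest forever}.

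Iterating this bound conditionally across the $m$ dishonest positions and applying a Chernoff-type union bound $\sum_{m' \geq m}\binom{L}{m'}\epsilon^{m'}\leq (14e\epsilon)^{n/84}$ gives the probability of having at least $n/84$ dishonest positions in any one fixed bad region. A further union bound over the polynomially many starting triples $(s,x_s,x_{s+1})$ that appear during the efficient game and over the $O(n^2)$ possible $(i,j)$ locations of the bad region absorbs only a $\poly(n)$ factor. Since $\epsilon=\negl(n)$ implies $\log(1/\epsilon)=\omega(\log n)$, we obtain $\poly(n)\cdot(14e\epsilon)^{n/84}=\negl(n)$, contradicting the existence of the bad region.

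The main obstacle will be the careful handling of dependencies: the subversion algorithm $\tilde F$ can adaptively query $\CF$ at many points along the chain, coupling the dishonesty events across different positions $d_k$, and the values $x_{d_k}$ are not independent since each is a deterministic function of earlier chain data. The remedy is precisely the case-analysis technique already used in Lemma \ref{Lemma:once honest, honest forever}, combined with Theorem \ref{Th: nice properties of increasing chain} and Lemma \ref{Lemma:add one point at a time} to guarantee enough freshness and structure on the short unsubverted sub-chains lying between consecutive dishonest indices so that a clean per-step factor of $\epsilon$ can be extracted.
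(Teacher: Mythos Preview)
Your approach has a genuine gap, and it differs fundamentally from the paper's argument in a way that matters.

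The paper does \emph{not} use the randomness of the $\CF$-values at all for this lemma. Instead it proves a stronger, unconditional statement: it first fixes the entire table of $F$-values (so that every dishonest set $B_j=\{y:\tilde F_j(y)\neq F_j(y)\}$ is now a fixed set of density $\le\epsilon$) and then uses only the randomness of the public affine maps $(a_j,b_j)$. For a fixed starting pair $(x_i,x_{i+1})$, the chain value $x_j$ is determined by $(a_{j'},b_{j'})$ for $j'<j$, while $(a_j,b_j)$ is still fresh; hence $a_jx_j\oplus b_j$ is uniform and lands in $B_j$ with probability at most $\epsilon$, independently across rounds. This yields the clean product $(14\epsilon)^{n/84-1}$, which beats the $8n\cdot 2^{2n}$ union bound over all starting indices and pairs. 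Note the lemma, as stated and proved, does not assume efficiency of $G_5$.

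Your route via $\CF$-randomness breaks exactly where dishonest positions cluster. You write $x_{d_k}=x_{d_k-2}\oplus \CF_{d_k-1}(x_{d_k-1})$, but this identity holds only when position $d_k-1$ is honest. The bad-region definition allows consecutive dishonest positions ($d_k=d_{k-1}+1$), in which case $x_{d_k}=x_{d_k-2}\oplus \CFt_{d_{k-1}}(x_{d_{k-1}})$ with $\CFt\neq\CF$ there; the value $\CFt_{d_{k-1}}(x_{d_{k-1}})$ is adversarially determined from $F$ and gives you no fresh uniform randomness at all. More generally, the honest stretch between $d_{k-1}$ and $d_k$ has length at most $13$ and can be arbitrarily short, so you cannot invoke Lemma~\ref{Lemma:early does not query late} (which needs index gap $>8$ along an unsubverted increasing chain) or the length-$9$ structure underlying Lemma~\ref{Lemma:once honest, honest forever}. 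Even in the favorable case, your case-split on $(d_k-1,x_{d_k-1})\in Q_{d_k}(x_{d_k})$ only controls queries made by $\CFt_{d_k}$ at the single input $x_{d_k}$; it does not make the dishonest set of $\CFt_{d_k}$ independent of $\CF_{d_k-1}(x_{d_k-1})$, since that set is a function of the whole $F$-table.

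The fix is exactly the paper's move: exploit the per-round randomness of $(a_j,b_j)$, which is independent of $F$ and hence of the subversion, so that dishonesty at round $j$ is an $\epsilon$-event regardless of what happened at neighboring (possibly dishonest) rounds.
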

\begin{proof}
Consider proving the following stronger statement: with overwhelming probability over the uniform choice of $(a_i,b_i)$ ($i=1,\ldots,8n$) and values of $F_i(x)$ for all $i=1,\ldots,8n$ and $x \in \{0,1\}^n$, there is no bad region with length greater than $n/6$. Imagine we select $F_i(x)$ for all $i=\{1,\ldots,8n\}$ and $x \in \{0,1\}^n$ and leave $a_i$ and $b_i$ undetermined. Then, over the randomness of the choice of $a_i$ and $b_i$, we have
\begin{align*}
  & \mkern20mu \Pr[\text{There is a subverted chain $c$ with a bad region longer than $n/6$.}] \\
  & = \sum_{i=1}^{8n}\Pr \left [\parbox{10cm}{There is a subverted chain $c$ with a bad region longer than $n/6$ and the bad region begins at index $i$.} \right ] \\
  & = \sum_{i=1}^{8n}\sum_{x,x' \in \{0,1\}^n}\Pr\left[\parbox{9cm}{There is a subverted chain $c$ with a bad region longer than $n/6$. The bad region begins at index $i$ and its first two elements are $(i,x)$ and $(i+1,x')$.}\right]\\
  & < \sum_{i=1}^{8n}\sum_{x,x' \in \{0,1\}^n}\Pr\left[\parbox{9cm}{There is a subverted chain $c=(i,x,x',\ldots,x_r)$ such that its first element has index $i$ and $r-i >n/6$. Moreover, for any length 14 subchain of $c$ in the form of $(14k,x_{14k},\ldots,x_{14k+13})$, at least one of 14 elements is dishonest.}\right]\\
  & < \sum_{i=1}^{8n}(2^n)^2 \cdot (14\epsilon)^{n/84-1} = 8n \cdot 2^{2n} \cdot (14\epsilon)^{n/84-1} = \negl(n). 
\end{align*}\qedhere
\end{proof}

\begin{lemma}\label{Lemma:there is no separated bad region}
If $G_5$ is efficient, then with overwhelming probability, there is no subverted chain $c=(s,x_s,\ldots,x_{s+r})$ in $\cM^3.\CF$ that has two separated bad regions. 
\end{lemma}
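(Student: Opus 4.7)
The plan is to derive a contradiction from the existence of two separated bad regions in a subverted chain $c = (s, x_s, \ldots, x_{s+r})$ of $\cM^3.\CF$. I first unpack the definition: by separation, the union subchain $(i, x_i, \ldots, x_{j'})$ contains a window of $14$ consecutive honest elements $(k, x_k, \ldots, x_{k+13})$, while neither individual bad region does. Combined with $k \geq i$ and $k+13 \leq j'$, this forces $k+13 > j$ (so the honest window extends past the first region) and $k < i'$ (so it starts before the second). In the substantive case where each bad region harbors at least one dishonest element (which is the only case the lemma non-trivially rules out), one obtains dishonest indices $a \leq k-1$ and $b \geq k+14$; choose $a^{\star}$ to be the largest such $a$ and $b^{\star}$ the smallest such $b$, so that every $x_i$ with $a^{\star} < i < b^{\star}$ is honest and $b^{\star} - a^{\star} + 1 \geq 16$.

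Next, I will argue that $(a^{\star}, x_{a^{\star}}, x_{a^{\star}+1}, \ldots, x_{b^{\star}})$ is an unsubverted chain in $\cM^3.\CF$ of length at least $16$ with both endpoints dishonest. For each interior index $i$ with $a^{\star} < i < b^{\star}$, honesty of $x_i$ gives $\CFt_i(x_i) = \CF_i(x_i)$, so the subverted-chain relation $x_{i+1} = x_{i-1} \oplus \CFt_i(x_i)$ (which holds because $c$ is a subverted chain) coincides with the unsubverted relation $x_{i+1} = x_{i-1} \oplus \CF_i(x_i)$. Applying Lemma~\ref{Lemma:add one point at a time}, with overwhelming probability this unsubverted chain is either monotone increasing, monotone decreasing, or a concatenation of a monotone decreasing prefix of length $p$ and a monotone increasing suffix of length $q$ sharing a single valley element. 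In every case $p + q \geq 17$, so at least one of the two monotone parts has length at least $9$.

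Finally, each case yields a contradiction. If the chain contains a monotone increasing subchain of length $\geq 9$ ending at $x_{b^{\star}}$ (the whole chain when it is monotone increasing, or the increasing suffix when $q \geq 9$), then since $x_{b^{\star}}$ is dishonest this directly contradicts Lemma~\ref{Lemma:once honest, honest forever}. If instead it contains a monotone decreasing subchain of length $\geq 9$ whose Feistel-first element is the dishonest $x_{a^{\star}}$, I will invoke the symmetric analog of Lemma~\ref{Lemma:once honest, honest forever} for decreasing chains. The main obstacle is this last invocation: Lemma~\ref{Lemma:once honest, honest forever} is stated only for increasing chains, so I need to observe that its proof, together with the supporting Lemmas~\ref{Lemma:early does not query late}, \ref{Lemma: querying odd point implies querying all the even points before}, and~\ref{Lemma:late does not query early}, is entirely symmetric under reversal of the Feistel index: the chain relation $x_{i+1} = x_{i-1} \oplus \CF_i(x_i)$ is symmetric in its two neighbors, and the $O$-ordering arguments used in those lemmas do not distinguish between the earliest and latest Feistel index. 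The same statement thus holds for decreasing chains with the extremal $O$-endpoint located at the Feistel-first position, which gives the desired contradiction.
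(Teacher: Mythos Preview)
Your proof is correct and takes essentially the same approach as the paper, whose proof simply cites Lemmas~\ref{Lemma:add one point at a time} and~\ref{Lemma:once honest, honest forever} without spelling out the details. You have correctly unpacked how those two ingredients combine---extracting an unsubverted subchain of length $\geq 16$ between two dishonest endpoints, decomposing it into monotone pieces via Lemma~\ref{Lemma:add one point at a time}, and then applying Lemma~\ref{Lemma:once honest, honest forever} (or its decreasing analog, which the paper explicitly licenses by its remark that the increasing-chain analysis transfers to decreasing chains)---and your restriction to bad regions containing at least one dishonest element matches the paper's implicit intent and downstream usage.
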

\begin{proof}
The lemma is implied directly by Lemma \ref{Lemma:add one point at a time} and Lemma \ref{Lemma:once honest, honest forever}.
\end{proof}

\subsection{Bounding Bad Events}\label{section:security}

Now we proceed to show the security: assuming $G_5$ is efficient, the probability that $G_5$ aborts is negligible.

There are two bad events that can cause $G_5$ to abort: $\BadComplete_5$ and $\BadEval_5$. For some technical reasons, we divide $\BadComplete_5$ into three smaller bad events:
\begin{itemize}
\item $\cM.\BadComplete_5$: $\cM.\BadComplete_5$ happens when $G_5$ aborts during the execution of $\cM^3.\text{Complete}$.
\item $\cS.\NewBadComplete_5$: $\cS.\NewBadComplete_5$ happens when $G_5$ aborts during the execution of $\cS^3.\text{Complete}$ on a chain $(i,x_i,x_{i+1},u)$ and
  \[
    (i,x_i,x_{i+1}) \notin \cM^3.\text{CompletedChains}\,,
  \]
  which means the chain $(i,x_i,x_{i+1})$ has not been completed by
  $\cM^3$.
\item $\cS.\ExistingBadComplete_5$: $\cS.\ExistingBadComplete_5$ happens when $G_5$ aborts during the execution of $\cS^3.\text{Complete}$ on a chain $(i,x_i,x_{i+1},u)$ and
  \[
    (i,x_i,x_{i+1}) \in \cM^3.\text{CompletedChains}\,,
  \]
  which means $\cS^3$ aborts when it is completing an existing chain in $\cM^3.\CF$. By definition of $\cS^3.\text{Complete}$, this bad event can happen only if $u=7n$.
\end{itemize}

For any bad event $A$ (e.g., $\cM.\BadComplete_5$, $\BadEval_5$, etc.) and a positive integer $k \leq q_{\cD}$, we denote by $A[k]$ the event that bad event $A$ causes $G_5$ to abort before the end of the $k$-th round of the interaction between $\cD$ and the simulators. We also denote the table $\cM^3.\CF$ ($\cS^3.\CF$) at the end of the $k$-th interaction of $G_5$ by $\cM^3.\CF[k]$ ($\cS^3.\CF[k]$).

\begin{definition}[Completed chains]
For any subverted chain $c=(i,x_i,\ldots,x_j)$ in $\cM^3.\CF$, we say it is $\cM.\text{Completed}$ if $(i,x_i,x_{i+1}) \in \cM^3.\text{CompletedChains}$. We denote the set of full $\cM.\text{Completed}$ chains by $C_{\cM.\text{FComp}}$.

The concepts of $\cS.\text{Completed}$ chains, $C_{\cS.\text{FComp}}$ are defined similarly. 

For any set $E$ we mentioned above (e.g., $\cM.\text{CompletedChains}$, $\cS.\text{CompletedChains}$, $C_{\cM.\text{FComp}}$, etc.) and a positive integer $k \leq q_{\cD}$, we denote by $E[k]$ the set $E$ at the end of $k$-th round of the game.
\end{definition}

To understand the probability of the bad events, we introduce the following property of the status of the game:

\begin{definition}[Good Status]
For any $0 <k \leq q_{\cD}$, we say $G_5$ has a \emph{good} status at round $k$ (denoted by $\GoodStatus[k]$) if at the end of the $k$-th round of the game:
\begin{itemize}
\item The game does not abort with any bad events.
\item For any increasing chain $(i,x_i,x_{i+1},x_{i+2})$ in $\cM^3.\CF$, if $(i+1,x_{i+1}) \notin \cS^3.\CF$, then
\begin{enumerate}
\item $(i+2,x_{i+2}) \notin \cS^3.\CF$, and
\item for any $c \in C_{\cM.\text{FComp}}[k]$ such that $(i+2,x_{i+2}) \in Q_c$, $(i+1,x_{i+1}) \in Q_c$.
\end{enumerate}
\item For any decreasing chain $(x_i,x_{i+1},x_{i+2})$ in $\cM^3.\CF$, if $(i+1,x_{i+1}) \notin \cS^3.\CF$, then
\begin{enumerate}
\item $(i,x_{i}) \notin \cS^3.\CF$, and
\item for any $c \in C_{\cM.\text{FComp}}[k]$ such that $(i,x_{i}) \in Q_c$, $(i+1,x_{i+1}) \in Q_c$.
\end{enumerate}
\end{itemize}
\end{definition}

We will use an induction proof to show the following three theorems:

\begin{theorem}[$\BadComplete_5$ is negligible.]\label{Th: complete}
If $G_5$ is efficient and for some positive integer $T<q_{\cD}$,
\[
  \Pr[\text{$\GoodStatus[k]$ does not happen}]= \negl(n),
\]
then
\[
  \Pr[\text{$\BadComplete_5[k+1]$ happens}]= \negl(n).
\]
\end{theorem}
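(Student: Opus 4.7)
The plan is to decompose $\BadComplete_5[k+1]$ via the union bound
\[
  \BadComplete_5[k+1] \subseteq \cM.\BadComplete_5[k+1] \cup \cS.\NewBadComplete_5[k+1] \cup \cS.\ExistingBadComplete_5[k+1]
\]
and bound each summand separately, conditioning throughout on $\GoodStatus[k]$ (whose complement contributes only $\negl(n)$ by hypothesis). The workhorse across all three is the uniform-sampling model of Section~\ref{Subsec: preparation}: every $\CF$ value written into $\cM^3.\CF$ is an independent uniform $n$-bit string, so any chain freshly extended by $\cM^3.\text{Complete}$ or $\cS^3.\text{Complete}$ during round $k+1$ is made of fresh randomness once its first two points are fixed. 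Combined with the structural results of the same section -- in particular Theorem~\ref{Th: nice properties of increasing chain}, Lemma~\ref{Lemma:add one point at a time}, Lemma~\ref{Lemma:bad region is short}, and Lemma~\ref{Lemma:once honest, honest forever} -- this will reduce each abort condition to a collision-style estimate of the form ``a uniformly fresh point lies in a polynomial-size, adversarially-known set.''

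For $\cM.\BadComplete_5[k+1]$, round $k+1$ is a new query $RF(x_0,x_1)$ (the inverse case is symmetric). Procedure $\cM^3.\text{Complete}$ extends an increasing chain from $x_1$ through $x_{8n+1}$ via fresh $\CF$ evaluations, so each control position $i \in [3n,5n]$ lies at the tail of a length-$\geq 8$ increasing chain of fresh randomness. Lemma~\ref{Lemma:once honest, honest forever} then gives honesty except with $\negl(n)$; uniformity of $x_i$ bounds both $\Pr[(i,x_i)\in\cM^3.\CF[k]]$ and $\Pr[(i,x_i)\in\bigcup_{j\neq i} Q_j(x_j)]$ by $\poly(n)/2^n$ via union bounds over the polynomial-size table and the cumulative subversion query set. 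For $\cS.\NewBadComplete_5[k+1]$, round $k+1$ activates $\cS^3.\text{Complete}$ on a newly detected trigger $(s,x_s,\ldots,x_{s+n/10-1})$. The convention $u\in\{4n,7n\}$ is designed precisely so that the control pair $(u,x_u),(u+1,x_{u+1})$ sits at distance $\geq n/10 \gg 8$ from the trigger; hence the extension from the trigger out to position $u$ passes through many fresh $\CF$ evaluations and the same three-part analysis applies. Here $\GoodStatus[k]$ is invoked to identify the relevant fragment of $\cS^3.\CF$ with $\cM^3.\CF$, so that the uniformity of the latter transfers to the former.

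For $\cS.\ExistingBadComplete_5[k+1]$, the event requires simultaneously $u=7n$ (equivalently, the trigger overlaps the middle band $[3n,5n]$) and the trigger triple $(s,x_s,x_{s+1})$ already belonging to $\cM^3.\text{CompletedChains}[k]$. Every chain in $\cM^3.\text{CompletedChains}[k]$ was produced by some earlier $RF$-query completion whose middle segment consists of independent uniform strings that were, at the moment of completion, concealed from $\cD$. For $\cD$'s $\CF$-queries during the first $k+1$ rounds to synthesize a length-$n/10$ unsubverted chain whose consecutive pairs all occur inside such a stored chain, $\cD$ must at some point hit a concealed middle value, which costs $2^{-n}$ per uniform middle value. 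A union bound over the $\poly(n)$ completed chains, the $\poly(n)$ middle positions per chain, and $q_{\cD}$ query rounds then yields $\poly(n)/2^n=\negl(n)$.

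The main obstacle is the third case. Cases one and two reduce almost immediately to ``a fresh uniform point lands in a polynomial-size set,'' but the third requires ruling out any clever interleaving in which $\cD$'s $\CF$-queries reach into the middle window of an already-completed chain, possibly by exploiting the $\GoodStatus[k]$ linkage between $\cS^3.\CF$ and $\cM^3.\CF$. The careful argument must enumerate every route by which a concealed middle-range value could become visible to $\cD$ -- via a direct $\CF$ query, via some $(j,y)\in Q_j(x_j)$ recorded during an earlier $\CFt$ evaluation, or via chain propagation from a point already in $\cS^3.\CF$ -- and verify that each such exposure carries an independent $2^{-n}$ penalty before assembling the final union bound.
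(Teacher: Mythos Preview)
Your three-way decomposition of $\BadComplete_5[k+1]$ matches the paper's, and your sketches for $\cM.\BadComplete_5$ and $\cS.\NewBadComplete_5$ point at the right tools. Be aware, though, that case 2 is considerably more delicate than ``the extension passes through many fresh $\CF$ evaluations'': the extension may land in values already sitting in $\cM^3.\CF[k]$ or in $T_{\text{initial}}$ from earlier completions in the same round. The paper handles this with two auxiliary results---Lemma~\ref{Lemma: all k+1 completed chains have a long subchain in k} and Lemma~\ref{Lemma:long chain in k is already completed} (the latter is where $\GoodStatus[k]$ is actually used)---before the increasing-chain machinery can be applied to guarantee freshness and honesty at $u,u{+}1$.

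The genuine gap is in your treatment of $\cS.\ExistingBadComplete_5[k+1]$. You argue directly that the middle-segment values of an $\cM^3$-completed chain are ``concealed from $\cD$'' and so cost $2^{-n}$ to hit. But they are \emph{not} concealed: $\cD$ receives $RF(x_0,x_1)=(x_{8n},x_{8n+1})$, which is a deterministic function of every $\CF$ value along the chain, middle segment included. Your enumeration of exposure routes omits this channel, and the $2^{-n}$ estimate is unjustified once the answer is in $\cD$'s hands. The paper does not attempt a direct bound here at all. Instead it observes that for the trigger to lie in $\cS^3.\CF$ with $u=7n$, its middle-range points must have been inserted into $\cS^3.\CF$ via $\cS^3.\CF^{\text{Inner}}$ \emph{after} they were placed in $\cM^3.\text{MiddlePoints}$---and that is precisely the event $\BadEval_5[k+1]$ (Lemma~\ref{Lemma:SExistingBadComplete is negl}). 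All the work of neutralising the $RF$-answer leakage is then pushed into Theorem~\ref{Th: eval}, whose proof uses a swap argument (the \textsc{Exp-EvalRight}/\textsc{Exp-EvalLeft} experiments) to show that the answer gives $\cD$ no advantage in hitting middle points. Your proposal never mentions $\BadEval_5$, so it is missing both the reduction and the mechanism that makes the reduction sound.
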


\begin{theorem}[$\BadEval_5$ is negligible.]\label{Th: eval}
If $G_5$ is efficient and for some positive integer $k<q_{\cD}$,
\[
  \Pr[\text{$\GoodStatus[k]$ does not happen}]= \negl(n),
\]
then
\[
  \Pr[\text{$\BadEval_5[k+1]$ happens}]= \negl(n).
\]
\end{theorem}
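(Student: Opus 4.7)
The plan is to establish the statement by a case analysis on the origin of the offending evaluation call during round $k+1$. Recall that $\BadEval_5$ can fire only when some call to $\cS^3.\CF^{\text{Inner}}(i,x)$ or $\cM^3.\CF^{\text{Inner}}(i,x)$ encounters a pair $(i,x)$ with $3n\le i\le 5n$, $x\in\cM^3.\CF_i\setminus\cS^3.\CF_i$, and $(i,x)\in\cM^3.\text{MiddlePoints}$. Conditioning on $\GoodStatus[k]$ (which by hypothesis fails with only negligible probability), it suffices to bound the probability of a fresh trigger in round $k+1$. The three possible origins are: (a) a direct $\cS^3.\CF$ query issued by $\cD$; (b) an internal $\cM^3.\CF$ evaluation inside $\cM^3.\text{Complete}$ launched by a fresh $RF$ query; and (c) an internal $\cS^3.\CF$ evaluation inside $\cS^3.\text{Complete}$ launched by a newly detected length-$n/10$ unsubverted chain.

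For case (b), the new $\cM^3$-chain $c=(1,x_1,\ldots,x_{8n})$ being built is fresh, and its on-the-fly $\CFt_j$-queries evaluate $\CF$ on values drawn independently from the hidden $\CF_{\text{Full}}$ table. A middle-point collision would force some $(i,x')\in Q_c$ to coincide with a middle-range element of a previously completed full chain $c'\in C_{\cM.\text{FComp}}[k]$, with $i\in[3n,5n]$. Using Lemma~\ref{Lemma:add one point at a time} to decompose the relevant portion of $c'$ into monotone parts and Theorem~\ref{Th: nice properties of increasing chain} to guarantee a long quasi-honest window around index $i$, I would show that $c$ and $c'$ must share a substantial honest subchain, which is excluded with overwhelming probability by Lemma~\ref{Lemma: No subverted chain covers a long honest chain}.

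For cases (a) and (c), the abort requires $\cS^3.\CF^{\text{Inner}}$ to be invoked on a middle point $(i,x)$ of some hidden full chain $c\in C_{\cM.\text{FComp}}[k]$ that is not yet in $\cS^3.\CF$. The $\GoodStatus[k]$ invariant is designed precisely to block this: it forces absence from $\cS^3.\CF$ to propagate along any monotone chain in $\cM^3.\CF$, and simultaneously ties $Q_c$-membership to this propagation. So in order to supply the value $x$ to $\cS^3.\CF^{\text{Inner}}$, either $\cD$ (case (a)) or the subverted-chain extension performed inside $\cS^3.\text{Complete}$ (case (c)) must produce $x$ \emph{without} having walked the hidden $\cM^3$-chain through $\cS^3.\CF$. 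Since every $\cM^3.\CF$ value is drawn independently and uniformly via $\CF_{\text{Full}}$ (Section~\ref{Subsec: preparation}), any such ``cold'' hit on an $n$-bit middle coordinate occurs with probability at most $2^{-n}$, and a union bound over the polynomially many round-$k+1$ queries afforded by the assumed efficiency of $G_5$ yields negligibility.

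The main obstacle is case (c): inside $\cS^3.\text{Complete}$, the simulator itself performs a long sequence of $\CFt$ evaluations while extending the new chain to the left and right of the detection point, and an internal query could in principle graze the middle region of some hidden $\cM^3$-chain $c$. Controlling this will require combining the quasi-honesty structure of Theorem~\ref{Th: nice properties of increasing chain}, the ``no-point-covers-a-length-$8$-chain'' bound of Lemma~\ref{Lemma: No point covers length 8 unsubverted chain}, and the ``no-subverted-chain-covers-a-long-honest-chain'' bound of Lemma~\ref{Lemma: No subverted chain covers a long honest chain}, to argue that a newly completed $\cS^3$-chain and a hidden $\cM^3$-chain $c$ cannot share any substantive honest overlap without violating $\GoodStatus[k]$; the remaining short-window overlaps will then be controlled by the independence of fresh $\CF$ values drawn from $\CF_{\text{Full}}$.
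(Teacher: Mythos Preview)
Your case analysis misses the central difficulty. The ``cold hit'' bound in cases (a) and (c)---that hitting a hidden middle coordinate $x_i$ of some $c\in C_{\cM.\text{FComp}}$ costs $2^{-n}$---is not justified, because the distinguisher is \emph{not} independent of $x_i$. When $\cD$ issued $RF(x_0,x_1)$ in an earlier round, it received $(x_{8n},x_{8n+1})$, and this answer is a deterministic function (through $\CF_{\text{Full}}$) of the entire hidden chain, including every middle coordinate. All of $\cD$'s subsequent queries, and hence every call to $\cS^3.\CF^{\text{Inner}}$ in round $k{+}1$, are functions of that leaked information. The $\GoodStatus[k]$ invariant constrains how $\cS^3.\CF$ sits inside $\cM^3.\CF$, but it does not by itself remove this correlation, and you give no argument that it does.

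The paper handles exactly this leakage, and its mechanism is quite different from yours. It reformulates $\BadEval_5[k{+}1]$ as an experiment $\textbf{Exp-Eval}[k{+}1]$ indexed by the round $a{+}1$ in which the offending chain $c$ was created, and then runs a two-step reduction. First, it shows that a distinguisher who does \emph{not} receive the answer to the round-$(a{+}1)$ $RF$ query cannot win: by re-ordering so that the $RF(x_0,x_1)$ query is postponed to round $k{+}1$, any win becomes an instance of $\cM.\BadComplete_5[k{+}1]$, which is already negligible under $\GoodStatus[k]$ (Lemma~\ref{Lemma:MBadComplete is negl}). Second, it shows that withholding the answer costs nothing: a coupling argument (Lemmas~\ref{Lemma:EvalNoAnswer implies EvalRight} and its left-side analogue) replaces the true answer $\alpha$ by an independent $\beta$ and argues, via a shared $\CF_{\text{full}}$ table and a sliding adaptation index $m$, that the two executions diverge only if one of them already touched a middle point. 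This coupling is the substantive idea that neutralizes the information leak, and it has no counterpart in your proposal.
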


\begin{theorem}\label{Th: goodstatus}
If $G_5$ is efficient and for some positive integer $k<q_{\cD}$,
\[
  \Pr[\text{$\GoodStatus[k]$ does not happen}]= \negl(n),
\]
then
\[
  \Pr[\text{$\GoodStatus[k+1]$ does not happen}]= \negl(n).
\]
\end{theorem}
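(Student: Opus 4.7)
The plan is to establish $\GoodStatus[k+1]$ by induction and a case analysis on the nature of $\cD$'s query in round $k+1$, conditioning throughout on the inductive hypothesis that $\GoodStatus[k]$ holds with overwhelming probability. The no-abort clause follows by a direct union bound from Theorems~\ref{Th: complete} and~\ref{Th: eval}: under the inductive hypothesis both $\BadComplete_5[k+1]$ and $\BadEval_5[k+1]$ have negligible probability, so round $k+1$ does not abort except on a negligible event.

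For the structural clauses concerning increasing (and symmetrically, decreasing) chains, I would distinguish the two possible kinds of interaction in round $k+1$. If $\cD$ queries $RF$, then $\cM^3.\text{Complete}$ is invoked, adding exactly one new full subverted chain to $C_{\cM.\text{FComp}}$ and at most $8n$ fresh entries to $\cM^3.\CF$, while $\cS^3.\CF$ is unchanged; the $\cS^3.\CF$-membership condition (1) then holds trivially from $\GoodStatus[k]$, and (2) only needs to be verified for the single newly completed chain $c$. If $\cD$ queries $\cS^3.\CF$, then $\cS^3.\CF^{\text{Inner}}$ runs and may recursively trigger several invocations of $\cS^3.\text{Complete}$, adding entries to both tables; but no new full chain is added to $C_{\cM.\text{FComp}}$, so (2) is inherited unchanged from $\GoodStatus[k]$ and the substantive work concentrates on (1).

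To discharge condition (2) in the $RF$-query case, I would show that if $(i+2,x_{i+2}) \in Q_c$ for the newly completed $c$, then the responsible evaluation of some $\CFt_j(x_j)$ on $c$ that queries $(i+2,x_{i+2})$ must, by Theorem~\ref{Th: nice properties of increasing chain}, also query $(i+1,x_{i+1})$, because the increasing-chain properties (items 1 and 2 of that theorem, which constrain how elements of an increasing chain can query one another) forbid the alternative. To discharge condition (1) in the $\CF$-query case, I would track each insertion into $\cS^3.\CF$ made by $\cS^3.\CF^{\text{Inner}}$ and by the cascade of $\cS^3.\text{Complete}$ calls: the copying rule from $\cM^3.\CF$ to $\cS^3.\CF$ and the adaptation rule together imply that if a newly formed length-three increasing chain $(i,x_i,x_{i+1},x_{i+2})$ has $(i+1,x_{i+1}) \notin \cS^3.\CF[k+1]$, then $(i+2,x_{i+2})$ being in $\cS^3.\CF[k+1]$ would force either a direct contradiction with the ordering of insertions or the occurrence of $\BadComplete_5[k+1]$ or $\BadEval_5[k+1]$.

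The main obstacle is the recursive $\cS^3.\text{Complete}$ cascade in the $\CF$-query case, where many new terms are inserted simultaneously and can interact nontrivially with previously recorded $\cM$-side completed chains: a freshly adapted $\cS$-side term might sit adjacent to an older $\cM$-only term in a configuration that threatens condition (1). Ruling this out requires combining Theorem~\ref{Th: nice properties of increasing chain} with the chain-containment bound of Lemma~\ref{Lemma: No subverted chain covers a long honest chain} and the localization given by Lemmas~\ref{Lemma:bad region is short} and~\ref{Lemma:there is no separated bad region}, which together force any relevant interaction to live inside a short honest region where the ordering of queries is tightly controlled; I expect that this controlled ordering is what ultimately converts each potential violation into either a contradiction with $\GoodStatus[k]$ or into one of the negligible bad events handled by Theorems~\ref{Th: complete} and~\ref{Th: eval}.
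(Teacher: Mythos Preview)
Your case split on the type of round-$(k{+}1)$ query is a reasonable starting point, and several of your observations are correct (e.g., in a pure $RF$-query round $\cS^3.\CF$ is unchanged, so condition~(1) cannot be freshly violated there). But the proposal has a genuine gap in the hard case, and the tool you reach for---Theorem~\ref{Th: nice properties of increasing chain}---does not apply in the way you suggest.

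Consider the following scenario, which your plan does not rule out. At some earlier round $a{+}1\le k$, $\cD$ queries $RF$, causing $\cM^3$ to complete a chain $c$ and insert a term $(i{+}1,x_{i+1})$ into $\cM^3.\CF$ (as part of $Q_c$) that is never placed in $\cS^3.\CF$. In round $k{+}1$ (a $\CF$-query), $\cD$ or the cascade evaluates $\cS^3.\CF_{i+2}(x_{i+2})$ for the specific value $x_{i+2}=x_i\oplus \CF_{i+1}(x_{i+1})$, creating an increasing triple $(i,x_i,x_{i+1},x_{i+2})$ with $(i{+}1,x_{i+1})\notin\cS^3.\CF[k{+}1]$ but $(i{+}2,x_{i+2})\in\cS^3.\CF[k{+}1]$, violating condition~(1). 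Nothing in Theorem~\ref{Th: nice properties of increasing chain} forbids this: that theorem constrains how elements \emph{of one increasing chain} query \emph{each other}; it says nothing about how $Q_c$ for an unrelated completed chain $c$ intersects a short increasing triple, nor does it force ``querying $(i{+}2,x_{i+2})$ implies querying $(i{+}1,x_{i+1})$.'' Your appeal to the copying/adaptation rules also does not close this: $(i{+}1,x_{i+1})$ need not be a middle point, so no $\BadEval_5$ is triggered, and no insertion-order contradiction arises because $(i{+}2,x_{i+2})$ genuinely is evaluated last.

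The missing idea is how to control the \emph{adaptivity} that the $RF$ answer at round $a{+}1$ gives $\cD$: that answer is the only handle $\cD$ has on the hidden value $\CF_{i+1}(x_{i+1})$, and hence on the target $x_{i+2}$. The paper's proof addresses exactly this. It reformulates the failure as an experiment $\textsc{Exp-Status}[k{+}1]$ in which the culprit $RF$-query at round $a{+}1$ is made explicit, then proves a \emph{no-answer} variant $\textsc{Exp-StatusNoAnswer}[k{+}1]$ is unwinnable by delaying the $RF$ query to the end ($\textsc{Exp-Monotone}$) and reducing to either Lemma~\ref{Lemma:add one point at a time} or the inductive $\GoodStatus[k]$. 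Finally, a coupling over a shared table $\CF_{\text{full}}$ with two different $RF$-answers---using that $\BadEval_5[k{+}1]$ is negligible, so neither execution touches the adapted positions---shows that receiving the answer gives no advantage. This two-step ``strip the answer, then show the answer is useless'' reduction is the device your plan lacks.
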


The structure of the proof is:
\\
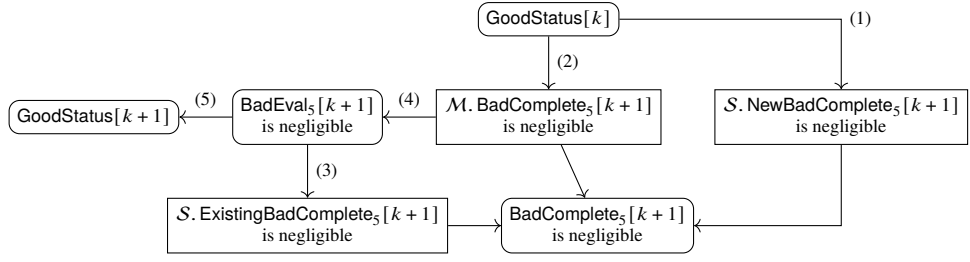
\begin{figure}
  \centering
\begin{tikzpicture}[node distance=20pt,font=\scriptsize]
  \node[draw, rounded corners]                        (Goodk)   {$\GoodStatus[k]$};
  \node[draw, align=center, below=of Goodk]             (Mcom)  {$\cM.\BadComplete_5[k+1]$\\ is negligible};
  \node[draw, align=center, right=of Mcom]     (Snewcom)  {$\cS.\NewBadComplete_5[k+1]$\\ is negligible};
  \node[draw, rounded corners, align=center, left= of Mcom]                   (Eval)  {$\BadEval_5[k+1]$\\ is negligible};
  \node[draw, rounded corners, left= of Eval]                   (Goodkone)  {$\GoodStatus[k+1]$};
  \node[draw, align=center, below= of Eval]  (Sexisting)     {$\cS.\ExistingBadComplete_5[k+1]$\\ is negligible};
  \node[draw, rounded corners, align=center, right=of Sexisting]                    (Com)  {$\BadComplete_5[k+1]$\\ is negligible};

  \draw[->] (Goodk) -- node[right]{(2)} (Mcom);
  \draw[->] (Mcom) -- (Com);
  \draw[->] (Mcom) -- node[above]{(4)} (Eval);
  \draw[->] (Eval) -- node[right]{(3)} (Sexisting);
  \draw[->] (Sexisting) -- (Com);
  \draw[->] (Eval) -- node[above]{(5)} (Goodkone);
  \draw[->] (Goodk) -- (Goodk-|Snewcom)  node[right]{(1)}-> (Snewcom);  
  \draw[->] (Snewcom) -- (Snewcom|-Com) -> (Com);   
\end{tikzpicture}
\caption{A diagram showing the flow of the proof.}
\end{figure}
\\

Each arrow in the diagram above corresponds to a lemma we use to show the three main theorems. In section \ref{subsub: Complete}, we will prove Theorem \ref{Th: complete} by showing lemma (1), (2), (3) are true and assuming lemma (4) is true. In section \ref{subsub: Eval} we will prove Theorem \ref{Th: eval} by proving lemma (4). In section \ref{subsub: Goodstatus}, we will proving Theorem \ref{Th: goodstatus} by lemma (5).  

Putting the three theorems together, we have:
\begin{theorem}\label{Th:security}
If $G_5$ is efficient, the probability that it aborts because of bad events is negligible.
\end{theorem}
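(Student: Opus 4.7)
The plan is to establish Theorem \ref{Th:security} by a straightforward induction on the round index $k$, using Theorems \ref{Th: complete}, \ref{Th: eval}, and \ref{Th: goodstatus} as the inductive engine. The inductive hypothesis will be the statement $\Pr[\GoodStatus[k] \text{ does not happen}] = \negl(n)$, which packages together both the absence of bad events up to round $k$ and the structural invariants relating $\cS^3.\CF$ to $\cM^3.\CF$ and $C_{\cM.\text{FComp}}[k]$. The base case $k=0$ is immediate since no procedures have been executed, so neither $\BadComplete_5$ nor $\BadEval_5$ has occurred and the invariants on increasing/decreasing chains in $\cM^3.\CF$ hold vacuously.

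For the inductive step, I would assume the hypothesis for round $k$ and apply Theorems \ref{Th: complete}, \ref{Th: eval}, and \ref{Th: goodstatus} in sequence (each of which takes the hypothesis at $k$ as its own premise) to conclude that $\Pr[\BadComplete_5[k+1]]$, $\Pr[\BadEval_5[k+1]]$, and $\Pr[\neg\GoodStatus[k+1]]$ are each negligible. Combined with the inductive hypothesis, this gives $\Pr[\GoodStatus[k+1] \text{ does not happen}] = \negl(n)$, closing the induction.

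To finish, I would use the hypothesis that the distinguisher $\cD$ makes at most $q_{\cD} = \poly(n)$ queries, so the induction runs for only polynomially many steps. By a union bound over the rounds, the total probability that any of the bad events $\BadComplete_5$ or $\BadEval_5$ causes $G_5$ to abort across the entire execution is at most $q_{\cD}(n) \cdot \negl(n) = \negl(n)$. Since $\BadComplete_5$ and $\BadEval_5$ are the only events that can trigger an abort in $G_5$ (as noted after the definition of these events), this yields the theorem.

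I do not anticipate any real obstacle in this particular proof, because the heavy lifting has already been done by Theorems \ref{Th: complete}, \ref{Th: eval}, and \ref{Th: goodstatus}: the current theorem is only the glue that assembles them into a single induction. The only care needed is to verify that the three inductive theorems can all be invoked under the same hypothesis $\Pr[\neg\GoodStatus[k]] = \negl(n)$ (which they can, since each is stated in exactly this form) and to track that the polynomial factor $q_{\cD}$ introduced by the union bound preserves negligibility, which it does for any polynomially bounded distinguisher as required by the definition of crooked indifferentiability.
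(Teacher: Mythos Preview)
Your proposal is correct and matches the paper's approach: the paper's proof is a one-line appeal to Theorems \ref{Th: complete}, \ref{Th: eval}, \ref{Th: goodstatus} together with the base case (the paper uses $k=1$ rather than your $k=0$, consistent with the range $0<k\le q_{\cD}$ in the definition of $\GoodStatus$, but this is immaterial). Your explicit mention of the union bound over the $q_{\cD}$ rounds is the natural way to make the paper's terse statement precise.
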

\begin{proof}
The theorem is derived from Theorem \ref{Th: complete}, Theorem \ref{Th: eval}, Theorem \ref{Th: goodstatus} and the fact that $\GoodStatus[k]$ is true for $k=1$.
\end{proof}

\subsubsection{$\BadComplete_5[k+1]$ is negligible}\label{subsub: Complete}
We first prove $\cS.\NewBadComplete_5[k+1]$ is negligible when $\GoodStatus[k]$ does not happen. Consider the following experiment \textbf{Exp-LongGeneration$[k]$} with a distinguisher $\cD_{\text{Long}}$:

\begin{mdframed}
\begin{center}
  \textsc{Exp-LongGeneration$[k]$}
\end{center}
For a positive integer $k <q_{\cD}$, if $G_5$ does not abort by the end of the $k$-th round of the game, $\cD_{\text{Long}}$ takes the table $\cM^3.\CF[k]$ and renames it by $T_1$. As the experiment goes, $\cD_{\text{Long}}$ will add new $\CF$ values to $T_1$. All the newly added $\CF$ values will be evaluated uniformly.
\begin{enumerate}
    \item Define $C_{\text{long}}$ to be the set of unsubverted chains in $T_1$ with length $n/10-22$. Define the set $S:=C_{\text{long}} \times \{1,2\}$. 
    \item $\cD_{\text{Long}}$ uniformly selects three distinct elements in $s_1,s_2,s_3 \in S$. Run the following For-loop:
    \begin{align*}
        & \mkern30mu \textbf{For } \text{$i=1,2,3$:}\\
        & \mkern30mu \textbf{If } \text{$s_i$ is in the form of $(c,1)$ for some $c=(r,x_r,\ldots,x_{r+n/10-23})$ } \textbf{then}\\
        & \mkern50mu \text{Select an integer $m$ uniformly from $(0,1,2,\ldots,22)$.}\\
        & \mkern50mu \text{Extend the chain $c$ to the right by $m$ terms and to the left by $22-m$ terms. }\\
        & \mkern50mu \text{Call the extended chain $c_i$. Evaluate $\CFt_j(x_j)$ for each $(j,x_j) \in c_i$.}\\
        & \mkern30mu \textbf{If } \text{$s_i$ is in the form of $(c,2)$ for some $c=(r,x_r,\ldots,x_{r+n/10-23})$ } \textbf{then}\\
        & \mkern50mu \text{Evaluate $\CFt_j(x_j)$ for $j=r,\ldots,r+n/10-23$.}\\
        & \mkern50mu \textbf{If } \text{$\CFt_i(x_i) \neq \CF_i(x_i)$ for some $r \leq i \leq r+n/10-23$ } \textbf{then}\\
        & \mkern70mu \text{The experiment aborts.}\\
        & \mkern50mu \textbf{Else}\\ 
        & \mkern70mu \text{Evaluate the full subverted chain that contains $c$ as a subchain and}\\
        & \mkern70mu \text{name the full chain by $c_i$.}
    \end{align*}
The new table after the For-loop is called $T_2$.
\end{enumerate}

$\cD_{\text{Long}}$ \emph{wins} if there is an increasing chain $c'=(i,x_i,x_{i+1},x_{i+2})$ in $T_2$ such that
    \begin{itemize}
      \item none of the elements of $c'$ is in $T_1$;
      \item $(i+2,x_{i+2}) \in Q_{c_3}$;
      \item $(i+1,x_{i+1}) \notin Q_{c_3}$. 
    \end{itemize}     

\end{mdframed}

\begin{lemma}\label{Lemma:query one point must query previous points}
If $G_5$ is efficient, 
\[
  \Pr[\text{$\cD_{\text{Long}}$ wins \textbf{Exp-LongGeneration$[k]$}}]= \negl(n).
\]
\end{lemma}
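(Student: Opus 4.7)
The plan is to exploit the fact that if $c'=(i,x_i,x_{i+1},x_{i+2})$ is an \emph{increasing} chain in $T_2$ whose three elements were freshly added during the experiment, then the $\CF$-values determining the third point $x_{i+2}$ are drawn independently of the queries generated when the experiment evaluated the subverted chain~$c_3$. A uniform-randomness argument on the value $\CF_{i+1}(x_{i+1})$ will then let us bound the winning probability by $|Q_{c_3}| \cdot |T_2|^{O(1)} / 2^n$, which is negligible because $G_5$ is efficient.

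First I would set up notation. Under the hypothesis that $G_5$ is efficient, with overwhelming probability the table $T_2$ produced by the experiment has only polynomially many entries, and $|Q_{c_j}| = \poly(n)$ for each $j=1,2,3$; call the relevant polynomial bound $P$. I would then union-bound over the choice of the first two coordinates $(i,x_i)$ and $(i+1,x_{i+1})$ that can serve as the beginning of a winning chain $c'$: there are at most $8n \cdot P^2$ such pairs. Fix one. Since $c'$ is increasing in the ordering of $T_2$, the value $\CF_{i+1}(x_{i+1})$ is sampled at a moment strictly after $\CF_i(x_i)$ has been defined but before $\CF_{i+2}(x_{i+2})$; in particular, immediately when it is sampled it determines
\[
 x_{i+2} \;=\; x_i \oplus \CF_{i+1}(x_{i+1})\,.
\]

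The crucial step will be a conditional-independence argument. I would condition on the entire history of the experiment \emph{except} for the value of $\CF_{i+1}(x_{i+1})$, in particular fixing the set $Q_{c_3}$ and the event $(i+1,x_{i+1})\notin Q_{c_3}$. Under this conditioning, $\CF_{i+1}(x_{i+1})$ never appears as an answer to any oracle query made during the evaluation of $\CFt$ on $c_3$, so $Q_{c_3}$ is a function of the remaining randomness only, and $\CF_{i+1}(x_{i+1})$ remains uniform on $\{0,1\}^n$. Consequently $x_{i+2}$ is uniform, and the probability that $(i+2,x_{i+2}) \in Q_{c_3}$ is at most $|Q_{c_3}|/2^n \leq P/2^n$. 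Combining with the outer union bound gives winning probability at most $8n \cdot P^3 / 2^n = \negl(n)$.

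The main obstacle I anticipate is making the ``freshness of $\CF_{i+1}(x_{i+1})$'' argument airtight. One has to verify that the value $\CF_{i+1}(x_{i+1})$ really is sampled fresh (as opposed to being copied from $T_1$ or from some earlier point of the experiment): this is why the winning condition explicitly forbids $(i+1,x_{i+1})$ from lying in $T_1$, and it is why the conditioning must happen at the precise moment of sampling. I would also have to account for the fact that $\CF_{i+1}(x_{i+1})$ could itself be one of the $\CF$-queries made by $\CFt$ during the evaluation of $c_1$ or $c_2$; but this is harmless because neither $c_1$ nor $c_2$ appears in the winning condition, so conditioning on $(i+1,x_{i+1})\notin Q_{c_3}$ alone is enough to decouple $\CF_{i+1}(x_{i+1})$ from $Q_{c_3}$. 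Finally, to handle the edge case where $\CF_i(x_i)$ is itself only sampled later (after $\CF_{i+1}(x_{i+1})$), the increasing-chain hypothesis directly rules this out, which is exactly why the increasing-chain formalism introduced in Section~\ref{Subsec: preparation} is the right tool here.
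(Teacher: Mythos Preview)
Your approach is correct but genuinely different from the paper's. The paper does not argue via a direct freshness bound on $\CF_{i+1}(x_{i+1})$; instead it uses a reordering trick: since $s_1,s_2,s_3$ are drawn uniformly, one may imagine the \emph{same} $T_1$ and the same underlying $\CF$-values but with the processing order changed to $s_3,s_1,s_2$ (yielding the same $T_2$ but a different order function $O'$). In that reordered execution, $(i+2,x_{i+2})\in Q_{c_3}$ forces $\CF_{i+2}(x_{i+2})$ to be set before $\CF_{i+1}(x_{i+1})$ (because $(i+1,x_{i+1})\notin Q_{c_3}$ is now handled only in the later $s_1/s_2$ phase), while a short case split on whether $(i,x_i)\in Q_{c_3}$ shows $\CF_i(x_i)$ is still set before $\CF_{i+1}(x_{i+1})$; one obtains a length-$3$ chain whose middle point is evaluated last, contradicting Lemma~\ref{Lemma:add one point at a time} applied to the reordered run. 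What your route buys is self-containment: you only need the structural fact that $Q_{c_3}$ is a function of $\CF_{\text{full}}|_{Q_{c_3}}$, so the event ``$(i+1,x_{i+1})\notin Q_{c_3}$'' is decided without looking at $\CF_{i+1}(x_{i+1})$, and the hit probability is immediate. The paper's route buys economy---it recycles the already-proven monotone-chain lemma rather than redoing a randomness argument. One cleanup you will want: index the union bound by the \emph{times} $s<t$ at which the $s$-th and $t$-th table entries are created (at most $P^2$ pairs), not by elements of the random set $T_2$ directly; conditioning on all of $\CF_{\text{full}}$ except at the point $p_t$ then makes $p_s$, $p_t$, the event $p_t\notin Q_{c_3}$, and (on that event) the set $Q_{c_3}$ all deterministic, so the $|Q_{c_3}|/2^n$ bound goes through without circularity.
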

\begin{proof}
We first notice that, although \textbf{Exp-LongGeneration$[k]$} is not $G_5$, the lemmas (in Section~\ref{Subsec: preparation}) about the properties of monotone chains can still be applied since, in \textbf{Exp-LongGeneration$[k]$}, all the $\CF$ values are selected uniformly and independently.

Suppose \textbf{Exp-LongGeneration$[k]$} fails with non-negligible probability. Consider the following realization of \textbf{Exp-LongGeneration$[k]$}: with non-negligible probability in \textbf{Exp-LongGeneration$[k]$}, we can select $T_1$, $s_1$, $s_2$ and $s_3$ so that the probability in the lemma is non-negligible; conditioned on $T_1$, $s_1$, $s_2$ and $s_3$, with non-negligible probability over the choice of $T_2$, we can find an increasing chain $c'=(i,x_i,x_{i+1},x_{i+2})$ in $T_2$ that makes \textbf{Exp-LongGeneration$[k]$} fail. 

Consider the following realization of \textbf{Exp-LongGeneration$[k]$} which have the same non-negligible probability with the above one. First we select the same $T_1$. Then we select the same three elements of $S$ in the order $s_3$, $s_1$, $s_2$. Finally the same $T_2$ is selected. Now we want to find a contradiction by showing that $(i,x_i,x_{i+1})$ is increasing and $(i+1,x_{i+1},x_{i+2})$ is decreasing, which violates Lemma \ref{Lemma:add one point at a time}. Notice that in the last realization of \textbf{Exp-LongGeneration$[k]$}, $c'=(i,x_i,x_{i+1},x_{i+2})$ is increasing, $(i+2,x_{i+2}) \in Q_{c_3}$ and $(i+1,x_{i+1}) \notin Q_{c_3}$. It is clear that $\CF_{i+1}(x_{i+1})$ is evaluated after $\CF_{i+2}(x_{i+2})$. We want to show $\CF_{i+1}(x_{i+1})$ is also evaluated after $\CF_{i}(x_{i})$.
\begin{itemize}
\item Case 1: if $(i,x_{i}) \in Q_{c_3}$, $\CF_{i+1}(x_{i+1})$ is evaluated after $\CF_{i}(x_{i})$ since $(i+1,x_{i+1}) \notin Q_{c_3}$.
\item Case 2: if $(i,x_{i}) \notin Q_{c_3}$, $\CF_{i+1}(x_{i+1})$ is evaluated after $\CF_{i}(x_{i})$ since $c'$ is increasing.
\end{itemize}
\end{proof}

\begin{lemma}\label{Lemma: all k+1 completed chains have a long subchain in k}
If $G_5$ is efficient and does not abort by the end of $k$-th round of the game for some positive integer $k<q_{\cD}$, then with overwhelming probability, for any unsubverted chain $c=(s,x_s,\ldots,x_{s+n/10-1})$ that is processed by the procedure $\cS^3.\text{HonestyCheck}$ in the $(k+1)$-th round of $G_5$, there is a length $(n/10-22)$ subchain $c'=(i,x_i,\ldots,x_{i+n/10-23})$ of $c$ such that each element of $c'$ is in $\cM^3.\CF[k]$.
\end{lemma}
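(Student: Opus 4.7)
The plan is to trace which elements of $c$ were added in round $k+1$ and to show they form a cluster of at most $22$ consecutive indices anchored at $c$'s triggering endpoint. First I note that $\cS^3.\text{HonestyCheck}$ is invoked only when the round-$(k+1)$ distinguisher query targets $\cS^3.\CF$: an $RF$ query triggers only $\cM^3.\text{Complete}$, which does not modify $\cS^3.\CF$ and therefore cannot enqueue chains. Moreover, positions added to $\cM^3.\CF$ during round $k+1$ coincide exactly with fresh-random additions to $\cS^3.\CF$ via Case A of $\cS^3.\CF^{\text{Inner}}$, since Case B only copies already-present $\cM^3.\CF[k]$ values. The chain $c$ is enqueued at the moment its triggering endpoint is added, and the remaining $n/10-1$ positions lie in $\cS^3.\CF \subseteq \cM^3.\CF$ at that moment; I partition them into those already in $\cM^3.\CF[k]$ and those added earlier in round $k+1$.

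Each non-initial addition in round $k+1$ happens inside some $\cS^3.\text{Complete}$ call on a subverted chain $D_t$, so every new element of $c$ distinct from the triggering endpoint lies in $Q_{D_t}$ for some $t$. By $\cS^3.\text{Check}$, the chain $c$ is disjoint from every previously completed $D_t$, so Lemma~\ref{Lemma: No subverted chain covers a long honest chain} forbids ten consecutive positions of $c$ from all lying in a single $Q_{D_t}$. Using Lemma~\ref{Lemma:add one point at a time}, the chain $c$ decomposes into an increasing portion and a decreasing portion, so I may analyze each monotone portion separately.

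The key clustering argument applies Lemma~\ref{Lemma:query one point must query previous points}: on any length-$3$ increasing subchain of $c$ all of whose elements are new, membership in $Q_{D_t}$ propagates from the latest element back to the middle element. Iterating along a maximal increasing run of new positions whose latest member sits in $Q_{D_t}$, the entire run lies in $Q_{D_t}$, so Lemma~\ref{Lemma: No subverted chain covers a long honest chain} caps the run at length $9$; the same reasoning works in the decreasing direction via the symmetric counterpart. To exclude a ``stray'' new position separated from the triggering endpoint by a gap of $\cM^3.\CF[k]$-positions, I invoke $\GoodStatus[k]$: such a stray position would sit at the end of an increasing length-$3$ subchain whose middle element is old and whose latest element is not, violating the first bullet of $\GoodStatus[k]$ (or its decreasing variant). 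Combining the at-most-$9$-per-direction bound with a constant buffer for the adapted positions and the initial query yields a total cluster of size at most $22$ anchored at the triggering endpoint, giving the desired subchain $c'$.

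The main obstacle will be the gap-exclusion step, which requires juggling the monotonicity produced by Lemma~\ref{Lemma:add one point at a time}, the subset relation $\cS^3.\CF \subseteq \cM^3.\CF$ from Lemma~\ref{Lemma:S is subset of M}, and the precise clauses of $\GoodStatus[k]$; I anticipate a case split on whether the triggering endpoint is the left or right end of $c$, and on whether the putative gap lies in the increasing or decreasing portion, with Lemma~\ref{Lemma:query one point must query previous points} doing the heavy lifting in each case.
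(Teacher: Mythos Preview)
Your core idea—propagate $Q_{c^*}$-membership backward along an increasing run of new positions via Lemma~\ref{Lemma:query one point must query previous points}, then apply Lemma~\ref{Lemma: No subverted chain covers a long honest chain} together with the disjointness enforced by $\cS^3.\text{Check}$—matches the paper. But three concrete points need repair.

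First, you claim every non-initial addition in round $k+1$ happens inside some $\cS^3.\text{Complete}$ call. This is false: $\cS^3.\text{HonestyCheck}$ also evaluates $\CFt$ values (through $\cS^3.\CFt^{\text{Inner}}$) and hence inserts fresh entries into $\cS^3.\CF$ and $\cM^3.\CF$. The paper accordingly takes $c^*\in S_1\cup S_2$ where $S_1$ is the set of chains already HonestyChecked in round $k+1$ and $S_2$ the chains completed in round $k+1$; the disjointness check in $\cS^3.\text{Check}$ covers both sets.

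Second, the ``gap-exclusion'' step is a red herring. Once you invoke Lemma~\ref{Lemma:add one point at a time}, $c$ has a single valley, and the round-$(k+1)$ elements (those with the largest $O_{\cM^3}$ values in $c$) automatically form a contiguous prefix plus a contiguous suffix—there cannot be a ``stray'' new position flanked by old ones. So $\GoodStatus[k]$ is unnecessary, and it is not among the lemma's hypotheses. Your proposed application also does not type-check: $\GoodStatus[k]$ speaks about membership in $\cS^3.\CF$ at the end of round $k$, but at that moment the ``stray'' element is not even in $\cM^3.\CF$, so no length-$3$ chain through it exists yet.

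Third—and this is the structural point the paper actually relies on—Lemma~\ref{Lemma:query one point must query previous points} is stated for \textsc{Exp-LongGeneration}, whose chains $c_1,c_2,c_3$ are each anchored in a length-$(n/10-22)$ subchain already present in $T_1=\cM^3.\CF[k]$. To apply it to the chain $c^*$ that generated your new element, you must know that $c^*$ itself has such an anchor. The paper secures this by \emph{induction on the sequence of chains processed by $\cS^3.\text{HonestyCheck}$ in round $k+1$}: the inductive hypothesis says every earlier HonestyChecked chain (hence every chain in $S_1\cup S_2$) already has a length-$(n/10-22)$ subchain in $\cM^3.\CF[k]$, which is exactly what puts $c^*$ into the shape required by \textsc{Exp-LongGeneration}. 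Your direct-bound argument omits this induction, so the invocation of Lemma~\ref{Lemma:query one point must query previous points} is not yet justified. The fix is to phrase the proof as the paper does: assume the conclusion for all earlier HonestyChecked chains, suppose it fails for $c$, observe that then one end of $c$ carries at least $11$ consecutive new (and monotone) elements, and derive the contradiction with $\cS^3.\text{Check}$ exactly along the lines you sketched.
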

\begin{proof}
Suppose that all the chains that are processed by $\cS^3.\text{HonestyCheck}$ before $c=(s,x_s,\ldots,x_{s+n/10-1})$ have the properties in the lemma (i.e., has a length $(n/10-22)$ subchain that is included in $\cM^3.\CF[k]$). We want to show that either $c$ will not be processed by $\cS^3.\text{HonestyCheck}$ or $c$ has a length $(n/10-22)$ subchain in $\cM^3.\CF[k]$.

If $c$ does not have a length $(n/10-22)$ subchain in $\cM^3.\CF[k]$, then, without loss of generality, we assume the chain $(s+n/10-11,x_{s+n/10-11},\ldots,x_{s+n/10-1})$ is increasing and none of the elements in the chain is in $\cM^3.\CF[k]$.

Denote by $t$ the moment when $\cS^3$ is about to run the procedure $\cS^3.\text{Check}$ on $c$. We will focus on the status of the game at moment $t$. Define $S_1$ to be the set of the chains that have been processed at $\cS^3.\text{HonestyCheck}$ at $t$ in the $k+1$-th round of $G_5$. Define $S_2:=C_{\cS.\text{FComp}}/C_{\cS.\text{FComp}}[k]$, where $C_{\cS.\text{FComp}}$ is the set of full $\cS.\text{Completed}$ chains at $t$. Since $\cS^3.\text{HonestyCheck}$ and $\cS^3.\text{Complete}$ are the only two procedures that add new $\CF$ values, there exists an element $c^{*} \in S_1 \cup S_2$ such that $(s+n/10-1,x_{s+n/10-1}) \in Q_{c^{*}}$. By Lemma \ref{Lemma:query one point must query previous points}, $(i,x_i) \in Q_{c^{*}}$ for $i=s+n/10-10,\ldots,s+n/10-1$. Then, because of Lemma \ref{Lemma: No subverted chain covers a long honest chain}, $c^{*}$ and $(s+n/10-10,x_{s+n/10-10},\ldots,x_{s+n/10-1})$ are not disjoint. Finally, by definition of $\cS^3.\text{Check}$, $c$ will not be  processed by $\cS^3.\text{HonestyCheck}$.
\end{proof}

Next we prove a lemma that describes the property of $\cM^3.\CF[k]$. 

\begin{lemma}\label{Lemma:long chain in k is already completed}
If $G_5$ is efficient and $\GoodStatus[k]$ happens for some positive integer $k<q_{\cD}$, then for any length $3n/10+30$ increasing chain $c=(s,x_s,\ldots,x_{s+3n/10+29})$ in $\cM^3.\CF[k]$, $(s+3n/10+28,x_{s+3n/10+28},x_{s+3n/10+29})$ is in $\cM^3.\text{CompletedChains}[k]$ or $\cS^3.\text{CompletedChains}[k]$.  
\end{lemma}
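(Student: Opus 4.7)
My plan is to case-split on whether the terminal element $(s+3n/10+29, x_{s+3n/10+29})$ of the chain lies in $\cS^3.\CF[k]$. This dichotomy is natural because every $\CF$ value in $\cM^3.\CF$ was inserted either through $\cS^3.\CF^{\text{Inner}}$, which populates both tables, or through $\cM^3.\CF^{\text{Inner}}$, which is called only from inside $\cM^3.\text{Complete}$.

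In the case that $(s+3n/10+29, x_{s+3n/10+29}) \in \cS^3.\CF[k]$, I will apply clause (1) of the increasing-chain condition in $\GoodStatus[k]$ repeatedly along the length-3 sub-triples of $c$. Its contrapositive, $(i+2,x_{i+2}) \in \cS^3.\CF \Rightarrow (i+1,x_{i+1}) \in \cS^3.\CF$, propagates backwards and forces the entire suffix $(s+1,x_{s+1}),\ldots,(s+3n/10+29,x_{s+3n/10+29})$ to lie inside $\cS^3.\CF[k]$. When the rightmost element of this suffix was added to $\cS^3.\CF$, the simulator detected a length-$n/10$ unsubverted chain ending there and enqueued it. If that chain was processed by $\cS^3.\text{HonestyCheck}$, Lemma \ref{Lemma:once honest, honest forever} makes every element of $c$ of index $\geq s+8$ honest, so the subsequent $\cS^3.\text{Complete}$ call extends along $c$ using $\CFt = \CF$ and produces a full subverted chain whose pair at position $s+3n/10+28$ equals $(x_{s+3n/10+28}, x_{s+3n/10+29})$, which is then placed in $\cS^3.\text{CompletedChains}[k]$ (no abort is possible since $\GoodStatus[k]$ holds). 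If instead that chain was discarded by $\cS^3.\text{Check}$, it must share two adjacent elements with some previously processed chain or previously completed full chain; the shared pair then lies at an index $\geq s+8$, and Lemma \ref{Lemma:once honest, honest forever} again forces that other chain to agree with $c$ from the shared position through the right end, so its associated completed full chain contains the target pair.

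In the case that $(s+3n/10+29, x_{s+3n/10+29}) \notin \cS^3.\CF[k]$, the value must have been inserted into $\cM^3.\CF$ by some run of $\cM^3.\text{Complete}$ on a main subverted chain $c_f = (1,y_1,\ldots,y_{8n})$. In the aligned sub-case where $y_{s+3n/10+29} = x_{s+3n/10+29}$, Lemma \ref{Lemma:bad region is short} together with Lemma \ref{Lemma:there is no separated bad region} bounds the dishonest portion of $c_f$ by a single region of length less than $n/6$, so the long honest neighbourhood of $c_f$ around position $s+3n/10+29$ agrees with $c$ via Lemma \ref{Lemma:once honest, honest forever}, forcing $y_{s+3n/10+28} = x_{s+3n/10+28}$ and placing the target pair in $\cM^3.\text{CompletedChains}[k]$. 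In the side-effect sub-case where the insertion came only as a subversion query during the evaluation of some $\CFt_j(y_j)$ with $j \neq s+3n/10+29$, I will use Lemma \ref{Lemma: No subverted chain covers a long honest chain} to rule out the possibility that the length-10 honest stretch of $c$ ending at the terminal position lies inside $Q_{c_f}$ while being disjoint from $c_f$. The forced non-disjointness, combined with clause (2) of the increasing-chain condition in $\GoodStatus[k]$, which transfers $Q_{c_f}$-membership along $c$, aligns the two chains through the target pair.

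The hard part will be the side-effect sub-case of Case 2: it requires combining Lemma \ref{Lemma: No subverted chain covers a long honest chain} with the $Q$-tracking clauses of $\GoodStatus[k]$ to force $c_f$ to contain $(s+3n/10+28, x_{s+3n/10+28}, x_{s+3n/10+29})$ as adjacent elements of the main subverted chain rather than as scattered outputs of subversion queries. The length budget $3n/10+30$ is also tight, since it must simultaneously absorb the adaptation offset $u$ up to $7n$ inside $\cS^3.\text{Complete}$, the $22$-term slack of Lemma \ref{Lemma: all k+1 completed chains have a long subchain in k}, and the $8$-term honesty warm-up from Lemma \ref{Lemma:once honest, honest forever}. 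I expect this bookkeeping to be the most fiddly part of the argument.
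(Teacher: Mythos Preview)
Your high-level dichotomy (split on $\cS^3.\CF$ membership) is the same as the paper's, but you pivot at the wrong position, and this breaks both branches.

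The paper pivots at position $s+3n/10+10$, not at the terminal $s+3n/10+29$. The $20$-position buffer this leaves is essential for Case~2. From ``$(s+3n/10+10,x_{s+3n/10+10}) \notin \cS^3.\CF[k]$'', clause~(1) of $\GoodStatus[k]$ propagates \emph{forward} to force every position $s+3n/10+10,\ldots,s+3n/10+29$ out of $\cS^3.\CF[k]$; only then can clause~(2) be invoked at each of these to pull $Q_{c'}$-membership backward from the terminal and fill the length-$11$ stretch $c_2=(s+3n/10+19,\ldots,s+3n/10+29)$, after which Lemma~\ref{Lemma: No subverted chain covers a long honest chain} forces $c_2$ and $c'$ to share an adjacent pair. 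Your pivot at the terminal gives only the single fact $(s+3n/10+29)\notin\cS^3.\CF$; clause~(1) does not run backward, so you cannot establish $(s+3n/10+28)\notin\cS^3.\CF$, and clause~(2) never fires. Your side-effect sub-case therefore does not go through. (Incidentally, your aligned/side-effect split is unnecessary: the subversion algorithm always queries $\CF_j(y_j)$ when evaluating $\CFt_j(y_j)$, so main-chain elements are automatically in $Q_{c'}$ and the paper handles both at once. The bad-region lemmas are not the right tool here.)

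Case~1 also has a gap. You write ``when the rightmost element of this suffix was added to $\cS^3.\CF$, the simulator detected a length-$n/10$ unsubverted chain ending there,'' but the $\cS^3$-insertion order need not follow the increasing $\cM^3$-order: later positions can be copied into $\cS^3.\CF$ before earlier ones. The paper instead takes the length-$2n/10$ window $[s+n/10+11,\,s+3n/10+10]$ (all of which lies in $\cS^3.\CF[k]$ by backward propagation via clause~(1)), lets $j$ be the \emph{last} position in that window to enter $\cS^3.\CF$, and observes that one of the two length-$n/10$ sub-windows with $j$ at an end lies entirely inside the window and hence is fully present when $(j,x_j)$ is inserted, so \textsf{EnqueueNewChain} fires on it. Because every length-$n/10$ chain meeting this window sits at indices $\geq s+13$ inside the increasing chain $c$, all its elements are quasi-honest; thus either it, or whatever previously-checked chain caused \textsf{Check} to discard it, passes \textsf{HonestyCheck} and is completed.
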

\begin{proof}
We analyze the following two cases:
\begin{itemize}
\item \textbf{Case 1:} $(s+3n/10+10,x_{s+3n/10+10}) \in \cS^3.\CF[k]$.  

Because of $\GoodStatus[k]$, the chain $(s,x_s,\ldots,x_{s+3n/10+10})$ is in $\cS^3.\CF[k]$. Then suppose $(j,x_j)$ ($s+n/10+11 \leq j \leq s+3n/10+10$) is the last element of $(s+n/10+11,x_{s+n/10+11},\ldots,x_{s+3n/10+10})$ that is evaluated in $\cS^3.\CF[k]$. 
\begin{enumerate}
\item if $j\leq s+2n/10+11$, then $c_1=(j,x_{j},\ldots,x_{j+n/10-1})$ is a length $n/10$ subchain of $(s+n/10+11,x_{s+n/10+11},\ldots,x_{s+3n/10+10})$, which means $c_1$ was checked by the procedure $\cS^3.\text{Check}$ at some moment before the end of round $k$ of the game. Then, since any unsubverted chain that is not disjoint with $c_1$ consists of quasi-honest points, $(j,x_j,x_{j+1}) \in \cS^3.\text{CompletedChains}[k]$,
  therefore
  \[
    (s+3n/10+28,x_{s+3n/10+28},x_{s+3n/10+29}) \in \cS^3.\text{CompletedChains}[k]\,.
    \]
\item if $j\geq s+2n/10+12$, similar analysis can be used to show that $(j-1,x_{j-1},x_{j}) \in \cS^3.\text{CompletedChains}[k]$ and
  \[
    (s+3n/10+28,x_{s+3n/10+28},x_{s+3n/10+29}) \in \cS^3.\text{CompletedChains}[k]\,.
    \]
\end{enumerate}

\item \textbf{Case 2:} $(s+3n/10+10,x_{s+3n/10+10}) \notin \cS^3.\CF[k]$.

  Because of $\GoodStatus[k]$, none of the elements of the chain
  \[
    (s+3n/10+10,x_{s+3n/10+10},\ldots,x_{s+3n/10+29})
  \]
  is in $\cS^3.\CF[k]$. Suppose $(s+3n/10+29,x_{s+3n/10+29}) \in Q_{c'}$ for some $c' \in C_{\cM.\text{FComp}}$. By $\GoodStatus[k]$, all elements in the chain
  \[
    c_2=(s+3n/10+19,x_{s+3n/10+19},\ldots,x_{s+3n/10+29})
  \]
  are in $Q_{c'}$. By Lemma \ref{Lemma: No subverted chain covers a long honest chain}, $c_2$ and $c'$ are not disjoint. Since all the elements of $c_2$ are quasi-honest, $c_2 \subset c'$ and therefore,
  \[
    (s+3n/10+28,x_{s+3n/10+28},x_{s+3n/10+29}) \in \cM^3.\text{CompletedChains}[k]\,. \qedhere
    \]
\end{itemize}
\end{proof}

\begin{lemma}\label{Lemma:SNewBadComplete is negl}
If $G_5$ is efficient and for some positive integer $k<q_{\cD}$,
\[
  \Pr[\text{$\GoodStatus[k]$ does not happen}]= \negl(n),
\]
then
\[
  \Pr[\text{$\cS.\NewBadComplete_5[k+1]$}]=\negl(n).
\]
\end{lemma}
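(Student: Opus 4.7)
The plan is to condition on the event $\GoodStatus[k]$ (which by hypothesis holds with overwhelming probability) and to show that any invocation of $\cS^3.\text{Complete}$ in round $k+1$ on a triggering tuple $(s,x_s,x_{s+1},u)$ with $(s,x_s,x_{s+1}) \notin \cM^3.\text{CompletedChains}$ completes without abort, except on a negligible event. Since $G_5$ is efficient, only $\poly(n)$ invocations of $\cS^3.\text{Complete}$ occur, so a union bound at the end will finish the argument. Throughout, I would fix the convention $u=4n$ (the case $u=7n$ is symmetric); the key structural feature I exploit is that $u,u+1$ are separated from the detection region $\{s,\ldots,s+n/10-1\}$ by at least $2n$ indices.

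First, I would use Lemma~\ref{Lemma: all k+1 completed chains have a long subchain in k} to conclude that the detection chain has a length $(n/10-22)$ subchain $c^\ast$ entirely in $\cM^3.\CF[k]$, and then invoke Lemma~\ref{Lemma:long chain in k is already completed} to rule out the possibility that this sub-chain, when extended along the direction of $u$, already reaches length $3n/10+30$ inside $\cM^3.\CF[k]$: if it did, its endpoint pair would lie in $\cM^3.\text{CompletedChains}[k]\cup\cS^3.\text{CompletedChains}[k]$; a subsequent disjointness analysis using Lemma~\ref{Lemma: No subverted chain covers a long honest chain} (combined with the filtering done by $\cS^3.\text{Check}$) would then force the triggering chain itself to lie in $\cM^3.\text{CompletedChains}$, contradicting the hypothesis. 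Consequently, during the extension of the subverted chain from the detection region toward $u$ and $u+1$, fresh uniform $\CF$ values must be drawn at some point well before reaching the adaptation positions.

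With this freshness injection in hand, I would address the two abort conditions separately. For honesty, note that the extension produces a long monotone sub-chain ending near $u, u+1$; applying Theorem~\ref{Th: nice properties of increasing chain} (via Lemma~\ref{Lemma:once honest, honest forever}) together with Lemma~\ref{Lemma:bad region is short} and Lemma~\ref{Lemma:there is no separated bad region} ensures that the honest region extends far beyond the unique short bad region around the trigger, so $(u,x_u)$ and $(u+1,x_{u+1})$ are honest with overwhelming probability. For freshness, once a uniformly drawn $\CF$ value enters the extension, the value $x_u$ (and then $x_{u+1}$, obtained from one further uniform $\CF$ evaluation) is approximately uniform over $\{0,1\}^n$, independent of the pre-existing table entries and of the polynomially many queries $Q_j(x_j)$ used elsewhere. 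A standard collision bound of order $\poly(n)/2^n$ then shows that $(u,x_u)$ and $(u+1,x_{u+1})$ are absent from $\cS^3.\CF$, from $\cM^3.\CF$ (restricted to the relevant entries already defined before this call), and from $\bigcup_j Q_j(x_j) \setminus Q_u(x_u)$, except on a negligible event.

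The main obstacle is the clean formalization of the freshness step: the extension alternates between consulting $\cM^3.\CF[k]$ entries and generating new uniform values, and one must carefully argue that the first "new" uniform value along the extension occurs strictly before the chain reaches position $u$, and that all subsequent $\CF$ evaluations along the way to $u,u+1$ are either honest-deterministic or uniform, so that $x_u,x_{u+1}$ truly inherit uniform marginals. To do this cleanly I would expose the randomness in the $\CF_{\mathrm{Full}}$ model introduced in Section~\ref{Subsec: preparation}, partition the extension into a deterministic prefix in $\cM^3.\CF[k]$ and a uniformly evaluated suffix, and then bound the collision event by a union bound over the (polynomially many) entries of $\cS^3.\CF[k], \cM^3.\CF[k]$ and of $\bigcup_j Q_j(x_j)$. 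Combining honesty, freshness, and the union bound over all invocations of $\cS^3.\text{Complete}$ in round $k+1$ yields $\Pr[\cS.\NewBadComplete_5[k+1]]=\negl(n)$, as required.
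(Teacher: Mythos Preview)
Your high-level plan and your choice of supporting lemmas (Lemma~\ref{Lemma: all k+1 completed chains have a long subchain in k}, Lemma~\ref{Lemma:long chain in k is already completed}, the bad-region lemmas, Theorem~\ref{Th: nice properties of increasing chain}) match the paper's proof. However, your freshness argument has a genuine gap at the abort condition $(u,x_u)\in\bigcup_{j=1}^{8n}Q_j(x_j)\setminus Q_u(x_u)$. You propose to treat $x_u$ as approximately uniform and independent of the $Q_j(x_j)$ and then apply a $\poly(n)/2^n$ collision bound. This is only valid for those $j$ whose $\CFt_j(x_j)$ is evaluated \emph{before} $\CF_u(x_u)$ is placed in the table. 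In the completion, after $x_u$ is determined the chain continues forward through positions $u+2,\ldots,8n$ and then backward through $s-1,\ldots,1$; for every such $j$, the subversion algorithm $\cA$ at $(j,x_j)$ makes adaptive oracle queries to a table that already contains $\CF_u(x_u)$, so $Q_j(x_j)$ is not a fixed set independent of $x_u$, and a naive union bound does not apply.

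The paper handles this structurally rather than by a collision bound. It first isolates a short subchain $c'=(s+9n/10,x_{s+9n/10},\ldots,x_{s+9n/10+8})$ and proves, in three steps, that $c'$ is honest, increasing, and disjoint from $T_{\text{initial}}$ (this is where Lemma~\ref{Lemma:long chain in k is already completed}, the bad-region lemmas, and the argument of Lemma~\ref{Lemma: all k+1 completed chains have a long subchain in k} are used). With $c'$ in hand, Theorem~\ref{Th: nice properties of increasing chain} (parts 1 and 2) shows that $(u,x_u)$ and $(u+1,x_{u+1})$ are not queried by any $\CFt_j(x_j)$ with $s+9n/10+8<j\leq 8n$. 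For the remaining positions $1\leq j\leq s+9n/10+8$ (which include the entire backward side), the paper runs a reordering thought experiment: if $(u,x_u)\in Q_{c''}$ for $c''=(1,x_1,\ldots,x_{s+9n/10+8})$, then monotonicity forces every element of the long honest increasing chain from $c'$ to $(u,x_u)$ into $Q_{c''}$, contradicting Lemma~\ref{Lemma: No subverted chain covers a long honest chain}. The only place a genuine collision bound is used is for the first abort condition $(u,x_u)\in T_{\text{initial}}$, and there it is justified because $(u-1,x_{u-1})\notin T_{\text{initial}}$ makes $\CF_{u-1}(x_{u-1})$ freshly uniform. Replacing your collision step for the $Q_j$ condition with these two structural arguments repairs the proof.
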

\begin{proof}
Suppose at some moment during the $(k+1)$-th interaction of $G_5$, the simulator $\cS^3$ is starting to execute procedure $\cS^3.\text{Complete}$ on a chain $(s,x_s,x_{s+1},u)$ such that $(s,x_s,x_{s+1}) \notin \cM^3.\text{CompletedChains}$. We will show that, with overwhelming probability, $\cS.\NewBadComplete_5$ does not happen in this execution.

Denote by $T_{\text{initial}}$ the table $\cM^3.\CF$ right before the execution of $\cS^3.\text{Complete}$ on $(s,x_s,x_{s+1},u)$. Assume, without loss of generality, that $s<u$. (Then, by our convention of the simulator, $u-s>n$.) Imagine we run $\cS^3.\text{Complete}$ on $(s,x_s,x_{s+1},u)$ and stop when the procedure just finished evaluating $\CFt_{s+9n/10+8}(x_{s+9n/10+8})$. We call the table $\cM^3.\CF$ at this moment $T_{\text{final}}$. 

Consider the subverted chain $c=(s,x_{s},\ldots,x_{s+9n/10+8})$ and its subchain $c'=(s+9n/10,x_{s+9n/10},\ldots,x_{s+9n/10+8})$. To prove $\cS.\NewBadComplete_5[k+1]$ does not happen during the completion of $(s,x_s,x_{s+1},u)$, it is sufficient to show that $c'$ has the following properties:
\begin{itemize}
\item $c'$ is honest, and therefore can be viewed as an unsubverted chain;
\item $c'$ is increasing;
\item None of the elements in $c'$ is in $T_{\text{initial}}$.
\end{itemize}

Before we prove these properties, it is helpful to see how these properties imply that the bad event $\cS.\NewBadComplete_5$ does not happen.
\begin{itemize}
\item First, because of Lemma \ref{Lemma:add one point at a time} and Theorem \ref{Th: nice properties of increasing chain}, the future subverted chain $(s+9n/10,x_{s+9n/10},\ldots,x_{8n})$ is honest and increasing (as an unsubverted chain). By Theorem \ref{Th: nice properties of increasing chain} again, for $m=u,u+1$, $(m,x_m)$ will not be queried by any $\CFt_i(x_i)$ with $s+9n/10+8<i\leq 8n$ and $i \neq m$.
\item Second, imagine evaluating the subverted chain $c''=(1,x_1,\ldots,x_{s+9n/10+8})$ (by uniformly assigning new $\CF$ values in need). We will show $(m,x_m) \notin Q_{c''}$ for $m=u,u+1$. Suppose $(m,x_m) \in Q_{c''}$. Since $c'$ is increasing, $(s+9n/10+9,x_{s+9n/10+9},\ldots,x_{m})$ is also increasing by Lemma~\ref{Lemma:add one point at a time}. This implies all the elements of $(s+9n/10+9,x_{s+9n/10+9},\ldots,x_{m})$ are in $Q_{c''}$, which is contradictory to Lemma \ref{Lemma: No subverted chain covers a long honest chain}. (Although the order we generate $\CF$ values here is not same as that in $G_5$, we still select $\CF$ values uniformly, which allows us to use the lemmas we showed about increasing chains.) 
\item Finally, since $(s+9n/10,x_{s+9n/10}) \notin T_{\text{initial}}$, $(m-1,x_{m-1}) \notin T_{\text{initial}}$ for $m=u,u+1$. By the randomness of $\CF_{m-1}(x_{m-1})$, $(m,x_m)$ is not in $T_{\text{initial}}$. 
\end{itemize}
   
We prove the above properties of $c'$ in three steps. 
\begin{itemize}
\item \textbf{Step 1:} We first show the chain $(s+n/2,x_{s+n/2},\ldots,x_{s+9n/10+8})$ is honest and increasing as an unsubverted chain.
\begin{itemize}
\item \textbf{Case 1:} if $(s,x_{s},\ldots,x_{s+3n/10+39})$ is not honest, then by Lemma \ref{Lemma:bad region is short} and Lemma \ref{Lemma:there is no separated bad region}, the chain
  \[
    (s+3n/10+n/6+39,x_{s+3n/10+n/6+39},\ldots,x_{s+9n/10+8})
  \]
  is honest. By Theorem \ref{Th: nice properties of increasing chain} and Lemma \ref{Lemma:add one point at a time}, the chain
  \[
    (s+3n/10+n/6+47,x_{s+3n/10+n/6+47},\ldots,x_{s+9n/10+8})
  \]
  is increasing.
\item \textbf{Case 2:} if $(s,x_{s},\ldots,x_{s+3n/10+39})$ is honest, then by Lemma~\ref{Lemma:add one point at a time} and Lemma \ref{Lemma:long chain in k is already completed},
  \[
    (s+3n/10+30,x_{s+3n/10+30},\ldots,x_{s+3n/10+39})
  \]
  is increasing. By Theorem \ref{Th: nice properties of increasing chain}, the chain
  \[
    (s+3n/10+39,x_{s+3n/10+30},\ldots,x_{s+9n/10+8})
  \]
  is increasing and honest.
\end{itemize}

\item \textbf{Step 2:} Next we show none of the elements in
  \[
    (s+4n/5+30,x_{s+4n/5+30},\ldots,x_{s+9n/10+8})
  \]
  is in $\cM^3.\CF[k]$. 

It suffices to show that $(s+4n/5+30,x_{s+4n/5+30}) \notin \cM^3.\CF[k]$. Suppose $(s+4n/5+30,x_{s+4n/5+30}) \in \cM^3.\CF[k]$, then by Lemma \ref{Lemma:long chain in k is already completed}, $(s+4n/5+30,x_{s+4n/5+29},x_{s+4n/5+30})$ is in $\cM^3.\text{CompletedChains}[k]$ or $\cS^3.\text{CompletedChains}[k]$. This contradicts the fact that $\cS^3.\text{Complete}$ only processes uncompleted chains and our assumption that $(s,x_s,x_{s+1}) \notin \cM^3.\text{CompletedChains}$.

\item \textbf{Step 3:} Lastly, we show that no elements in $(s+4n/5+40,x_{s+4n/5+40},\ldots,x_{s+9n/10+8})$ are in $T_{\text{initial}}$.

  It suffices to show that $(s+4n/5+40,x_{s+4n/5+40}) \notin T_{\text{initial}}$. Suppose $(s+4n/5+40,x_{s+4n/5+40}) \in T_{\text{initial}}$. Since $(s+4n/5+30,x_{s+4n/5+30}) \notin \cM^3.\CF[k]$, the chain
  \[
    (s+4n/5+30,x_{s+4n/5+30},\ldots,x_{s+4n/5+40})
  \]
  is in $T_{\text{initial}}$, which, by the proof of Lemma \ref{Lemma: all k+1 completed chains have a long subchain in k}, means that
  \[
    (s+4n/5+30,x_{s+4n/5+30},\ldots,x_{s+4n/5+40})
  \]
  is already a $\cS.\text{Completed}$ chain. This contradicts the fact that $\cS^3.\text{Complete}$ only processes uncompleted chains. \qedhere
\end{itemize} 
\end{proof}

\begin{lemma}\label{Lemma:MBadComplete is negl}
If $G_5$ is efficient and for some positive integer $k<q_{\cD}$,
\[
  \Pr[\text{$\GoodStatus[k]$ does not happen}]=\negl(n),
\]
then
\[
  \Pr[\text{$\cM.\BadComplete_5[k+1]$}]=\negl(n).
\]
\end{lemma}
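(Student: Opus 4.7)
The plan is to mirror, with appropriate modifications, the argument used for $\cS.\NewBadComplete_5$ in Lemma~\ref{Lemma:SNewBadComplete is negl}. Recall that $\cM.\BadComplete_5[k+1]$ occurs precisely when, during some execution of $\cM^3.\text{Complete}$ in round $k+1$, there is an index $i$ with $3n\leq i\leq 5n$ such that $(i,x_i)$ was already in $\cM^3.\CF$, lies in $\bigcup_{j}Q_j(x_j)/Q_i(x_i)$, or is dishonest. Fix a round $k+1$ in which the distinguisher issues a fresh query, say $RF(x_0,x_1)$, triggering $\cM^3.\text{Complete}$, and let $T_{\text{initial}}$ denote the table $\cM^3.\CF$ just before that execution. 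It suffices to show that, conditioned on $\GoodStatus[k]$, with overwhelming probability the generated chain $(1,x_1,\ldots,x_{8n+1})$ admits a subchain in the middle region which is honest, increasing, fresh relative to $T_{\text{initial}}$, and not queried from any $\CFt_j(x_j)$ with $j\neq i$ for $i\in\{4n,4n+1\}$.

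First I would isolate a window of length roughly $n/10$ around index $4n$ that serves as the ``safe zone.'' Applying Lemma~\ref{Lemma:bad region is short} and Lemma~\ref{Lemma:there is no separated bad region} to the subverted chain being built, every bad region has length less than $n/6$ and they cannot accumulate, so within the interval $[3n+n/6,\,5n-n/6]$ the corresponding terms are honest; the resulting unsubverted subchain can then, by Lemma~\ref{Lemma:add one point at a time}, be split into a decreasing and an increasing part, yielding a long increasing honest subchain $c'$ whose right endpoint sits at index exactly $4n$ or $4n+1$ (the same case analysis used in Lemma~\ref{Lemma:SNewBadComplete is negl}, but now accounting for both extremes of the middle region).

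Next, I would establish freshness of $c'$ in two stages. For ``$\CF$-freshness'' (no element of $c'$ already in $T_{\text{initial}}$), I would use Lemma~\ref{Lemma:long chain in k is already completed}: if the leftmost endpoint of $c'$ were already in $\cM^3.\CF[k]\subseteq T_{\text{initial}}$, then $c'$ would witness a long increasing chain in $\cM^3.\CF[k]$ already terminating in a $\cM.\text{Completed}$ or $\cS.\text{Completed}$ triple, contradicting the hypothesis that the current $RF$ query is fresh (i.e.\ not in $\cM^3.\text{CompletedChains}$). For ``query-freshness'' (no element $(i,x_i)$ of $c'$ belongs to $Q_j(x_j)$ for some $j\neq i$), I would combine Theorem~\ref{Th: nice properties of increasing chain} applied to $c'$ — which rules out conflicts among the elements of $c'$ itself — with Lemma~\ref{Lemma: No subverted chain covers a long honest chain}, which precludes any disjoint subverted evaluation from covering a length-10 stretch of $c'$. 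Honesty at $i=4n,4n+1$ follows directly from $c'$ lying inside the honest window.

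The main obstacle, as in the $\cS.\NewBadComplete_5$ proof, is establishing query-freshness for the adapted indices $4n$ and $4n+1$ against subverted chains that may have been evaluated \emph{earlier} in round $k+1$ or even earlier rounds. The subtlety is that $\cM^3$ evaluates $\CFt$ along the whole chain (not just $\CF$), so new queries accumulate while the chain is being built, and one must argue that any putative conflicting $(j,y_j)$ with $(4n,x_{4n})\in Q_j(y_j)$ would force a long honest increasing subchain ending at $(4n,x_{4n})$ to be entirely contained in $Q_{c_{\text{conflict}}}$, contradicting Lemma~\ref{Lemma: No subverted chain covers a long honest chain}. Once freshness and honesty at $4n,4n+1$ are secured, the abort conditions of $\cM^3.\text{Complete}$ fail with overwhelming probability, completing the proof.
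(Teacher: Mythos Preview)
Your proposal is correct and takes essentially the same approach as the paper, which simply states that the proof ``is similar to (and simpler than) that of Lemma~\ref{Lemma:SNewBadComplete is negl}.'' The simplification you could exploit more explicitly is that for $\cM^3.\text{Complete}$ the starting point is always $s=0$ (or $s=8n$) with $u=4n$, so the entire middle region $[3n,5n]$ is automatically far from the seed pair and $T_{\text{initial}}=\cM^3.\CF[k]$, eliminating the case analysis and the concern about intermediate same-round completions that you flag as the ``main obstacle.''
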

\begin{proof}
The proof is similar to (and simpler than) that of Lemma~\ref{Lemma:SNewBadComplete is negl}.
\end{proof}

\begin{lemma}\label{Lemma:SExistingBadComplete is negl}
If $G_5$ is efficient and for some positive integer $k<q_{\cD}$,
\[
  \Pr[\text{$\BadEval_5[k+1]$}]=\negl(n),
\]
then
\[
  \Pr[\text{$\cS.\ExistingBadComplete_5[k+1]$}]=\negl(n).
\]
\end{lemma}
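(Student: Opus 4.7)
The plan is to bound $\Pr[\cS.\ExistingBadComplete_5[k+1]]$ by a case analysis on the offending chain $c = (s, x_s, \ldots, x_{s+n/10-1})$. By the simulator's code, the event occurs only when $u=7n$, which forces $[s, s+n/10-1]$ to intersect the middle region $[3n, 5n]$. Let $c' = (1, y_1, \ldots, y_{8n})$ denote the full $\cM$-completed chain containing $(s, x_s, x_{s+1})$, produced by $\cM^3.\text{Complete}$ at some moment $t_1$; so the middle-region elements of $c$ coincide with those of $c'$, and therefore lie in $\cM^3.\text{MiddlePoints}$ from $t_1$ onward. In any execution, these middle elements must sit in $\cS^3.\CF$ by the time $\cS^3.\text{Complete}$ is invoked on $c$, or else the chain would never have been detected.

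\textbf{Case 1:} $4n$ or $4n+1$ belongs to $[s, s+n/10-1]$. I will argue that $\BadEval$ strictly precedes the hypothetical $\cS.\ExistingBadComplete$, so this case contributes zero probability. Concretely, since $c \subseteq c'$, the adapted position of $c'$ satisfies $(4n, y_{4n}) = (4n, x_{4n})$. The abort check of $\cM^3.\text{Complete}$ at $t_1$ forbids $(4n, y_{4n}) \in \cM^3.\CF$ at the start of $t_1$; otherwise $\cM.\BadComplete$ would prevent $(s, x_s, x_{s+1})$ from ever entering $\cM^3.\text{CompletedChains}$. Hence $(4n, x_{4n})$ is freshly inserted into $\cM^3.\CF$ during $\cM^3.\text{Complete}$ at $t_1$, simultaneously becoming a middle point. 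For the chain $c$ to exist in $\cS^3.\CF$ later, a call $\cS^3.\CF^{\text{Inner}}(4n, x_{4n})$ must occur strictly after $t_1$; that call sees $x_{4n} \in \cM^3.\CF_{4n}$ with $(4n, x_{4n}) \in \cM^3.\text{MiddlePoints}$, triggering $\BadEval$ and aborting the game before $\cS.\ExistingBadComplete$ could occur.

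\textbf{Case 2:} $\{4n, 4n+1\} \cap [s, s+n/10-1] = \emptyset$. The middle elements of $c$ lie in $[3n, 4n-1] \cup [4n+2, 5n]$, and by the same $\BadEval$-exclusion reasoning must have been present in $\cS^3.\CF$ (and hence $\cM^3.\CF$) strictly before $t_1$. The event $\cS.\ExistingBadComplete$ then requires that $\cM^3.\text{Complete}$ at $t_1$, extending from its starting pair $(y_0, y_1)$, produces a Feistel trajectory satisfying $y_s = x_s$ and $y_{s+1} = x_{s+1}$. Either the extension encounters a freshly sampled, uniformly random $\CF$ value before reaching position $s$, in which case $\Pr[y_s = x_s] \le 2^{-n}$; or the prefix $(y_1, \ldots, y_{s-1})$ is fully deterministic from $\CF$ values present in $\cM^3.\CF$ before $t_1$. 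In the latter subcase, a length-$\Omega(n)$ prefix of $c'$ already resides in $\cM^3.\CF$ before $t_1$, and a careful audit of how $\cM^3.\CF$ could acquire such a chain—either through prior $\cS^3.\CF^{\text{Inner}}$ queries, populating $\cS^3.\CF$ and thereby forcing $\cS^3$ to complete a length-$n/10$ subchain (and program $RF(y_0, y_1)$), or through a prior $\cM^3.\text{Complete}$ that already programmed $RF(y_0, y_1)$—contradicts the invocation of $\cM^3.\text{Complete}$ at $t_1$ on $(y_0, y_1)$. A union bound over the polynomially many candidate triples $(s, x_s, x_{s+1})$, whose number is polynomial by efficiency of $G_5$, then yields the required negligible bound.

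The main obstacle I anticipate is the rigorous handling of the deterministic-prefix subcase of Case 2. It requires tracking how $\cM^3.\CF$ can accumulate a length-$\Omega(n)$ Feistel prefix through interleaved $\cS^3$ and $\cM^3$ actions, and exploiting the irreversible "first-call wins" behavior of $RF$ programming to block the very $\cM^3.\text{Complete}$ invocation that would be needed to add $(s, x_s, x_{s+1})$ to $\cM^3.\text{CompletedChains}$; this accounting, rather than the probabilistic coincidence bound, carries the technical weight of the proof.
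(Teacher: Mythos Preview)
The paper's proof is two sentences and takes a much simpler route. It does not split into your Cases~1 and~2, and it makes no probabilistic collision argument or deterministic-prefix analysis. It simply observes that $\cS.\ExistingBadComplete_5$ can fire only when $u=7n$; the simulator's convention for choosing $u$ then forces the triggering chain $c=(s,x_s,\ldots,x_{s+n/10-1})$ to have some element with index in $[3n,5n]$, and that element---being a middle point of the ambient $\cM$-completed chain---could not have entered $\cS^3.\CF$ without $\BadEval_5$ having already aborted the game. Hence, conditioned on $\BadEval_5[k{+}1]$ not occurring, one has $u=4n$ and $\cS^3.\text{Complete}$ does not abort.

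Your Case~1 is essentially this argument, but restricted to the adapted indices $4n$ and $4n{+}1$; your Case~2 is a detour the paper does not take. The $\BadEval$ trap applies to \emph{every} middle-index element of $c$, not just the adapted ones, so once you note that $u=7n$ forces such an element to be present in $\cS^3.\CF$, you are finished. The scenario you worry about in Case~2---a middle-index element already in $\cS^3.\CF$ \emph{before} the relevant $\cM^3.\text{Complete}$---is excluded by the abort checks built into $\cM^3.\text{Complete}$ (see the $G_4$ pseudocode for $\cM^2.\text{EvaluateForward}/\text{Backward}$, which aborts whenever a middle-index point is already in the table; this carries over to $G_5$ via Lemma~\ref{Lemma: Bad4 is same as Bad5}). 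So that scenario already aborts the game before $\cS.\ExistingBadComplete_5$ could be reached, and your probabilistic hitting bound and structural prefix-audit are unnecessary.
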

\begin{proof}
Suppose at some moment during the $(k+1)$-th interaction of $G_5$, the simulator $\cS^3$ is starting to execute procedure $\cS^3.\text{Complete}$ on a chain $(s,x_s,x_{s+1},u)$ such that $(s,x_s,x_{s+1}) \in \cM^3.\text{CompletedChains}$. 

Because of $\cS^3$'s convention of choosing $u$ and the fact that $\BadEval_5[k+1]$ is negligible, $u$ is equal to $4n$ in this case, which avoids the bad event $\cS.\ExistingBadComplete_5[k+1]$.
\end{proof}

\begin{proof}[Proof of Theorem \ref{Th: complete}]
The theorem is proved by combining Lemma \ref{Lemma:SNewBadComplete is negl}, Lemma \ref{Lemma:MBadComplete is negl}, Theorem \ref{Th: eval} and Lemma \ref{Lemma:SExistingBadComplete is negl}.
\end{proof}

\subsubsection{$\BadEval_5[k+1]$ is negligible}\label{subsub: Eval}
We assume $\GoodStatus[k]$ throughout this section to prove $\BadEval_5[k+1]$ is negligible. Consider the following experiment \textbf{Exp-Eval$[k+1]$} where an adaptive distinguisher $\cD_1$ interacts with $(\cS^3,\cM^3)$ and tries to trigger $\BadEval_5[k+1]$:

\begin{mdframed}
\begin{center}
  \textsc{Exp-Eval$[k+1]$}
\end{center}
\begin{enumerate}
\item $\cD_1$ makes $a$ queries for some $0 \leq a<k$ and outputs $a$.
\item At the $a+1$-th round of the game, $\cD_1$ queries $RF(x_0,x_1)$ for some pair of strings $(x_0,x_1)$. (By definition, $\cM^3$ will complete a subverted chain determined by $(x_0,x_1)$. We call this full subverted chain $c$.) 
\item $\cD_1$ makes another $b$ queries for $b=k-a$.
\end{enumerate}

We say $\cD_1$ wins \textbf{Exp-Eval$[k+1]$} if $c$ is not a $\cS.\text{Completed}$ chain by the end of the $k$-th round of the experiment and \textbf{Exp-Eval$[k+1]$} aborts at the $(k+1)$-th round with the bad event $\BadEval_5[k+1]$: $\cS^3.\CF^{\text{Inner}}$(or $\cM^3.\CF^{\text{Inner}}$) calls a term $(i,x)$ such that $3n \leq i \leq 5n$ and $(i,x) \in c$.

Here, without loss of generality, we do not consider the case where $\cD_1$ makes a $RF^{-1}(\cdot)$ query in round $a+1$.
\end{mdframed}

It is easy to see that Theorem \ref{Th: eval} is equivalent to:
\begin{theorem}[$\BadEval_5$ is negligible.]\label{Th: eval2}
If $G_5$ is efficient and for some positive integer $k<q_{\cD}$,
\[
  \Pr[\text{$\GoodStatus[k]$ does not happen}]=\negl(n),
\]
for any distinguisher $\cD$, then, for any distinguisher $\cD_1$,
\[
  \Pr[\text{$\cD_1$ wins \textbf{Exp-Eval$[k+1]$}}]=\negl(n).
\]
\end{theorem}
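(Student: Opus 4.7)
The plan is to exploit the fact that the middle points $x_{3n}, \ldots, x_{5n}$ of the chain $c$ produced by $\cM^3.\text{Complete}$ at round $a+1$ are essentially uniformly distributed from $\cD_1$'s perspective, and then show that no subsequent query can hit one of these $O(n)$ hidden points with more than negligible probability.

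First, I would condition on $\GoodStatus[k]$ (which holds by assumption with overwhelming probability) and analyze what information is revealed to $\cD_1$ at round $a+1$. The only direct output returned to $\cD_1$ is $(x_{8n}, x_{8n+1}) = RF(x_0,x_1)$. The intermediate middle values $x_{3n}, \ldots, x_{5n}$ live only inside $\cM^3.\CF$, never inside $\cS^3.\CF$, since $c$ is assumed not to be $\cS.\text{Completed}$ by the end of round $k$. These middle values are computed iteratively via $x_i := x_{i-2} \oplus \tilde{F}_{i-1}(x_{i-1})$, and each call to $\tilde{F}_{i-1}(x_{i-1})$ involves drawing at least one fresh uniform value in $\cM^3.\CF$ (namely $\CF_{i-1}(x_{i-1})$ itself). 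Invoking Theorem~\ref{Th: nice properties of increasing chain} together with Lemmas~\ref{Lemma:bad region is short} and \ref{Lemma:once honest, honest forever}, almost all middle points belong to a long honest increasing subchain, so the uniform randomness of the underlying $\CF$ values propagates to each $x_i$, giving it close to $n$ bits of min-entropy conditioned on the transcript through round $a$.

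Second, I would enumerate the potential ways $\BadEval_5[k+1]$ could be triggered during rounds $a+1$ through $k+1$. Because $G_5$ is assumed efficient, the total number of $\CF$ queries made (direct $\cD_1$ queries, internal calls from $\cS^3.\text{Complete}$ and $\cM^3.\text{Complete}$, and the queries generated by the subversion algorithm on various points) is polynomial in $n$. For the bad event to occur, some such query $\CF_i(x)$ with $3n \leq i \leq 5n$ must coincide with a middle point of $c$; by the entropy argument above, each individual query has probability at most $\poly(n) \cdot 2^{-n}$ of colliding with any middle point, and summing over $O(n)$ middle points and $\poly(n)$ queries yields a negligible bound.

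The main obstacle will be handling the ``indirect'' ways a query could discover a middle point, in particular the possibility that some chain $c'$ completed by $\cS^3.\text{Complete}$ (or $\cM^3.\text{Complete}$) in a later round develops overlap with $c$'s middle region. To rule this out, I would invoke Lemma~\ref{Lemma: No subverted chain covers a long honest chain} and the structural results from Section~\ref{Subsec: preparation}: any completed chain that shares more than a short block with the honest middle region of $c$ would have to coincide with $c$ on that block, forcing $c$ itself to be $\cS.\text{Completed}$ and contradicting our assumption. A careful hybrid ``delaying'' argument—where the fresh $\CF$ values inside $\cM^3.\text{Complete}$ are sampled lazily only when actually needed—then lets us apply a clean union bound over all polynomially-many internal and external queries and the $O(n)$ middle points, giving $\Pr[\cD_1 \text{ wins } \textbf{Exp-Eval}[k+1]] = \negl(n)$ as required.
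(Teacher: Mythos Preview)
Your proposal has a real gap at the point where you pass from ``each middle value $x_i$ has close to $n$ bits of min-entropy conditioned on the transcript through round $a$'' to the union bound over later queries. The answer $(x_{8n},x_{8n+1})$ that $\cD_1$ receives at round $a+1$ is a deterministic function of exactly the fresh $\CF$ values you are relying on for the entropy of the middle points; after round $a+1$ the middle points are no longer independent of $\cD_1$'s view, and every subsequent $\CF$ query can further erode that entropy. Your ``lazy sampling'' remark does not address this, because the issue is not the order in which randomness is drawn but the fact that the distinguisher has already been handed a function of that randomness. Without a quantitative argument that the $2n$ bits of $(x_{8n},x_{8n+1})$ plus all later answers leave each middle point with high residual min-entropy, the union bound in your second paragraph is unjustified.

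The paper handles exactly this difficulty by a two-step reduction rather than a direct entropy argument. First it introduces \textbf{Exp-EvalNoAnswer$[k+1]$}, in which the distinguisher does \emph{not} see the answer; that case reduces (by postponing the $RF$ query to the last round) to $\cM.\BadComplete_5[k+1]$, which is already known to be negligible under $\GoodStatus[k]$ (Lemma~\ref{Lemma:MBadComplete is negl}). Second, it shows via a coupling argument (Lemma~\ref{Lemma:EvalNoAnswer implies EvalRight} and its ``Left'' counterpart) that receiving the answer cannot help: for each choice of adaptation index $m\in[3n,5n]$, the with-answer and no-answer experiments share the same $\CF_{\text{full}}$ and differ only at the programmed positions $m,m+1$, so if the bad event is triggered at an index on the appropriate side of $m$, the two runs make identical queries and the no-answer run must win too. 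Ranging $m$ over the interval covers all possible indices of the triggering term. This sliding-$m$ coupling is the key idea your proposal is missing; the structural lemmas you cite (Theorem~\ref{Th: nice properties of increasing chain}, Lemma~\ref{Lemma: No subverted chain covers a long honest chain}) are used elsewhere but do not by themselves neutralize the information carried by $(x_{8n},x_{8n+1})$.
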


We take two steps to prove the theorem. First, notice that the only
information $\cD_1$ has to predict $(i,x) \in c$ is the answer to the
query $RF(x_0,x_1)$. We turn this observation into a lemma that says
$\cD_1$ can not win \textbf{Exp-Eval$[k+1]$} without receiving the
answer to the query $RF(x_0,x_1)$. Second, we prove that whether
receiving the answer or not does not affect too much the probability
that $\cD_1$ wins.

Consider the following modified version of \textbf{Exp-Eval$[k+1]$} where the distinguisher $\cD^2$ does not receive the answer to $RF(x_0,x_1)$.

\begin{mdframed}
\begin{center}
  \textsc{Exp-EvalNoAnswer$[k+1]$}
\end{center}
\begin{enumerate}
\item $\cD_2$ makes $a$ queries for some $0 \leq a<k$ and outputs $a$.
\item At the $a+1$-st round of the game, $\cD_2$ queries $RF(x_0,x_1)$ for some pair of strings $(x_0,x_1)$, but does not receive the answer. (By definition, $\cM^3$ will complete a subverted chain determined by $(x_0,x_1)$. We call this full subverted chain $c$.) 
\item $\cD_2$ makes another $b$ queries for $b=k-a$.
\end{enumerate}

The winning condition of \textbf{Exp-EvalNoAnswer$[k+1]$} is same as that of \textbf{Exp-Eval$[k+1]$}. 
\end{mdframed}

To prove $\cD_2$ wins \textbf{Exp-EvalNoAnswer$[k+1]$} with negligible probability, we consider the following distinguisher $\cD_3$, which uses $\cD_2$ in \textbf{Exp-EvalNoAnswer$[k+1]$} to trigger the bad event $\cM.\BadComplete_5[k+1]$.

\begin{mdframed}
\begin{center}
  \textsc{Exp-BadMComplete$[k+1]$}
\end{center}
\begin{enumerate}
\item $\cD_3$ uses $\cD_2$ to make $a$ queries for some $0 \leq a<k$ and outputs $a$. For each round of the game, $\cD_3$ receives the query from and gives the answer to $\cD_2$.
\item $\cD_3$ remembers the $a+1$-th query $RF(x_0,x_1)$ of $\cD_2$ but does not make the query $(x_0,x_1)$ to $RF(\cdot).$
\item $\cD_3$ uses $\cD_2$ to make another $b$ queries for $b=k-a$. Again, for each round of the game, $\cD_3$ receives the query from and gives the answer to $\cD_2$.
\item $\cD_3$ makes its last query $RF(x_0,x_1)$.
\end{enumerate}

We say $\cD_3$ wins \textbf{Exp-BadMComplete$[k+1]$} if the subverted chain corresponding to $(x_0,x_1)$ is not in $\cS.\text{Completed}$ at the end of the $k$-th round of the experiment and \textbf{Exp-BadMComplete$[k+1]$} aborts at the $(k+1)$-th round with the following event: when $\cM^3.\text{Complete}$ is completing the subverted chain $c$ determined by $(x_0,x_1)$, there is a term $(i,x) \in c$ such that $3n \leq i \leq 5n$ and $(i,x)$ is queried in the $k$-th round of the game.
\end{mdframed}

\begin{lemma}\label{Lemma:Mbadcomplete implies EvalNoAnswer is negl}
If for some positive integer $k<q_{\cD}$,
\[
  \Pr[\text{$\cM.\BadComplete_5[k+1]$ happens}]=\negl(n),
\]
for any distinguisher $\cD$, then, for any distinguisher $\cD_2$,
\[
  \Pr[\text{$\cD_2$ wins \textbf{Exp-EvalNoAnswer$[k+1]$}}]=\negl(n).
\]
\end{lemma}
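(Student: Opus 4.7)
The plan is to reduce \textbf{Exp-EvalNoAnswer$[k+1]$} to \textbf{Exp-BadMComplete$[k+1]$} through the distinguisher $\cD_3$ already described above the lemma statement, and then invoke the hypothesis. Concretely, I will show
\[
\Pr[\cD_2 \text{ wins \textbf{Exp-EvalNoAnswer$[k+1]$}}] \le \Pr[\cD_3 \text{ wins \textbf{Exp-BadMComplete$[k+1]$}}] + \negl(n).
\]
Since $\cD_3$'s winning event is by construction a sub-event of $\cM.\BadComplete_5[k+1]$ (it is exactly an abort of $\cM^3.\text{Complete}$ in round $k+1$), the hypothesis of the lemma bounds the right-hand side by $\negl(n)$, which proves the claim.

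The heart of the argument is a coupling of the two experiments via the fixed-table view of Section~\ref{Subsec: preparation}: run both experiments against the same underlying uniform table $\CF_{\text{Full}}$ and the same coin tape of $\cD_2$. The only structural difference between the two runs is that in \textbf{Exp-EvalNoAnswer} the call to $\cM^3.\text{Complete}$ on $c$ at round $a{+}1$ copies the $\CF$ values along $c$ into $\cM^3.\CF$ and loads the middle points of $c$ into $\cM^3.\text{MiddlePoints}$, whereas in \textbf{Exp-BadMComplete} those facts stay latent until round $k{+}1$. Because $\cD_2$ never receives $(x_{8n},x_{8n+1})$, every response it sees in rounds $a{+}2,\ldots,k{+}1$ is the same deterministic function of $\CF_{\text{Full}}$ and the transcript so far in both games, until the first moment $\cD_2$ queries a point $(i,x)\in c$ with $3n\le i\le 5n$. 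At that moment, \textbf{Exp-EvalNoAnswer} triggers the middle-point abort in the inner procedure (precisely $\cD_2$'s winning event), while \textbf{Exp-BadMComplete} merely records $(i,x)$ into $\cM^3.\CF$ with the same value taken from $\CF_{\text{Full}}$.

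Continuing under the coupling, this winning query is replayed as $\cD_3$'s round-$k$ query, and then $\cD_3$'s own $RF(x_0,x_1)$ at round $k{+}1$ invokes $\cM^3.\text{Complete}$ on the (coupled, hence identical) chain $c$. At that moment some $(i,x)\in c$ with $3n\le i\le 5n$ already sits in $\cM^3.\CF$ and $c$ is still not $\cS.\text{Completed}$, which is $\cD_3$'s winning event. Combining the two inequalities yields the lemma.

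The main obstacle is establishing the coupling rigorously. Two subtleties must be controlled: (a) no intermediate query of $\cD_2$ in rounds $a{+}2,\ldots,k$ should create a distinguisher-visible discrepancy before the ``winning'' query, which holds because the only values that differ between the two games lie on $c$, and a visible early hit would either coincide with the winning query itself or would constitute a guess of $(x_{8n},x_{8n+1})$, bounded by $\poly(n)/2^{2n}$; and (b) \textbf{Exp-EvalNoAnswer} should not abort ``silently'' through some unrelated completed chain $c'\ne c$ whose middle region overlaps $c$---this is ruled out by the disjointness/honesty machinery of Section~\ref{Subsec: preparation} (in particular Lemmas~\ref{Lemma: No subverted chain covers a long honest chain} and~\ref{Lemma:once honest, honest forever}) together with $\GoodStatus[k]$, which forces any long overlap to be already $\cS.\text{Completed}$, contradicting the no-$\cS$-completion hypothesis on $c$. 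Efficiency of $G_5$ keeps the union bounds over polynomially many touched points tractable.
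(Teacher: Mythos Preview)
Your core reduction is exactly the paper's: couple \textbf{Exp-EvalNoAnswer$[k+1]$} and \textbf{Exp-BadMComplete$[k+1]$} on the same underlying randomness (the $\CF_{\text{Full}}$ table) and observe that a win for $\cD_2$ forces a win for $\cD_3$, whose winning event is contained in $\cM.\BadComplete_5[k+1]$. The paper states this in one sentence; your expanded version of the coupling (identical transcripts until a middle point of $c$ is touched, at which point one game aborts and the other records the point) is correct and matches the intended mechanism.

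Where you go astray is in the two ``subtleties''. Concern~(a) is both unnecessary and misargued: under the fixed-table coupling there is \emph{no} distinguisher-visible discrepancy at non-middle points of $Q_c$, because both games return the same value from $\CF_{\text{Full}}$ regardless of whether it was already sitting in $\cM^3.\CF$. Nothing here involves guessing $(x_{8n},x_{8n+1})$; that bound is irrelevant. Concern~(b) is also unnecessary: if \textbf{Exp-EvalNoAnswer} aborts at some round $j<k{+}1$ (for any reason), then by the winning condition $\cD_2$ does not win, so the implication ``$\cD_2$ wins $\Rightarrow$ $\cD_3$ wins'' holds vacuously in that branch---you never need to rule it out. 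In particular, invoking $\GoodStatus[k]$ here is unwarranted: it is not part of the lemma's hypothesis, and the paper's proof does not use it.

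So: keep the coupling argument and the implication, drop the two ``obstacles''---they add incorrect side claims without strengthening the proof.
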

\begin{proof}
Conditioned on the same randomness of $\CF$ in \textbf{Exp-EvalNoAnswer$[k+1]$} and \textbf{Exp-BadMComplete$[k+1]$}, we can see that if $\cD_2$ wins \textbf{Exp-EvalNoAnswer$[k+1]$}, $\cD_3$ will win \textbf{Exp-BadMComplete$[k+1]$}, which by assumption, is negligible.
\end{proof}

Now we proceed to show if $\cD_2$ wins \textbf{Exp-EvalNoAnswer$[k+1]$} with negligible probability, $\cD_1$ wins \textbf{Exp-Eval$[k+1]$} with negligible probability, too. 

Choose an arbitrary integer $m$ with $4n \leq m < 5n$. Consider the following variation of \textbf{Exp-Eval$[k+1]$} for $\cD_1'$.

\begin{mdframed}
\begin{center}
  \textsc{Exp-EvalRight$[k+1,m]$}
\end{center}
Select uniformly a full table of $\CF$ values $\CF_{\text{full}}$ and a $2n$-bit string $\alpha$. In step 1 and 3 of the experiment, all the $\CF$ values queried by the simulators are taken from $\CF_{\text{full}}$ instead of being selected uniformly as usual. In step 2, $\CF$ values queried by the simulators are generated with a special convention explained below.
\begin{enumerate}
\item Same as step 1 of \textbf{Exp-Eval$[k+1]$}. 
\item Same as step 2 of \textbf{Exp-Eval$[k+1]$} and the answer to the query $RF(x_0,x_1)$ is $\alpha$. 
\item Same as step 3 of \textbf{Exp-Eval$[k+1]$}. 
\end{enumerate}

In step 2, $\cM^3$ uses the following convention to fill in its table:
\begin{enumerate}
\item Set $(x_{8n},x_{8n+1}):=\alpha$. 
\item For $i=2,\ldots,m$, define $x_i:=x_{i-2} \oplus \CFt_{i-1}(x_{i-1})$. For $i=m+3,\ldots,8n+1$, define $x_{i-2}:=x_{i} \oplus \CFt_{i-1}(x_{i-1})$. All the $\CF$ values are taken from $\CF_{\text{full}}$.
\item Set $\CF_m(x_m):=x_{m-1} \oplus x_{m+1}$ and $\CF_{u+1}(x_{u+1}):=x_{u} \oplus x_{u+2}$. The game aborts if there is an index $j$ such that $3n \leq j \leq 5n$ and $(j,x_j)$ is in $\cM^3.\CF[a]$ or $\bigcup_{i=1}^{8n}Q_i(x_i)/Q_j(x_j)$.
\item Evaluate $\CFt_m(x_m)$ and $\CFt_{m+1}(x_{m+1})$ and the game aborts if there is an index $j$ such that $3n \leq j \leq 5n$ and $(j,x_j)$ is dishonest.
\end{enumerate}

The winning condition of the experiment is same as that of \textbf{Exp-Eval$[k+1]$} except that in this experiment, $\cD_1'$ wins if the index $i$ of the term $(i,x)$ causing the bad event has the range \textcolor{red}{$3n \leq i \leq m$}. For simplicity, we say $W_{\text{EvalRight}}(\CF_{\text{full}},\alpha,m)=1$ if $\cD_1'$ wins with the choice of $(\CF_{\text{full}},\alpha)$ in the experiment.
\end{mdframed}

From the proof of Lemma \ref{Lemma:4 vs 5}, we can see the distribution of $\cM^3.\CF$ in \textbf{Exp-EvalRight$[k+1,m]$} is same as that of \textbf{Exp-Eval$[k+1]$}. The only difference between the two experiments is that the new version has a more strict winning condition than the original one.

To prove the probability that $\cD_1$ wins \textbf{Exp-Eval$[k+1]$} is negligible, we first show, for any distinguisher $\cD_1'$, the probability that $\cD_1'$ wins \textbf{Exp-EvalRight$[k+1,m]$} is negligible. Consider the following rewrite of \textbf{Exp-EvalNoAnswer$[k+1]$}, where the randomness of $\CF$ is set like \textbf{Exp-EvalRight$[k+1,m]$} and $\cD_2$ uses $\cD_1'$ as an oracle to play the game.

\begin{mdframed}
\begin{center}
  \textsc{Exp-EvalNoAnswer$[k+1]$}
\end{center}

Select uniformly a full table of $\CF$ values $\CF_{\text{full}}$ and a pair of $2n$-bit strings $\alpha$ and $\beta$. The randomness of $\CF$ and $RF$ are set the same way as they are set in \textbf{Exp-EvalRight$[k+1,m]$}.

\begin{enumerate}
\item $\cD_2$ uses $\cD_1'$ to make $a$ queries for some $0 \leq a<k$ and outputs $a$. For each round of the game, $\cD_2$ receives the query from and gives the answer to $\cD_1'$.
\item At the $a+1$-st round of the game, $\cD_2$ uses $\cD_1'$ to query $RF(x_0,x_1)$ for some pair of strings $(x_0,x_1)$. $\alpha:=RF(x_0,x_1)$ is evaluated by $\cM^3$ but not returned to $\cD_2$. (By definition, $\cM^3$ will complete a subverted chain determined by $(x_0,x_1)$. We call this full subverted chain $c$.) $\cD_2$ selects $\beta$ uniformly and gives it to $\cD_1'$. 
\item $\cD_2$ uses $\cD_1'$ to make $b$ queries ($b=k-a$). For each round of the game, $\cD_2$ receives the query from and gives the answer to $\cD_1'$.
\end{enumerate}

For simplicity, we say $W_{\text{EvalNA}}(\CF_\text{full}, \alpha, \beta)=1$ if $\cD_2$ wins for the choice of $(\CF_\text{full}, \alpha, \beta)$. 
\end{mdframed}

\begin{lemma}\label{Lemma:EvalNoAnswer implies EvalRight}
If for some positive integer $k<q_{\cD}$,
\[
  \Pr[\text{$\cD_2$ wins \textbf{Exp-EvalNoAnswer$[k+1]$}}]=\negl(n)
\]
over the randomness of ($\CF_{\text{full}}$, $\alpha$, $\beta$), then, for any integer $4n \leq m <5n$,
\[ 
  \Pr[\text{$\cD_1'$ wins \textbf{Exp-EvalRight$[k+1,m]$}}]=\negl(n)
\]
over the randomness of ($\CF_{\text{full}}$, $\alpha$).
\end{lemma}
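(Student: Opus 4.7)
My plan is to establish the lemma by a direct reduction: given any $\cD_1'$ for \textbf{Exp-EvalRight$[k+1,m]$}, I construct a $\cD_2$ for \textbf{Exp-EvalNoAnswer$[k+1]$} that internally simulates $\cD_1'$, forwarding its oracle queries and answers but supplying the uniform $\beta$ in place of the withheld $RF$ answer at step 2---this is precisely the rewritten form of \textbf{Exp-EvalNoAnswer$[k+1]$} already given in the excerpt. I then argue by a distributional equivalence that $\Pr[\cD_1'\text{ wins EvalRight}] \leq \Pr[\cD_2\text{ wins EvalNoAnswer}]$, from which the lemma follows immediately.

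The structural reason this reduction succeeds is an independence observation. In both experiments the chain $c$ is built via the EvalRight convention, so its ``left portion'' $x_0,\ldots,x_m$ is a deterministic function of $\CF_{\text{full}}$ and $(x_0,x_1)$ alone, independent of $\alpha$; only the right portion and the two adapted values at positions $m, m+1$ depend on $\alpha$. Moreover, until a bad event actually fires, every oracle response $\cD_1'$ receives comes from $\CF_{\text{full}}$---any query hitting an adapted position $(m,x_m)$ or $(m+1,x_{m+1})$ is itself a bad event at an index in $[3n,5n]$ and halts the experiment. Therefore the sequence of queries $\cD_1'$ produces, up to and including the first bad-event query, is a deterministic function $f(\CF_{\text{full}},v)$ of $\CF_{\text{full}}$ and the single step-$2$ value $v$ shown to it ($v=\alpha$ in EvalRight, $v=\beta$ in EvalNoAnswer).

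Letting $x_i(\CF_{\text{full}})$ abbreviate the $i$-th left-chain element, I would then bound
\[
  \Pr[\cD_1'\text{ wins EvalRight}] \leq \Pr_{\CF_{\text{full}},\alpha}\bigl[\exists\,t,\,i\in[3n,m]:\ f(\CF_{\text{full}},\alpha)_t=(i,\,x_i(\CF_{\text{full}}))\bigr],
\]
where the inequality discards the extra ``first-hit'' requirement built into the winning condition. Since $\alpha$ and $\beta$ are both uniform and independent of $\CF_{\text{full}}$, the joint laws $(\CF_{\text{full}},\alpha)$ and $(\CF_{\text{full}},\beta)$ are identical; hence the right-hand side equals the analogous probability with $\beta$ in place of $\alpha$. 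Finally, any hit on a left-chain element at an index in $[3n,m]\subseteq[3n,5n]$ automatically triggers $\cD_2$'s bad event in EvalNoAnswer, so the probability above is at most $\Pr[\cD_2\text{ wins EvalNoAnswer}]$, which is $\negl(n)$ by hypothesis.

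The main delicate point I expect to need care is rigorously justifying the ``until-the-first-bad-event'' treatment: I must ensure $\cD_1'$'s relevant query sequence really is the function $f(\CF_{\text{full}},v)$, despite $\cD_1'$ being an adaptive algorithm whose later queries can depend on earlier oracle answers. The key observation is that the only way $\alpha$ could leak into $\cD_1'$'s history beyond the directly-shown value is via a response at one of the adapted positions, but any such response is itself the realization of a bad event---so it is either exactly the first bad event (and my argument concludes at that query) or it comes after one (and the experiment has already halted). This clean separation is what lets me swap $\alpha$ for $\beta$ without having to set up an explicit coupling of the underlying chain constructions.
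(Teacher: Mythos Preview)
Your proposal is correct and follows essentially the same approach as the paper. The paper phrases the argument as a proof by contradiction—assuming $W_{\text{EvalRight}}(\CF_{\text{full}},\beta)=1$ while $W_{\text{EvalNA}}(\CF_{\text{full}},\alpha,\beta)\neq 1$, then explicitly identifying the three positions $(m,x_m),(m+1,y_{m+1}),(m+1,z_{m+1})$ where the two simulator tables differ and arguing neither execution queries them before the abort—whereas you package the same mechanism as a direct inequality via the function $f(\CF_{\text{full}},v)$ and a swap of the identically distributed $\alpha,\beta$. Both rest on the two observations you isolate: the left chain $x_1,\ldots,x_m$ depends only on $\CF_{\text{full}}$, and any query that could return a non-$\CF_{\text{full}}$ value is itself a bad event that halts the game.

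One small remark on your ``delicate point'': the phrase ``up to and including the first bad-event query'' is experiment-dependent (the right portion of the chain, hence the full bad-event set, depends on which of $\alpha,\beta$ the simulator used). The cleanest fix is exactly what your last paragraph gestures at—define $f$ (or better, $g$) as the hypothetical query sequence with \emph{all} answers drawn from $\CF_{\text{full}}$, then argue separately that in each experiment the actual execution is a prefix of this sequence truncated at the first middle-point hit; this is what the paper's case analysis on the three differing points is implicitly doing.
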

\begin{proof}
  Suppose that the first probability is negligible and the second is
  not. Then, with non-negligible probability, uniformly selecting a
  table $\CF_\text{full}$ and two $2n$-bit strings $\alpha$, $\beta$
  yields:
\begin{itemize}
\item $W_{\text{EvalRight}}(\CF_\text{full},\beta)=1$;
\item $W_{\text{EvalNA}}(\CF_\text{full},\alpha,\beta) \neq 1$.
\end{itemize}

For convenience, we call the two experiments above $E_1$ and $E_2$. Suppose the full subverted chain generated in the $(a+1)$-th round of $E_1$ is $c_1=(1,x_1,\ldots,x_m,y_{m+1},\ldots,y_{8n})$, and the chain in $E_2$ is $c_2=(1,x_1,\ldots,x_m,z_{m+1},\ldots,z_{8n})$. Notice that the first $m$ terms of $c_1$ and $c_2$ are same because the two experiments share the same table $\CF_{\text{full}}$. 

We will find a contradiction by showing that the queries to $\CF$ (made by the distinguisher or the simulators) are identical in $E_1$ and $E_2$ after the $(a+1)$-th round of the games. It is sufficient to show that $E_1$ or $E_2$ does not query $(m,x_m)$, $(m+1,y_{m+1})$ and $(m+1,z_{m+1})$, the only three terms that have different $\CF$ values in $E_1$, $E_2$ and $\CF_\text{full}$. Since $W_{\text{EvalNA}}(\CF_\text{full},\alpha,\beta) \neq 1$, $E_2$ does not query $(m,x_m)$ or $(m+1,z_{m+1})$. Since $W_{\text{EvalRight}}(\CF_\text{full},\beta)=1$, $E_1$ does not query $(m+1,y_{m+1})$. 

Since $W_{\text{EvalRight}}(\CF_\text{full},\beta)=1$ and $E_1,E_2$ make same queries to $\CF$ after the $(a+1)$-th round of the games, $W_{\text{EvalNA}}(\CF_\text{full},\alpha,\beta) = 1$. A contradiction.
\end{proof}

A similar proof can be used to show that, for any $3n < m \leq 4n$, no distinguisher can win the following game with non-negligible probability,
\begin{mdframed}
\begin{center}
  \textsc{Exp-EvalLeft$[k+1,m]$}
\end{center}
\begin{enumerate}
\item $\cD_1'$ makes $a$ queries for some $0 \leq a<k$ and outputs $a$.
\item At the $a+1$-th round of the game, $\cD_1'$ queries $RF(x_0,x_1)$ for some pair of strings $(x_0,x_1)$. (By definition, $\cM^3$ will complete a subverted chain determined by $(x_0,x_1)$. We call this full subverted chain $c$.) 
\item $\cD_1'$ makes another $b$ queries for $b=k-a$.
\end{enumerate}

We say $\cD_1'$ wins \textbf{Exp-Eval$[k+1]$} if $c$ is not a $\cS.\text{Completed}$ chain by the end of the $k$-th round of the experiment and \textbf{Exp-Eval$[k+1]$} aborts at the $(k+1)$-th round with the bad event $\BadEval_5[k+1]$: $\cS^3.\CF^{\text{Inner}}$(or $\cM^3.\CF^{\text{Inner}}$) calls a term $(i,x)$ such that \textcolor{red}{$m \leq i \leq 5n$} and $(i,x) \in c$.
\end{mdframed}

Summarizing the results above, we have:
\begin{lemma}\label{Lemma:EvalNoAnswer is negl implies Eval is negl}
If for some positive integer $k<q_{\cD}$,
\[
  \Pr[\text{$\cD_2$ wins \textbf{Exp-EvalNoAnswer$[k+1]$}}]=\negl(n),
\]
then, for any distinguisher $\cD_1$,
\[
  \Pr[\text{$\cD_1$ wins \textbf{Exp-Eval$[k+1]$}}]=\negl(n).
\]
\end{lemma}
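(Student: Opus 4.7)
The plan is to decompose the event that $\cD_1$ wins \textbf{Exp-Eval$[k+1]$} according to the index $i$ of the offending term that triggers $\BadEval_5[k+1]$, and then to control each piece using the already-established variants \textbf{Exp-EvalRight} and \textbf{Exp-EvalLeft}. Whenever $\cD_1$ wins, the triggering index lies in $3n \leq i \leq 5n$, so by choosing the cutoff $m = 4n$ we have the dichotomy $3n \leq i \leq 4n$ or $4n \leq i \leq 5n$; the first sub-event matches the winning condition of \textbf{Exp-EvalLeft$[k+1,4n]$} and the second matches that of \textbf{Exp-EvalRight$[k+1,4n]$}.

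First I would argue that the distinguisher $\cD_1$ itself can be reused as a distinguisher in both \textbf{Exp-EvalRight$[k+1,4n]$} and \textbf{Exp-EvalLeft$[k+1,4n]$}. This is essentially by definition: as noted immediately after the description of \textbf{Exp-EvalRight$[k+1,m]$}, the only difference between that experiment and \textbf{Exp-Eval$[k+1]$} is that the winning condition is strengthened to require the offending index to fall in a restricted sub-range, while the joint distribution of $(\cM^3.\CF,\cS^3.\CF,RF)$ is unchanged (this was the content of the pivot used in the proof of Lemma~\ref{Lemma:4 vs 5}). So running $\cD_1$ inside \textbf{Exp-EvalRight$[k+1,4n]$} (resp.\ \textbf{Exp-EvalLeft$[k+1,4n]$}) yields a distinguisher $\cD_1'$ (resp.\ $\cD_1''$) whose winning probability equals the probability that $\cD_1$ wins \textbf{Exp-Eval$[k+1]$} with offending index in $[3n,4n]$ (resp.\ $[4n,5n]$).

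Second, I would invoke Lemma~\ref{Lemma:EvalNoAnswer implies EvalRight} with $m = 4n$ to conclude that $\Pr[\cD_1' \text{ wins \textbf{Exp-EvalRight$[k+1,4n]$}}] = \negl(n)$, using the hypothesis that \textbf{Exp-EvalNoAnswer$[k+1]$} has negligible win probability. The symmetric statement for \textbf{Exp-EvalLeft$[k+1,4n]$} (mentioned in the paper as provable by an analogous argument to Lemma~\ref{Lemma:EvalNoAnswer implies EvalRight}, just swapping the roles of the left and right halves of the Feistel chain around the adapted indices) gives $\Pr[\cD_1'' \text{ wins \textbf{Exp-EvalLeft$[k+1,4n]$}}] = \negl(n)$.

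Finally, a union bound combines the two pieces:
\[
\Pr[\cD_1 \text{ wins \textbf{Exp-Eval$[k+1]$}}] \leq \Pr[\cD_1' \text{ wins \textbf{Exp-EvalRight$[k+1,4n]$}}] + \Pr[\cD_1'' \text{ wins \textbf{Exp-EvalLeft$[k+1,4n]$}}] = \negl(n).
\]
The only genuinely delicate point is the first step, namely the distributional equivalence between \textbf{Exp-Eval} and its Right/Left variants; once this is in hand, the decomposition is essentially syntactic. The heavy lifting has already been done by Lemma~\ref{Lemma:EvalNoAnswer implies EvalRight} (and its mirror image), whose proof crucially exploited that $\cM^3$ does not return the answer to $RF(x_0,x_1)$ to the distinguisher, so the distinguisher's subsequent queries depend only on $\CF_{\text{full}}$ and an independent uniform string.
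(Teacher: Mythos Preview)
Your approach is correct and is exactly what the paper intends: it states the lemma as ``Summarizing the results above,'' meaning it is meant to follow immediately from Lemma~\ref{Lemma:EvalNoAnswer implies EvalRight} and its mirror for \textbf{Exp-EvalLeft}, combined via the union bound you describe with the cutoff $m=4n$. One small slip: in your first paragraph you have Left and Right swapped (the sub-event $3n \leq i \leq 4n$ matches \textbf{Exp-EvalRight$[k+1,4n]$}, not \textbf{Exp-EvalLeft}, and vice versa), but you have it correct in the subsequent paragraphs, so the argument goes through.
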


Putting Lemma \ref{Lemma:Mbadcomplete implies EvalNoAnswer is negl} and Lemma \ref{Lemma:EvalNoAnswer is negl implies Eval is negl} together, we have:
\begin{lemma}\label{Lemma:Mbadcomplete implies Eval is negl}
If for some positive integer $k<q_{\cD}$,
\[
  \Pr[\text{$\cM.\BadComplete_5[k+1]$ happens}]=\negl(n),
\]
for any distinguisher $\cD$, then, for any distinguisher $\cD_1$,
\[
  \Pr[\text{$\cD_1$ wins \textbf{Exp-Eval$[k+1]$}}]=\negl(n).
\]
\end{lemma}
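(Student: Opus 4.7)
The plan is straightforward: the statement is essentially a transitive composition of the two immediately preceding lemmas, and no new technical content is required. First I would invoke Lemma~\ref{Lemma:Mbadcomplete implies EvalNoAnswer is negl}, which takes as hypothesis exactly the assumption given here, namely that $\Pr[\cM.\BadComplete_5[k+1]]=\negl(n)$ for every distinguisher $\cD$, and concludes that $\Pr[\cD_2 \text{ wins } \textbf{Exp-EvalNoAnswer}[k+1]] = \negl(n)$ for every distinguisher $\cD_2$. Second, I would feed this negligibility statement into Lemma~\ref{Lemma:EvalNoAnswer is negl implies Eval is negl}, whose hypothesis is exactly the output of the first step, to conclude that $\Pr[\cD_1 \text{ wins } \textbf{Exp-Eval}[k+1]] = \negl(n)$ for every distinguisher $\cD_1$.

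Since both invoked lemmas are quantified uniformly over all distinguishers (so the intermediate distinguisher $\cD_2$ can be taken to depend on $\cD_1$ without loss, via the reductions already constructed in the proofs of those two lemmas), the composition is immediate: the chain
\[
\cM.\BadComplete_5[k+1]\text{ negligible}\;\Longrightarrow\;\textbf{Exp-EvalNoAnswer}[k+1]\text{ hard}\;\Longrightarrow\;\textbf{Exp-Eval}[k+1]\text{ hard}
\]
yields the lemma.

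There is no real obstacle here; the substantive work has already been done in establishing the two prior lemmas. The only thing worth double-checking is that the quantifier order and the dependence of $\cD_2$ on $\cD_1$ (which is built explicitly in the reduction used for Lemma~\ref{Lemma:EvalNoAnswer is negl implies Eval is negl}) match up cleanly, but this is purely a bookkeeping check and introduces no new quantitative loss beyond the negligible bounds already obtained. Consequently, the proof can be stated in one or two sentences as a direct corollary of the two previous lemmas.
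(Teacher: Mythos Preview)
Your proposal is correct and matches the paper's own proof exactly: the paper states this lemma immediately after Lemma~\ref{Lemma:Mbadcomplete implies EvalNoAnswer is negl} and Lemma~\ref{Lemma:EvalNoAnswer is negl implies Eval is negl} with the preface ``Putting [these two lemmas] together, we have,'' and gives no further argument. Your chain of implications and the bookkeeping remark about quantifiers are precisely what is needed.
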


\begin{proof}[Proof of Theorem \ref{Th: eval2}]
The theorem is implied by combining Lemma \ref{Lemma:MBadComplete is negl} and Lemma \ref{Lemma:Mbadcomplete implies Eval is negl}.
\end{proof}

\subsubsection{$\GoodStatus[k]$ is overwhelming}\label{subsub: Goodstatus}
In this section we prove that, assuming $\GoodStatus[k]$, $\GoodStatus[k+1]$ happens with overwhelming probability. We introduce the following experiment between $\cD_4$ and $\cS^3,\cM^3$ to formulate the bad event:

\begin{mdframed}
\begin{center}
  \textsc{Exp-Status$[k+1]$}
\end{center}
\begin{enumerate}
\item $\cD_4$ makes $a$ queries for some $0 \leq a<k$ and outputs $a$.
\item At the $a+1$-th round of the game, $\cD_4$ queries $RF(x_0,x_1)$ for some pair of strings $(x_0,x_1)$. (By definition, $\cM^3$ will complete a subverted chain determined by $(x_0,x_1)$. We call this full subverted chain $c$.) 
\item $\cD_4$ makes another $b$ queries for $b=k-a$.
\end{enumerate}

We say $\cD_4$ wins \textbf{Exp-Status$[k+1]$} if there is an increasing chain $(i,x_i,x_{i+1},x_{i+2})$ in $\cM^3.\CF[k+1]$ such that $(i+1,x_{i+1}) \in Q_c$, $(i+1,x_{i+1}) \notin \cS^3.\CF[k+1]$, and:
\begin{itemize}
\item $(i+2,x_{i+2}) \in \cS^3.\CF[k+1]$, or
\item there is a chain $c' \in C_{\cM.\text{FComp}}[k+1]$ such that $(i+2,x_{i+2}) \in Q_{c'}$ and $(i+1,x_{i+1}) \notin Q_{c'}$.
\end{itemize}

Here, without loss of generality, we ignore the case when $(i,x_i,x_{i+1},x_{i+2})$ is decreasing.
\end{mdframed}

Notice that to prove Theorem \ref{Th: goodstatus}, it is sufficient to show the probability that $\cD_4$ wins \textbf{Exp-Status$[k+1]$} is negligible. We will restrict our attention to the experiment in the rest of the section.

\begin{definition}[Covering Index]
For any positive integer $k<q_{\cD}$, any index $i$ with $1 \leq i \leq 8n$, and any $x \in \{0,1\}^n$, we say the \emph{covering index} of $(i,x)$ at round $k$, $CI_{k}(i,x)$, is equal to $t$ ($t \geq 1$) if $(i,x) \in \cM^3.\CF[k]$, $(i,x) \notin \cS^3.\CF[k]$, and there are exactly $t$ elements $c_1,\ldots,c_{t} \in C_{\cM.\text{FComp}}[k]$ such that $(i,x) \in Q_{c_j}$ for $j=1,\ldots,t$. Otherwise, we say $CI_{k}(i,x)=0$. 
\end{definition} 

Suppose, with non-negligible probability, $\cD_4$ can win \textbf{Exp-Status$[k+1]$} with an increasing chain $(i,x_i,x_{i+1},x_{i+2})$ that has the property $CI_{k+1}(i+1,x_{i+1}) \geq 2$. Then, without loss of generality, we can assume $CI_{a}(i+1,x_{i+1}) \geq 1$, which means $(i+1,x_{i+1})$ has been evaluated before the $(a+1)$-st query from $\cD^4$. Using this observation, we rewrite \textbf{Exp-Status$[k+1]$} as:

\begin{mdframed}
\begin{center}
  \textsc{Exp-Status$[k+1]$}
\end{center}
\begin{enumerate}
\item $\cD_4$ makes $a$ queries for some $0 \leq a<k$ and outputs $a$.
\item At the $a+1$-th round of the game, $\cD_4$ queries $RF(x_0,x_1)$ for some pair of strings $(x_0,x_1)$. (By definition, $\cM^3$ will complete a subverted chain determined by $(x_0,x_1)$. We call this full subverted chain $c$.) 
\item $\cD_4$ makes another $b$ queries for $b=k-a$.
\end{enumerate}

We say $\cD_4$ wins \textbf{Exp-Status$[k+1]$} if there is an increasing chain $(i,x_i,x_{i+1},x_{i+2})$ in $\cM^3.\CF[k+1]$ such that $(i+1,x_{i+1}) \in Q_c$, $(i+1,x_{i+1}) \notin \cS^3.\CF[k+1]$, and
\begin{enumerate}
\item $CI_{k+1}(i+1,x_{i+1}) =1$, or $CI_{k+1}(i+1,x_{i+1}) \geq 2$ and $CI_{a}(i+1,x_{i+1}) \geq 1$;
\item $(i+2,x_{i+2}) \in \cS^3.\CF[k+1]$, or there is a chain $c' \in C_{\cM.\text{FComp}}[k+1]$ such that $(i+2,x_{i+2}) \in Q_{c'}$ and $(i+1,x_{i+1}) \notin Q_{c'}$.
\end{enumerate}

Here, without loss of generality, we ignore the case when $(i,x_i,x_{i+1},x_{i+2})$ is decreasing.
\end{mdframed}

As in the last section, we take two steps to prove $\cD_4$ wins \textbf{Exp-Status$[k+1]$} negligibly. First, notice that the only information that helps $\cD_4$ to generate a chain $(i,x_i,x_{i+1},x_{i+2})$ that breaks the good status is the answer to the query in the $(a+1)$-th round of the game. We turn this observation into a lemma that says, without receiving the answer in the $(a+1)$-th round, the distinguisher can not win the experiment. Second, we prove whether receiving the answer or not does not affect too much the probability that $\cD_4$ wins.

Consider the following modified version of \textbf{Exp-Status$[k+1]$} where the distinguisher $\cD_5$ does not receive the answer in round $(a+1)$.

\begin{mdframed}
\begin{center}
  \textsc{Exp-StatusNoAnswer$[k+1]$}
\end{center}
\begin{enumerate}
\item $\cD_5$ makes $a$ queries for some $0 \leq a<k$ and outputs $a$.
\item At the $a+1$-st round of the game, $\cD_5$ queries $RF(x_0,x_1)$
  for some pair of strings $(x_0,x_1)$, but does not receive the
  answer. (By definition, $\cM^3$ will complete a subverted chain
  determined by $(x_0,x_1)$. We call this full subverted chain $c$.)
\item $\cD_5$ makes another $b$ queries for $b=k-a$.
\end{enumerate}

The winning condition of \textbf{Exp-StatusNoAnswer$[k+1]$} is same as that of \textbf{Exp-Status$[k+1]$}. 
\end{mdframed}

\begin{lemma}\label{Lemma:StatusK implies StatusNoAnswerK+1 is negl}
If $G_5$ is efficient and for some positive integer $k<q_{\cD}$,
\[
  \Pr[\text{$\GoodStatus[k]$ does not happen}]=\negl(n),
\]
for any distinguisher $\cD$, then, for any distinguisher $\cD_1$,
\[
  \Pr[\text{$\cD_5$ wins \textbf{Exp-StatusNoAnswer$[k+1]$}}]=\negl(n).
\]
\end{lemma}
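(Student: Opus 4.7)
The plan is to show that without learning $(x_{8n},x_{8n+1})$ the distinguisher $\cD_5$ can only ``hit'' $Q_c$ by coincidence, so the winning event---an increasing length-$3$ chain $(i,x_i,x_{i+1},x_{i+2})$ in $\cM^3.\CF[k+1]$ with $(i+1,x_{i+1})\in Q_c\setminus\cS^3.\CF[k+1]$ satisfying winning conditions~(1) and~(2)---has negligible probability.

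First I would defer the sampling of every previously undefined $\CF$-value that $\cM^3.\text{Complete}$ uses while computing $c$; call these the \emph{$c$-fresh} values. Since the answer is suppressed, $\cD_5$'s view after round $a{+}1$ depends only on $\cM^3.\CF[a]$ together with the random $\CF$-values drawn while handling the remaining $b$ queries, and is therefore independent of the $c$-fresh values. In the sub-case $CI_{k+1}(i+1,x_{i+1})=1$, $(i+1,x_{i+1})$ is covered by no completed chain other than $c$, so---because entries of $\cM^3.\CF\setminus\cS^3.\CF$ are inserted only through $\cM^3.\text{Complete}$---its $\CF$-value is genuinely $c$-fresh, and the chain relation $x_{i+2}=x_i\oplus \CF_{i+1}(x_{i+1})$ pins this $c$-fresh variable to a target fixed by $\cD_5$'s view, an event of probability at most $1/2^n$. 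In the sub-case $CI_a(i+1,x_{i+1})\geq 1$, the point $(i+1,x_{i+1})$ already lies in $\cM^3.\CF[a]$; since $(i,x_i,x_{i+1},x_{i+2})$ is increasing I would invoke $\GoodStatus[k]$ to conclude that $(i+2,x_{i+2})$ cannot already be present at round $a$ (otherwise both disjuncts of winning condition~(2) would already fail), hence $(i+2,x_{i+2})$ is inserted after round $a$, and its insertion again pins a freshly-drawn $\CF$-value to a prescribed target of the same form.

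Finally I would union-bound over candidate triples: there are at most $\poly(n)$ pairs $(x_i,x_{i+2})$ ever appearing in $\cM^3.\CF[k+1]$, at most $8n$ indices $i$, and at most $\poly(n)$ positions in $c$ (or in a later chain $c'$) at which the fresh value can sit, so the total winning probability is $\poly(n)/2^n=\negl(n)$. The main obstacle will be the second sub-case: there I must carefully trace through $\GoodStatus[k]$ and invoke Lemma~\ref{Lemma:add one point at a time}, Lemma~\ref{Lemma:early does not query late}, and Theorem~\ref{Th: nice properties of increasing chain} applied to $(i,x_i,x_{i+1},x_{i+2})$ and to $c,c'$ in order to identify a $\CF$-value that is both freshly sampled after round $a$ and genuinely unused elsewhere in $\cD_5$'s view, so that the coincidence probability is bounded by $1/2^n$ rather than by a quantity that $\cD_5$ could inflate through correlated queries.
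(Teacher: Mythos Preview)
Your approach diverges from the paper's, and the second sub-case has a concrete gap. The paper does not attempt a direct probabilistic bound; instead it introduces an auxiliary experiment \textbf{Exp-Monotone}$[k+1]$ in which a distinguisher $\cD_6$ runs $\cD_5$ but postpones the $RF(x_0,x_1)$ query to the final round. Since $\cD_5$ never sees the answer, the same underlying $\CF$ randomness yields the same final tables in both experiments, only with a different insertion order. In this reordered game, the $CI_{k+1}=1$ case forces $(i+1,x_{i+1})$ to be evaluated after both of its neighbours (because it lies only in $Q_c$, which is now computed last), contradicting Lemma~\ref{Lemma:add one point at a time}; the $CI_a\geq 1$ case places the offending chain already inside $\cM^3.\CF[k]$ of $\cD_6$'s run and directly violates $\GoodStatus[k]$ for $\cD_6$, which is negligible by the universally quantified hypothesis.

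The gap in your second sub-case is that when $CI_a(i+1,x_{i+1})\geq 1$, the value $\CF_{i+1}(x_{i+1})$ is already fixed in $\cM^3.\CF[a]$---it is not $c$-fresh and not independent of anything that happens afterwards. Nothing then prevents the simulator, in a later round, from computing $x_{i+2}=x_i\oplus\CF_{i+1}(x_{i+1})$ and inserting $(i+2,x_{i+2})$ into $\cS^3.\CF$ with no $1/2^n$ coincidence at all. Your invocation of $\GoodStatus[k]$ does not rescue this: that property only constrains $\cS^3.\CF$ and the $Q_{c'}$ relationships at a single round; it says nothing about presence in $\cM^3.\CF$ at round $a$, nor does it transport a conclusion from round $a$ to round $k+1$. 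The paper's reordering is precisely what closes this hole: moving $c$'s completion past round $k$ turns the winning chain into a direct witness against $\GoodStatus[k]$ in $\cD_6$'s game, and the ``for any distinguisher'' in the hypothesis finishes the argument. Your first sub-case is closer to workable but is essentially a re-derivation of Lemma~\ref{Lemma:add one point at a time}; even there the union bound over ``pairs $(x_i,x_{i+2})$ in $\cM^3.\CF[k+1]$'' needs more care, since that set itself depends on the $c$-fresh values you are conditioning on.
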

\begin{proof}
Suppose $\cD_5$ wins \textbf{Exp-StatusNoAnswer$[k+1]$} with some realization of the randomness of $\CF$: at the end of the experiment, there is a increasing chain $(i,x_i,x_{i+1},x_{i+2})$ in $\cM^3.\CF[k+1]$ such that $(i+1,x_{i+1}) \in Q_c$, $(i+1,x_{i+1}) \notin \cS^3.\CF[k+1]$, and $(i+2,x_{i+2}) \in \cS^3.\CF[k+1]$. (Without loss of generality, we ignore the case where $(i+2,x_{i+2}) \notin \cS^3.\CF[k+1]$.) 

Consider the following experiment:
\begin{mdframed}
\begin{center}
  \textsc{Exp-Monotone$[k+1]$}
\end{center}
\begin{enumerate}
\item $\cD_6$ uses $\cD_5$ to make $a$ queries for some $0 \leq a<k$ and outputs $a$. For each round of the game, $\cD_6$ receives the query from and gives the answer to $\cD_5$.
\item $\cD_6$ remembers the $a+1$-th query $RF(x_0,x_1)$ of $\cD_5$ but does not make the query $(x_0,x_1)$ to $RF(\cdot).$
\item $\cD_6$ uses $\cD_5$ to make another $b$ queries for $b=k-a$. Again, for each round of the game, $\cD_6$ receives the query from and gives the answer to $\cD_5$.
\item $\cD_6$ makes its last query $RF(x_0,x_1)$.
\end{enumerate}
\end{mdframed}

Imagine \textbf{Exp-Monotone$[k+1]$} has the same randomness of $\CF$ with \textbf{Exp-StatusNoAnswer$[k+1]$}. Then, $(i,x_i,x_{i+1},x_{i+2})$ is also a chain of $\cM^3.\CF[k+1]$ in \textbf{Exp-Monotone$[k+1]$}. And,
\begin{itemize}
\item if $CI_{k+1}(i+1,x_{i+1}) =1$, then $\CF_{i+1}(x_{i+1})$ is evaluated by $\cM^3$ after $\CF_i(x_1)$ and $\CF_{i+2}(x_{i+2})$, which violates Lemma \ref{Lemma:add one point at a time}; 
\item if $CI_{k+1}(i+1,x_{i+1}) \geq 2$, $CI_{a}(i+1,x_{i+1}) \geq 1$, then $\GoodStatus[k]$ does not happen.
\end{itemize}

The analysis above shows that, conditioned on the fact that $\GoodStatus[k]$ happens negligibly, $\cD_5$ wins negligibly.
\end{proof}

Now we proceed to show if $\cD_5$ wins \textbf{Exp-StatusNoAnswer$[k+1]$} with negligible probability, $\cD_4$ wins \textbf{Exp-Status$[k+1]$} with negligible probability, too. 

Choose an arbitrary integer $m$ with $4n \leq m < 5n$. Consider the following variation of \textbf{Exp-Status$[k+1]$} for $\cD_4'$.

\begin{mdframed}
\begin{center}
  \textsc{Exp-StatusRight$[k+1,m]$}
\end{center}
Select uniformly a full table of $\CF$ values $\CF_{\text{full}}$ and a $2n$-bit string $\alpha$. In step 1 and 3 of the experiment, all the $\CF$ values queried by the simulators are taken from $\CF_{\text{full}}$ instead of being selected uniformly as usual. In step 2, $\CF$ values queried by the simulators are generated with a special convention explained below.
\begin{enumerate}
\item Same as step 1 of \textbf{Exp-Status$[k+1]$}. 
\item Same as step 2 of \textbf{Exp-Status$[k+1]$} and the answer to the query $RF(x_0,x_1)$ is $\alpha$. 
\item Same as step 3 of \textbf{Exp-Status$[k+1]$}. 
\end{enumerate}

In step 2, $\cM^3$ uses the following convention to fill in its table:
\begin{enumerate}
\item Set $(x_{8n},x_{8n+1}):=\alpha$. 
\item For $i=(2,\ldots,m)$, define $x_i:=x_{i-2} \oplus \CFt_{i-1}(x_{i-1})$. For $m+3 \leq i \leq 8n+1$ define $x_{i-2}:=x_{i} \oplus \CFt_{i-1}(x_{i-1})$. All the $\CF$ values are taken from $\CF_{\text{full}}$.
\item Set $\CF_m(x_m):=x_{m-1} \oplus x_{m+1}$ and $\CF_{u+1}(x_{u+1}):=x_{u} \oplus x_{u+2}$. The game aborts if there is an index $j$ such that $3n \leq j \leq 5n$ and $(j,x_j)$ is in $\cM^3.\CF[a]$ or $\bigcup_{i=1}^{8n}Q_i(x_i)/Q_j(x_j)$.
\item Evaluate $\CFt_m(x_m)$ and $\CFt_{m+1}(x_{m+1})$ and the game aborts if there is an index $j$ such that $3n \leq j \leq 5n$ and $(j,x_j)$ is dishonest.
\end{enumerate}

The winning condition of the experiment is same as that of \textbf{Exp-Status$[k+1]$} except that in this experiment, $\cD_4'$ wins if the chain $(i,x_i,x_{i+1},x_{i+2})$ causing the bad event has the index $i$ with \textcolor{red}{$1 \leq i+1 \leq m$}. For simplicity, we say $W_{\text{StatusRight}}(\CF_{\text{full}},\alpha,m)=1$ if $\cD_4'$ wins for the choice of $(\CF_{\text{full}},\alpha)$ in the experiment.
\end{mdframed}

From the proof of Lemma \ref{Lemma:4 vs 5}, we can see the distribution of $\CF$ in \textbf{Exp-EvalRight$[k+1,m]$} is same as that of \textbf{Exp-Eval$[k+1]$}. The only difference between the two experiments is that the new version has a more strict winning condition than the original one.

To prove the probability that $\cD_1$ wins \textbf{Exp-Status$[k+1]$} is negligible, we first show, for any distinguisher $\cD_4'$, the probability that $\cD_4'$ wins \textbf{Exp-StatusRight$[k+1,m]$} is negligible. Consider the following rewrite of \textbf{Exp-StatusNoAnswer$[k+1]$}, where the randomness of $\CF$ is set like \textbf{Exp-StatusRight$[k+1,m]$} and $\cD_5$ uses $\cD_4'$ as an oracle to play the game.

\begin{mdframed}
\begin{center}
  \textsc{Exp-StatusNoAnswer$[k+1]$}
\end{center}

Select uniformly a full table of $\CF$ values $\CF_{\text{full}}$ and a pair of $2n$-bit strings $\alpha$ and $\beta$. The randomness of $\CF$ and $RF$ are set the same way as they are set in \textbf{Exp-StatusRight$[k+1,m]$}.

\begin{enumerate}
\item $\cD_5$ uses $\cD_4'$ to make $a$ queries for some $0 \leq a<k$ and outputs $a$. For each round of the game, $\cD_5$ receives the query from and gives the answer to $\cD_4'$.
\item At the $a+1$-st round of the game, $\cD_5$ uses $\cD_4'$ to query $RF(x_0,x_1)$ for some pair of strings $(x_0,x_1)$. $\alpha:=RF(x_0,x_1)$ is evaluated by $\cM^3$ but not returned to $\cD_5$. (By definition, $\cM^3$ will complete a subverted chain determined by $(x_0,x_1)$. We call this full subverted chain $c$.) $\cD_5$ selects $\beta$ uniformly and gives it to $\cD_4'$. 
\item $\cD_5$ uses $\cD_4'$ to make $b$ queries ($b=k-a$). For each round of the game, $\cD_5$ receives the query from and gives the answer to $\cD_4'$.
\end{enumerate}

For simplicity, we say $W_{\text{StatusNA}}(\CF_\text{full}, \alpha, \beta)=1$ if $\cD_5$ wins for the choice of $(\CF_\text{full}, \alpha, \beta)$. 
\end{mdframed}

\begin{lemma}\label{Lemma:StatusNoAnswer implies StatusRight}
If for some positive integer $k<q_{\cD}$,
\[
  \Pr[\text{$\cD_5$ wins \textbf{Exp-StatusNoAnswer$[k+1]$}}]=\negl(n)
\]
over the randomness of ($\CF_{\text{full}}$, $\alpha$, $\beta$), and 
\[
  \Pr[\text{$\BadEval_5[k+1]$}]=\negl(n),
\] 
for any distinguisher $\cD$, then, for any integer $4n \leq m <5n$,
\[ 
  \Pr[\text{$\cD_4'$ wins \textbf{Exp-StatusRight$[k+1,m]$}}]=\negl(n)
\]
over the randomness of ($\CF_{\text{full}}$, $\alpha$).
\end{lemma}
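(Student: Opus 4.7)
The plan is to mimic the contradiction strategy used in the proof of Lemma~\ref{Lemma:EvalNoAnswer implies EvalRight}, adapted to the fact that the StatusRight winning event is an existential property of the final table $\cM^3.\CF[k+1]$ rather than a single aborting query. Assume for contradiction that $\Pr[\cD_4'\text{ wins \textbf{Exp-StatusRight}}[k+1,m]]$ is non-negligible. Combined with the hypotheses that $\Pr[\cD_5\text{ wins \textbf{Exp-StatusNoAnswer}}[k+1]]$ and $\Pr[\BadEval_5[k+1]]$ are both negligible, an averaging argument yields a non-negligible set of tuples $(\CF_{\text{full}},\alpha,\beta)$ for which $W_{\text{StatusRight}}(\CF_{\text{full}},\beta)=1$, $W_{\text{StatusNA}}(\CF_{\text{full}},\alpha,\beta)\neq 1$, and $\BadEval_5[k+1]$ does not occur in either experiment. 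I would couple the two experiments $E_1=\textbf{Exp-StatusRight}[k+1,m]$ and $E_2=\textbf{Exp-StatusNoAnswer}[k+1]$ on this common randomness, with a $\cD_5$-style wrapper running $\cD_4'$ internally.

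In round $(a{+}1)$ the two completions produce chains $c_1=(1,x_1,\ldots,x_m,y_{m+1},\ldots,y_{8n})$ and $c_2=(1,x_1,\ldots,x_m,z_{m+1},\ldots,z_{8n})$ that agree on the first $m$ terms, since those depend only on $(x_0,x_1)$ and $\CF_{\text{full}}$. The only positions at which $E_1$ and $E_2$ commit $\CF$-values disagreeing with $\CF_{\text{full}}$ are the three adapted terms $(m,x_m)$, $(m+1,y_{m+1})$, $(m+1,z_{m+1})$, all having index in $[3n,5n]$. Because $\BadEval_5[k+1]$ fails in both experiments, no Inner call after the completion ever touches any term of $c_1$ or $c_2$ at indices in $[3n,5n]$; in particular these three terms are never re-queried. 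Hence the distinguisher's transcript and all subsequent simulator queries are identical in $E_1$ and $E_2$, and the tables $\cM^3.\CF[k+1]$ and $\cS^3.\CF[k+1]$ agree outside those three positions.

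It remains to transfer the StatusRight-winning chain $(i,x_i,x_{i+1},x_{i+2})$ from $E_1$ to $E_2$. Since $i+1\leq m$ we have $i+2\leq m+1$. In the generic subcase $i+2\leq m-1$, the entire chain sits in the common part of both tables, so each witnessing ingredient, namely $(i+1,x_{i+1})\in Q_c$, $(i+1,x_{i+1})\notin\cS^3.\CF[k+1]$, and either $(i+2,x_{i+2})\in\cS^3.\CF[k+1]$ or the existence of $c'\in C_{\cM.\text{FComp}}[k+1]$ with $(i+2,x_{i+2})\in Q_{c'}$ and $(i+1,x_{i+1})\notin Q_{c'}$, transfers verbatim to $E_2$ (using that the prefix of $c_2$ up to index $m$ coincides with that of $c_1$, and that any $c'$ in $E_1$ whose $Q_{c'}$ is common is also a completed chain in $E_2$). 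This forces $W_{\text{StatusNA}}=1$, the desired contradiction. The main obstacle is the boundary subcase $i+2\in\{m,m+1\}$, where the chain relation at index $i+1$ can depend on an adapted value differing across the two games; I would handle it by a finer case analysis, observing that if $(i+1,x_{i+1})$ coincides with an adapted input then the $Q_c$-membership translates to $Q_{c_2}$-membership by construction, and any auxiliary chain $c'$ implicated by the second winning clause must, by Lemma~\ref{Lemma: No subverted chain covers a long honest chain} together with the monotonicity furnished by Lemma~\ref{Lemma:add one point at a time}, be either wholly contained in the common portion of the tables (hence present in $E_2$) or else force an additional middle-index collision that is already excluded by our conditioning on $\neg\BadEval_5[k+1]$.
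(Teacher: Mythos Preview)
Your proposal follows essentially the same contradiction-and-coupling strategy as the paper: assume $W_{\text{StatusRight}}=1$, $W_{\text{StatusNA}}\neq 1$, and $\neg\BadEval_5$ in both experiments on common randomness $(\CF_{\text{full}},\alpha,\beta)$; observe the two completed chains share the first $m$ terms; use $\neg\BadEval_5$ to conclude the three adapted terms $(m,x_m),(m+1,y_{m+1}),(m+1,z_{m+1})$ are never re-queried, so the post-round-$(a{+}1)$ transcripts coincide; deduce the winning witness transfers, a contradiction.

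The paper's own proof is considerably terser than yours: after establishing identical queries it simply asserts ``Since $W_{\text{StatusRight}}(\CF_{\text{full}},\beta)=1$ and $E_1,E_2$ make same queries to $\CF$ after the $(a{+}1)$-st round of the games, $W_{\text{StatusNA}}(\CF_{\text{full}},\alpha,\beta)=1$'' with no witness-transfer case analysis at all. Your boundary subcase $i+2\in\{m,m+1\}$ and the proposed resolution via Lemma~\ref{Lemma: No subverted chain covers a long honest chain} and Lemma~\ref{Lemma:add one point at a time} are additions not present in the paper; the paper treats the transfer as immediate once the tables agree off the adapted positions. Your invocation of those two lemmas for the boundary case is somewhat loose and not obviously the right tool, but since the paper does not engage with this subcase either, your core argument matches the paper's and the extra care does no harm.
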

\begin{proof}
  Suppose that the first probability is negligible and the third is
  not. Then, with non-negligible probability, uniformly selecting a
  table $\CF_\text{full}$ and two $2n$-bit strings $\alpha$, $\beta$
  yields:
\begin{itemize}
\item $W_{\text{StatusRight}}(\CF_\text{full},\beta)=1$ and $\BadEval_5[k+1]$ does not happen in \textbf{Exp-StatusRight$[k+1,m]$} with the parameters $(\CF_\text{full},\beta)$;
\item $W_{\text{StatusNA}}(\CF_\text{full},\alpha,\beta) \neq 1$ and $\BadEval_5[k+1]$ does not happen in \textbf{Exp-StatusNoAnswer$[k+1]$} with the parameters $(\CF_\text{full},\alpha,\beta)$.
\end{itemize}

For convenience, we call the two experiments above $E_1$ and $E_2$. Suppose the full subverted chain generated in $(a+1)$-st round of $E_1$ is $c_1=(1,x_1,\ldots,x_m,y_{m+1},\ldots,y_{8n})$, and the chain in $E_2$ is $c_2=(1,x_1,\ldots,x_m,z_{m+1},\ldots,z_{8n})$. Notice that the first $m$ terms of $c_1$ and $c_2$ are same because the three experiments share the same table $\CF_{\text{full}}$. 

We will find a contradiction by showing that the queries to $\CF$ (made by the distinguisher or the simulators) are identical in $E_1$ and $E_2$ after the $(a+1)$-st round of the games. It is sufficient to show $E_1$ or $E_2$ does not query $(m,x_m)$, $(m+1,y_{m+1})$ and $(m+1,z_{m+1})$, the only three terms that have different $\CF$ values in $E_1$($E_2$) and $\CF_\text{full}$. This is directly implied by the fact that $\BadEval_5[k+1]$ does not happen in $E_1$ and $E_2$.

Since $W_{\text{StatusRight}}(\CF_\text{full},\beta)=1$ and $E_1,E_2$ make same queries to $\CF$ after the $(a+1)$-st round of the games, $W_{\text{StatusNA}}(\CF_\text{full},\alpha,\beta) = 1$. A contradiction.
\end{proof}

A similar proof can be used to show that, for any $3n < m \leq 4n$, no distinguisher can win the following game with non-negligible probability,
\begin{mdframed}
\begin{center}
  \textsc{Exp-StatusLeft$[k+1,m]$}
\end{center}
\begin{enumerate}
\item $\cD_4'$ makes $a$ queries for some $0 \leq a<k$ and outputs $a$.
\item At the $a+1$-st round of the game, $\cD_4'$ queries $RF(x_0,x_1)$ for some pair of strings $(x_0,x_1)$. (By definition, $\cM^3$ will complete a subverted chain determined by $(x_0,x_1)$. We call this full subverted chain $c$.) 
\item $\cD_4'$ makes another $b$ queries for $b=k-a$.
\end{enumerate}

We say $\cD_4'$ wins \textbf{Exp-Status$[k+1]$} if $c$ is not a $\cS.\text{Completed}$ chain by the end of the $k$-th round of the experiment and \textbf{Exp-Status$[k+1]$} aborts at the $(k+1)$-th round with the bad event $\GoodStatus_5[k+1]$: $\cS^3.\CF^{\text{Inner}}$(or $\cM^3.\CF^{\text{Inner}}$) calls a term $(i,x)$ such that \textcolor{red}{$m \leq i \leq 8n$} and $(i,x) \in c$.
\end{mdframed}

Summarize the results above and we have:
\begin{lemma}\label{Lemma:StatusNoAnswer is negl implies Ststus is negl}
If for some positive integer $k<q_{\cD}$,
\[
  \Pr[\text{$\cD_5$ wins \textbf{Exp-StatusNoAnswer$[k+1]$}}]=\negl(n)
\]
over the randomness of ($\CF_{\text{full}}$, $\alpha$, $\beta$), and 
\[
  \Pr[\text{$\BadEval_5[k+1]$}]=\negl(n),
\]
for any distinguisher $\cD$, then, for any distinguisher $\cD_4$,
\[
  \Pr[\text{$\cD_4$ wins \textbf{Exp-Status$[k+1]$}}]=\negl(n).
\]
\end{lemma}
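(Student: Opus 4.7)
The plan is to combine the two restricted bounds on \textbf{Exp-StatusRight}$[k+1,m]$ and \textbf{Exp-StatusLeft}$[k+1,m]$ (the latter obtained by the symmetric argument sketched immediately after Lemma \ref{Lemma:StatusNoAnswer implies StatusRight}) so that together they cover every index position at which a winning bad chain in \textbf{Exp-Status}$[k+1]$ could appear. Fixing $m = 4n$, Lemma \ref{Lemma:StatusNoAnswer implies StatusRight} and its symmetric variant yield that, under the hypotheses of the present lemma, no distinguisher wins \textbf{Exp-StatusRight}$[k+1,4n]$ or \textbf{Exp-StatusLeft}$[k+1,4n]$ with non-negligible probability.

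Given any distinguisher $\cD_4$ aimed at \textbf{Exp-Status}$[k+1]$, I would construct a trivial pass-through reduction to each of the two restricted experiments: a distinguisher $\cD_4'$ which forwards every query of $\cD_4$ to its own oracles and returns the answers verbatim, thereby inducing identical transcripts and identical tables $\cM^3.\CF[k+1]$ and $\cS^3.\CF[k+1]$. If $\cD_4$ wins \textbf{Exp-Status}$[k+1]$, then the witnessing increasing chain $(i,x_i,x_{i+1},x_{i+2})$ satisfies either $i+1 \leq 4n$ or $i+1 \geq 4n$; in the respective cases the very same transcript makes $\cD_4'$ win \textbf{Exp-StatusRight}$[k+1,4n]$ or \textbf{Exp-StatusLeft}$[k+1,4n]$. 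A union bound over these two cases then gives the conclusion. Note that the assumption $\Pr[\BadEval_5[k+1]] = \negl(n)$ is needed precisely to invoke Lemma \ref{Lemma:StatusNoAnswer implies StatusRight}, which in turn ensures the distributional equivalence between the restricted Right/Left conventions (adapting at position $m$ or $m+1$ to accommodate a prescribed $RF$ answer) and the ordinary execution.

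No substantive obstacle is expected, since the bulk of the combinatorial work has already been discharged in Lemma \ref{Lemma:StatusK implies StatusNoAnswerK+1 is negl} and Lemma \ref{Lemma:StatusNoAnswer implies StatusRight}. The only minor points are: (i) the overlap at $i+1 = 4n$, which may be assigned arbitrarily to either side; and (ii) confirming that the winning event of \textbf{Exp-Status}$[k+1]$ is set-theoretically the union of the Right and Left winning events at $m = 4n$, a fact which is immediate from inspection of their respective winning conditions. Consequently, the proof reduces to a one-line application of the union bound after invoking Lemma \ref{Lemma:StatusNoAnswer implies StatusRight} twice.
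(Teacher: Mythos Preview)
Your proposal is correct and matches the paper's approach: the paper presents this lemma as an immediate summary of Lemma~\ref{Lemma:StatusNoAnswer implies StatusRight} and its symmetric Left counterpart, without giving a separate proof, and your union-bound argument at $m=4n$ is exactly the intended way to fill in that summary. The only minor remark is that the distributional equivalence between \textbf{Exp-StatusRight}$[k+1,m]$ and \textbf{Exp-Status}$[k+1]$ follows from the proof of Lemma~\ref{Lemma:4 vs 5} (not from the $\BadEval_5$ assumption, which is instead consumed inside the proof of Lemma~\ref{Lemma:StatusNoAnswer implies StatusRight}), but this does not affect the correctness of your argument.
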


\begin{proof}[Proof of Theorem \ref{Th: goodstatus}]
  The theorem is implied by combining Lemma \ref{Lemma:StatusK implies StatusNoAnswerK+1 is negl}, Theorem \ref{Th: eval} and Lemma \ref{Lemma:StatusNoAnswer is negl implies Ststus is negl}.
\end{proof}

\subsection{Efficiency of $G_5$}\label{section:efficiency}
In this section, we are going to show that the number of the elements in $\cS^3.\CF$ and $\cM^3.\CF$ are bounded by a polynomial function if the distinguisher $\cD$ makes at most $q_{\cD}$ ($q_{\cD}$ is polynomial) queries to $\CF$ or $RF$ (and $RF^{-1}$). 

\begin{lemma}\label{Lemma:No chain covers 3 disjoint length 11 chain}
If $G_5$ is efficient, then with overwhelming probability, there are not a chain $c=(1,w_1,\ldots,w_{8n})$ and three pairwise disjoint increasing chains $c_1=(i,x_{i},\ldots,x_{i+10})$, $c_2=(j,y_{j},\ldots,y_{j+10})$ and $c_3=(k,z_{k},\ldots,z_{k+10})$ in $\cM^3.\CF$, such that
\begin{itemize}
\item for all $(i,x) \in c$, $\CFt_i(x)$ is defined;
\item $c$ is disjoint with $c_1,c_2,c_3$;
\item $({i+10},x_{i+10}),({j+10},y_{j+10}),({k+10},z_{k+10}) \in Q_c$.
\end{itemize}
\end{lemma}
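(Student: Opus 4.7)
The plan is to parallel the probabilistic approach used in Lemma~\ref{Lemma: No subverted chain covers a long honest chain} and Lemma~\ref{Lemma: No point covers length 8 unsubverted chain}. As in those proofs, I would move to the ``full table'' model, uniformly selecting all $F$-values and affine parameters $(a_i,b_i)$ at the outset. Under this model, the event described in the lemma statement only becomes more likely, so an upper bound in this setting suffices. Assuming $G_5$ is efficient, $\cM^3.\CF$ has polynomial size, so both the number of candidate chains $c$ (with $\CFt$ defined on every element) and the number of length-$11$ chains in $\cM^3.\CF$ are polynomial; this permits a polynomial union bound over the starting triples $(1,w_1,w_2)$, $(i,x_i,x_{i+1})$, $(j,y_j,y_{j+1})$, $(k,z_k,z_{k+1})$.

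With the starting data fixed, I would bound the probability (over $F$ and $(a_l,b_l)$) that each of the three endpoints $x_{i+10},y_{j+10},z_{k+10}$ lies in $Q_c$. Since each $c_t$ is increasing and pairwise disjoint from $c$ and from the other $c_{t'}$, the values $\CF_{i+9}(x_{i+9})$, $\CF_{j+9}(y_{j+9})$, $\CF_{k+9}(z_{k+9})$ are evaluations of $F$ at \emph{distinct} affine inputs $a_{l}\cdot x\oplus b_{l}$; by pairwise independence of $x\mapsto a_l\cdot x\oplus b_l$ and the fact that these inputs do not coincide with those appearing in the evaluation of $Q_c$ (by disjointness), the three random variables are jointly uniform and independent of $Q_c$. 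Each event $x_{i+10}\in Q_c$ has conditional probability at most $|Q_c|/2^n\le 8nq_{\cA}/2^n$, and the three events are conditionally independent, giving a per-configuration bound of $O((nq_{\cA}/2^n)^3)$.

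Combining the polynomial union bound with the per-configuration bound $O((nq_{\cA})^3/2^{3n})$ yields the desired $\negl(n)$ conclusion. To tighten the union bound in the spirit of Lemma~\ref{Lemma: No subverted chain covers a long honest chain}, I would, if needed, also extract additional ``intermediate'' independence: for instance, each $c_t$ of length 11, being increasing and disjoint from $c$, has several interior $\CF$ values that are uniform conditioned on $Q_c$, so one can interleave additional factors of $nq_{\cA}/2^n$ from intermediate positions (by a conditional-probability chain analogous to the $E^{s+1},E^{s+3},E^{s+5},E^{s+7}$ breakdown in Lemma~\ref{Lemma: No subverted chain covers a long honest chain}). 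This provides ample slack against the number of starting-tuple choices being union-bounded over.

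The main obstacle is to rigorously justify the conditional independence of the three endpoint events. The definition of disjointness only rules out shared \emph{adjacent pairs}, so the three chains could in principle share individual elements, and we must verify that the affine/$F$-evaluations at positions $i+9,j+9,k+9$ are distinct and fresh with respect to both $Q_c$ and each other. This will be handled via case analysis on how the chains intersect, analogous to the case analyses in Lemma~\ref{Lemma: querying odd point implies querying all the even points before} and Lemma~\ref{Lemma: late points querying odd point implies querying all the early even points}; the pairwise independence of the affine maps at distinct indices, and their independent uniform random selection at the same index for distinct inputs, is enough to secure the needed independence in every case.
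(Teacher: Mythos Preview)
There is a genuine gap in your independence argument. You claim that disjointness of $c$ from $c_t$ makes the affine input at position $i+9$ ``not coincide with those appearing in the evaluation of $Q_c$,'' hence $\CF_{i+9}(x_{i+9})$ is fresh relative to $Q_c$. But disjointness of chains means only that they share no \emph{adjacent pair} of elements; it says nothing about whether the subversion algorithm $\cA$, when computing some $\CFt_l(w_l)$ along $c$, chooses to query $\CF_{i+9}(x_{i+9})$. The set $Q_c=\bigcup_l Q_l(w_l)$ is adversarially determined and may well contain $(i+9,x_{i+9})$ even though $c_t$ and $c$ are disjoint as chains. The same flaw infects your fallback plan of ``extracting intermediate independence'' from interior positions of $c_t$: none of those positions is guaranteed to lie outside $Q_c$ either. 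So neither your primary per-configuration bound nor your tightening step goes through.

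The paper's proof uses Lemma~\ref{Lemma: No subverted chain covers a long honest chain} not as a stylistic template but as an \emph{ingredient}: since $c_t$ has length $11>10$ and is disjoint from $c$, that lemma forbids all of $c_t$ from lying in $Q_c$. Combined with the hypothesis $(i+10,x_{i+10})\in Q_c$, this forces a transition index $m$ inside $c_t$ with $(m,x_m)\notin Q_c$ but $(m+1,x_{m+1})\in Q_c$. At that $m$ the value $\CF_m(x_m)$ really is fresh conditioned on $Q_c$, yielding one honest $\poly(n)/2^n$ factor per $c_t$. Three pairwise disjoint $c_t$'s give three such factors, which are then played off against a $2^{2n}$ union bound over \emph{all} starting pairs $(w_1,w_2)$ for $c$. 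Note also that your combination of ``pass to the full table'' with ``polynomial union bound over candidate $c$'' is incoherent: in the full-table model every pair $(w_1,w_2)$ determines a subverted chain with all $\CFt$ defined, so the set of candidate $c$ has size $2^{2n}$, not $\poly(n)$.
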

\begin{proof}
According to Lemma \ref{Lemma: No subverted chain covers a long honest chain}, if $({i+10},x_{i+10})\in Q_c$, then there exists an index $m$ ($i+1 \leq m \leq i+9$) such that in the length 3 monotone increasing chain $(x_{i-1},x_i,x_{i+1})$, $(i,x_i) \notin Q_c$ but $(i+1,x_{i+1}) \in Q_c$. Now we turn this observation into a proof.

Consider the following experiment in $G_5$. Take an arbitrary pair of $n$-bit strings $(w_1,w_2)$. $\cD$ tries to find a subverted chain $c$ starting with $(w_1,w_2)$ (w.l.o.g., we only consider subverted chain for convenience) and a length 3 increasing chain $(x_{i-1},x_i,x_{i+1})$ such that $(i,x_i) \notin Q_c$ and $(i+1,x_{i+1}) \in Q_c$. A quick thought reveals that the probability that $\cD$ wins is negligible ($8nq_{\cA}/s^n$): Suppose, without loss of generality, $\cD$ queries all the elements in $Q_c$ at the beginning of $G_5$. At some moment of the experiment, $\cD$ will select a pair of terms $(i-1,x_{i-1},x_i)$ as the starting pair of target length 3 chain. It is easy to see that, since $(i,x_i) \notin Q_c$, $(i+1,x_{i+1}) \in Q_c$ with probability not greater than $\poly(n) \cdot 8nq_{\cA}/2^n$, where $\poly(n)$ denotes the upper bound of the number of the terms in $\cM^3.\CF$.

For any pair $(w_1,w_2)$, we define the event:
\[
  E(w_1,w_2):= \left\{ \parbox{7cm}{
        There are three monotone increasing unsubverted chains $c_1=(i,x_{i},\ldots,x_{i+10})$, $c_2=(j,y_{j},\ldots,y_{j+10})$ and $c_3=(k,z_{k},\ldots,z_{k+10})$ in $\cM^3.\CF$, such that $c$ is disjoint with $c_1,c_2,c_3$ and $({i+10},x_{i+10}),({j+10},y_{j+10}),({k+10},z_{k+10}) \in Q_c$, where $c$ is the subverted starting with $(w_1,w_2)$}
    \right\}
  \]
  Finally we have 
\[
\sum_{(w_1,w_2) \in \{0,1\}^{2n}}\Pr[E(w_1,w_2)] < 2^{2n} \cdot (\poly(n) \cdot 8nq_{\cA}/2^n)^3 = \negl(n). \qedhere
\]
\end{proof}

\begin{lemma}\label{Lemma:No chain covers 21 length 11 chain}
Suppose $G_5$ is efficient. Let $C_{\text{11}}$ be a set of length 11 increasing chains and $c$ be a chain in $\cM^3.\CF$ such that $c$ is disjoint with any element in $C_{\text{11}}$, and for all $(i,x) \in c$, $\CFt_i(x)$ is defined. Then, with overwhelming probability, there are at most 20 chains $c'=(i,x_i,\ldots,x_{i+10}) \in C_{\text{11}}$ such that $(i+10,x_{i+10}) \in Q_c$.
\end{lemma}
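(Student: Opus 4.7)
I would argue by contradiction: assume that at least $21$ distinct chains in $C_{11}$ have their endpoint in $Q_c$, and derive the existence of three pairwise disjoint such chains, contradicting Lemma~\ref{Lemma:No chain covers 3 disjoint length 11 chain}. The key observation is that two length-$11$ increasing chains in $\cM^3.\CF$ that fail to be disjoint must in fact be two windows of a common maximal increasing chain. Indeed, by definition of non-disjointness they share a pair of adjacent elements $(k,x_k),(k+1,x_{k+1})$ at the same absolute indices, and the Feistel recursion $x_{j+1}=x_{j-1}\oplus\CF_j(x_j)$ is deterministic on the portion of $\cM^3.\CF$ where the relevant values are defined; thus both chains are contained in a uniquely determined maximal increasing chain $M$, and each is specified by its starting index in $M$. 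Conversely, two length-$11$ windows of the same $M$ with starting indices $b,b'$ are disjoint iff $|b-b'|\ge 10$, while windows of distinct maximal chains are automatically disjoint.

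With this structural fact in hand, I would partition the $\ge 21$ chains according to which maximal increasing chain of $\cM^3.\CF$ contains them. If three distinct maximal chains appear, selecting one window from each yields three pairwise disjoint chains in $C_{11}$ with endpoints in $Q_c$ (and disjoint from $c$, by hypothesis on $C_{11}$), contradicting Lemma~\ref{Lemma:No chain covers 3 disjoint length 11 chain}. Hence at most two maximal chains are relevant. If only one maximal chain $M$ is involved, let $b_1<\dots<b_k$ be the starting indices of the windows; applying the "no three pairwise disjoint" constraint to the triple $(b_1,b_j,b_k)$ forces $b_j-b_1<10$ or $b_k-b_j<10$ for every $j$, so all $b_j$ lie in $\{b_1,\ldots,b_1+9\}\cup\{b_k-9,\ldots,b_k\}$, a set of at most $20$ integers, giving $k\le 20$. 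If two maximal chains $M_1,M_2$ are involved, then any window in $M_1$ is disjoint from any window in $M_2$, so two disjoint windows within $M_1$ together with any window in $M_2$ would yield three pairwise disjoint chains; hence within each $M_i$ all windows are pairwise non-disjoint, their starting indices lie in an interval of length $\le 9$, so there are at most $10$ per maximal chain and at most $20$ in total.

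The main technical obstacle is justifying that two non-disjoint length-$11$ increasing chains really lie in a common maximal increasing chain of $\cM^3.\CF$; this relies on the deterministic Feistel extension from any shared adjacent pair. Once this structural observation is secured, the remainder of the argument is a short combinatorial analysis of gaps between starting indices, and the bound $20$ is tight in both case analyses above.
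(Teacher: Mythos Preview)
Your proposal is correct and shares the paper's overall strategy: suppose there are at least 21 such chains, extract three that are pairwise disjoint, and invoke Lemma~\ref{Lemma:No chain covers 3 disjoint length 11 chain} for a contradiction. The paper carries out the extraction with a one-line degree argument---it asserts that each length-11 chain is non-disjoint with at most 9 others, so a greedy selection from 21 chains produces three pairwise disjoint ones. Your route differs only in the combinatorics of that extraction: you first prove the structural fact that two non-disjoint length-11 increasing chains must be windows of a common maximal increasing chain (this is what underlies the paper's degree claim), then partition the 21 chains by maximal chain and do a start-index case analysis (one, two, or at least three maximal chains). This is longer but makes the constant 20 fully transparent, whereas the paper leaves the ``at most 9 others'' step without justification.
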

\begin{proof}
Suppose there are 21 chains in $C_{\text{11}}$ that satisfy the property in the lemma. Notice that for each length 11 chain $c'$, there are at most 9 other length 11 chains that are not disjoint with $c'$. Then, among the 21 chains satisfying the property in the lemma, we can find 3 pairwise disjoint chains. This contradicts Lemma~\ref{Lemma:No chain covers 3 disjoint length 11 chain}.
\end{proof}

\begin{definition}[Order of a chain]
We define the order of an unsubverted chain $c=(s,x_s,\ldots,x_{s+r})$ in $\cM^3.\CF$ to be:
\[
  O_{\cM^3}(c):=\min_{k=s,\ldots,s+r-1}\Bigl\{\max\{O_{\cM^3}(k,x_k),O_{\cM^3}(k+1,x_{k+1})\}\Bigr\}
\]
Intuitively speaking, the order of a chain describes the time when a chain is ``determined.''
\end{definition}

\begin{lemma}\label{Lemma:the number of the elements are linear in the number of the unsubverted chains}
Suppose $G_5$ is efficient. Let $C_{\text{Disj}}$ be a set of pairwise disjoint unsubverted chains with length greater than or equal to 4 in $\cM^3.\CF$. Define the set $A$ to be the set of the elements of the chains in $C_{\text{Disj}}$. Then, with overwhelming probability, $|A| \geq \sum_{c \in C_{\text{Disj}}}(L(c)-3)$.
\end{lemma}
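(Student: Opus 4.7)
The plan is to reformulate the claim $|A|\ge\sum(L(c)-3)$ as the ``duplicate count'' bound $\sum L(c)-|A|\le 3m_c$, where $m_c:=|C_{\text{Disj}}|$, and to establish the latter by a sequential addition argument. Since pairwise disjointness means no two chains share a length-two subchain, the consecutive-element pairs (``chain-edges'') across different chains are all distinct, so the total number of such edges equals $E=\sum(L(c)-1)=\sum L(c)-m_c$; hence the desired inequality is equivalent to $E-|A|\le 2m_c$, a bound on the cyclomatic quantity of the union graph with vertex set $A$ and edge set given by chain-edges.

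Next, I would order the chains $c_1,\ldots,c_{m_c}$ by $T(c_i):=\max_{a\in c_i}O_{\cM^3}(a)$, the time at which $c_i$ becomes fully present in $\cM^3.\CF$, and set $s_i:=|c_i\cap\bigcup_{j<i}c_j|$. A direct double count gives $\sum_i s_i=\sum_a(k(a)-1)=\sum L(c)-|A|$, where $k(a)$ is the number of chains in $C_{\text{Disj}}$ passing through $a$; so it suffices to prove $\sum_i s_i\le 3m_c$. The key structural input is Lemma~\ref{Lemma:add one point at a time}, which tells us that with overwhelming probability every unsubverted chain in $\cM^3.\CF$ is monotone (increasing or decreasing) or decomposes into a decreasing prefix joined to an increasing suffix at a single pivot. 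In all three cases $T(c_i)$ is attained at one of the two endpoints of $c_i$, so the latest element of $c_i$ is automatically outside every earlier chain, which already yields the trivial estimate $s_i\le L(c_i)-1$.

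I would then sharpen this to the pointwise bound $s_i\le 3$ by contradiction. Assume $c_i$ shares $k\ge 4$ vertices with $\bigcup_{j<i}c_j$; enumerate the shared vertices in the order they appear along $c_i$. Disjointness forbids two adjacent shared positions (along $c_i$) from lying in a common earlier chain, since otherwise the two chains would share a length-two subchain. Combining this with the Feistel identity at each shared vertex $(j,x)$ (which, together with disjointness, forces the left and right neighbours in the two chains to both differ whenever either differs), and applying a pigeonhole over the earlier chains meeting $c_i$, the plan is to exhibit a single earlier chain $c'\in\{c_1,\ldots,c_{i-1}\}$ that covers ten consecutive elements of a long unsubverted sub-chain of $c_i$ disjoint from $c'$ --- which directly contradicts Lemma~\ref{Lemma: No subverted chain covers a long honest chain}.

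The main obstacle will be this concentration step: the four shared vertices of $c_i$ could a priori be scattered across four distinct earlier chains, whereas Lemma~\ref{Lemma: No subverted chain covers a long honest chain} requires a \emph{single} covering chain over ten consecutive positions. I anticipate handling this by a case split on the classification of $c_i$ (increasing, decreasing, or pivot-type), working on each maximal monotone segment separately, and using the $\CF$-value collision bounds of Section~\ref{Subsec: preparation} to rule out configurations in which the shared vertices are spaced too widely to trigger the required covering. If the pointwise bound $s_i\le 3$ turns out to be too strong in the pivot case, I would fall back on an averaging argument showing $\sum_i s_i\le 3m_c$ directly, bounding $\sum_a (k(a)-1)\le \sum_a k(a)[k(a)\ge 2]$ and invoking Lemma~\ref{Lemma:No chain covers 21 length 11 chain} to limit how many chains can simultaneously share any given vertex.
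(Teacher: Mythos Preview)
Your double-counting setup --- order the chains, set $s_i=|c_i\cap\bigcup_{j<i}c_j|$, and aim for $s_i\le 3$ --- is exactly the paper's strategy, phrased there as ``each chain has at most three non-original terms.'' The gap is in how you propose to establish $s_i\le 3$.

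Lemma~\ref{Lemma: No subverted chain covers a long honest chain} and Lemma~\ref{Lemma:No chain covers 21 length 11 chain} are statements about $Q_c$, the set of positions the \emph{subversion algorithm} queries while computing $\CFt$ along $c$. They say nothing about which vertices two unsubverted chains can have in common. A shared vertex $(j,x)\in c_i\cap c'$ has no reason to lie in $Q_{c'}$, so ``$c_i$ has four shared vertices'' cannot be converted into ``some earlier $c'$ has ten consecutive positions of $c_i$ inside $Q_{c'}$.'' Your Feistel-identity observation (left neighbours differ iff right neighbours differ at a shared vertex) is correct but does not bridge this category mismatch, and the fallback via Lemma~\ref{Lemma:No chain covers 21 length 11 chain} has the same problem. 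The ``concentration'' obstacle you anticipate is also beside the point: no concentration onto a single earlier chain is needed.

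The paper's argument is a direct collision bound using only Lemma~\ref{Lemma:add one point at a time} and the uniformity of fresh $\CF$ values. If $c$ has four non-original vertices, two of them are at distance $\ge 3$ along $c$, and by the monotone decomposition one of them, say $(i,x_i)$, sits at the end of a strictly decreasing length-three segment $(i,x_i,x_{i+1},x_{i+2})$ of $c$. Let $c'$ be an earlier witness chain through $(i,x_i)$; disjointness forces $(i+1,x_{i+1})\notin c'$. From the ordering one checks that $\CF_{i+1}(x_{i+1})$ is still unassigned when $c'$ is determined, so $x_i=x_{i+2}\oplus\CF_{i+1}(x_{i+1})$ is uniform given that state and hits the already-fixed $i$-th coordinate of $c'$ with probability $2^{-n}$; a union bound over polynomially many configurations finishes. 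Note also that the paper orders chains by $O_{\cM^3}(c)=\min_k\max\{O_{\cM^3}(k,x_k),O_{\cM^3}(k+1,x_{k+1})\}$ rather than your $T(c)=\max_a O_{\cM^3}(a)$; the freshness step relies on this choice, and with your $T$ it breaks down (since $(i,x_i)\in c'$ forces $O_{\cM^3}(i+1,x_{i+1})<O_{\cM^3}(i,x_i)\le T(c')$).
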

\begin{proof}
For any $c \in C_{\text{Disj}}$ and a term $(i,x)$ in $c$, we say $(i,x)$ is \emph{original} in $c$ if there does not exist a different element $c' \in C_{\text{Disj}}$ such that $c$ and $c'$ intersects at $(i,x)$ and $O_{\cM^3}(c) \geq O_{\cM^3}(c')$. Notice that a term $(i,x)$ can be original in at most one chain.

Now we are going to show that, with overwhelming probability, each element in $C_{\text{Disj}}$ contains at most 3 non-original terms. Suppose there is a chain $c=(s,x_s,\ldots,x_{s+r})$ that has four non-original terms. Then there are two non-original elements, $(i,x_i)$ and $(j,x_j)$, such that $s \leq i < j-2 \leq s+r-2$. Because of Lemma \ref{Lemma:add one point at a time}, without loss of generality, we assume
\[
  O_{\cM^3}(i,x_i)>O_{\cM^3}(i+1,x_{i+1})>O_{\cM^3}(i+2,x_{i+2})\,.
\]
Since $(i,x_i)$ is non-original in $c$, there is a chain $c' \neq c$ such that $(i,x_i) \in c'$ and $O_{\cM^3}(c) \geq O_{\cM^3}(c')$. Since $c$ and $c'$ are disjoint, $(i+1,x_{i+1}) \notin c'$. Then, since $O_{\cM^3}(c) \geq O_{\cM^3}(c')$ and $O_{\cM^3}(i+1,x_{i+1})>O_{\cM^3}(i+2,x_{i+2})$, we have $O_{\cM^3}(c) > O_{\cM^3}(c')$, which means $(i+1,x_{i+1})$ is not evaluated when $c'$ has been determined. Finally, because $\cM^3.\CF(i+1,x_{i+1})$ is selected uniformly, $(i,x_i) \in c'$ with negligible probability. A contradiction.

Going back to the proof of the lemma, since each term is original in at most one chain and each chain in $C_{\text{Disj}}$ has all but 3 original elements, $|A|$ is lower bounded by the sum of the original terms in the elements of $C_{\text{Disj}}$, which is not less than $\sum_{c \in C_{\text{Disj}}}(L(c)-3)$.
\end{proof}

\begin{theorem}\label{Th:efficiency}[Efficiency of $G_5$]
For any positive integer $k \leq q_{\cD}$, with overwhelming probability, at the end of the $k$-th round of $G_5$, there are fewer than $(88q_{\cA}+1)k$ terms in $\cS^3.\CF$ and fewer than $8nq_{\cA}k$ terms in $\cM^3.\CF$.
\end{theorem}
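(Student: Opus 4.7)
The plan is to bound the two tables by a joint induction on $k$, tracking carefully the source of each new $\CF$ entry per round.

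For $|\cM^3.\CF[k]|$, the bound $8n q_{\cA} k$ follows by direct accounting of the work done in round $k{+}1$. If the round-$(k{+}1)$ query is an $RF$ query, then $\cM^3.\text{Complete}$ is invoked exactly once and evaluates $\CFt$ at each of the $8n$ positions of a full subverted chain; each $\CFt$ evaluation triggers at most $q_{\cA}$ calls to $\cM^3.\CF^{\text{Inner}}$, so at most $8n q_{\cA}$ fresh entries are inserted. If the round is a $\CF$ query, $\cS^3.\CF^{\text{Inner}}$ inserts one entry into $\cM^3.\CF$, and any cascading $\cS^3.\text{HonestyCheck}/\cS^3.\text{Complete}$ work also routes its new $\CF$ values through $\cS^3.\CF^{\text{Inner}}$ and hence into $\cM^3.\CF$; this contribution is dominated by the $\cS^3.\CF$ bound proved in parallel.

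For $|\cS^3.\CF[k]|$, I split the entries by origin. Direct $\CF$ queries from $\cD$ contribute exactly one entry each, accounting for the $+k$ in the stated bound. All other entries are inserted inside $\cS^3.\text{HonestyCheck}$ (via the $\CFt$-evaluation that implements the honesty test) or inside $\cS^3.\text{Complete}$ (via $\CFt$-evaluation along the newly built chain, or via copying $Q_c$ from $\cM^3.\CF$ when the chain is already $\cM^3$-completed). Let $m$ be the number of HonestyCheck invocations in the first $k$ rounds that are \emph{not} discarded by $\cS^3.\text{Check}$. By the definition of $\cS^3.\text{Check}$ these $m$ chains are pairwise disjoint length-$n/10$ unsubverted chains living entirely inside $\cS^3.\CF$, and Lemma~\ref{Lemma:the number of the elements are linear in the number of the unsubverted chains} therefore gives
\[
   |\cS^3.\CF[k]| \;\ge\; m\,(n/10 - 3)\,.
\]
So it suffices to establish $m = O(k)$ and to show that each HonestyCheck/Complete pair introduces only a controlled number of \emph{genuinely new} entries.

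The main obstacle is precisely the last point: if each $\cS^3.\text{Complete}$ is charged naively with $\Theta(n q_{\cA})$ new entries, the inequalities become circular ($|\cS^3.\CF|$ upper-bounded in terms of $m$, and $m$ lower-bounded in terms of $|\cS^3.\CF|$) and yield nothing. The resolution is to treat the creation of HonestyCheck-eligible chains as a branching process seeded only by (i) the distinguisher's direct queries and (ii) the at-most-$k$ completions triggered by $\cM^3$. For any already-completed full subverted chain $c$, Lemma~\ref{Lemma:No chain covers 21 length 11 chain} bounds by $20$ the number of length-$11$ unsubverted chains whose endpoint sits inside $Q_c$, and Lemma~\ref{Lemma:No chain covers 3 disjoint length 11 chain} rules out three mutually disjoint such witnesses. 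Together these lemmas cap the number of length-$n/10$ HonestyCheck-eligible chains that can be charged to any single prior completion; combined with $m_{\cM} \le k$, they close the recursion for $m$ and force $m = O(k)$. Feeding this back into an amortized accounting of the $\CFt$-evaluations produced inside HonestyCheck and Complete (where most of the $\CF$ queries already lie on or near the chain already present in $\cS^3.\CF$, so only $O(q_{\cA})$ entries are genuinely new per HonestyCheck on average) yields the clean bound $|\cS^3.\CF[k]| \le (88 q_{\cA} + 1)k$.
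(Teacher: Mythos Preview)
Your outline has the right skeleton—count the non-discarded HonestyCheck instances (your $m$, the paper's $|C_G|$) and multiply by the per-instance cost—but the quantitative core is off by a factor of $n$. You write ``it suffices to establish $m = O(k)$,'' yet each HonestyCheck/Complete pair evaluates $\CFt$ at $\Theta(n)$ positions of a full chain and hence genuinely inserts up to $\Theta(n q_{\cA})$ fresh $\CF$ entries; with only $m = O(k)$ this gives $|\cS^3.\CF[k]| = O(n q_{\cA} k)$, not $(88 q_{\cA}+1)k$. Your compensating claim that ``only $O(q_{\cA})$ entries are genuinely new per HonestyCheck on average'' is asserted without argument, and there is no reason it should hold: the bulk of a freshly completed chain sits outside the triggering length-$n/10$ segment. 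Likewise, your branching-process sketch (``cap the number of HonestyCheck-eligible chains chargeable to a single completion'') at best bounds a branching factor by a constant like $20$; it does not force $m=O(k)$, let alone anything smaller.

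The paper instead proves the sharper bound $|C_G| \le k/(n/10-42) = O(k/n)$, and the mechanism is different from yours. It partitions the points lying on generator chains into head points ($\le 19|C_G|$), tail points covered by some $Q_c$ or $Q_{g^{-1}(c)}$ with $c\in C_G$ ($\le 20|C_G|$, via Lemma~\ref{Lemma:No chain covers 21 length 11 chain}), and a residual set $G_{\text{Query}}$. The decisive observation—absent from your proposal—is that every point in $G_{\text{Query}}$ must have entered $\cS^3.\CF$ via a \emph{direct distinguisher query}, since it was not touched by any HonestyCheck or Complete; hence $|G_{\text{Query}}| \le k$. Combined with your own lower bound $|G| \ge (n/10-3)|C_G|$, this forces $(n/10-42)|C_G| \le k$, and then $|\cS^3.\CF[k]| \le 8n q_{\cA}|C_G| + k \le (88 q_{\cA}+1)k$. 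The $\cM^3.\CF$ bound is then derived \emph{from} the $\cS^3.\CF$ bound (plus at most $k$ invocations of $\cM^3.\text{Complete}$, each costing $8n q_{\cA}$), not by a separate per-round induction as you suggest.
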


\paragraph{Remark.} In the proof of Theorem \ref{Th:efficiency}, we will make use of Lemma \ref{Lemma:No chain covers 21 length 11 chain} and Lemma \ref{Lemma:the number of the elements are linear in the number of the unsubverted chains}. However, these lemmas already take efficiency of $G_5$ as their assumptions. To reassure the reader that there is not a circular argument here, we imagine that the $k$-th round of the game is forced to end when $\cS^3.\CF$ contains more than $(88q_{\cA}+1)k$ elements or $\cM^3.\CF$ contains more than $8nq_{\cA}k$ elements. In this way, we can also feel free to reason about the tables of the simulators at the end of $k$-th round of the game (e.g., $\cS^3.\CF[k]$, $C_{\cS.\text{FComp}}[k]$).

\begin{proof}

In $\cS^3.\CF[k]$, for any unsubverted chain $c$, we call $c$ a \emph{generator} if $c$ was processed by the procedure $\cS^3.\text{HonestCheck}$. We denote by $C_G$ the set of generators. We define a function $g$ from $C_{\cS.\text{FComp}}[k]$ to $C_G$: for each $c_1 \in C_{\cS.\text{FComp}}[k]$ and $c \in C_G$, we say $g(c_1)=c$ if $c_1 \subset c$.

Define
\[
  G:=\{(i,x)\mid \text{there is $c \in  C_G$ such that $(i,x) \in c$.}\}\,.
\]
Since $C_G$ is a set of pairwise disjoint chains, by Lemma~\ref{Lemma:the number of the elements are linear in the number of the unsubverted chains},
\begin{equation}\label{G lower bound}
  |G| \geq \sum_{c \in C_G}(L(c)-3)=(n/10-3) \cdot |C_G|.
\end{equation}

To understand the structure of $G$, we define several subsets of $G$. We say a point $(i,x) \in G$ is a \emph{tail point} if there is an increasing $c_2=(s,x_s,\ldots,x_{s+10})$ in $\cM^3.\CF$ (w.l.o.g., we only consider the increasing case) and a chain $c \in C_G$ such that $(i,x)=(s+10,x_{s+10})$ and $c_2 \subset c$. We say a point $(i,x) \in G$ is a \emph{head point} if it is not a tail point. We denote the sets of the head points and tail points by $G_{\text{Head}}$ and $G_{\text{Tail}}$, respectively. For any point $(i,x) \in G_{\text{Tail}}$ and any chain $c \in C_G$, we say $c$ covers $(i,x)$ ($(i,x) \notin c$) if $(i,x) \in Q_c$ or $(i,x) \in Q_{f^{-1}(c)}$ (if $c$ has a preimage in function $f$). We define $G_{\text{Query}}$ to be the set of the points in $G_{\text{Tail}}$ that are not covered by any element in $C_G$. Notice that any element in $G_{\text{Query}}$ was queried directly by the distinguisher $\cD$. Our goal is to show the size of the set $G_{\text{Query}}$ is big.

By Lemma \ref{Lemma:add one point at a time}, the number of the elements in $G_{\text{Head}}$ is easily bounded by 
\begin{equation}\label{upper bound of Head}
  |G_{\text{Head}}| \leq 19 \cdot |C_G|.
\end{equation}

By Lemma \ref{Lemma:No chain covers 21 length 11 chain}
\begin{equation}\label{upper bound of Tail}
  |G_{\text{Tail}}/G_{\text{Query}}| \leq 20 \cdot |C_G|.  
\end{equation}

Summarizing Equation \ref{G lower bound},  \ref{upper bound of Head} and \ref{upper bound of Tail}, we have 
\begin{align*}
 & \mkern20mu |G_{\text{Query}}|\\
 & = |G|-|G_{\text{Head}}|-|G_{\text{Tail}}/G_{\text{Query}}|\\
 & \geq (n/10-3)|C_G|-19|C_G|-20|C_{Fu}|\\
 & = (n/10-42)|C_G|\,.
\end{align*}

This implies that
\begin{align*}
 & \mkern20mu |\cS^3.\CF[k]|\\
 & \leq 8n \cdot q_{\cA} \cdot |C_G| +k\\
 & \leq 8n \cdot q_{\cA} \cdot |G_{\text{Query}}|/(n/10-42) +k\\
 & \leq 8n \cdot q_{\cA} \cdot k/(n/10-42) +k\\
 & \leq 8n \cdot q_{\cA} \cdot k/(n/11) +k\\
 & = (88q_{\cA}+1)k.\\
\end{align*}

Suppose $\cD$ makes $t$ ($0 \leq t \leq k$) queries to the ideal object and $k-t$ queries to $\CF$, then 
\begin{align*}
 & \mkern20mu |\cM^3.\CF[k]|\\
 & \leq 8n \cdot q_{\cA} \cdot (k-t)/(n/10-42) +(k-t) + 8n \cdot q_{\cA} \cdot t\\
 & < (88q_{\cA}+1)(k-t) + 8nq_{\cA}t\\
 & \leq 8nq_{\cA}k
\end{align*}
when $n$ is large.

We remark that all the statements in the proof are true with overwhelming probability, we omit ``with overwhelming probability'' for simplicity.
\end{proof}

\subsection{Crooked indifferentiability in the full model}
\label{section:full model}

Now we show the simulator $\cS$ achieving abbreviated crooked indifferentiability can be lifted to a simulator that achieves full indifferentiability (Definition~\ref{def:abbrev-indiff-crooked}).

\begin{theorem}\label{Abbr. to full indiff}
If the construction in Section \ref{subsec:our contribution and construction} is $(n',n,q_{\cD},q_{\cA},r,\epsilon')$-Abbreviated-$H$-crooked-indifferentiable from a random oracle $F$, it is $(n',n,q_{\cD},q_{\cA},r,\epsilon'+8n \cdot q_{\cD}^{n/10} \cdot  2^{(2n-n^2/10)})$-$H$-crooked-indifferentiable from $F$.
\end{theorem}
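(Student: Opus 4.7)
The plan is to construct a full-model simulator $\cS'$ from the abbreviated simulator $\cS$ by the most straightforward reduction: in Phase~I, $\cS'$ answers every query $H(i,x)$ made by the distinguisher by lazy sampling---maintaining an internal table $T_{\text{pre}}$ and returning a uniformly random $n$-bit value for each fresh query, consistently with previous answers. Once Phase~I concludes and $(\tilde{H},R)$ become available, $\cS'$ instantiates a fresh copy of the abbreviated simulator $\cS$ whose initial tables $\cS.\CF$ are pre-loaded with the translated contents of $T_{\text{pre}}$ (applying the affine-linear reparametrization $x\mapsto a_i\cdot x\oplus b_i$ as usual), and from this point on $\cS'$ delegates every query to $\cS$. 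Because Phase~I answers are uniform and independent, the joint distribution of $(T_{\text{pre}},\cF)$ in the ideal world is identical to what the abbreviated simulator would produce if those same inputs had arrived as the first $q_{\cD}$ queries in the abbreviated execution, \emph{except} for one structural issue described below.

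Next, I would introduce a single bad event
\[
  E \;=\; \left\{\parbox{9cm}{at the end of Phase~I, the table $T_{\text{pre}}$ contains an unsubverted chain $(s,x_s,\ldots,x_{s+n/10-1})$ of length $n/10$ for some $s\in\{1,\ldots,8n\}$}\right\}\,.
\]
The role of $E$ is to capture the one place where Phase~I deviates meaningfully from what the abbreviated simulator expects: the abbreviated simulator is designed so that whenever a length-$n/10$ unsubverted chain forms, its completion machinery immediately fires. If $E$ does \emph{not} occur, then the internal state of $\cS$ initialized with $T_{\text{pre}}$ is statistically identical to the state it would reach after processing those same queries ``live'' (no completions would have been triggered anyway), so the behavior of $\cS'$ in Phase~II is indistinguishable from the abbreviated simulator's behavior with the distinguisher that simply ``compiles'' the Phase~I queries into $\tilde{H}$. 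This reduces the full-model indistinguishability to abbreviated-model indistinguishability (with advantage at most $\epsilon'$) plus $\Pr[E]$.

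Bounding $\Pr[E]$ is a straightforward union bound. For a fixed starting index $s$ and a fixed pair $(x_s,x_{s+1})$ appearing as queried preimages at indices $s$ and $s+1$ in $T_{\text{pre}}$, every subsequent element $x_{s+j+1}=x_{s+j-1}\oplus \CF_{s+j}(x_{s+j})$ must itself be a queried preimage at index $s+j+1$; since the values in $T_{\text{pre}}$ are uniform and independent of the query points, each extension happens with probability at most $q_{\cD}/2^n$. Summing over at most $8n$ starting indices and at most $q_{\cD}^2$ starting pairs yields
\[
  \Pr[E] \;\leq\; 8n\cdot q_{\cD}^2 \cdot \left(\frac{q_{\cD}}{2^n}\right)^{n/10-2} \;=\; 8n\cdot q_{\cD}^{n/10}\cdot 2^{2n-n^2/10}\,,
\]
which is exactly the loss claimed in the theorem statement.

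The main obstacle, in my view, is not the probabilistic estimate on $E$ (which is routine) but rather the structural verification that conditioned on $\neg E$, running the abbreviated simulator on the \emph{preloaded} table $T_{\text{pre}}$ is genuinely identical in distribution to running it on the \emph{empty} table and letting the distinguisher's Phase~I queries trigger its normal machinery. One must check carefully that every invariant maintained by $\cS$ in the abbreviated proof---monotonicity of chains, the $\GoodStatus[k]$ property, the absence of cross-chain interactions of the sort ruled out in Section~\ref{Subsec: preparation}---remains valid when the initial table is adversarially chosen subject only to $\neg E$; this uses the fact that the analysis in Sections~\ref{Subsec: preparation}--\ref{section:efficiency} depends only on the uniformity and independence of the $\CF$ values in $\cM^3.\CF$, both of which are preserved by the lazy-sampling initialization.
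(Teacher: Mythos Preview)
Your high-level strategy—lazy sampling in Phase~I, then running the abbreviated simulator in Phase~II, with the single bad event $E$ that the Phase~I answers already form a length-$n/10$ unsubverted chain—is exactly the paper's approach. One cosmetic difference: rather than pre-loading $T_{\text{pre}}$ into $\cS$'s tables, the paper's full-model simulator $\cS_F$ instantiates a fresh copy of $\cS$ and \emph{re-asks} all Phase~I queries to it (both simulators drawing from a common randomness table $T_F$). Conditioned on $\neg E$, no completion fires during the re-asking and the answers agree with those given in Phase~I, so this is equivalent to your pre-loading; but it has the advantage of sidestepping the invariant-verification obstacle you flag in your last paragraph, since $\cS$ is simply run from scratch and maintains its own invariants.

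There is, however, a real gap in your bound on $\Pr[E]$. You argue over the randomness of the oracle \emph{answers}: ``since the values in $T_{\text{pre}}$ are uniform and independent of the query points, each extension happens with probability at most $q_{\cD}/2^n$.'' But the Phase~I distinguisher is adaptive—its later query points may depend on earlier answers—so the set of queried preimages at index $s{+}j{+}1$ is \emph{not} independent of $\CF_{s+j}(x_{s+j})$, and your per-step bound is unjustified. The paper repairs this by taking the probability over the randomness of $R$ instead (Lemma~\ref{Lemma:No long chain selected before R is given}): because $R$ is drawn \emph{after} Phase~I, the raw Phase~I query points are fixed with respect to $R$; for any fixed starting pair $(i,x,y)\in\{1,\ldots,8n\}\times\{0,1\}^{2n}$ in $\CF$-coordinates, each chain element $(i{+}j,x_{i+j})$ corresponds to a Phase~I query only if the fresh uniform affine map $z\mapsto a_{i+j}z\oplus b_{i+j}$ sends $x_{i+j}$ into the (fixed) set of at most $q_{\cD}$ raw query points at that index, an event of probability $\le q_{\cD}/2^n$ using $b_{i+j}$ alone. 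Multiplying over the $n/10$ positions and union-bounding over $8n\cdot 2^{2n}$ starting pairs yields $8n\cdot q_{\cD}^{n/10}\cdot 2^{2n-n^2/10}$, matching your numerics but with a sound justification.
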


\begin{proof}
  Consider the following simulator $\cS_{F}$ built on $\cS$:
\begin{enumerate}
    \item In the first phase, $\cS_{F}$ answers $f_i(x)$ ($1 \leq i \leq 8n$) queries uniformly. 
    \item The second phase, after which $\cS_F$ receives $R$, is
      divided into two sub-phases.
    \begin{itemize}
    \item First, $\cS_F$ simulates $\cS$ in $G_1$. It then plays the
      role of the distinguisher, and asks $\cS$ all the questions that
      were actually asked by the the distinguisher in the first
      phase. $\cS_F$ \emph{aborts} the game if, in this sub-phase,
      there are $n/10$ (simulated) queries such that they form a length $n/10$ unsubverted chain.
    \item Second, $\cS_{F}$ simulates $\cS$ and answers the
      second-phase questions from the distinguisher.
    \end{itemize}
\end{enumerate}
For an arbitrary full model distinguisher $\cD_F$, we construct the an
abbreviated model distinguisher $\cD$ as follows. The proof will show
that, with high probability, the execution that takes place between
$\cD$ and $\cS$ can be ``lifted'' to an associated execution between
$\cD_F$ and $\cS_F$.
\begin{enumerate}
\item Prior to the game, $\cD$ must publish a subversion algorithm
  $\cA$. This program is constructed as follows. To decide how
  to subvert a certain term $f_i(x)$, $\cA$ first simulates the
  first phase of $\cD_F$; all queries made by this simulation are
  asked as regular queries by $\cA$ and, at the conclusion, this
  first phase of $\cD_F$ produces, as output, a subversion algorithm
  $\cA_F$. $\cA$ then simulates the algorithm
  $\cA_F$ on the term $h_i(x)$.
\item In the game, $\cD$ simulates the queries of $\cD_{F}$ in
  $\cD_{F}$'s first phase. After that, $\cD$ continues to simulate
  $\cD_{F}$ in the second phase. (Note that at the point in
  $\cD_{F}$'s game where it produces the subversion algorithm
  $\cA$, this is simply ignored by $\cD$.)
\end{enumerate}

Now we are ready to prove $\cS_{F}$ is secure against the arbitrarily
chosen distinguisher $\cD_F$. We organize the proof around four different transcripts:
\begin{center}
\begin{tabular}{|c|l|}
  \hline
  $\gamma_{FC}$ & transcript of $\cC$ interacting with $\cD_{F}$\\ \hline
  $\gamma_C$ & transcript of $\cC$ interacting with $\cD$\\ \hline
  $\gamma_{FS}$ & transcript of $\cS_{F}$ interacting with $\cD_F$\\ \hline
  $\gamma_{S}$  & transcript of $\cS$ interacting with $\cD$\\ \hline
\end{tabular}
\qquad
\begin{tikzcd}
  \cS  &     & \cC &       & \cS_F\\
  & \cD \arrow[ul,"\gamma_S"] \arrow[ur,"\gamma_C"] &     & \cD_F \arrow[ul,"\gamma_{FC}"] \arrow[ur,"\gamma_{FS}"]& 
\end{tikzcd}
\end{center}
(Here $\cC$ denotes the construction, as usual.)
  %
Since
\[
  \|\gamma_{FC}-\gamma_{FS}\|_{\tv} \leq
  \|\gamma_{FC}-\gamma_C\|_{\tv}+\|\gamma_C-\gamma_S\|_{\tv}+\|\gamma_S-\gamma_{FS}\|_{\tv}\,,
\]
it is sufficient to prove the three terms in the right-hand side of
the inequality are all negligible.
\begin{itemize}
\item[-]  $\|\gamma_{FC}-\gamma_C\|_{\tv}=0$.
  This is obvious by observing that $\gamma_{FC}=\gamma_C$ when the underlying values of $H$ are the same.
\item[-]  $\|\gamma_C-\gamma_S\|_{\tv}= \epsilon'$.
  This is true because $\cS$ achieves abbreviated crooked indifferentiability.
\item[-] $\|\gamma_S-\gamma_{FS}\|_{\tv}= 8n \cdot q_{\cD}^{n/10} \cdot  2^{(2n-n^2/10)}$.  To prove this
  statement, we suppose both the full model game and the abbreviated
  model game select all randomness \emph{a priori} (as in the
  descriptions above). For the game between $\cS_F$ and $\cD_F$(the full game)
  or the game between $\cS$ and $\cD$(the abbreviated game), suppose we select a table $T_F$ of $F(i,x)$ values for all $1 \leq i \leq 8n$ and $x \in \{0,1\}^n$. When the simulator(in the full or abbreviated game) needs to assign a certain term uniformly, it takes the value from the table $T_F$. Suppose the full and the abbreviated game share the same table $T_F$ and $R$. Notice that the two games have same
  transcripts unless, $\cS_F$ aborts the game in the first sub-phase
  of the second phase. We denote this bad event by \textbf{LongChain},
  which by the following lemma \ref{Lemma:No long chain selected before R is given}, is negligible. \qedhere
 \end{itemize}
\end{proof}

N.b. While the description of the simulator above calls for all
randomness to be generated in advance, it is easy to see that the
simulator can in fact be carried out lazily with tables.

\begin{lemma}\label{Lemma:No long chain selected before R is given}
For any distinguisher $\cD_F$, the probability that \textbf{LongChain} happens is less than $8n \cdot q_{\cD}^{n/10} \cdot  2^{(2n-n^2/10)}$.
\end{lemma}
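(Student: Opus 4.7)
The plan is to bound the probability of \textbf{LongChain} by a straightforward union bound over all ordered $n/10$-tuples of phase-one queries that could conceivably realize a length $n/10$ unsubverted chain, together with a per-tuple estimate that uses the crucial fact that $R = ((a_1,b_1),\ldots,(a_{\ell},b_{\ell}))$ is sampled \emph{after} the first phase and is therefore independent of the phase-one transcript (including any adaptivity in the queries $\cD_F$ made to $H$).

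First I would enumerate the candidates. An unsubverted chain of length $n/10$ must occur at some starting round index $s \in \{1,\ldots,8n\}$ and must consist of queries at consecutive rounds $s, s+1, \ldots, s+n/10-1$. For each such $s$ and each ordered tuple $(F_{s}(x'_1), F_{s+1}(x'_2), \ldots, F_{s+n/10-1}(x'_{n/10}))$ of phase-one queries (each coordinate chosen from the at most $q_{\cD}$ phase-one queries at the relevant round index), we obtain a candidate. There are at most $8n \cdot q_{\cD}^{n/10}$ such candidates.

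Second, I would fix a candidate tuple with answers $y_j := F_{s+j-1}(x'_j)$ and compute the probability (over $R$) that it actually realizes an unsubverted chain. The chain values $x_{s+j-1}$ are determined as follows: $x_{s} = a_{s}^{-1}(x'_1 \oplus b_{s})$ and $x_{s+1} = a_{s+1}^{-1}(x'_2 \oplus b_{s+1})$, and for $j \geq 3$ the Feistel recurrence yields $x_{s+j-1} = x_{s+j-3} \oplus y_{j-1}$, which depends only on $(a_s,b_s), (a_{s+1},b_{s+1})$ and the fixed values $y_1,\ldots,y_{j-1}$. The tuple forms a chain iff for every $j \in \{3,\ldots,n/10\}$ we have $a_{s+j-1} \cdot x_{s+j-1} \oplus b_{s+j-1} = x'_j$. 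Conditioning on $(a_s,b_s)$ and $(a_{s+1},b_{s+1})$ fixes every $x_{s+j-1}$; the $j$-th constraint then only involves $(a_{s+j-1},b_{s+j-1})$, which is independent of all other $(a_i,b_i)$ and in particular independent of $x_{s+j-1}$. For any fixed $a_{s+j-1}$, the uniform choice of $b_{s+j-1}$ makes $a_{s+j-1} x_{s+j-1} \oplus b_{s+j-1}$ uniform over $\mathbb{F}_2^n$, so the constraint holds with probability $2^{-n}$. Since the $n/10-2$ constraints involve pairwise disjoint components of $R$, they are independent, giving a per-tuple bound of $2^{-n(n/10-2)} = 2^{2n - n^2/10}$.

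Third, the union bound yields
\[
\Pr[\textbf{LongChain}] \leq 8n \cdot q_{\cD}^{n/10} \cdot 2^{2n - n^2/10},
\]
which is the claimed bound.

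The main obstacle, and the step that requires the most care, is the independence argument in the second step: one must verify that $x_{s+j-1}$ truly depends only on $(a_s,b_s),(a_{s+1},b_{s+1})$ and the (fixed) answers $y_1,\ldots,y_{j-1}$, and in particular does not involve $(a_{s+j-1},b_{s+j-1})$. This follows from unrolling the Feistel recurrence starting from indices $s$ and $s+1$, but it should be stated explicitly. A minor subsidiary point is that the phase-one queries of $\cD_F$ may be adaptive and depend on $H$; however, since $R$ is drawn independently of the phase-one transcript, we may freely condition on an arbitrary phase-one transcript and apply the above calculation over the randomness of $R$ alone.
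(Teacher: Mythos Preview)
Your proof is correct and reaches the same bound by essentially the same mechanism as the paper: a union bound that exploits the independence of $R$ from the phase-one transcript, together with the fact that each fresh pair $(a_j,b_j)$ makes $a_j x_j \oplus b_j$ uniform. The only difference is bookkeeping: the paper takes the union over the $8n\cdot 2^{2n}$ possible starting pairs $(i,x_i,x_{i+1})$ and bounds each by $(q_{\cD}\cdot 2^{-n})^{n/10}$, whereas you take the union over the $8n\cdot q_{\cD}^{n/10}$ ordered tuples of phase-one queries and bound each by $2^{-n(n/10-2)}$; the two factorizations multiply out to the identical expression $8n\cdot q_{\cD}^{n/10}\cdot 2^{2n-n^2/10}$.
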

\begin{proof}
For any pair of $n$-bit strings $(x,y)$ and $i$ with $1 \leq i  \leq 8n$, over the randomness of $R$, the probability that \textbf{LongChain} happens with a length $n/10$ unsubverted chain starting with $(i,x,y)$ is bounded by $(q_{\cD} \cdot 2^{-n})^{n/10}$, which is equal to  $q_{\cD}^{n/10} \cdot  2^{-n^2/10}$. The lemma follows by taking the union bound over the choice of $(x,y)$ and $i$.
\end{proof}

\subsection{Lower Bound of the Number of Round Functions}

\paragraph{Remark.} Crooked security still holds if $\ell=8n$ is replaced by $\ell = 2000n/\log(1/\epsilon)$. Formally speaking, in order to fit the games, lemmas and theorems in previous sections to the parameter $\ell = 2000n/\log(1/\epsilon)$, we only need to replace all the linear parameters $c \cdot n$ by $c/8 \cdot 2000n/\log(1/\epsilon)$ in these games, lemmas, and theorems. For example, when $\ell = 2000n/\log(1/\epsilon)$, the simulator programs at $u=4/8 \cdot 2000n/\log(1/\epsilon)$ or $u=7/8 \cdot 2000n/\log(1/\epsilon)$ (instead of $u=4n$ or $u=7n$ when $\ell=8n$). 

To see why we need $\ell \geq 2000n/\log(1/\epsilon)$, notice that in Lemma \ref{Lemma:bad region is short}, when we set $\ell = 2000n/\log(1/\epsilon)$, the upper bound of the length of a bad region is $1/48 \cdot 2000n/\log(1/\epsilon)$. Therefore, the probability in the proof will be $8n \cdot 2^{2n} \cdot (14\epsilon)^{(1/(48 \cdot 14)) \cdot 2000n/\log(1/\epsilon)-1}=\negl(n).$


\bibliographystyle{abbrv}

\newpage

\appendix

\section{Detailed descriptions of the security games}

\subsection{Game 1}\label{Game 1}

\begin{algorithm}[H]
\DontPrintSemicolon

  \SetKwInput{KwSystemS}{System $\cS$}
  \SetKwInput{KwVariable}{Variable}{}

  \KwSystemS \;
  \KwVariable
  {\;
  Queue: $Q_{\cS}$ \;
  Tables: $\cS.\CF_1,...,\cS.\CF_{8n}$\;
  Order function: $O_{\cS}$\;
  Set $\cS.\text{CompletedChains}:=\emptyset$\;
  Set $\cS.\text{HonestyCheckedChains}:=\emptyset$\;
  Sets $Q_i(z):=\emptyset$ for all $i \in \{1,\ldots,8n\}$ and $x \in \{0,1\}^n$\;
  Sets $Q_i:=\emptyset$ for all $i \in \{1,\ldots,8n\}$\;
  Hashtable $P \subset \{\uparrow,\downarrow\} \times \{0,1\}^{2n} \times \{0,1\}^{2n}$ \;}
  \BlankLine
  \BlankLine
  
  \SetKwProg{public}{public procedure}{:}{}
  \SetKwProg{private}{private procedure}{:}{}

  \public{$\CF(i,x)$}
    { 
      $\cS.\CF^{\text{Inner}}(i,x)$\;
      \While{$\neg Q_{\cS}.\text{Empty}()$}
         {
            $(s,x_s,...,x_{s+n/10-1}):=Q_{\cS}.\text{Dequeue}()$\;
            \If{$\cS.\text{Check}(s,x_s,...,x_{s+n/10-1})=(s,x_s,x_{s+1},u)$}
               {
                 $\cS.$Complete $(s,x_s,x_{s+1},u)$
               }
         }
        Return $\cS.\CF_i(x)$
    }

  \BlankLine
  \BlankLine

  \public{$P(x_0,x_1)$}
    {
      \While{$(\downarrow,x_0,x_1) \notin P$}
        {
          $x_{8n} \leftarrow_R \{0,1\}^n$\;
          $x_{8n+1} \leftarrow_R \{0,1\}^n$\;
          \If{$(\uparrow,x_{8n},x_{8n+1}) \notin P$}
            {
              $P(\downarrow,x_0,x_1):=(x_{8n},x_{8n+1})$\;
              $P(\uparrow,x_{8n},x_{8n+1}):=(x_0,x_1)$
            } 
        }
      Return $P(\downarrow,x_0,x_1)$\;
    }
    
  \BlankLine
  \BlankLine
  
  \public{$P^{-1}(x_{8n},x_{8n+1})$}
    {
      \While{$(\uparrow,x_{8n},x_{8n+1}) \notin P$}
        {
          $x_0 \leftarrow_R \{0,1\}^n$\;
          $x_1 \leftarrow_R \{0,1\}^n$\;
          \If{$(\downarrow,x_0,x_1) \notin P$}
            {
              $P(\downarrow,x_0,x_1):=(x_{8n},x_{8n+1})$\;
              $P(\uparrow,x_{8n},x_{8n+1}):=(x_0,x_1)$ 
            }        
        }
      Return $P(\uparrow,x_{8n},x_{8n+1})$\;
    }  

  \BlankLine
  \BlankLine
 
  \private{$\cS.\CF^{\text{Inner}}(i,x)$}
    {
      \If{$x \notin \cS.\CF_i$}
        {
          $\cS.\CF_i(x) \leftarrow_R\{0,1\}^n$\;
          $\cS.\text{EnqueueNewChain}(i,x)$
        }
      Return $\cS.\CF_i(x)$    
    }

  \BlankLine
  \BlankLine
  
    \private{$\cS.\CFt^{\text{Inner}}(i,x)$}
      {
        \While{$\cA(i,x)$ queries $\CF_j(y)$} 
          {
            $Q_i(x):=Q_i(x) \cup \{(j,y)\}$\;
            $\cS.\CF^{\text{Inner}}(j,y)$
          }
        \tcc{Simulates the subversion algorithm $\cA$ on input $(i,x)$}
        Return $\cA(i,x)$
      }
      
    \BlankLine
    \BlankLine

\end{algorithm}

\begin{algorithm}
\DontPrintSemicolon

  \SetKwProg{public}{public procedure}{:}{}
  \SetKwProg{private}{private procedure}{:}{}
  
  \setcounter{AlgoLine}{33}

    \private{$\cS.\text{EnqueueNewChain}(i,x)$}
      {
        \ForAll{$(x_{i-n/10+1},...,x) \in \cS.\CF_{i-n/10+1} \times \cdot\cdot\cdot \times \cS.\CF_i$}
          {
            $Q_{\cS}.\text{Enqueue}(i-n/10+1,x_{i-n/10+1},...,x)$
          }  
        \ForAll{$(x,x_{i+1}...,x_{i+n/10-1}) \in \cS.\CF_i \times \cdot\cdot\cdot \times \cS.\CF_{i+n/10-1}$}
          {
            $Q_{\cS}.\text{Enqueue}(i,x,x_{i+1}...,x_{i+n/10-1})$
          } 
      }
      
    \BlankLine
    \BlankLine      

    \private{$\cS.\text{Check}(s,x_s,...,x_{s+n/10-1})$}
      {
        \If
        {$(i,x_i,x_{i+1}) \notin \cS.\text{Completedchains} \cup \cS.\text{HonestyCheckedChains}$ for all $s \leq i \leq s+n/10-2$}
        {$\cS.\text{HonestyCheck}(s,x_s,...,x_{s+n/10-1})$}
      }
      
    \BlankLine
    \BlankLine  
    
    \private{$\cS.\text{HonestyCheck}(s,x_s,...,x_{s+n/10-1})$}
      {  
        k:=1\;
        i:=s\;
        \While{$s \leq i \leq s+n/10-2$}
               {
                $\cS.\text{HonestyCheckedChains}:=\cS.\text{HonestyCheckedChains} \cup (i,x_i,x_{i+1})$\;
                \If{$\cS.\CFt^{\text{Inner}}(i,x_i)=\cS.\CF_i(x_i)$ and $\cS.\CFt^{\text{Inner}}(i+1,x_{i+1})=\cS.\CF_{i+1}(x_{i+1})$}{k:=k+1}
                i:=i+1
               }
          \If{k=n/10}{
            \If{$i+n/10-1<3n$ or $i>5n$}
                  {
                    Return $(s,x_s,x_{s+1},4n)$
                  }
                \Else{Return $(s,x_s,x_{s+1},7n)$}    
          }
      }
      
    \BlankLine
    \BlankLine

  \private{$\cS.$Complete $(s,x_s,x_{s+1},u)$}
    {
    
      $(x_{u-2},x_{u-1}):=\cS.\text{EvaluateForward}(s,x_s,x_{s+1},u)$\;
      $(x_{u+2},x_{u+3}):=\cS.\text{EvaluateBackward}(s,x_s,x_{s+1},u)$\;
      $\cS.\text{Adapt}(x_{u-2},x_{u-1},x_{u+2},x_{u+3},u)$
    }
    
  \BlankLine
  \BlankLine

    \private{$\cS.\text{EvaluateForward}(s,x_s,x_{s+1},u)$}
      {
        $\cS.\text{CompletedChains}:=\cS.\text{CompletedChains} \cup \{(s,x_{s},x_{s+1})\}$\;
        \While{$s \neq u-1$}
          {
            \If{$s=8n$}
              {
                $(x_0,x_1):=P^{-1}(x_{8n},x_{8n+1})$\;
                $s:=0$
              }
            \Else
            {
              $\cS.\text{CompletedChains}:=\cS.\text{CompletedChains} \cup \{(s+1,x_{s+1},x_{s+2})\}$\;
              $x_{s+2}:=x_s \oplus \cS.C \tilde{F}^{\text{Inner}}(s+1,x_{s+1})$\;
              $Q_{s+1}:=Q_{s+1}(x_{s+1})$\;
              $s:=s+1$
            }
          }
        Return $(x_{s-1},x_s)$  
        } 
        
      \BlankLine
      \BlankLine

        \private{$\cS.\text{EvaluateBackward}(s,x_s,x_{s+1},u)$}
        {
          \While{$s \neq u+1$}
          {
            \If{$s=0$}
              {
                $(x_{8n},x_{8n+1}):=P(x_0,x_1)$\;
                $s:=8n$
              }
            \Else
              {
                $\cS.\text{CompletedChains}:=\cS.\text{CompletedChains} \cup \{(s-1,x_{s-1},x_{s})\}$\;
                $x_{s-1}:=x_{s+1} \oplus \cS.\CFt^{\text{Inner}}(s,x_{s})$\;
                $Q_{s}:=Q_{s}(x_{s})$\;
                $s:=s-1$
              }
          }
          Return $(x_{s+1},x_{s+2})$
      }  

      \BlankLine
      \BlankLine

\end{algorithm}

\begin{algorithm}
\DontPrintSemicolon

  \SetKwProg{public}{public procedure}{:}{}
  \SetKwProg{private}{private procedure}{:}{}
  
  \setcounter{AlgoLine}{81}

  \private{$\cS.$Adapt$(x_{u-2},x_{u-1},x_{u+2},x_{u+3},u)$}
    {
      $x_u:=x_{u-2} \oplus \cS.\CF^{\text{Inner}}(u-1,x_{u-1})$\;
      $x_{u+1}:=x_{u+3} \oplus \cS.\CF^{\text{Inner}}(u+2,x_{u+2})$\;
      \If{$x_u \notin \cS.\CF_u$ and $x_{u+1} \notin \cS.\CF_{u+1}$}
        {
          $\cS.\CF_u(x_u) \leftarrow x_{u-1} \oplus x_{u+1}$\;
          $\cS.\CF_{u+1}(x_{u+1}) \leftarrow x_{u} \oplus x_{u+2}$
        }
      \Else{The game aborts.}

      \If{$\cS.\CFt^{\text{Inner}}(u,x_{u})=\cS.\CF_u(x_u)$ and $\cS.\CFt^{\text{Inner}}(u+1,x_{u+1})=\cS.\CF_{u+1}(x_{u+1})$}
         {
           $\cS.\text{CompletedChains}:=\cS.\text{CompletedChains} \cup \{(u,x_u,x_{u+1})\}$
         }
      \Else{The game aborts.}
      $Q_u:=Q_u(x_u)$ and $Q_{u+1}:=Q_{u+1}(x_{u+1})$\;
      \If{$(u,x_u) \in \cup_{j=1}^{8n}Q_j/Q_u$ or $(u+1,x_{u+1}) \in \cup_{j=1}^{8n}Q_j/Q_{u+1}$}{The game aborts.}
    }

\end{algorithm}

\subsection{Game 2}\label{Game 2}

The Game 2 is same as Game 1 except that the random permutation is replaced by the following two-sided random function $RF$.

\begin{algorithm}[H]
\DontPrintSemicolon

  \SetKwInput{KwSystemRF}{System $RF$}
  \SetKwInput{KwVariable}{Variable}{}

  \KwSystemRF \;
  \KwVariable
  {\;
  Hashtable $RF \subset \{\uparrow,\downarrow\} \times \{0,1\}^{2n} \times \{0,1\}^{2n}$ \;}
  
  \BlankLine
  \BlankLine
  
  \SetKwProg{public}{public procedure}{:}{}
  \SetKwProg{private}{private procedure}{:}{}
  
  \public{$RF(x_0,x_1)$}
    {
      \If{$(\downarrow,x_0,x_1) \notin RF$}
        {
          $x_{8n} \leftarrow_R \{0,1\}^n$\;
          $x_{8n+1} \leftarrow_R \{0,1\}^n$\;
          $RF(\downarrow,x_0,x_1):=(x_{8n},x_{8n+1})$\;
          $RF(\uparrow,x_{8n},x_{8n+1}):=(x_0,x_1)$ \tcc{May over write an entry}
        }
      Return $RF(\downarrow,x_0,x_1)$\;
    }
    
  \BlankLine
  \BlankLine
  
  \public{$RF^{-1}(x_{8n},x_{8n+1})$}
    {
      \If{$(\uparrow,x_{8n},x_{8n+1}) \notin RF$}
        {
          $x_0 \leftarrow_R \{0,1\}^n$\;
          $x_1 \leftarrow_R \{0,1\}^n$\;
          $RF(\downarrow,x_0,x_1):=(x_{8n},x_{8n+1})$ \tcc{May over write an entry}
          $RF(\uparrow,x_{8n},x_{8n+1}):=(x_0,x_1)$ 
        }
      Return $RF(\uparrow,x_{8n},x_{8n+1})$\;
    }  
   
\end{algorithm}

\subsection{Game 3}\label{Game 3}

The Game 3 has two systems, $\cS^1$ and $\cM^1$. We will describe all the public procedures, all the private procedures of $\cM^1$, and the private procedures of $\cS^1$ that are different from their counterparts of $\cS$.

\begin{algorithm}
\DontPrintSemicolon

  \SetKwInput{KwSystemMoneS}{System $\cM^1, \cS^1, RF$}
  \SetKwInput{KwVariable}{Variable}{}

  \KwSystemMoneS \;
  \KwVariable
  {\;
  Queue: $Q_{\cS^1}$ \;
  Tables of $\cM^1$: $\cM^1.\CF_1,...,\cM^1.\CF_{8n}$\;  
  Tables of $\cS^1$: $\cS^1.\CF_1,...,\cS^1.\CF_{8n}$\;
  Order function: $O_{\cM^1}, O_{\cS^1}$\;
  Set $\cM^1.\text{CompletedChains}, \cS^1.\text{CompletedChains}:=\emptyset$\;
  Set $\cS^1.\text{HonestyCheckedChains}:=\emptyset$\;
  Set $\cM^1.\text{MiddlePoints}:=\emptyset$\;
  Set $\cM^1.\text{AdaptedPoints}:=\emptyset$\;
  Sets $Q_i(z):=\emptyset$ for all $i \in \{1,\ldots,8n\}$ and $x \in \{0,1\}^n$\;
  Sets $Q_i:=\emptyset$ for all $i \in \{1,\ldots,8n\}$\;
  Strings $y_i:=\{0,1\}^n$ for all $i \in \{1,\ldots,8n\}$\;
  Hashtable $RF \subset \{\uparrow,\downarrow\} \times \{0,1\}^{2n} \times \{0,1\}^{2n}$\;}
  \BlankLine
  \BlankLine
  
  \SetKwProg{public}{public procedure}{:}{}
  \SetKwProg{private}{private procedure}{:}{}

  \public{$\CF(i,x)$}
    { 
      $\cS^1.\CF^{\text{Inner}}(i,x)$\;
      \While{$\neg Q_{\cS^1}.\text{Empty}()$}
         {
            $(s,x_s,...,x_{s+n/10-1}):=Q_{\cS^1}.\text{Dequeue}()$\;
            \If{$\cS^1.\text{Check}(s,x_s,...,x_{s+n/10-1})=(s,x_s,x_{s+1},u)$}
               {
                 $\cS^1.$Complete $(s,x_s,x_{s+1},u)$
               }
         }
        Return $\cS^1.\CF_i(x)$
    }

\BlankLine
\BlankLine

  \public{$RF(x_0,x_1)$}
    {
      \If{$(\downarrow,x_0,x_1) \notin RF$}
        {
          $x_{8n} \leftarrow_R \{0,1\}^n$\;
          $x_{8n+1} \leftarrow_R \{0,1\}^n$\;
          $RF(\downarrow,x_0,x_1):=(x_{8n},x_{8n+1})$\;
          $RF(\uparrow,x_{8n},x_{8n+1}):=(x_0,x_1)$\tcc{May over write an entry}
          $\cM^1.$Complete $(0,x_0,x_1,4n)$
        }
      Return $RF(\downarrow,x_0,x_1)$\;
    }
    
  \BlankLine
  \BlankLine
  
  \public{$RF^{-1}(x_{8n},x_{8n+1})$}
    {
      \If{$(\uparrow,x_{8n},x_{8n+1}) \notin RF$}
        {
          $x_0 \leftarrow_R \{0,1\}^n$\;
          $x_1 \leftarrow_R \{0,1\}^n$\;
          $RF(\downarrow,x_0,x_1):=(x_{8n},x_{8n+1})$ \tcc{May over write an entry}
          $RF(\uparrow,x_{8n},x_{8n+1}):=(x_0,x_1)$ \;
          $\cM^1.$Complete $(8n,x_{8n},x_{8n+1},4n)$
        }
      Return $RF(\uparrow,x_{8n},x_{8n+1})$\;
    }  
    
  \BlankLine
  \BlankLine   
  
  \private{$\cM^1.$Complete $(s,x_s,x_{s+1},u)$}
    {
      $(x_{u-2},x_{u-1}):=\cM^1.\text{EvaluateForward}(s,x_s,x_{s+1},u)$\;
      $(x_{u+2},x_{u+3}):=\cM^1.\text{EvaluateBackward}(s,x_s,x_{s+1},u)$\;
      $\cM^1.\text{Adapt}(x_{u-2},x_{u-1},x_{u+2},x_{u+3},u)$
    }

\end{algorithm}  

\begin{algorithm}
\DontPrintSemicolon

  \SetKwInput{KwSystemMoneS}{System $\cM^1, \cS^1, RF$}
  \SetKwInput{KwVariable}{Variable}{}

  \SetKwProg{public}{public procedure}{:}{}
  \SetKwProg{private}{private procedure}{:}{}
  
  \setcounter{AlgoLine}{27}

    \private{$\cM^1.\text{EvaluateForward}(s,x_s,x_{s+1},u)$}
      {
        $\cM^1.\text{CompletedChains}:=\cM^1.\text{CompletedChains} \cup \{(s,x_{s},x_{s+1})\}$\;
        \While{$s \neq u-1$}
          {
            \If{$s=8n$}
              {
                $(x_0,x_1):=RF(\uparrow,x_{8n},x_{8n+1})$\;
                $s:=0$
              }
            \Else
              {
                $\cM^1.\text{CompletedChains}:=\cM^1.\text{CompletedChains} \cup \{(s+1,x_{s+1},x_{s+2})\}$\;
                $x_{s+2}:=x_s \oplus \cM^1.\CFt^{\text{Inner}}(s+1,x_{s+1})$\;
                \If{$3n \leq s+1 \leq 5n$}{$M^1.\text{MiddlePoints}=M^1.\text{MiddlePoints} \cup ({s+1},x_{s+1})$}
                $s:=s+1$
              }
          }
        Return $(x_{s-1},x_s)$  
      }

     \BlankLine
     \BlankLine
     
      \private{$\cM^1.\text{EvaluateBackward}(s,x_s,x_{s+1},u)$}
      {
        \While{$s \neq u+1$}
        {
          \If{$s=0$}
          {
            $(x_{8n},x_{8n+1}):=RF(\downarrow,x_0,x_1)$\;
            $s:=8n$
          }
          \Else
          {
            $\cM^1.\text{CompletedChains}:=\cM^1.\text{CompletedChains} \cup \{(s-1,x_{s-1},x_{s})\}$\;
            $x_{s-1}:=x_{s+1} \oplus \cM^1.\CFt^{\text{Inner}}(s,x_{s})$\;
            \If{$3n \leq s \leq 5n$}{$\cM^1.\text{MiddlePoints}=\cM^1.\text{MiddlePoints} \cup ({s},x_{s})$}
            $s:=s-1$
          }
        } 
        Return $(x_{s+1},x_{s+2})$ 
      }    

  \BlankLine
  \BlankLine
  
  \private{$\cM^1.$Adapt$(x_{u-2},x_{u-1},x_{u+2},x_{u+3},u)$}
    {
      $x_u:=x_{u-2} \oplus \cM^1.\CF^{\text{Inner}}(u-1,x_{u-1})$\;
      $x_{u+1}:=x_{u+3} \oplus \cM^1.\CF^{\text{Inner}}(u+2,x_{u+2})$\;
      \If{$x_u \notin \cM^1.\CF_u$}
        {
          $\cM^1.\CF_u(x_u) \leftarrow x_{u-1} \oplus x_{u+1}$\;
          $\cM^1.\text{MiddlePoints}=\cM^1.\text{MiddlePoints} \cup (u,x_u)$\;
          $\cM^1.\text{AdaptedPoints}=\cM^1.\text{AdaptedPoints} \cup (u,x_u)$\;   
        }
      \If{$x_{u+1} \notin \cM^1.\CF_{u+1}$}
        {
          $\cM^1.\CF_{u+1}(x_{u+1}) \leftarrow x_{u} \oplus x_{u+2}$\;
          $\cM^1.\text{MiddlePoints}=\cM^1.\text{MiddlePoints} \cup ({u+1},x_{u+1})$\;
          $\cM^1.\text{AdaptedPoints}=\cM^1.\text{AdaptedPoints} \cup ({u+1},x_{u+1})$\;
        }
      \If{$\cM^1.\CFt^{\text{Inner}}(u,x_{u})=\cM^1.\CF_u(x_u)$ and $\cM^1.\CFt^{\text{Inner}}(u+1,x_{u+1})=\cM^1.\CF_{u+1}(x_{u+1})$}
         {
           $\cM^1.\text{CompletedChains}:=\cM^1.\text{CompletedChains} \cup \{(u,x_u,x_{u+1})\}$
         }
    }
    
  \BlankLine
  \BlankLine
    
  \private{$\cM^1.\CF^{\text{Inner}}(i,x)$}
    {
      \If{$x \notin \cM^1.\CF_i$}
        {
          $\cM^1.\CF_i(x) \leftarrow_R\{0,1\}^n$
        }
      Return $\cM^1.\CF_i(x)$    
    }    

  \BlankLine
  \BlankLine

\end{algorithm}

\begin{algorithm}
\DontPrintSemicolon
  
  \SetKwProg{public}{public procedure}{:}{}
  \SetKwProg{private}{private procedure}{:}{}
  
  \setcounter{AlgoLine}{71}

  \private{$\cM^1.\CFt^{\text{Inner}}(i,x)$}
    {
      \While{$\cA(i,x)$ queries $\CF_j(y)$} 
        {
          $Q_i(x):=Q_i(x) \cup \{(j,y)\}$\;
          $\cM^1.\CF^{\text{Inner}}(j,y)$      \tcc{Simulates the subversion algorithm $\cA$ on input $(i,x)$}
        }
      Return $\cA(i,x)$
    }

  \BlankLine
  \BlankLine
    
  \private{$\cS^1.\CF^{\text{Inner}}(i,x)$}
    {
      \If{$x \notin \cS^1.\CF_i$}
        {\If{$x \notin \cM^1.\CF_i$}
          {
            $\cS^1.\CF_i(x) \leftarrow_R\{0,1\}^n$\;
            $\cM^1.\CF_i(x) \leftarrow \cS^1.\CF_i(x)$
          }
        \ElseIf{$x \in \cM^1.\CF_i$ and $(i,x) \notin M^1.\text{AdaptedPoints}$}
          {
            $\cS^1.\CF_i(x) \leftarrow \cM^1.\CF_i(x)$
          }
        \Else
          {
            $\cS^1.\CF_i(x) \leftarrow_R\{0,1\}^n$
          }  
        $\cS^1.\text{EnqueueNewChain}(i,x)$
        }
      Return $\cS^1.\CF_i(x)$    
    }

  \BlankLine
  \BlankLine

    \private{$\cS^1.\text{EvaluateForward}(s,x_s,x_{s+1},u)$}
      {
        $\cS^1.\text{CompletedChains}:=\cS^1.\text{CompletedChains} \cup \{(s,x_{s},x_{s+1})\}$\;
        \While{$s \neq u-1$}
          {
            \If{$s=8n$}
              {
                 \If{$(\uparrow,x_{8n},x_{8n+1}) \notin RF$}
                   {
                     $x_0 \leftarrow_R \{0,1\}^n$\;
                     $x_1 \leftarrow_R \{0,1\}^n$\;
                     $RF(\downarrow,x_0,x_1):=(x_{8n},x_{8n+1})$ \tcc{May over write an entry}
                     $RF(\uparrow,x_{8n},x_{8n+1}):=(x_0,x_1)$
                   }
                 \Else{$(x_0,x_1):=RF(\uparrow,x_{8n},x_{8n+1})$}  
                $s:=0$
              }
            \Else
              {
                $\cS^1.\text{CompletedChains}:=\cS^1.\text{CompletedChains} \cup \{(s+1,x_{s+1},x_{s+2})\}$\;
                $x_{s+2}:=x_s \oplus \cS^1.\CFt^{\text{Inner}}(s+1,x_{s+1})$\;
                $Q_{s+1}:=Q_{s+1}(x_{s+1})$\;
                $s:=s+1$
              }
          }
        Return $(x_{s-1},x_s)$  
      } 
 \BlankLine   
 \BlankLine

\end{algorithm}  

\begin{algorithm}[H]
\DontPrintSemicolon
  
  \SetKwProg{public}{public procedure}{:}{}
  \SetKwProg{private}{private procedure}{:}{}
  
  \setcounter{AlgoLine}{115}

    \private{$\cS^1.\text{EvaluateBackward}(s,x_s,x_{s+1},u)$}
      {
        \While{$s \neq u+1$}
          {
            \If{$s=0$}
              {  
                \If{$(\downarrow,x_0,x_1) \notin RF$}
                  {
                    $x_{8n} \leftarrow_R \{0,1\}^n$\;
                    $x_{8n+1} \leftarrow_R \{0,1\}^n$\;
                    $RF(\downarrow,x_0,x_1):=(x_{8n},x_{8n+1})$\;
                    $RF(\uparrow,x_{8n},x_{8n+1}):=(x_0,x_1)$\tcc{May over write an entry}
                  }
                \Else{$(x_{8n},x_{8n+1}):=RF(\downarrow,x_0,x_1)$}
                $s:=8n$
              }
            \Else
              {
                $\cS^1.\text{CompletedChains}:=\cS^1.\text{CompletedChains} \cup \{(s-1,x_{s-1},x_{s})\}$\;
                $x_{s-1}:=x_{s+1} \oplus \cS^1.\CFt^{\text{Inner}}(s,x_{s})$\;
                $Q_{s}:=Q_{s}(x_{s})$\;
                $s:=s-1$
              }
          }
        Return $(x_{s+1},x_{s+2})$  
      }      

 \BlankLine 
  \BlankLine   
        
  \private{$\cS^1.Adapt(x_{u-2},x_{u-1},x_{u+2},x_{u+3},u)$}
    {
      $x_u:=x_{u-2} \oplus \cS^1.\CF^{\text{Inner}}(u-1,x_{u-1})$\;
      $x_{u+1}:=x_{u+3} \oplus \cS^1.\CF^{\text{Inner}}(u+2,x_{u+2})$\;
      \If{$x_u \notin \cS^1.\CF_u$ and $x_{u+1} \notin \cS^1.\CF_{u+1}$}
        {
          $\cS^1.\CF_u(x_u) \leftarrow x_{u-1} \oplus x_{u+1}$\;
          $\cS^1.\CF_{u+1}(x_{u+1}) \leftarrow x_{u} \oplus x_{u+2}$\;
          $\cM^1.\CF_u(x_u) \leftarrow \cS^1.\CF_u(x_u)$\tcc{May over write an entry} 
          $\cM^1.\CF_{u+1}(x_{u+1}) \leftarrow \cS^1.\CF_{u+1}(x_{u+1})$\tcc{May over write an entry}  
        }
      \Else{The game aborts.}
      \If{$\cS^1.\CFt^{\text{Inner}}(u,x_{u})=\cS^1.\CF_u(x_u)$ and $\cS^1.\CFt^{\text{Inner}}(u+1,x_{u+1})=\cS^1.\CF_{u+1}(x_{u+1})$}
         {
           $\cS^1.\text{CompletedChains}:=\cS^1.\text{CompletedChains} \cup \{(u,x_u,x_{u+1})\}$
         }
      \Else{The game aborts.}
      $Q_u:=Q_u(x_u)$ and $Q_{u+1}:=Q_{u+1}(x_{u+1})$\;
      \If{$(u,x_u) \in \cup_{j=1}^{8n}Q_j/Q_u$ or $(u+1,x_{u+1}) \in \cup_{j=1}^{8n}Q_j/Q_{u+1}$}{The game aborts.}
    }

\end{algorithm}

\subsection{Game 4}\label{Game 4}
To obtain $G_4$, we just need to add some abortion conditions to several procedures of $G_3$. Below we only show the procedures of $\cS^2$ and $\cM^2$ that are different from their counterparts of $\cS^1$ and $\cM^1$. We use red color to stress the extra abortion conditions.

\begin{algorithm}
\DontPrintSemicolon
 
     \BlankLine
     \BlankLine   
  
  \SetKwProg{public}{public procedure}{:}{}
  \SetKwProg{private}{private procedure}{:}{}

  \private{$\cM^2.\CF^{\text{Inner}}(i,x)$}
    {
      \If{$x \notin \cM^2.\CF_i$}
        {
          $\cM^2.\CF_i(x) \leftarrow_R\{0,1\}^n$
        }
      \ElseIf{$(i,x) \in \cM^2.\text{MiddlePoints}$ and $(i,x) \notin \cS^2.\CF_i$}{\textcolor{red}{The game aborts.}}  
      Return $\cM^1.\CF_i(x)$    
    } 
    
     \BlankLine
     \BlankLine    
      
    \private{$\cM^2.\text{EvaluateForward}(s,x_s,x_{s+1},u)$}
      {
        $\cM^2.\text{CompletedChains}:=\cM^2.\text{CompletedChains} \cup \{(s,x_{s},x_{s+1})\}$\;
        \While{$s \neq u-1$}
          {
            \If{$s=8n$}
              {
                $(x_0,x_1):=RF(\uparrow,x_{8n},x_{8n+1})$\;
                $s:=0$
              }
            \Else
              {
                $\cM^2.\text{CompletedChains}:=\cM^2.\text{CompletedChains} \cup \{(s+1,x_{s+1},x_{s+2})\}$\;                
                \If{$3n \leq s+1 \leq 5n$}{$\cM^2.\text{MiddlePoints}=\cM^2.\text{MiddlePoints} \cup ({s+1},x_{s+1})$}
                \If{$3n \leq s+1 \leq 5n$ and $x_{s+1} \in \cM^2.\CF$}{\textcolor{red}{The game aborts.}}
                $x_{s+2}:=x_s \oplus \cM^2.\CFt^{\text{Inner}}(s+1,x_{s+1})$\;
                $Q_{s+1}:=Q_{s+1}(x_{s+1})$\;
                $y_{s+1}:=x_{s+1}$\;
                \If{$3n \leq s+1 \leq 5n$ and $\cM^2.\CF_{s+1}(x_{s+1}) \neq x_{s+2}\oplus x_s$}{\textcolor{red}{The game aborts.}}
                $s:=s+1$
              }
          }
        Return $(x_{s-1},x_s)$  
      }

     \BlankLine
     \BlankLine

\end{algorithm}

\begin{algorithm}
\DontPrintSemicolon
 
     \BlankLine
     \BlankLine   
  
  \SetKwProg{public}{public procedure}{:}{}
  \SetKwProg{private}{private procedure}{:}{}

  \setcounter{AlgoLine}{23}
       
      \private{$\cM^2.\text{EvaluateBackward}(s,x_s,x_{s+1},u)$}
      {
        \While{$s \neq u+1$}
        {
          \If{$s=0$}
          {
            $(x_{8n},x_{8n+1}):=RF(\downarrow,x_0,x_1)$\;
            $s:=8n$
          }
          \Else
          {
            $\cM^2.\text{CompletedChains}:=\cM^2.\text{CompletedChains} \cup \{(s-1,x_{s-1},x_{s})\}$\;
            \If{$3n \leq s \leq 5n$}{$\cM^2.\text{MiddlePoints}=\cM^2.\text{MiddlePoints} \cup ({s},x_{s})$}
            \If{$3n \leq s \leq 5n$ and $x_{s} \in \cM^2.\CF$}{\textcolor{red}{The game aborts.}}
            $x_{s-1}:=x_{s+1} \oplus \cM^2.\CFt^{\text{Inner}}(s,x_{s})$\;
            $Q_{s}:=Q_{s}(x_{s})$\;
            $y_{s}:=x_{s}$\;
            \If{$3n \leq s \leq 5n$ and $\cM^2.\CF_{s}(x_{s}) \neq x_{s+1}\oplus x_{s-1}$}{\textcolor{red}{The game aborts.}}
            $s:=s-1$
          }
        } 
        Return $(x_{s+1},x_{s+2})$ 
      } 
         
  \BlankLine
  \BlankLine
      
  \private{$\cM^2.$Adapt$(x_{u-2},x_{u-1},x_{u+2},x_{u+3},u)$}
    {
      $x_u:=x_{u-2} \oplus \cM^2.\CF^{\text{Inner}}(u-1,x_{u-1})$\;
      $x_{u+1}:=x_{u+3} \oplus \cM^2.\CF^{\text{Inner}}(u+2,x_{u+2})$\;
      \If{$x_u \notin \cM^2.\CF_u$ and $x_{u+1} \notin \cM^2.\CF_{u+1}$}
        {
          $\cM^2.\CF_u(x_u) \leftarrow x_{u-1} \oplus x_{u+1}$\;
          $\cM^2.\text{MiddlePoints}=\cM^2.\text{MiddlePoints} \cup (u,x_u)$\;    
          $\cM^2.\text{AdaptedPoints}=\cM^2.\text{AdaptedPoints} \cup ({u},x_{u})$\;
          $\cM^2.\CF_{u+1}(x_{u+1}) \leftarrow x_{u} \oplus x_{u+2}$\;
          $\cM^2.\text{MiddlePoints}=\cM^2.\text{MiddlePoints} \cup ({u+1},x_{u+1})$\;
          $\cM^2.\text{AdaptedPoints}=\cM^2.\text{AdaptedPoints} \cup ({u+1},x_{u+1})$\;
        }
      \Else{\textcolor{red}{The game aborts.}}  
      \If{$\cM^2.\CFt^{\text{Inner}}(u,x_{u})=\cM^2.\CF_u(x_u)$ and $\cM^2.\CFt^{\text{Inner}}(u+1,x_{u+1})=\cM^2.\CF_{u+1}(x_{u+1})$}
         {
           $\cM^2.\text{CompletedChains}:=\cM^2.\text{CompletedChains} \cup \{(u,x_u,x_{u+1})\}$
         }
      \Else{\textcolor{red}{The game aborts.}}
      $Q_u:=Q_u(x_u)$ and $Q_{u+1}:=Q_{u+1}(x_{u+1})$\;
      $y_{u}:=x_{u}$ and $y_{u+1}:=x_{u+1}$\;
      \While{$3n \leq i \leq 5n$}
      {
      \If{$(i,y_i) \in \cup_{j=1}^{8n}Q_j/Q_i$}{\textcolor{red}{The game aborts.}} 
      }    
    }

    \BlankLine
    \BlankLine

\end{algorithm}

\begin{algorithm}
\DontPrintSemicolon
  
  \SetKwProg{public}{public procedure}{:}{}
  \SetKwProg{private}{private procedure}{:}{}
  
  \setcounter{AlgoLine}{64}

  \private{$\cS^2.\CF^{\text{Inner}}(i,x)$}
    {
      \If{$x \notin \cS^2.\CF_i$}
        {\If{$x \notin \cM^2.\CF_i$}
          {
            $\cS^2.\CF_i(x) \leftarrow_R\{0,1\}^n$\;
            $\cM^2.\CF_i(x) \leftarrow \cS^2.\CF_i(x)$
          }
        \ElseIf{$x \in \cM^2.\CF_i$ and $(i,x) \notin \cM^2.\text{MiddlePoints}$}
          {
            $\cS^2.\CF_i(x) \leftarrow \cM^2.\CF_i(x)$
          }
        \Else
          {
            \textcolor{red}{The game aborts.}
          }  
        $\cS^2.\text{EnqueueNewChain}(i,x)$
        }
      Return $\cS^2.\CF_i(x)$    
    }
    
    \BlankLine
    \BlankLine
  
  \private{$\cS^2.$Complete $(s,x_s,x_{s+1},u)$}
    {
      \If{$(s,x_s,x_{s+1}) \in \cM^2.\text{CompletedChains}$ and $u = 7n$}{\textcolor{red}{The game aborts.}}
      \ElseIf{$(s,x_s,x_{s+1}) \in \cM^2.\text{CompletedChains}$ and $u = 4n$}
      {Copy the full subverted chain containing $(s,x_s,x_{s+1})$ in $\cM^2.\CF$ to $\cS^2.\CF$}
      \Else{$(x_{u-2},x_{u-1}):=\cS^2.\text{EvaluateForward}(s,x_s,x_{s+1},u)$\;
      $(x_{u+2},x_{u+3}):=\cS^2.\text{EvaluateBackward}(s,x_s,x_{s+1},u)$\;
      $\cS^2.\text{Adapt}(x_{u-2},x_{u-1},x_{u+2},x_{u+3},u)$}
    }

    \BlankLine
    \BlankLine    
        
  \private{$\cS^2.$Adapt$(x_{u-2},x_{u-1},x_{u+2},x_{u+3},u)$}
    {
      $x_u:=x_{u-2} \oplus \cS^2.\CF^{\text{Inner}}(u-1,x_{u-1})$\;
      $x_{u+1}:=x_{u+3} \oplus \cS^2.\CF^{\text{Inner}}(u+2,x_{u+2})$\;
      \If{$x_u \notin \cS^2.\CF_u$, $x_{u+1} \notin \cS^2.\CF_{u+1}$} 
        {
          $\cS^2.\CF_u(x_u) \leftarrow x_{u-1} \oplus x_{u+1}$\;
          $\cS^2.\CF_{u+1}(x_{u+1}) \leftarrow x_{u} \oplus x_{u+2}$\;
          \If{$x_u \notin \cM^2.\CF_u$ and $x_{u+1} \notin \cM^2.\CF_{u+1}$}
            {
              $\cM^2.\CF_u(x_u) \leftarrow \cS^2.\CF_u(x_u)$\;
              $\cM^2.\CF_u(x_{u+1}) \leftarrow \cS^2.\CF_{u+1}(x_{u+1})$
            }
          \Else{\textcolor{red}{The game aborts.}}   
        }
      \Else{The game aborts.}
      \If{$\cS^2.\CFt^{\text{Inner}}(u,x_{u})=\cS^2.\CF_u(x_u)$ and $\cS^2.\CFt^{\text{Inner}}(u+1,x_{u+1})=\cS^2.\CF_{u+1}(x_{u+1})$}
         {
           $\cS^2.\text{CompletedChains}:=\cS^2.\text{CompletedChains} \cup \{(u,x_u,x_{u+1})\}$
         }
      \Else{The game aborts.}
      $Q_u:=Q_u(x_u)$ and $Q_{u+1}:=Q_{u+1}(x_{u+1})$\;
      \If{$(u,x_u) \in \cup_{j=1}^{8n}Q_j/Q_u$ or $(u+1,x_{u+1}) \in \cup_{j=1}^{8n}Q_j/Q_{u+1}$}{The game aborts.}
    }
    
\end{algorithm}

\newpage

\subsection{Game 5}\label{Game 5}

$G_5$ is different from $G_4$ in the following procedures:

\begin{algorithm}
\DontPrintSemicolon

     \BlankLine
     \BlankLine  
       
  \SetKwProg{public}{public procedure}{:}{}
  \SetKwProg{private}{private procedure}{:}{}

  \public{$RF(x_0,x_1)$}
    {
      \If{$(\downarrow,x_0,x_1) \notin RF$}
        {
          $\cM^3.$Complete $(0,x_0,x_1,4n)$
        }
      Return $RF(\downarrow,x_0,x_1)$\;
    }
    
  \BlankLine
  \BlankLine
  
  \public{$RF^{-1}(x_{8n},x_{8n+1})$}
    {
      \If{$(\uparrow,x_{8n},x_{8n+1}) \notin RF$}
        {
          $\cM^3.$Complete $(8n,x_{8n},x_{8n+1},4n)$
        }
      Return $RF(\uparrow,x_{8n},x_{8n+1})$\;
    }  
    
  \BlankLine
  \BlankLine   
  
  \private{$\cM^3.$Complete $(s,x_s,x_{s+1},u)$}
    {
    \If{$s=0$}
      {\While{$s \neq 8n$}
          {
            $x_{s+2}:=x_s \oplus \cM^3.\CFt^{\text{Inner}}(s+1,x_{s+1})$\;
            \If{$3n \leq s+1 \leq 5n$ and $\cM^3.\CF_{s+1}(x_{s+1}) \neq x_{s}\oplus x_{s+2}$}
              {The game aborts.}
            \If{$3n \leq s+1 \leq 5n$}
              {$\cM^3.\text{MiddlePoints}=\cM^3.\text{MiddlePoints} \cup ({s+1},x_{s+1})$\;}              
            \If{$s+1 =u,u+1$}
              {$\cM^3.\text{AdaptedPoints}=\cM^3.\text{AdaptedPoints} \cup ({s+1},x_{s+1})$\;}                            
            $\cM^3.\text{CompletedChains}:=\cM^3.\text{CompletedChains} \cup \{(s+1,x_{s+1},x_{s+2})\}$\;
            $s:=s+1$
          }
          }
     \Else     
        {\While{$s \neq 0$}
          {
            $x_{s-1}:=x_{s+1} \oplus \cM^3.\CFt^{\text{Inner}}(s,x_{s})$\;
            \If{$3n \leq s \leq 5n$ and $\cM^3.\CF_{s}(x_{s}) \neq x_{s+1}\oplus x_{s-1}$}
              {The game aborts.}
            \If{$3n \leq s \leq 5n$}
              {$\cM^3.\text{MiddlePoints}=\cM^3.\text{MiddlePoints} \cup ({s},x_{s})$\;}              
            \If{$s =u,u+1$}
              {$\cM^3.\text{AdaptedPoints}=\cM^3.\text{AdaptedPoints} \cup ({s},x_{s})$\;}                                          
            $\cM^3.\text{CompletedChains}:=\cM^3.\text{CompletedChains} \cup \{(s-1,x_{s-1},x_{s})\}$\;
            $s:=s-1$
          } 
        }
      \While{$3n \leq i \leq 5n$}
      {
      \If{$(i,x_i) \in \cup_{j=1}^{8n}Q_j/Q_i$}{The game aborts.}
      }        
        $RF(\downarrow,x_0,x_1):=(x_{8n},x_{8n+1})$\;  
        $RF(\uparrow,x_{8n},x_{8n+1}):=(x_0,x_1)$      
    }
  
\end{algorithm}

\begin{algorithm}
\DontPrintSemicolon
  
  \SetKwProg{public}{public procedure}{:}{}
  \SetKwProg{private}{private procedure}{:}{}
  
  \setcounter{AlgoLine}{31}
  
  \private{$\cS^3.$Complete $(s,x_s,x_{s+1},u)$}
    {
    \If{$(s,x_s,x_{s+1}) \in \cM^3.\text{CompletedChains}$ and $u = 7n$}{The game aborts.}
    \ElseIf{$(s,x_s,x_{s+1}) \in \cM^3.\text{CompletedChains}$ and $u = 4n$}
      {Copy the full subverted chain containing $(s,x_s,x_{s+1})$ in $\cM^3.\CF$ to $\cS^3.\CF$}    
    \Else{
      $i:=s$\;
      \While{$s \leq i \leq 8n$}
          {
            \If{$i+1=u$ or $i+1=u+1$}
              {
                \If{$x_{i+1}$ is in $\cS^3.\CF_{i+1}$ or $\cM^3.\CF_{i+1}$}{The game aborts.}
              }
            $x_{i+2}:=x_i \oplus \cS^3.\CFt^{\text{Inner}}(i+1,x_{i+1})$\;
            \If{$i+1=u$ or $i+1=u+1$}
              {
                \If{$\cS^3.\CF^{\text{Inner}}(i+1,x_{i+1}) \neq \cS^3.\CFt^{\text{Inner}}(i+1,x_{i+1})$}{The game aborts.}
              }
            $\cS^3.\text{CompletedChains}:=\cS^3.\text{CompletedChains} \cup \{(i+1,x_{i+1},x_{i+2})\}$\;
            $i:=i+1$
          }
       $j:=s$\;      
        \While{$0 \leq j \leq s$}
          {
            \If{$j=u$ or $j=u+1$}
              {
                \If{$x_j$ is in $\cS^3.\CF_j$ or $\cM^3.\CF_j$}{The game abort.}
              }
            $x_{j-1}:=x_{j+1} \oplus \cS^3.\CFt^{\text{Inner}}(j,x_{j})$\;
            \If{$j=u$ or $j=u+1$}
              {
                \If{$\cS^3.\CF^{\text{Inner}}(j,x_{j}) \neq \cS^3.\CFt^{\text{Inner}}(j,x_{j})$}{The game aborts.}
              }
            $\cS^3.\text{CompletedChains}:=\cS^3.\text{CompletedChains} \cup \{(j-1,x_{j-1},x_{j})\}$\;
            $j:=j-1$ 
        }
      \If{$(u,x_u) \in \cup_{j=1}^{8n}Q_j/Q_u$ or $(u+1,x_{u+1}) \in \cup_{j=1}^{8n}Q_j/Q_{u+1}$}{The game aborts.}
     
     }       
        $RF(\downarrow,x_0,x_1):=(x_{8n},x_{8n+1})$\;  
        $RF(\uparrow,x_{8n},x_{8n+1}):=(x_0,x_1)$      
    }
  
\end{algorithm}


\end{document}